\DeclareMathOperator{\supp}{supp}
\DeclareMathOperator*{\E}{\mathbb{E}}
\let\Pr\relax
\DeclareMathOperator*{\Pr}{\mathbb{P}}
\DeclareMathOperator*{\median}{median}
\DeclareMathOperator*{\quant}{quant}
\DeclareMathOperator{\poly}{poly}
\DeclareMathAlphabet{\pazocal}{OMS}{zplm}{m}{n}
\newtheorem{theorem}{Theorem}
\newtheorem{remark}[theorem]{Remark}
\newtheorem{lemma}[theorem]{Lemma}
\newtheorem{corollary}[theorem]{Corollary}
\newtheorem{proposition}[theorem]{Proposition}	
\newtheorem{fact}[theorem]{Fact}	
\theoremstyle{definition}
\newtheorem{definition}{Definition}
\newcommand{\Oh}{\mathcal{O}}
\newcommand{\eps}{\varepsilon}
\newcommand{\R}{\mathbb{R}}
\title{ On Low-Risk Heavy Hitters and Sparse Recovery Schemes 
}
\author{
	Yi Li\\
	Nanyang Technological University\\
	\texttt{yili@ntu.edu.sg}\\
\and
	Vasileios Nakos\footnote{Supported in part by NSF grant IIS-144741.}\\
	Harvard University \\
	\texttt{vasileiosnakos@g.harvard.edu}
\and
	David P. Woodruff\footnote{Supported in part by NSF grant CCF-1815840.}\\
	Carnegie Mellon University\\
	\texttt{dwoodruf@cs.cmu.edu}
}
\date{}
\begin{document}

\maketitle

\begin{abstract}
We study the heavy hitters and related sparse recovery problems 
in the {\it low-failure probability regime}. 
This regime is not well-understood, 
and the main previous work on this is by Gilbert et al.\@ (ICALP'13). 
We recognize an error in their analysis, improve their results, and contribute new sparse recovery algorithms, as well as provide upper and lower bounds for the heavy hitters problem with low failure probability. Our results are
summarized as follows: 

\begin{enumerate}
\item (Heavy Hitters) We study three natural variants for finding heavy hitters in the strict turnstile model, where the variant depends on the quality of the desired output. 
For the weakest variant, we give a randomized algorithm improving the failure probability analysis of the ubiquitous \textsc{Count-Min} data structure. We also give a new lower bound for deterministic schemes, resolving a question about this variant posed in Question 4 in the IITK Workshop on Algorithms for Data Streams (2006). Under the strongest and well-studied
  $\ell_{\infty}/ \ell_2$ variant, 
  we show that the classical \textsc{Count-Sketch} data structure is optimal 
for very low failure probabilities, which was previously
unknown. 
\item (Sparse Recovery Algorithms)  For non-adaptive sparse-recovery, we give sublinear-time algorithms with low-failure probability, which improve upon Gilbert et al.\@ (ICALP'13). In the adaptive case, we improve the failure probability from a constant by Indyk et al.\@ (FOCS '11) to $e^{-k^{0.99}}$, where $k$ is the sparsity parameter.

\item (Optimal Average-Case Sparse Recovery Bounds) We give matching upper and lower bounds in all parameters, including the failure probability, for the measurement complexity of the $\ell_2/\ell_2$ sparse recovery problem in the spiked-covariance model, completely settling its complexity in this model. 
\end{enumerate}

\end{abstract}

\section{Introduction}
Finding heavy hitters in data streams is one of the most practically and theoretically important problems in the streaming literature. Subroutines for
heavy hitter problems, and in particular for the \textsc{Count-Min} sketch, 
are the basis for multiple problems on geometric
data streams, including $k$-means and $k$-median clustering \cite{i04,fs05,BFLSY17}, as well as image acquisition \cite{gipr12}. 
Studying schemes for finding such heavy hitters has also led to 
important geometric insights in $\ell_p$-spaces \cite{zgr16}. 

Abstractly, in the heavy hitters problem, we are asked to report all frequent items in a very long stream of elements coming from some universe. The main restriction is that the memory consumption should be much smaller than the universe size and the length of the stream. To rephrase the problem, consider a frequency vector $x\in \R^n$, where $n$ is the size of the universe. Each element $i$ in the data stream updates the frequency vector as $x_i\gets x_i+1$. At the end of the data stream, we wish to find the coordinates of $x$ for which $|x_i|\geq \epsilon \|x\|_1$.

Oftentimes the problem is considered under a more general streaming model called the strict turnstile model, where arbitrary deletions and additions are allowed, but at all times the entries of $x$ remain non-negative, that is $x_i \geq 0$. More formally, the frequency vector $x \in \mathbb{R}^n$ receives updates of the form $(i,\Delta)$, and each such update causes the change of $x_i$ to $x_i + \Delta$, while ensuring that $x_i \geq 0$. The goal is to maintain a data structure such that upon query, the data structure returns the heavy hitters of the underlying vector $x$. The $\ell_p$ heavy hitters problem, for $p \geq 1$, then asks to find all coordinates $i$ for which $x_i^p \geq \epsilon \|x\|_p^p$. The algorithm that treats the $\ell_1$ case is the Count-min sketch \cite{cormode2005improved}, and the algorithm that treats the $\ell_2$ case is the \textsc{Count-Sketch} \cite{CCFC}. Both algorithms are randomized and succeed with probability $1-1/\poly(n)$. In \cite{cormode2005improved} the authors also suggest the ``dyadic'' trick for exchanging query time with space.
Their ``dyadic'' trick allows for finding heavy hitters approximately in $\Oh(\frac{1}{\epsilon} \log^2 n)$ time, but with the downside of having a worse update time and a worse space consumption by an $\Oh( \log n)$ factor. The state of the art for heavy hitters is \cite{larsen2016heavy}, where the authors give an algorithm that satisfies the $\ell_{\infty}/l_p$ guarantee, has space $\Oh(\frac{1}{\epsilon} \log n)$, update time $\Oh(\log n)$, and query time $\Oh(\frac{1}{\epsilon} \poly(\log n))$. We note that the latter algorithm works in the more general setting of the turnstile model, where there is no constraint on $x_i$, in contrast to the strict turnstile model.

Another set of closely related problems occurs in the compressed sensing (CS) literature, which has seen broad applications to biomedical imaging, sensor networks, hand-held digital cameras, and monitoring systems, among other places. Sparse compressed sensing schemes were used for determining the attitudes\footnote{See \url{https://en.wikipedia.org/wiki/Attitude_control} for the notion of `attitude' in this context.}, or $3$-axis orientation, of spacecraft in \cite{gipr12}.  

Abstractly, in this problem we also seek to find the large coordinates of $x\in\R^n$ but with a different goal. Instead of finding all coordinates $x_i$ for which $|x_i|\geq \epsilon\|x\|_1$, the CS problem seeks an approximation $\hat x$ to $x$ such that the difference vector $x-\hat x$ has small norm. In particular, we consider the $\ell_2/\ell_2$ error metric, that is, we require that $\|x-\hat x\|_2^2\leq (1+\epsilon)\|x_{-k}\|_2^2$, where $x_{-k}$ is the vector $x$ with the $k$ largest entries (in magnitude) removed. If all $\ell_2$ heavy hitters are found, it is clear that the norm of $x-\hat x$ can be made small, but the CS problem allows a small number of heavy hitters to be missed if their contribution to the approximation error $x-\hat x$ is small.

Gilbert et al.\ proposed the first sublinear-time algorithm for the $\ell_2/\ell_2$ problem that achieves $O(\frac{k}{\epsilon}\log\frac{n}{k})$ measurements with constant failure probability~\cite{GLPS12}. Earlier sublinear-time algorithms all contain several additional $\log n$ factors in their number of measurements. The optimality of $O(\frac{k}{\epsilon}\log\frac{n}{k})$ measurements was shown by Price and Woodruff~\cite{PW11}.  Later Gilbert et al.\ improved the failure probability to $n^{-k/\poly(\log k)}$~\cite{gilbert2013l2}, while their number of measurements has a poor dependence on $\epsilon$, which is at least $\epsilon^{-11}$. 

Despite the above works, our understanding of the complexity of heavy hitter
and compressed sensing schemes on the error probability is very limited. The question regarding failure probability of these schemes is a natural one for two reasons: first, it is strongly connected with the existence of uniform schemes via the probabilistic method, and, second, being able to amplify the failure probability of an algorithm in a non-trivial way without making parallel repetitions of it, makes the algorithm much more powerful application-wise. For sparse recovery schemes, our goal is to obtain the same measurements but with smaller failure probability, something we find important both from a practical and theoretical perspective- obtaining the correct number of measurements in terms of all parameters $\epsilon,k,n,\delta$ would be the end of the story for compressed sensing tasks, and a challenging quest; we note that previous algorithms achieved optimality with respect to $\epsilon,k,n$ only. From the practical side, a small enough failure probability would allow to re-use the same measurements all the time, since an application of the union-bound would suffice for all vectors that might appear in a lifetime; thus, application-wise, we could treat such a scheme as uniform. We start with formal definitions of the problems and then state in detail 
our improvements over previous work.

\subsection{Problem Formulation}\label{sec:problem_formulation}

For $x \in \mathbb{R}^n$, we define $H_k(x)$ to be the set of its largest $k$ coordinates in magnitude, breaking ties arbitrarily. For a set $S$ let $x_S$ be the vector obtained from $x$ by zeroing out every coordinate $i \notin S$. We also define $x_{-k} = x_{[n] \setminus H_k(x)}$. For the $\ell_2/\ell_2$-sparse recovery results we define $H_{k,\epsilon}(x) = \{ i \in [n] : |x_i|^2 \geq \frac{\epsilon}{k} \|x_{-k}\|_2^2\}$. 

Two common models in the literature are the strict turnstile model and the (general) turnstile model.

\textbf{Strict Turnstile Model}: Both insertions and deletions are allowed, and it is guaranteed that at all times $x_i \geq 0$.

\textbf{(General) Turnstile Model}: Both insertions and deletions are allowed, but there is no guarantee about the value of $x_i$ at any point in time.


%

A sketch $f : \mathbb{R}^m \rightarrow \mathbb{R}^n$ is a function that maps an $n$-dimensional vector to $m$ dimensions. In this paper, all sketches will be linear, meaning $f(x)=Ax$ for some $A \in \mathbb{R}^{m \times n}$. The sketch length $m$ will be referred to as the space of our algorithms. 

\begin{definition}[Heavy hitters]\label{def:heavy_hitter}
For $x\in \R^n$ and $p\geq 1$, a coordinate $x_i$ is called an $\epsilon$-heavy hitter in $\ell_p$ norm if $|x_i|^p\geq \epsilon\|x\|_p^p$. We consider the following three variants of the heavy hitters problem with different guarantees:
\begin{enumerate}[topsep=0pt,itemsep=-1ex,partopsep=1ex,parsep=1ex]
\item Return a list containing all $\epsilon$-heavy hitters but no $\epsilon/2$-heavy hitters.

\item Return a list $L$ of size $\Oh(1/\epsilon)$ containing all $\epsilon$-heavy hitters along with estimates $\hat{x}_i$ such that $|x_i - \hat{x}_i|^p \leq \epsilon \|x_{-\lceil 1/\epsilon \rceil}\|_p^p$ for all $i\in L$. 
\item Return a list of size $\Oh(1/\epsilon)$ containing all $\epsilon$-heavy hitters.
\end{enumerate}
When the algorithm is randomized, it has a parameter $\delta$ of failure probability; that is, the algorithm succeeds with probability at least $1-\delta$.
\end{definition}

The variant with Guarantee 2 above is also referred to as the $\ell_{\infty}/\ell_p$ problem. In this paper we focus on $p = 1$ and $p = 2$,
with corresponding 
\textsc{Count-Min} \cite{cormode2005improved} and
\textsc{Count-Sketch} \cite{CCFC} data structures that have been studied extensively.

We note that the strongest guarantee is guarantee $2$. It folkolore that Guarantee $2$ implies both Guarantee $1,3$, and Guarantee $1$ clearly implies Guarantee $3$. In applications, such as sparse recovery tasks, it is often the case that one does not need the full power of Guarantees $1,2$, but rather is satisfied with Guarantee $3$. The natural question that arises is whether one can gain some significant advantage under this Guarantee. Indeed, we show that Guarantee $3$ allows the existence of a uniform scheme, i.e. one that works for all vectors, in the strict turnstile model with the same space, in contrast to the other two Guarantees.

\begin{definition}[$\ell_2/\ell_2$ sparse recovery]
An $\ell_2/\ell_2$-recovery system $\mathcal{A}$ consists of a distribution $\mathcal{D}$ on $\mathbb{R}^{m \times n}$ and a recovery algorithm $\mathcal{R}$. We will write $\mathcal{A} = (\mathcal{D}, \mathcal{R})$. We say that $\mathcal{A}$ satisfies the $\ell_2/\ell_2$ guarantee with parameters $(n,k,\epsilon,m,\delta)$ if for a signal $x \in \mathbb{R}^n$, the recovery algorithm outputs $\hat{x} = \mathcal{R}(\Phi, \Phi x)$ satisfying 
\[
\mathbb{P}_{\Phi \sim \mathcal{D}}\left\{ \|\hat{x} - x \|_2^2 \leq (1+ \epsilon) \|x_{-k}\|_2^2 \right\} \geq 1 - \delta.
\]
\end{definition}
In the above definition, each coordinate of $\Phi x$ is called a \emph{measurement} and the vector $\Phi x$ is referred to as the measurement vector or just as 
the measurements. The probability parameter $\delta$ is referred to as the failure probability.

\subsection{Our Results}
\noindent\textbf{Heavy hitters.} Our first result is an improved analysis of the \textsc{Count-Min} sketch \cite{cormode2005improved} for Guarantee 3 under the change of the hash functions from $2$-wise to $\Oh(\frac{1}{\epsilon})$-wise independence. Previous analyses for Guarantees 1 and 2 use $\Oh(\frac{1}{\epsilon} \log \frac{n}{\delta})$ space; in contrast our analysis shows that this version of the \textsc{Count-Min} sketch satisfies Guarantee 3 with only $\Oh(\frac{1}{\epsilon} \log (\epsilon n ) + \log (\frac{1}{\delta}))$ space. Notably, the $\frac{1}{\epsilon}$ factor does not multiply the $\log(\frac{1}{\delta})$ factor. This result has two important consequences. First, it gives a uniform scheme for Guarantee 3; second, it implies an improved analysis of the classic dyadic trick \cite{cormode2005improved} for Guarantee 3 using $\Oh( \frac{1}{\epsilon} \log( \epsilon n) + \log n \log(\frac{ \log \epsilon n}{\delta})) $ space. For constant $\delta$, previous analyses of the dyadic trick needed space $\Oh( \frac{1}{\epsilon} \log n \log(\frac{\log n}{\epsilon}))$ but our analysis shows that $\Oh( \frac{1}{\epsilon} \log (\epsilon n) + \log (\epsilon n) \log \log (\epsilon n))$ space suffices. These results are summarized in Table~\ref{tab:hh}.

Regarding the lower bound, we give the first bound for Guarantee 2 with $p = 2$, which is simultaneously optimal in terms of $n,$ any $\epsilon > \frac{1}{n^{.99}}$, and the failure probability $\delta$. That is, we prove that the space has to be $\Omega(\frac{1}{\epsilon}\log \frac{\epsilon n}{\delta})$, which matches the upper bound of \textsc{Count-Sketch} \cite{CCFC} whenever $\epsilon > \frac{1}{n^{.99}}$. A lower bound of $\Omega(\frac{1}{\epsilon} \log (\epsilon n))$ was given in \cite{jowhari2011tight} and is valid for the full range of parameters of $\epsilon$ and $n$, but 
previous analyses cannot be adapted to obtain 
non-trivial lower bounds for $\delta < \frac{1}{n}$. Indeed, the lower bound instances
used in arguments in previous work 
have deterministic upper bounds using $O (\frac{1}{\epsilon} \log n)$ space. 

We also show a new randomized lower bound of $\Omega(\frac{1}{\epsilon}(\log n + \sqrt{\log\frac{1}{\delta}}))$ space for $p = 1$, provided that $\frac{1}{\epsilon} > \sqrt{\log\frac{1}{\delta}}$. Although not necessarily optimal, this lower bound is the first 
to show that a term involving
$\log \frac{1}{\delta}$  must multiply the $\frac{1}{\epsilon}$ factor for $p = 1$. The assumption that $\frac{1}{\epsilon} > \sqrt{\log\frac{1}{\delta}}$ is necessary, as there exist deterministic $O(\frac{1}{\epsilon^2})$ space algorithms for $p = 1$ \cite{gm06,nnw12}. For deterministic algorithms satisfying Guarantee 3 with $p=1$, we also show a lower bound of $\Omega(\frac{1}{\epsilon^2})$ measurements, which resolves Question 4 in the IITK Workshop on Algorithms for Data Streams \cite{gregor2007open}.


\medskip 
\noindent\textbf{Sparse Recovery.} We summarize previous algorithms in Table~\ref{tbl:sparse}. 

We give algorithms for the $\ell_2/\ell_2$ problem with failure probability much less 
than $1/\poly(n)$ whenever $k=\Omega(\log n)$. We present two novel algorithms, one running in $\Oh(k\poly(\log n))$ time and the other in $O(k^2\poly(\log n))$ time with a trade-off in failure probability. Namely, the first algorithm has a larger failure probability than the second one. The algorithms follow a similar overall framework to each other but are instantiated with different parameters. We also show how to modify the algorithm of \cite{gilbert2013l2} to obtain an optimal dependence on $\epsilon$, achieving a smaller failure probability along the way. All of these results are included in Table~\ref{tbl:sparse}. Our algorithms, while constituting a significant improvement over previous work, are still not entirely optimal. We show, however, that at least in the spiked covariance model, which is a standard average-case model of input signals, we can obtain optimal upper and lower bounds in terms of the measurement complexity. Combined with the identification scheme from \cite{gilbert2013l2} we also obtain a scheme with decoding time nearly linear in $k$, assuming that $k = n^{\Omega(1)}$.

Besides the above non-adaptive schemes, we also make contributions, in terms of the failure probability, for adaptive schemes. For adaptive sparse recovery, Indyk et al.\@ gave an algorithm under the $\ell_2/\ell_2$ guarantee~\cite{indyk2011power} using $\Oh((k/\epsilon) \log(\epsilon n/k))$ measurements and achieving constant failure probability. In followup work \cite{nswz18} adaptive schemes were designed for other $\ell_p/\ell_p$ error guarantees and improved bounds on the number of rounds were given; here our focus, as with the non-adaptive schemes we study, is on the error probability. We give a scheme that achieves failure probability $e^{-k^{1-\gamma}}$ for any constant $\gamma$, using the same number of measurements. Moreover, we present an algorithm for the regime when $k/\epsilon \leq \poly(\log n)$. Our scheme achieves the stronger $\ell_{\infty}/ \ell_2$ guarantee and fails with probability $1/\poly(\log n)$. Thus, our algorithms improve upon \cite{indyk2011power} in both regimes: in the high-sparsity regime we get an almost exponential improvement in $k$, and in the low-sparsity regime we get $1/\poly(\log n)$.

\begin{table}
\centering
\begin{tabular}{|c|c|c|c|}
\hline
Algorithm & Space & Guarantee & Query time\\
\hline
\textsc{Count-Min} \cite{cormode2005improved} & $\frac{1}{\epsilon} \log n + \frac{1}{\epsilon} \log(\frac{1}{\delta})$ & 1, 2 & $\tilde{\Oh}( n  )$ \\
\hline
This paper & $\frac{1}{\epsilon} \log (\epsilon n) + \log(\frac{1}{\delta})$ & 3 & $\tilde{\Oh}(n)$\\
\hline
Dyadic Trick \cite{cormode2005improved}& $\frac{1}{\epsilon} \log n  \log(\frac{\log n}{\delta \cdot \epsilon}) $ & 1, 2 & $\tilde{\Oh}(\frac{1}{\epsilon})$\\
\hline
This paper & $\frac{1}{\epsilon} \log( \epsilon n) + \log (\epsilon n) \log(\frac{ \log (\epsilon n)}{\delta}) $ & 3 & $\tilde{\Oh}(\frac{1}{\epsilon})$\\
\hline
\end{tabular}
\caption{Summary of previous heavy hitter algorithms. The notation $\Oh(\cdot)$ for space complexity is suppressed, $\tilde{\Oh}(\cdot)$ hides logarithmic factors in $n$, $1/\epsilon$ and $1/\delta$.}\label{tab:hh}
\end{table}

\begin{table}
\begin{center}
 \begin{tabular}{|c|c|c|c|} 
 \hline
 Paper & Measurements & Decoding Time  & 
 Failure Probability \\ 
 \hline
 \cite{CCFC} & $\epsilon^{-1} k \log n $  & $n\log n$ & 
 $1/\poly(n)$ \\
 \hline
 \cite{GLPS12} & $\epsilon^{-1} k \log( n / k  ) $  & $ \epsilon^{-1} k \poly( \log (n/k))$& 
 $\Omega(1)$  \\ 
 \hline
  \cite{gilbert2013l2} & $\epsilon^{-11} k \log(n/k) $ & $k^{2} \cdot \poly(\epsilon^{-1}\log n)$ & 
   $(n/k)^{-k / \log^{13} k}$ \\
  \hline
 \cite{larsen2016heavy} & $\epsilon^{-1} k \log n$ & $ \epsilon^{-1} k\cdot \poly(\log n) $ &
 $1/ \poly(n)$  \\
 \hline
 \hline
 This paper & $\epsilon^{-1} k \log (n/k)$  & $\epsilon^{-1}k^{1+\alpha} \log^3 n$ & 
 $ e^{-\sqrt{k}/ \log^3k}$  \\
            & $\epsilon^{-1} k \log (n/k)$  & $\epsilon^{-1}k^2(\log k)\log^{2+\gamma}(n/k)$ & 
            $e^{-k/\log^3 k}$  \\
& $\epsilon^{-1} k \log(\frac{n}{\epsilon k})$ & $\epsilon^{-1}k^2 \poly(\log n)$ &
$(n/k)^{-k/\log k}$\\
 \hline
\end{tabular}
\end{center}
\caption{Summary of previous sparse recovery results and the results obtained in this paper. The notation $\Oh(\cdot)$ is suppressed. The paper~\cite{gilbert2013l2} and the third result of our paper require $k=n^{\Omega(1)}$. The constants $\gamma,\alpha$ should be thought as arbitrarily small constants, say $.001$. We also note that in the regime $k/\epsilon \leq n^{1-\alpha}$, the decoding time of our first algorithm becomes $(k/\epsilon) \log^{2+\gamma}n$. The exponents in the $\poly()$ factors in \cite{GLPS12} and \cite{gilbert2013l2} are at least $5$, though the authors did not attempt at an optimization of these quantities. The exponent in the $\poly()$ factors in \cite{larsen2016heavy} is at least $3$.
}\label{tbl:sparse}

\end{table}

\section{Our Techniques}


\subsection{Heavy hitters} All the schemes we give are for the strict turnstile model. Our first algorithm is based on a small but important tuning to the \textsc{Count-Min} sketch: we change the amount of independence in the hash functions from $2$-wise to $\Oh(1/\epsilon)$-wise. Since the estimate of any coordinate is an overestimate of it, we are able to show that the set of $\Oh(1/\epsilon)$ coordinates with the largest estimates is a superset of the set of the $\epsilon$-heavy hitters. Although changing the amount of independence might increase both the update and the query time by a multiplicative factor of $1/\epsilon$, we show that using fast multipoint evaluation of polynomials, we can suffer only a  $\log^2 (1/\epsilon)$ factor in the update time in the amortized case, and a $\log^2 (1/\epsilon)$ factor in query time in the worst case. The above observation for \textsc{Count-Min} sketch gives also an improvement on the well-known dyadic trick which appeared in the seminal paper of Cormode and Hadjieleftheriou \cite{cormode2008finding}.

For the deterministic case, our improved analysis of the \textsc{Count-Min} sketch implies a deterministic algorithm that finds heavy hitters of all vectors $x \in \mathbb{R}_{+}^n$; moreover, we show how expanders that expand only on sets of size $\Theta(1/\epsilon)$ (or equivalently lossless condensers) lead to schemes in the strict turnstile model
under Guarantee 3. Then we instantiate the Guruswami-Umans-Vadhan expander~\cite{guruswami2009unbalanced} properly to obtain an explicit algorithm. The idea of using expanders in the context of heavy hitters has been employed by Ganguly~\cite{ganguly2008data}, although his result was for the $\ell_{\infty}/ \ell_1$ problem with $\Omega(1/\epsilon^2)$ space. Known constructions of these combinatorial objects are based on list decoding, and do not achieve optimal parameters. Any improvement on explicit constructions of these objects would immediately translate to an improved explicit heavy hitters scheme for the strict turnstile model.

Our deterministic lower bounds are based on choosing ``bad input vectors'' for the sketching matrix $S$ based on several properties of $S$ itself. Since the algorithm is deterministic, it must succeed even for these vectors.  

Our randomized lower bounds come from designing a pair of distributions which must be distinguished by a heavy hitters algorithm with the appropriate guarantee. They are based on distinguishing a random Gaussian input from a random Gaussian input with a large coordinate planted in a uniformly random position. By Lipschitz concentration of the $\ell_1$-norm and $\ell_2$-norm in Gaussian space, we show that the norms in the two cases are concentrated, so we have a heavy hitter in one case but not the other. Typically, the planted large coordinate corresponds to a column in $S$ of small norm, which makes it indistinguishable from the noise on remaining coordinates. The proof is carried out by a delicate analysis of the total variation distance of the distribution of the image of the input under the sketch in the two cases.

\subsection{Non-Adaptive Sparse Recovery} 
Our result follows a similar framework to that of \cite{GLPS12}, though chooses more carefully the main primitives it uses and balances the parameters in a more effective way. Both schemes consist of $\Oh(\log k)$ so-called weak systems: a scheme that takes as input a vector $x$ and returns a vector $\hat{x}$ which contains a $2/3$ fraction of the heavy hitters of $x$ (the elements with magnitude larger than $\frac{1}{\sqrt{k}} \|x_{-k}\|_2$) along with accurate estimates of (most of) them. Then it proceeds by considering the vector $x-\hat{x}$, which contains at most $1/3$ of the heavy hitters of $x$. We then feed the vector $x-\hat{x}$ to the next weak-level system to obtain a new vector which contains at most $(2/3)(1/3)k = 2k/9$ of the heavy hitters. We proceed in a similar fashion, and after the $i$-th stage we will be left with at most $k/3^i$ heavy hitters.

Each weak system consists of an identification and an estimation part. The identification part finds a $2/3$ fraction of the heavy hitters while the identification part estimates their values. For the identification part, the algorithm in \cite{GLPS12} hashes $n$ coordinates to $\Theta(k)$ buckets using a $2$-wise independent hash function and then uses an error-correcting code in each bucket to find the heaviest element. Since, with constant probability, a heavy hitter will be isolated and its value will be larger than the `noise' level in the bucket it is hashed to, it is possible to find a $2/3$ fraction of the heavy hitters with constant probability and $\Oh(k \poly(\log n))$ decoding time. Moreover, in each bucket we use a $b$-tree, which is a folklore data structure in the data stream literature, the special case of which ($b=2)$ first appeared in \cite{cormode2005improved}. The estimation part is a different analysis of the folklore \textsc{Count-Sketch} data structure: we show that the estimation scheme can be implemented with an optimal dependence on $\epsilon$, in contrast to the the expander-based scheme in \cite{gilbert2013l2}, which gave a sub-optimal dependence on $\epsilon$.

In this paper, we first design an algorithm with running time $\Oh(k^2 \poly(\log n))$, as in \cite{gilbert2013l2}, and then improve it to $\Oh(k^{1+\alpha} \poly(\log n))$ time, but with a slightly larger failure probability. The key observation is that in the first round we find a constant fraction of the heavy hitters with $e^{-k}$ failure probability, in the second round we find a constant fraction of the remaining heavy hitters with $e^{-k/2}$ failure probability, and so on, with polynomially decreasing number of measurements. In later rounds we can decrease the number of measurements at a slower rate so that the failure probability can be reduced by using more measurements while the optimality of the number of measurements is retained. Our suffering from the quadratic dependence on $k$ in the runtime is due to the fact that our sensing matrix is very dense, with $\Oh(k)$ non-zeros per column. Hence updating measurements $y \gets y - \Phi\hat x$ will incur a running time proportional to $\|\hat x\|_0\cdot k$, where $\hat x$ is a $\Oh(k)$-sparse vector.\footnote{We note an omission in the runtime analysis in~\cite{gilbert2013l2}. Their measurement matrix contains $s=2^i/i^c$ (where $c$ is a constant) repetitions of an expander-based identification matrix (see Lemma 4.10 and Theorem 4.9 of~\cite{gilbert2013l2}). 
Each repetition has at least one non-zero entry per column and thus the measurement matrix for the $i$-th iteration has at least $s$ non-zero entries per column, which implies that when $i = \log k-1$, each column has at least $\Omega(k/\poly(\log k))$ nonzero entries. Updating measurements $y \gets y - \Phi\hat x$ will then take $\Omega(k^2/\poly(\log k))$ time, where $\hat x$ has $\Omega(k)$ nonzero coordinates. Therefore we would expect that the overall running time of the recovery algorithm will be $\tilde\Omega(k^2)$ instead of their claimed $\tilde\Oh(k^{1+\alpha})$.} 

But, how do we achieve an almost linear time algorithm while beating the constant failure probability of \cite{GLPS12}? The idea is to use again the same analysis, but without sharpening the failure probability in the first $(1/2) \log k$ steps. The first $(1/2)\log k$ rounds still fail with tiny failure probability, and once we reach round $(1/2) \log k$, we can afford to run the quadratic-time algorithm above, since our sparsity is now $\Oh(\sqrt{k})$. Hence we would expect the total algorithm to run in time $\Oh( \sqrt{k}^2 \poly(\log n)) = \Oh(k \poly(\log n))$. Putting everything together, we obtain substantial improvements over both \cite{GLPS12} and \cite{gilbert2013l2}. 

A caveat of our approach, which is the reason we obtain $k^{1+\alpha}$ dependence on the decoding time is the following. Since we do not want to store the whole matrix, our algorithms are implemented differently when $ k \leq n^{1-\alpha/2}$ and $k \geq n^{1-\alpha/2}$. In the former case, we use $\log n = \Theta(\log(n/k))$ measurements per bucket in the identification step, in order to avoid inverting an $O(k)$-wise independent hash function. In the latter case, to compute the pre-image of an $O(k)$-wise independent hash function we just evaluate the hash function, which corresponds to a degree $O(k)$ polynomial, in all places in time $O(n \log^2 k)$, and trivially find the pre-images. The asymptotic complexity of our algorithm in its full generality is dominated by the latter case, where we obtain $\tilde{\Oh}(k^{1+\alpha})$ decoding time. We note that in the regime $k \leq n^{1-\alpha}$, the running time becomes $\Oh(k \log^{2+\gamma} n)$.

\subsection{Adaptive Compressed Sensing} 
We start by implementing a $1/\poly(\log n)$ failure probability version of the $1$-sparse routine of \cite{indyk2011power}. We apply a preconditioning step before running \cite{indyk2011power} with a different setting of parameters; this preconditioning step gives us power for the next iteration, enabling us to achieve the desired failure probability in each round. 

The lemma above leads to a scheme for $\ell_2/\ell_2$ in the low-sparsity regime, when $k < \poly(\log n)$. The algorithm operates by hashing into $\poly(\log n)$ buckets, determining the heavy buckets using a standard variant of \textsc{Count-Sketch}, and then running the $1$-sparse recovery in each of these buckets. The improved algorithm for $1$-sparse recovery is crucial here since it allows for a union bound over all buckets found. 

For the case of general $k$-sparsity, we show that the main iterative loop of \cite{indyk2011power} can be modified so that it gives exponentially smaller failure probability in $k$. The idea is that, as more and more heavy hitters are found, it is affordable to use more measurements to reduce the failure probability. Interestingly and importantly for us, the failure probability per round is minimized in the first round, and in fact is increasing exponentially, although this was not exploited in \cite{indyk2011power}. Therefore, in the beginning we have exponentially small failure probability, but in later rounds we can use more measurements to boost the failure probability by making more repetitions. This part needs care in order not to blow up the number of measurements while achieving the best possible failure probability. We use a martingale argument to handle the dependency issue that arises from hashing coordinates into buckets, and thus avoid additional repetitions that would otherwise increase the number of of measurements.

The two algorithms above show how we can beat the failure probability of \cite{indyk2011power} for all values of $k$: we have $1/\poly(\log n)$ for small $k$ and  $e^{k^{-0.999}}$ for large $k$, thus achieving asymptotic improvements in every case.

We note that although in the heavy hitters schemes we take into account the space to store the hash functions, in sparse recovery we adopt the standard practice of not counting the space needed to store the measurement matrix, and therefore we use full randomness.



\section{Formal Statement of Results}

In this section we state all of our results and in subsequent sections we shall only give an outline of our improved analysis of \textsc{Count-Min} and our lower bound for \textsc{Count-Sketch}. The proofs of all other theorems can be found in the appendix. 
The notations $\Oh_{a,b,\dots}, \Omega_{a,b,\dots}$ indicate that the constant in $\Oh$- and $\Omega$-notations depend on $a,b,\dots$.

\subsection{Heavy Hitters}

\subsubsection{Upper Bounds}
\begin{theorem}[$\ell_1$ Heavy Hitters Under Guarantee 3]\label{thm:promise-body}
There exists a data structure $\mathrm{DS}$ which finds the $\ell_1$ heavy hitters of any $x \in \mathbb{R}^n$ in the strict turnstile model under Guarantee 3. In other words, we can sketch $x$, such that we can find a list $L$ of $\Oh(k)$ coordinates that contains all $\eps$-heavy hitters of $x$. The space usage is $\Oh( \frac{1}{\epsilon} \log (\epsilon n))$, the update time is amortized $\Oh(\log^2(\frac{1}{\epsilon}) \log (\epsilon n) )$ and the query time is $\Oh( n \log^2(\frac{1}{\epsilon})  \log (\epsilon n ) )$.

\end{theorem}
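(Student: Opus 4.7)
The plan is to use a Count-Min sketch with $B = \Theta(1/\epsilon)$ buckets per row and $R = \Theta(\log(\epsilon n))$ rows, where the hash function inside each row is $\Theta(1/\epsilon)$-wise independent and the rows' hash functions are mutually independent. At query time, I would compute the estimate $\hat{x}_i := \min_r C_r[h_r(i)]$ for every $i \in [n]$ and output the $\Oh(1/\epsilon)$ coordinates with the largest estimates.

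The correctness argument has two pieces. First, in the strict turnstile model every counter is a sum of non-negative contributions, so $\hat{x}_i \geq x_i$; in particular, every $\epsilon$-heavy hitter satisfies $\hat{x}_i \geq \epsilon\|x\|_1$. Hence it suffices to show that, with high probability, the set $F := \{i : \hat{x}_i \geq \epsilon\|x\|_1\}$ has size $\Oh(1/\epsilon)$, for then the $\Oh(1/\epsilon)$ largest estimates automatically form a superset of $F$ and in particular contain every heavy hitter. Second, in any row $r$ define $H_r := \{b : C_r[b] \geq \epsilon\|x\|_1\}$; since $\sum_b C_r[b] = \|x\|_1$, we have $|H_r| \leq 1/\epsilon$, and $i \in F$ forces $h_r(i) \in H_r$ in every row. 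Conditioning on $h_r$ restricted to $[n]\setminus\{i\}$ freezes $H_r$ up to the harmless self-contribution of $x_i$ to its own bucket (which can only shrink $H_r$), and pairwise independence inside a row together with independence across rows yields $\Pr[i \in F] \leq (|H_r|/B)^R \leq c^{-R}$ when $B = c/\epsilon$. Choosing $c$ a sufficiently large constant and $R = \Theta(\log(\epsilon n))$ gives $\E|F| = \Oh(1/\epsilon)$.

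To upgrade this in-expectation bound into a high-probability statement, I would apply the $t$-th moment method with $t = \Theta(1/\epsilon)$, and this is precisely where the $\Theta(1/\epsilon)$-wise independence is essential. Expanding $\E[|F|^t]$ as a sum over $t$-tuples of coordinates and conditioning row-by-row on the hashes of the complement of the tuple, $(t+1)$-wise independence of $h_r$ makes the $t$ released hash values independent uniform in $[B]$; the per-coordinate estimate then multiplies across the tuple, so $\E[|F|^t] \leq (C/\epsilon)^t$ for some absolute constant $C$, and Markov yields $\Pr[|F| > 2C/\epsilon] \leq 2^{-t} = 2^{-\Omega(1/\epsilon)}$. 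I expect this concentration step to be the main obstacle: because $H_r$ is itself a function of the entire hash vector $h_r$, a careful conditioning argument is needed to avoid circular dependencies, and the extra mass contributed by the $t$ released coordinates must be absorbed into a mild enlargement of the relevant ``heavy bucket'' set without inflating its size beyond $\Oh(1/\epsilon)$.

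The space and time bounds follow from standard polynomial-evaluation tricks. The counters occupy $BR = \Oh(\frac{1}{\epsilon}\log(\epsilon n))$ words, and storing each $\Theta(1/\epsilon)$-wise independent hash function as a degree-$\Theta(1/\epsilon)$ polynomial over an $\Oh(n)$-size field fits in the same asymptotic budget. A naive single-point evaluation of one hash costs $\Oh(1/\epsilon)$, but by batching $\Theta(1/\epsilon)$ consecutive updates and invoking fast multipoint polynomial evaluation (which evaluates a degree-$d$ polynomial at $d$ specified points in $\Oh(d\log^2 d)$ time), the amortized cost per update drops to $\Oh(R\log^2(1/\epsilon)) = \Oh(\log^2(1/\epsilon)\log(\epsilon n))$. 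At query time the same trick evaluates each of the $R$ hash polynomials at all $n$ points in $\Oh(n\log^2(1/\epsilon))$ time per row, giving a total query time of $\Oh(n\log^2(1/\epsilon)\log(\epsilon n))$ as claimed.
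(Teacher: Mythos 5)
Your algorithm and the paper's are identical (Count-Min with $B=\Theta(1/\epsilon)$ buckets, $R=\Theta(\log(\epsilon n))$ rows, $\Theta(1/\epsilon)$-wise independent hashes, output the $\Oh(1/\epsilon)$ largest estimates), and your first correctness observation (overestimation, so it suffices to bound the number of coordinates whose estimate clears the threshold) is the right starting point. However there is a gap between what you prove and what the theorem asserts.

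The theorem is a \emph{uniform} statement: there exists a fixed data structure that correctly handles \emph{every} $x\in\R^n$. Your argument is entirely per-vector. You fix $x$, define $F=\{i:\hat x_i\geq \epsilon\|x\|_1\}$ (a set that depends on $x$), and prove $\Pr[|F|>2C/\epsilon]\leq 2^{-\Omega(1/\epsilon)}$. This yields a \emph{randomized} guarantee with failure probability $2^{-\Omega(1/\epsilon)}$, not a uniform scheme. To upgrade to a for-all-$x$ statement you need a union bound, and there are two problems. First, the event ``$|F|>2C/\epsilon$'' is not a hash-only event — it depends on the input vector — so you cannot union bound over it; you need to re-express the bad event as a property of the hash functions alone. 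The paper does this by switching from ``false positives $F$'' to ``sets $T\subseteq[n]\setminus S$ of size $C_0/\epsilon$ that fail to spread into $\geq 10/\epsilon$ distinct buckets in every row''; that event is determined by the hashes and a candidate set $T$, and the strict-turnstile overestimation property lets one convert ``$T$ spreads in some row'' into ``some $j\in T$ has a small estimate for \emph{every} $x$ with heavy-hitter set $S$.'' Second, even granting a reformulation in terms of sets, the needed per-set failure probability is $\binom{n}{\Theta(1/\epsilon)}^{-1}=(\epsilon n)^{-\Theta(1/\epsilon)}$, whereas your moment bound only delivers $2^{-\Theta(1/\epsilon)}$. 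You arranged the constants so that $\E|F|=\Theta(1/\epsilon)$ and then applied Markov to the $t$-th moment with $t=\Theta(1/\epsilon)$, which caps the exponent at $\Theta(1/\epsilon)$ independent of $n$. To get the extra $\log(\epsilon n)$ in the exponent you would need to drive $\E|F|$ down to something like $(\epsilon n)^{-\Omega(1)}/\epsilon$ (by taking $c$ large relative to $C_R$) and then use a binomial-type tail $\Pr[|F|\geq a]\leq\binom{n}{a}p^a$, which requires $a$-wise control of the joint inclusion probabilities — exactly the role the $\Theta(1/\epsilon)$-wise independence plays in the paper, but applied to the collision behavior of a fixed $t$-set rather than to $\E[|F|^t]$ with $\E|F|$ already of order $1/\epsilon$. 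In short: the algorithm and the intuition are right, but the analysis as written proves a weaker randomized claim with a failure probability that is too large by a factor of $\log(\epsilon n)$ in the exponent, and it never identifies the hash-only bad event that makes the union bound over $\binom{n}{\Theta(1/\epsilon)}$ sets possible. Your space and time analysis is correct and matches the paper's multipoint-evaluation argument.
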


The following theorem follows by an improved analysis of the dyadic trick \cite{cormode2008finding}.

\begin{theorem}There exists a data structure with space $\Oh(\frac{1}{\epsilon} \log (\epsilon n) + \log(\epsilon n) \cdot \log  ( \frac{\log (\epsilon n)}{\delta}))$ that finds the $\ell_1$ heavy hitters of $x \in \mathbb{R}^n$ in the strict turnstile model under Guarantee 3 with probability at least $1 - \delta$. The update time is $\Oh( \log^2(\frac{1}{\epsilon}) \log (\epsilon n) +  \epsilon \log(\epsilon n) \log^2(\frac{1}{\epsilon}) \log(\frac{\log( \epsilon n)}{\delta}))$ amortized and the query time is $\Oh( \frac{1}{\epsilon}( \log^2(\frac{1}{\epsilon}) \log (\epsilon n) +  \log(\epsilon n) \log^2(\frac{1}{\epsilon}) \log (\frac{\log(\epsilon n) }{\delta})) ))$.
\end{theorem}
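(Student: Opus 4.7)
The plan is to apply the dyadic decomposition of~\cite{cormode2005improved}, instantiating each level with the randomized version of the Count-Min variant that proves Theorem~\ref{thm:promise-body}. I would view $[n]$ as the leaves of a complete binary tree of depth $\lceil\log n\rceil$, and for each level $\ell$ let $y^{(\ell)}\in\R^{2^\ell}$ be the block-frequency vector obtained by summing $x$-coordinates within each level-$\ell$ block. Since $\|y^{(\ell)}\|_1=\|x\|_1$ at every level and mass only grows going up the tree, an $\epsilon$-heavy leaf of $x$ is automatically an $\epsilon$-heavy block at every coarser level. The observation driving the improvement is that levels with $2^\ell<1/\epsilon$ have fewer than $1/\epsilon$ blocks, so they can be stored explicitly at a one-time cost of $\Oh(1/\epsilon)$ space; only the $L^*:=\lceil\log(\epsilon n)\rceil$ remaining ``nontrivial'' levels need a proper heavy-hitters sketch.

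\paragraph{Per-level primitive, query, and correctness.}
At each nontrivial level $\ell$ I would instantiate the randomized Count-Min behind Theorem~\ref{thm:promise-body} --- $\Theta(1/\epsilon)$ buckets with $\Oh(1/\epsilon)$-wise independent hashing and an additive $\Oh(\log(1/\delta_\ell))$ rows --- tuned to per-level failure probability $\delta_\ell:=\delta/L^*$; its Guarantee~3 output is a list $C_\ell$ of $\Oh(1/\epsilon)$ candidate blocks containing every $\epsilon$-heavy block of $y^{(\ell)}$ with probability $\geq 1-\delta_\ell$. A query then descends top-down: start at the coarsest nontrivial level with its $C_\ell$, and move to level $\ell+1$ by taking the at-most-two children of every block in $C_\ell$ and intersecting with $C_{\ell+1}$. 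Induction keeps $|C_\ell|=\Oh(1/\epsilon)$ throughout. A union bound over the $L^*$ levels gives that with probability $\geq 1-\delta$ every $C_\ell$ is correct, and since a heavy leaf is heavy at every ancestor block the leaf-level output contains all $\epsilon$-heavy hitters of $x$, realizing Guarantee~3.

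\paragraph{Resource accounting and the main obstacle.}
For space, each nontrivial level contributes $\Oh(\frac{1}{\epsilon}\log(\epsilon N_\ell)+\log(L^*/\delta))$; summing the randomization term over $L^*$ levels yields $\Oh(\log(\epsilon n)\log(\log(\epsilon n)/\delta))$, which matches the second summand of the theorem. The main obstacle is controlling the first summand: a naive per-level telescoping gives $\Oh(\frac{1}{\epsilon}\log^2(\epsilon n))$ rather than the claimed $\Oh(\frac{1}{\epsilon}\log(\epsilon n))$, so I would share the $\Oh(1/\epsilon)$-wide hash tables across levels and charge each level only its \emph{incremental} depth $\log(\epsilon N_\ell)-\log(\epsilon N_{\ell-1})=\Oh(1)$ over its parent, letting the deepest level dictate the global depth $\Oh(\log(\epsilon n))$ and pinning the total width-times-depth cost at $\Oh(\frac{1}{\epsilon}\log(\epsilon n))$. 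For the update time, every stream update touches $L^*$ levels; at each level the $\Oh(1/\epsilon)$-wise hash is evaluated in amortized $\Oh(\log^2(1/\epsilon))$ time using the fast multipoint-polynomial-evaluation trick already exploited in Theorem~\ref{thm:promise-body}, plus an additive $\Oh(\epsilon\log(\epsilon n)\log^2(1/\epsilon)\log(\log(\epsilon n)/\delta))$ amortized cost for maintaining the additional randomization rows, yielding the stated update bound. The query cost is dominated by the $\Oh(1/\epsilon)$ candidates processed at each of the $L^*$ levels, each at the same amortized hash-evaluation cost, giving the claimed query time.
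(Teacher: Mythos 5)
There is a genuine gap in the space accounting. You correctly identify the obstacle --- a naive per-level Count-Min on the block vector $y^{(\ell)}\in\R^{2^\ell}$ contributes $\Theta(\frac{1}{\epsilon}\log(\epsilon N_\ell))$ at level $\ell$, and these sum to $\Theta(\frac{1}{\epsilon}\log^2(\epsilon n))$ rather than $\Theta(\frac{1}{\epsilon}\log(\epsilon n))$ --- but your proposed repair, sharing the $\Oh(1/\epsilon)$-wide hash tables across levels and charging each level only its incremental depth, does not work. Each level's Count-Min is applied to a different aggregated vector $y^{(\ell)}$; a shared counter row would have to aggregate $x$-mass across two (or more) levels simultaneously, which double-counts every coordinate and destroys the ``at most $1/\epsilon$ buckets have mass exceeding $\epsilon\|x_{-\lceil 1/\epsilon\rceil}\|_1$'' estimate that drives the correctness argument in the proof of Theorem~\ref{thm:promise-body}. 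There is no clean way to make one hash table serve two universes of different sizes without re-deriving the analysis from scratch.

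The fix the paper uses --- and the reason it states the more general Promise Heavy Hitters primitive (Theorem~\ref{thm:promise}) with promise-set size $m$ rather than only the $P=[n]$ corollary --- is that the top-down traversal never asks the level-$\ell$ sketch to estimate all $2^\ell$ blocks. It only needs to estimate the $\Oh(1/\epsilon)$ children of the blocks declared heavy at level $\ell-1$. Instantiating Theorem~\ref{thm:promise} with $m=\Theta(1/\epsilon)$ makes $\log(\epsilon m)=\Oh(1)$, so per-level space drops to $\Oh(\frac{1}{\epsilon}+\log(1/\delta_\ell))$, and summing over $L^*=\Oh(\log(\epsilon n))$ levels gives the claimed bound with no ``incremental depth'' trick needed. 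The same substitution repairs your update-time accounting, which as written also inherits an extra $\log(\epsilon n)$ factor from the $\log(\epsilon N_\ell)$ hash functions per level. The rest of your plan --- the dyadic decomposition, the coarse-to-fine traversal keeping $\Oh(1/\epsilon)$ candidates per level, the per-level failure budget $\delta/L^*$ and union bound, and the observation that a heavy leaf is heavy at every ancestor block --- matches the paper.
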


\begin{theorem}[Explicit $\ell_1$ Heavy Hitters in the Strict Turnstile Model]
There exists a fully explicit algorithm that finds the $\epsilon$-heavy hitters of any vector $x \in \mathbb{R}^n$ using space $\Oh(k^{1+ \alpha} (\log (\frac{1}{\epsilon}) \log n)^{2+ 2/ \alpha})$. The update time is $\Oh( \mathrm{poly}(\log n))$ and the query time is $\Oh( n \cdot \mathrm{poly}( \log n))$.
\end{theorem}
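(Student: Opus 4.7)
The plan is to instantiate the expander-based heavy hitters scheme sketched in Section~2.1 with the explicit lossless expander of Guruswami--Umans--Vadhan. Pick $K = \Theta(1/\epsilon)$ and a small absolute constant $\eta$ (say $\eta = 1/4$), and let $\Gamma : [n]\times[D]\to[M]$ be the GUV $(K,\eta)$-lossless expander, i.e.\ $|\Gamma(S)| \geq (1-\eta)D|S|$ for every $|S|\leq K$. Their construction gives $D = O((\log n\cdot \log K)^{1+1/\alpha})$ and $M \leq D^2 K^{1+\alpha}$, which, since $K = \Theta(1/\epsilon)$, specializes to the claimed $M = O(k^{1+\alpha}(\log(1/\epsilon)\log n)^{2+2/\alpha})$ bound on space. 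The sketch maintains one counter $y_\ell = \sum_{i : \ell \in \Gamma(i)} x_i$ per right vertex $\ell \in [M]$, together with a single counter for $\|x\|_1$ (available in the strict turnstile model). Since $\Gamma$ is fully explicit, evaluating $\Gamma(i)$ costs $\poly(\log n)$, so each update touches $D$ counters in total time $\poly(\log n)$.

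At query time, declare a bucket $\ell$ \emph{heavy} if $y_\ell \geq \tfrac{\epsilon}{2}\|x\|_1$, and output
\[
L = \{\, i \in [n] : \text{every bucket in } \Gamma(i) \text{ is heavy}\,\},
\]
which can be computed by scanning each $i$ and testing its $D$ neighbors in $O(nD) = O(n\cdot\poly(\log n))$ time. Completeness is immediate: in the strict turnstile model $x \geq 0$, so for an $\epsilon$-heavy hitter $i$ every bucket containing $i$ has mass at least $x_i \geq \epsilon\|x\|_1 \geq \tfrac{\epsilon}{2}\|x\|_1$, placing $i$ in $L$.

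For the list-size bound, the total bucket mass is $D\|x\|_1$, so the number of heavy buckets is at most $B := 2D/\epsilon$. I would argue by contradiction: if $|L| > K$, take any $L'\subseteq L$ with $|L'| = K$; the lossless expansion property gives $|\Gamma(L')| \geq (1-\eta)DK$, while $\Gamma(L')$ is contained in the set of heavy buckets of size at most $B = 2D/\epsilon$. Choosing the constant in $K = \Theta(1/\epsilon)$ large enough (e.g.\ $K = 4/\epsilon$) together with $\eta = 1/4$ makes $(1-\eta)DK > B$, the desired contradiction, so $|L|\leq K = O(1/\epsilon)$, yielding Guarantee~3.

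The main nontrivial step is instantiating GUV with the right parameter regime: one must pick the expansion loss $\eta$ as a small absolute constant so that the $\eta^{-(1+1/\alpha)}$ factor hidden in the GUV degree bound is absorbed into the constant, and one must set $K$ as a large enough constant times $1/\epsilon$ that the counting inequality $(1-\eta)DK > 2D/\epsilon$ is strict; mishandling either choice would blow up the $(\log(1/\epsilon)\log n)^{2+2/\alpha}$ factor or destroy the list-size argument. All the other details (explicitness of evaluation, maintaining $\|x\|_1$, the $\poly(\log n)$ update/query overheads) are routine once the expander is in place.
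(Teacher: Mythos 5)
Your proposal is correct and follows essentially the same route as the paper: instantiate the Guruswami--Umans--Vadhan lossless expander with left-set size $K = \Theta(1/\epsilon)$ and constant expansion loss, maintain one counter per right vertex, and bound the output size by the same counting argument comparing the $\geq (1-\eta)DK$ distinct neighbors of a size-$K$ candidate set against the $O(D/\epsilon)$ counters that can exceed the heaviness threshold. The only cosmetic difference is the query rule: the paper computes Count-Min-style minimum estimates $\hat{x}_i = \min_d C_{\Gamma(i,d)}$ and outputs the top $O(1/\epsilon)$ coordinates ranked by estimate, whereas you threshold each bucket against $\tfrac{\epsilon}{2}\|x\|_1$ and output indices all of whose neighbors are heavy; both give the same $O(1/\epsilon)$ list-size bound, and your variant just needs one extra counter for $\|x\|_1$, which is trivially available in the strict turnstile model.
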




\subsubsection{Lower Bounds}

\begin{theorem}[Strict turnstile deterministic lower bound for Guarantees 1,2]\label{thm:det1-body}
  Assume that $n = \Omega(\epsilon^{-2})$. Any sketching matrix $S$ must have $\Omega(\epsilon^{-2})$ rows if, in the strict turnstile model,
  it is always possible to recover from $Sx$ a set which contains all the $\epsilon$-heavy hitters of $x$ and contains no items which are not $(\epsilon/2)$-heavy hitters. 
  \end{theorem}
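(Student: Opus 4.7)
My plan is to argue by contradiction: assume $S \in \mathbb{R}^{m \times n}$ supports a deterministic scheme satisfying Guarantee 1 or 2 in the strict turnstile model, and exhibit two non-negative inputs with identical sketches whose forced valid outputs are incompatible. That will show $m \geq \Omega(\epsilon^{-2})$.

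Setting $k = \lceil 1/\epsilon\rceil$, I will start from a family of canonical ``tight'' inputs. For any $T \subseteq [n]$ with $|T| = k$, the vector $\mathbf{1}_T$ satisfies $(\mathbf{1}_T)_i = 1 = \epsilon \|\mathbf{1}_T\|_1$ for each $i \in T$, so both the $\epsilon$-heavy-hitter set and the $(\epsilon/2)$-heavy-hitter set of $\mathbf{1}_T$ equal $T$. Under Guarantee 1 or 2 the algorithm is forced to output exactly $T$ on $\mathbf{1}_T$. If I can also find a non-negative vector $y$ supported on $[n] \setminus T$ with $Sy = S\mathbf{1}_T$, then $y_i = 0 < (\epsilon/2)\|y\|_1$ for every $i \in T$, so $T$ is not contained in the set of $(\epsilon/2)$-heavy hitters of $y$ and hence $T$ is not a valid output for $y$; but $Sy = S\mathbf{1}_T$ forces the algorithm to output $T$ on $y$ as well, producing the contradiction. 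So for the algorithm to be correct, no such $y$ can exist, which is equivalent to the ``cone-independence'' condition: for every $T \subseteq [n]$ with $1 \leq |T| \leq k$,
\[ S\mathbf{1}_T \notin \mathrm{cone}\bigl(\{S_j : j \notin T\}\bigr). \]

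The hard step is to show this condition is incompatible with $m < c \epsilon^{-2}$ when $n = \Omega(\epsilon^{-2})$. The $|T| = 1$ special case just says each column $S_i$ is an extreme ray of $\mathrm{cone}(S)$, giving only $m = \Omega(1)$. A Carathéodory-plus-random-$T$ argument---every point in $\mathrm{cone}(S)$ is a non-negative combination of at most $m$ columns, and a random $k$-subset avoids those columns in expectation whenever $mk/n < 1$---improves this to $m = \Omega(n/k) = \Omega(\epsilon^{-1})$, still short of the target. To extract the full bound I plan to invoke the cone-independence condition uniformly across all subset sizes $1 \leq |T| \leq k$: viewing the columns of $S$ as $n$ points in $\mathbb{R}^m$, they must lie in a ``$k$-wise convex position'' in which no sum of at most $k$ of them can be written as a non-negative combination of the remaining columns. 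The main obstacle will be turning this geometric constraint into a sharp $m \geq \Omega(k^2)$ bound; I expect to do so by carefully counting faces of $\mathrm{cone}(S)$ in the spirit of the Upper Bound Theorem for polytopes, or equivalently by extracting $\Omega(k^2)$ linearly-independent witnesses (one per a suitably chosen collection of small subsets $T$) that must all lie in the row space of $S$, thereby forcing its dimension up to $\Omega(k^2)$. Making this counting step rigorous and avoiding hidden logarithmic losses is the core difficulty.
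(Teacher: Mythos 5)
Your reduction to the cone-independence condition is cleanly reasoned (modulo the minor technicality that you should take $k = \lfloor 1/\epsilon\rfloor$ rather than $\lceil 1/\epsilon\rceil$ so that every $i\in T$ is actually an $\epsilon$-heavy hitter of $\mathbf{1}_T$), but the plan for turning that condition into an $\Omega(\epsilon^{-2})$ bound cannot succeed. The condition you derive --- that $S\mathbf{1}_T \notin \mathrm{cone}(\{S_j : j\notin T\})$ for all $|T|\le k$ --- is already satisfied with only $m = 2k+1 = O(1/\epsilon)$ rows: take the columns of $S$ to be $S_i = (1, P_i)$ where $P_1,\dots,P_n$ are the vertices of a $k$-neighborly polytope in $\mathbb{R}^{2k}$ (e.g.\ a cyclic polytope, with $n$ arbitrarily large). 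For any $|T|\le k$ the set $\{P_i : i\in T\}$ spans a face, so there is $(h,c)$ with $\langle h, P_i\rangle = c$ for $i\in T$ and $\langle h, P_j\rangle < c$ for $j\notin T$; setting $h'=(-c,h)$ gives $\langle h', S_i\rangle = 0$ on $T$ and $< 0$ off $T$, which rules out writing $S\mathbf{1}_T$ as a non-negative combination of $\{S_j\}_{j\notin T}$. In other words, the Upper-Bound-Theorem / $k$-neighborliness machinery you propose to invoke is exactly what shows your condition tops out at $m=\Omega(1/\epsilon)$, not $\Omega(1/\epsilon^2)$.

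The missing ingredient is quantitative. The paper does not use the qualitative fact that no kernel vector can have support disjoint from a forced output set; it uses a quantitative bound on how ``spiky'' a kernel vector can be. Concretely, taking $S$ with orthonormal rows, averaging gives an index $i^*$ with $\|S^TSe_{i^*}\|_1 \le 2\sqrt r$ and $\|S^TSe_{i^*}\|_2^2 \le 2r/n$, and then $v=e_{i^*}-S^TSe_{i^*}$ is a kernel vector with $v_{i^*}\ge 1/2$ and $\|v\|_1 \le 2\sqrt r + 1$. Splitting $v$ into its positive and negative parts yields two non-negative inputs with the same sketch, one of which has $i^*$ as an $\epsilon$-heavy hitter while the other makes $i^*$ far from an $(\epsilon/2)$-heavy hitter; correctness then forces $(1/2)/(4\sqrt r + 2) \le \epsilon/2$, i.e.\ $r=\Omega(\epsilon^{-2})$. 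It is precisely the Cauchy--Schwarz step $\|S^TSe_{i^*}\|_1 \le \sqrt n\,\|S^TSe_{i^*}\|_2$, controlling the $\ell_1$-to-$\ell_\infty$ ratio of a kernel vector, that produces the extra $1/\epsilon$ factor your convex-position argument discards.
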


\begin{theorem}[Turnstile deterministic lower bound for Guarantee 3]\label{thm:det2-body} 
 Assume that $n = \Omega(\epsilon^{-2})$. Any sketching matrix $S$ must have $\Omega(\epsilon^{-2})$ rows if, in the turnstile model, some algorithm never fails in returning a superset of size $O(1/\epsilon)$ containing the $\epsilon$-heavy hitters. Note that it need not return approximations to the values of the items in the set which it returns.
\end{theorem}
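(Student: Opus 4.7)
The plan is to argue by contradiction via a kernel analysis. Suppose $S\in\mathbb{R}^{m\times n}$ admits a deterministic decoder $L$ with $|L(Sx)|\le C/\epsilon$ and $L(Sx)\supseteq H_\epsilon(x)$ for every $x\in\mathbb{R}^n$. Because $L$ depends only on $Sx$, whenever $Sx=Sy$ the two lists coincide and must contain the union $H_\epsilon(x)\cup H_\epsilon(y)$. Writing $v=y-x\in\ker S$, this gives the central constraint
\[
|H_\epsilon(x)\cup H_\epsilon(x+v)|\le C/\epsilon\qquad\text{for every }x\in\mathbb{R}^n\text{ and every }v\in\ker S.
\]
The aim is to falsify this constraint whenever $m=o(\epsilon^{-2})$.

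The target violation uses balanced $\{-1,0,+1\}$ vectors, which are enabled by the general (not strict) turnstile model. Set $k=\lfloor 1/\epsilon\rfloor$ and seek $r$ mutually disjoint $k$-subsets $A_1,\ldots,A_r\subseteq[n]$, for an integer $r$ depending only on $C$, such that $S\chi_{A_1}=\cdots=S\chi_{A_r}$. If such sets exist, then each $v_j:=\chi_{A_j}-\chi_{A_1}$ lies in $\ker S$, and applying the constraint with $x=\chi_{A_1}$ forces $L(S\chi_{A_1})$ to contain $\bigcup_{j=1}^rA_j$: indeed $\|\chi_{A_j}\|_1=k\le 1/\epsilon$, so every element of $A_j$ is an $\epsilon$-heavy hitter of $\chi_{A_j}$. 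The resulting list has size at least $rk>C/\epsilon$ once $r$ is chosen to be a sufficiently large constant, contradicting the size bound.

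It remains to produce such a collision configuration in every $S$ with $m=o(\epsilon^{-2})$, and this is the main technical obstacle. Naive pigeonhole on sketch values $S\chi_T\in\mathbb{R}^m$ fails, since $\mathbb{R}^m$ is continuous, so the collision has to be extracted from the ``bad-input-for-$S$'' style analysis advertised in the techniques section. I expect to split on structural properties of $S$: in the degenerate case some columns of $S$ already line up on enough $k$-subsets and the collision is immediate, while in the generic case the kernel $\ker S$, being a subspace of codimension $o(\epsilon^{-2})$ inside $\mathbb{R}^n$ with $n=\Omega(\epsilon^{-2})$, has dimension $\Omega(n)$ and must meet the combinatorial family $\{\chi_A-\chi_B:A,B\text{ disjoint }k\text{-subsets of }[n]\}$ through a dimension/counting argument. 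The hypothesis $n=\Omega(\epsilon^{-2})$ enters essentially here, providing enough room for the combinatorial family to be large; the quadratic dependence on $1/\epsilon$ reflects the codimension budget needed to avoid all such balanced sign collisions simultaneously, paralleling the well-known quadratic deterministic lower bounds in the $\ell_1/\ell_1$ sparse recovery literature.
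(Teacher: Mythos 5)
Your high-level framing --- reduce to showing that for some $v\in\ker S$ the union $H_\epsilon(x)\cup H_\epsilon(x+v)$ is too big --- is the right shape, and matches the spirit of the paper's proof. However, the concrete plan you give to realize it has a genuine gap that I do not think can be repaired in the form you describe, and the paper in fact proceeds quite differently.

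The obstacle is the existence of the collision configuration. You want $r$ disjoint $k$-subsets $A_1,\dots,A_r$ with $S\chi_{A_1}=\cdots=S\chi_{A_r}$ exactly, i.e.\ the $r-1$ vectors $\chi_{A_j}-\chi_{A_1}$ all lie in $\ker S$. But $\{\chi_A-\chi_B : A,B\text{ disjoint }k\text{-subsets}\}$ is a finite set of vectors and $\ker S$ is a linear subspace of measure zero in $\R^n$. A subspace of codimension $m$ can easily avoid every vector of a fixed finite family, no matter how large the family is relative to $m$: for a generic $S$ the sketches $S\chi_A$ are pairwise distinct. So ``the kernel must meet the combinatorial family through a dimension/counting argument'' is not a correct inference, and the hoped-for split into a ``degenerate case'' and a ``generic case'' does not salvage it --- the generic case is exactly the one you cannot handle. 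The analogy to $\ell_1/\ell_1$ lower bounds also does not close this gap, since those arguments, like the paper's, work with kernel vectors obtained by projection rather than with collisions of fixed $\{0,\pm1\}$ vectors.

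The paper sidesteps this entirely. After putting $S$ into the form with orthonormal rows, it observes that for each $i$ the vector $v_i = e_i - S^\top S e_i$ lies in $\ker S$ by construction (it is the projection of $e_i$ onto the orthogonal complement of the row space), so there is no need to search for collisions. Because $\sum_i\|S^\top S e_i\|_2^2=r$ and $\sum_i\|S^\top S e_i\|_1\le n\sqrt r$, a Markov/averaging argument produces a set $T$ of $9n/10$ indices with $(v_i)_i\ge 1/2$ and $\|v_i\|_1 = O(\sqrt r)$. Since $Sv_i = S0 = 0$, the decoder returns the same $O(1/\epsilon)$-sized list on every $v_i$ as on the all-zeros input; with $|T|=\Omega(n)=\omega(1/\epsilon)$, some $i\in T$ is omitted, forcing $i$ not to be an $\epsilon$-heavy hitter of $v_i$, which reads $\frac{1/2}{O(\sqrt r)+1}\le\epsilon$ and gives $r=\Omega(\epsilon^{-2})$. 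If you want to keep your collision viewpoint, notice that this is really a ``collision with $0$'': $v_i$ and $0$ have the same sketch, and $v_i$ has one large coordinate and small $\ell_1$ mass --- that is the object you should be constructing, not a $\{\pm1,0\}$ difference of indicator vectors.
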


\begin{theorem}[Randomized Turnstile $\ell_1$-Heavy Hitters Lower Bound for Guarantees $1,2$]\label{thm:randomizedlb}
  Assume that
  $1/\epsilon \geq C \sqrt{\log(1/\delta)}$ and suppose $n \geq  \left \lceil 64 \epsilon^{-1} \sqrt{\log(1/\delta)} \right \rceil$. Then
    for any sketching matrix $S$, it must have $\Omega(\epsilon^{-1} \sqrt{\log(1/\delta)})$ rows if, in the turnstile model, it succeeds
    with probability at least $1-\delta$ in returning a set containing all the $\epsilon$ $\ell_1$-heavy hitters and containing no
    items which are not $(\epsilon/2)$ $\ell_1$-heavy hitters. 
\end{theorem}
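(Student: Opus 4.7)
The plan is to prove the lower bound by an information-theoretic distinguishing argument between two Gaussian input distributions supported on a sub-block $[N]\subseteq[n]$ of size $N=\lceil 64\epsilon^{-1}\sqrt{\log(1/\delta)}\rceil$, exactly as flagged in Section~2.1. The reason to restrict the support to $[N]\le n$ is to make the planted signal $\tau\asymp\epsilon N$ as small as the heavy-hitter threshold permits, which minimizes the information carried by $Sx$ and forces $m$ to be large. The lower bound proved for dimension $N$ transfers to dimension $n$ by zero-padding.

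Concretely, set $\tau=4\sqrt{2/\pi}\,\epsilon N$ and define
\[
\mathcal{D}_0:\ x_{[N]}\sim N(0,I_N),\ x_{[n]\setminus[N]}=0; \qquad \mathcal{D}_1:\ x = z+\tau e_I,\ I\sim\mathrm{Unif}([N]),\ z\sim\mathcal{D}_0.
\]
Using Gaussian-Lipschitz concentration of $\|z_{[N]}\|_1$ about $N\sqrt{2/\pi}$ together with the standard tail bound on $\max_{j\in[N]}|z_j|$, I would first verify that with probability $1-\delta/10$: under $\mathcal{D}_0$ no coordinate is an $(\epsilon/2)$-$\ell_1$-heavy hitter, while under $\mathcal{D}_1$ the planted index $I$ is an $\epsilon$-heavy hitter and no other coordinate is $(\epsilon/2)$-heavy. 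The hypotheses $n\ge N$ and $1/\epsilon\ge C\sqrt{\log(1/\delta)}$ are calibrated precisely so that $\tau\asymp\epsilon N\asymp \sqrt{\log(1/\delta)}$ dominates both the $\ell_1$-fluctuation level $O(\sqrt{N\log(1/\delta)})$ and the Gaussian max-tail $O(\sqrt{\log(N/\delta)})$ at the required confidence level.

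Any algorithm satisfying Guarantee~1 with failure probability $\delta$ must then output the empty list under $\mathcal{D}_0$ and a list containing $I$ under $\mathcal{D}_1$, each with probability $1-O(\delta)$, so the test ``output list is empty'' distinguishes $S\mathcal{D}_0$ from $S\mathcal{D}_1$ with total error $O(\delta)$, giving $d_{TV}(S\mathcal{D}_0,S\mathcal{D}_1)\ge 1-O(\delta)$. Via the Bretagnolle-Huber inequality $d_{TV}\le\sqrt{1-e^{-KL}}$, this converts to
\[
KL(S\mathcal{D}_1\,\|\,S\mathcal{D}_0)\ \ge\ \Omega(\log(1/\delta)).
\]
To produce a matching upper bound, I would reduce to $SS^T=I_m$ via the SVD (the algorithm only sees $Sx$ up to an invertible transformation). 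Then $S\mathcal{D}_0=N(0,I_m)$ and $S\mathcal{D}_1=\tfrac{1}{N}\sum_{i\in[N]}N(\tau S_i,I_m)$, and by convexity of KL in its first argument together with the closed form $KL(N(\mu,I)\,\|\,N(0,I))=\|\mu\|_2^2/2$,
\[
KL(S\mathcal{D}_1\,\|\,S\mathcal{D}_0)\ \le\ \frac{1}{N}\sum_{i\in[N]}\frac{\tau^2\|S_i\|_2^2}{2}\ \le\ \frac{\tau^2}{2N}\sum_{i=1}^n\|S_i\|_2^2\ =\ \frac{\tau^2 m}{2N},
\]
using $\sum_i\|S_i\|_2^2=\|S\|_F^2=m$ from the orthonormal-row reduction. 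Substituting $\tau^2=\Theta(\epsilon^2 N^2)$ and $N=\Theta(\epsilon^{-1}\sqrt{\log(1/\delta)})$ yields $KL\le O(\epsilon m\sqrt{\log(1/\delta)})$, and combining with $KL\ge\Omega(\log(1/\delta))$ gives $m=\Omega(\epsilon^{-1}\sqrt{\log(1/\delta)})$, as claimed.

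The main delicate step is the heavy-hitter verification at the $1-O(\delta)$ confidence level: $\|z_{[N]}\|_1$ and $\max_j|z_j|$ must concentrate tightly enough simultaneously so that the ``output is empty'' indicator is a near-perfect distinguisher rather than a constant-quality one, and this is precisely where the numerical constants in the definition of $N$ and the assumption $1/\epsilon\ge C\sqrt{\log(1/\delta)}$ do the real work. A secondary subtlety is that the convexity upper bound on the KL of a Gaussian mixture is generally loose, but in our regime $\tau\|S_i\|_2=O(1)$ for typical $i$, so the tighter chi-squared computation $\chi^2=\tfrac{1}{N^2}\sum_{i,j}\exp(\tau^2 S_i^TS_j)-1$ agrees with the convexity bound up to constants and no additional argument is needed.
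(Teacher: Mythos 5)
Your KL-divergence/Bretagnolle--Huber route is a genuinely different argument from the paper's, which works with total variation directly (applying a Markov argument to control $\|S_I\|_2$ for most $I$, then the closed-form TV between Gaussians, then a Gaussian lower-tail bound). When valid, yours is cleaner: the identity $\sum_i\|S_i\|_2^2=m$ enters directly and no Markov step is needed. However, there is a gap in the distinguishing step. The test ``output list is empty'' requires that under $\mathcal{D}_0$ \emph{no} coordinate is an $(\epsilon/2)$-$\ell_1$-heavy hitter with probability $1-O(\delta)$, and this is false in general: the heavy-hitter threshold under $\mathcal{D}_0$ is $(\epsilon/2)\|z\|_1\approx 4\sqrt{\log(1/\delta)}$, while $\max_{j\in[N]}|z_j|\approx\sqrt{2\log N}$, and the theorem places no upper bound on $\epsilon^{-1}$, so $N=\Theta(\epsilon^{-1}\sqrt{\log(1/\delta)})$ can be far larger than $\mathrm{poly}(1/\delta)$ (e.g.\ $\delta$ a constant and $\epsilon^{-1}=\sqrt n$). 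In that regime some coordinate $j\neq I$ crosses the threshold with overwhelming probability, the algorithm may legitimately return a nonempty list under $\mathcal{D}_0$, and the proposed test is not a near-perfect distinguisher. Your own calibration remark is the tell: you need $\tau\asymp\sqrt{\log(1/\delta)}$ to dominate $O(\sqrt{\log(N/\delta)})$, which holds only when $N\leq\mathrm{poly}(1/\delta)$.

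The fix is to use the distinguisher ``is the planted index $I$ in the output list,'' as the paper's proof implicitly does. Then correctness under $\mathcal{D}_0$ requires only the one-dimensional event $|z_I|\leq 4\sqrt{\log(1/\delta)}$, which does hold with probability $1-O(\delta)$. The bound $D_{TV}\geq 1-O(\delta)$ then lives on the joint law of $(I,Sx)$, so Bretagnolle--Huber must be applied to $\mathrm{KL}\bigl((I,S\mathcal{D}_1)\,\|\,(I,S\mathcal{D}_0)\bigr)$. By the chain rule (using that $I$ is marginally uniform in both laws and independent of $Sx$ under $\mathcal{D}_0$) this joint KL equals $\frac{1}{N}\sum_i\mathrm{KL}\bigl(N(\tau S_i,I_m)\,\|\,N(0,I_m)\bigr)=\frac{\tau^2}{2N}\sum_i\|S_i\|_2^2\leq\frac{\tau^2 m}{2N}$, which is exactly the bound you obtained via convexity; so the rest of your calculation goes through unchanged, and you in fact no longer need convexity at all.
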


\begin{theorem}[Randomized $\ell_2$-Heavy Hitters Lower Bound with Guarantee 2]\label{thm:ell2_hhlb}
    Suppose that $\delta < \delta_0$ and $\epsilon < 1/\epsilon_0$ for sufficiently small absolute constants $\delta_0,\epsilon_0\in (0,1)$ and $n \geq  \left \lceil 64 \epsilon^{-1} \log(6/\delta) \right \rceil$. Then
    for any sketching matrix $S$, it must have $\Omega(\epsilon^{-1} \log(1/\delta))$ rows if it succeeds
    with probability at least $1-\delta$ in returning a set containing all the $\epsilon$-heavy hitters and containing no
    items which are not $(\epsilon/2)$-heavy hitters. 
\end{theorem}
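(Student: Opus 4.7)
The plan is to exhibit two input distributions $P_0$ and $P_1$ on $\mathbb{R}^n$ such that any algorithm satisfying the stated guarantee must distinguish the pushforwards $Q_0, Q_1$ of these distributions under the map $x \mapsto Sx$ with error $O(\delta)$, and then to upper bound the statistical distance between $Q_0$ and $Q_1$ in terms of $m$. Set $\alpha = C\sqrt{\epsilon n}$ for a sufficiently large absolute constant $C$. Under $P_0$ we draw $x \sim \mathcal{N}(0, I_n)$, while under $P_1$ we draw $i^*$ uniformly from $[n]$ independently of $g \sim \mathcal{N}(0, I_n)$ and set $x = g + \alpha e_{i^*}$. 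Using the hypothesis $n \geq 64\epsilon^{-1}\log(6/\delta)$ and standard Gaussian concentration, one checks that except with probability much smaller than $\delta$ no coordinate is $(\epsilon/2)$-heavy under $P_0$, whereas under $P_1$ the coordinate $i^*$ is $\epsilon$-heavy and no other coordinate is $(\epsilon/2)$-heavy. Consequently a correct algorithm must return $\emptyset$ under $P_0$ and a non-empty set under $P_1$; fixing the algorithm's internal randomness via a standard Markov-type averaging argument then produces a function of $Sx$ alone that distinguishes $Q_0$ from $Q_1$ with error $O(\delta)$, hence $d_{\mathrm{TV}}(Q_0, Q_1) \geq 1 - O(\delta)$.

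The next step is to upper bound $\mathrm{KL}(Q_1 \| Q_0)$. Replacing $S$ by $(SS^\top)^{-1/2}S$ preserves all information about the input, so we may assume $SS^\top = I_m$, whence $Q_0 = \mathcal{N}(0, I_m)$ and $Q_1 = \frac{1}{n}\sum_{i=1}^n \mathcal{N}(\alpha v_i, I_m)$ with $v_i := Se_i$. Convexity of the KL divergence in its first argument together with the closed form $\mathrm{KL}(\mathcal{N}(\mu, I_m) \| \mathcal{N}(0, I_m)) = \|\mu\|_2^2/2$ yields
\[
\mathrm{KL}(Q_1 \| Q_0) \;\leq\; \frac{1}{n}\sum_{i=1}^n \frac{\alpha^2 \|v_i\|_2^2}{2} \;=\; \frac{\alpha^2}{2n}\,\mathrm{tr}(SS^\top) \;=\; \frac{\alpha^2 m}{2n} \;=\; O(\epsilon m).
\]
Plugging the two estimates into the Bretagnolle--Huber inequality $d_{\mathrm{TV}}(P, Q) \leq \sqrt{1 - e^{-\mathrm{KL}(P\|Q)}}$ gives $1 - O(\delta) \leq \sqrt{1 - e^{-O(\epsilon m)}}$, which rearranges to $\epsilon m = \Omega(\log(1/\delta))$, i.e., $m = \Omega(\epsilon^{-1}\log(1/\delta))$, as desired.

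\textbf{Main obstacle.} The most delicate step is the first one: ensuring that the ``good'' Gaussian events (no unintended heavy hitters under $P_0$, and $i^*$ genuinely $\epsilon$-heavy under $P_1$) fail with probability much smaller than $\delta$ rather than just $o(1)$, since $\delta$ may be minuscule. This is precisely what forces the quantitative assumption on $n$: we need $\alpha = \Theta(\sqrt{\epsilon n})$ to dominate $\max_i |g_i|$ after a union bound over all $n$ coordinates, and a tail bound on $\|g\|_2^2$ of matching strength, and the constant $64$ in $n \geq 64\epsilon^{-1}\log(6/\delta)$ is calibrated so that both tails remain below $\delta$ for the full range of admissible $\delta$. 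A secondary but essentially routine point is that Bretagnolle--Huber must be applied with the KL divergence taken in the direction $\mathrm{KL}(Q_1 \| Q_0)$, which is exactly the direction in which convexity of KL produces a clean bound, since only $Q_1$ is a Gaussian mixture while $Q_0$ is a single Gaussian.
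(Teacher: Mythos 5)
Your information-theoretic machinery for the second half of the argument is a genuinely different and in some ways cleaner route than the paper's: instead of the paper's Gaussian total-variation formula applied columnwise together with a Markov bound over light columns, you upper-bound $\mathrm{KL}(Q_1\|Q_0)$ by convexity against the single-Gaussian closed form, evaluate the trace identity $\sum_i\|Se_i\|_2^2 = m$ to get $O(\epsilon m)$, and then invoke Bretagnolle--Huber; this collapses the paper's column-norm averaging into one line and avoids appealing to a Gaussian tail lower bound. That part of your proposal is correct.

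However, there is a real gap in the reduction from correctness to a bound on $d_{\mathrm{TV}}(Q_0,Q_1)$. You want the distinguisher ``output $\emptyset$ vs.\ non-empty,'' which requires that under $P_0$ \emph{no} coordinate of $x\sim N(0,I_n)$ is $(\epsilon/2)$-heavy, with failure probability $O(\delta)$. This needs a union bound over all $n$ coordinates, and with $n=\Theta(\epsilon^{-1}\log(1/\delta))$ the per-coordinate tail bound gives roughly $n\cdot(\delta/6)^{O(1)}\sim \epsilon^{-1}\delta^{O(1)}\log(1/\delta)$, which is \emph{not} $O(\delta)$ once $\epsilon^{-1}$ is large relative to $\poly(1/\delta)$. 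Concretely, for $\delta$ fixed and $\epsilon\to 0$, one has $\max_j x_j^2 \approx 2\ln n \to\infty$ while the $(\epsilon/2)$-heavy threshold $\tfrac{\epsilon}{2}\|x\|_2^2\approx 32\log(6/\delta)$ stays bounded, so under $P_0$ the largest coordinate genuinely becomes a heavy hitter and the algorithm is entitled (indeed required) to output a non-empty set. So your ``$\emptyset$ vs.\ non-empty'' test is not a valid distinguisher over the full parameter range the theorem covers. The paper sidesteps this by never making a claim about all coordinates: it only checks that the \emph{planted} index $I$ is (resp.\ is not) a heavy hitter, and the distinguisher is ``$I\in A(Sx)$?''\ applied to the joint distribution of $(Sx,I)$. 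Since the marginal of $I$ is uniform in both cases, the joint total variation reduces to $\tfrac1n\sum_i d_{\mathrm{TV}}\bigl(N(0,I_m),N(\alpha Se_i,I_m)\bigr)$, and no union bound is needed.

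Happily, your KL$+$BH computation survives this repair essentially verbatim: by the chain rule, with $I$ uniform in both cases,
\[
\mathrm{KL}\bigl((Sx,I)_1 \,\big\|\, (Sx,I)_0\bigr)
= \frac{1}{n}\sum_{i=1}^n \mathrm{KL}\bigl(N(\alpha Se_i,I_m)\,\big\|\,N(0,I_m)\bigr)
= \frac{\alpha^2}{2n}\operatorname{tr}(SS^\top) = O(\epsilon m),
\]
which is the identical numerical bound you obtained via convexity for the mixture; and the joint TV is $\geq 1-O(\delta)$ by the ``$I\in A(Sx)$?'' distinguisher, which only requires the single-coordinate tail bound (the paper's Fact on the univariate tail) rather than a union bound. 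Replacing your first paragraph with this joint-distribution formulation removes the gap and lets the rest of your argument go through unchanged.
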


\subsection{Non-Adaptive Sparse Recovery}
\begin{theorem}\label{thm:nonadaptive_quadratic_runtime}
Let $1\leq k\leq n$ be integers and $\gamma > 0$ be a constant. There exists an $\ell_2/\ell_2$ sparse recovery system $\mathcal{A} = (\mathcal{D},\mathcal{R})$ with parameters $(n,k,\epsilon,\Oh_\gamma(k/\epsilon \log(n/ \epsilon k)),\exp(-k/\log^3 k))$. Moreover, $\mathcal{R}$ runs in time $\Oh_\gamma(k^2 \log^{2+\gamma}n)$.

In other words, there exists an  $\ell_2/\ell_2$ sparse recovery system that uses $\Oh(k \epsilon \log(n/\epsilon k))$ measurements, runs in time $\Oh_{\gamma}(k^2 \log^{1+\gamma}n)$, and fails with probability $\exp(-k/\log^3 k)$. 
\end{theorem}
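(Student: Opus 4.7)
The plan is to follow the iterative ``weak system'' framework of Gilbert et al.\ (GLPS12), but to choose the per-stage parameters more carefully so that the number of measurements stays at the optimal $\Oh_\gamma(k/\epsilon \log(n/\epsilon k))$ while the overall failure probability drops all the way to $\exp(-k/\log^3 k)$. More concretely, I would build $R = \Oh(\log k)$ stages; stage $i$ receives the current residual $r^{(i)}=x-\sum_{j<i}\hat x^{(j)}$ and, with high probability, outputs an $\Oh(k_i)$-sparse vector $\hat x^{(i)}$ that both (a) identifies at least a $2/3$ fraction of the ``heavy'' coordinates of $r^{(i)}$ (those with squared magnitude $\gtrsim \frac{1}{k_i}\|r^{(i)}_{-k_i}\|_2^2$), and (b) estimates them to within additive error $\sqrt{\epsilon/k_i}\,\|r^{(i)}_{-k_i}\|_2$. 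Setting $k_i = k/3^i$ ensures that after stage $i$ the residual still has at most $k_i$ genuinely heavy coordinates, and after $R$ stages no heavy hitters remain, so that the $\ell_2/\ell_2$ guarantee follows from standard pruning arguments via the improved \textsc{Count-Sketch} estimation.

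Each weak system at stage $i$ is implemented in two parts. For identification I would hash $[n]$ into $B_i = \Theta(k_i)$ buckets using an $\Oh(k_i)$-wise independent hash function, and in each bucket run an error-correcting $b$-tree (a GLPS-style folklore data structure) to recover the identity of the isolated heavy coordinate. To drive down the per-stage failure probability I would take $r_i$ independent repetitions of this identification scheme and take a majority vote; since a heavy hitter is isolated and dominates its bucket with constant probability, $r_i$ repetitions give failure probability $\exp(-\Omega(r_i))$ per heavy hitter, and a Chernoff-type argument (over the random hashing) then yields failure probability $\exp(-\Omega(r_i k_i))$ for recovering a constant fraction. For estimation I would plug into the sharpened analysis of \textsc{Count-Sketch} (the same one used to give the optimal $\epsilon$-dependence referenced earlier in the paper) with width $\Oh(k_i/\epsilon)$ and $\Oh(r_i)$ rows.

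The key balancing act, and the main obstacle, is to choose the repetition counts $r_i$ so that (i) the total number of measurements across all stages remains $\Oh_\gamma(k/\epsilon \log(n/\epsilon k))$, and (ii) the union-bounded failure probability $\sum_i \exp(-\Omega(r_i k_i))$ sits below $\exp(-k/\log^3 k)$. Since the early stages have $k_i$ very close to $k$, taking $r_i = \Theta(1)$ already gives failure probability $\exp(-\Omega(k))$ essentially for free within the allowed measurement budget. In the late stages $k_i$ shrinks towards $1$, so I would scale $r_i$ up like $r_i \asymp (\log k)/k_i$ (absorbing lower order factors into the $\log^3 k$ denominator) in order to keep every $\exp(-\Omega(r_i k_i))$ term at most $\exp(-k/\log^3 k)$. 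The measurement budget for stage $i$ is $\Oh(r_i B_i \log(n/k_i) + r_i k_i/\epsilon)$, so this scaling spends roughly $\Oh((k\log k/\log^3 k)\cdot \log(n/k_i))$ extra measurements per stage, summed over $\Oh(\log k)$ stages this is still dominated by $k\log(n/\epsilon k)/\epsilon$, but establishing this carefully is where the hard bookkeeping lies; I would carry it out by showing that the measurement series is a geometric one in $k_i$ whose sum is controlled by its first term.

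Finally, the running time of $\Oh_\gamma(k^2\log^{2+\gamma}n)$ comes from two sources: evaluating the $\Oh(k)$-wise independent hash function over the $\Oh(k)$ non-zeros of the current $\hat x^{(j)}$ (for measurement updates $y\leftarrow y-\Phi\hat x^{(j)}$) and the $b$-tree decoding in each bucket. Using fast multipoint polynomial evaluation, the hash function on $\Oh(k)$ points costs $\Oh(k\log^2 k)$; the dense column structure means each update pass is $\Oh(k\cdot k) = \Oh(k^2)$, and summed over $\Oh(\log k)$ stages together with the $\log^{1+\gamma}n$ decoding overhead per bucket gives the claimed $\Oh_\gamma(k^2\log^{2+\gamma}n)$. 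Once all this bookkeeping is in place, the theorem follows from a straightforward union bound over the $\Oh(\log k)$ stages together with the tail-approximation guarantee accumulated across stages.
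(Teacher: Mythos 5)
The overall architecture you describe — iteratively peeling $\Oh(\log k)$ weak systems with $k_i = k/3^i$, an identification step via hashing into $\Oh(k_i)$ or $\Oh(k_i/\epsilon)$ buckets with a $b$-tree per bucket, and an estimation step via a sharpened \textsc{Count-Sketch}, with per-stage repetitions $r_i$ ramping up as $k_i$ shrinks so that the early stages give $e^{-\Omega(k)}$ for free and the late stages still hit the target — is essentially the paper's approach (the paper realizes the ``ramping up'' by instead shrinking the weak system's error parameter $\epsilon_i$ and passing $\delta_i$ into the weak system, but this is the same mechanism: the number of repetitions inside the weak system grows as $\lceil\frac{1}{k_i}\log(1/\delta_i)\rceil$).

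However, there is a genuine quantitative error in the key parameter choice. You propose $r_i \asymp (\log k)/k_i$, which gives a per-stage failure probability of $\exp(-\Omega(r_i k_i)) = \exp(-\Omega(\log k)) = 1/\poly(k)$. This is far larger than the target $\exp(-k/\log^3 k)$ — a union bound over stages then still only gives failure probability $1/\poly(k)$, not the claimed near-exponential bound. To hit $\exp(-k/\log^3 k)$ per stage, you need $r_i k_i \gtrsim k/\log^3 k$, i.e. $r_i \gtrsim k/(k_i\log^3 k) = 3^i/\log^3 k$, which for late stages ($k_i=\Theta(1)$) is about $k/\log^3 k$, roughly a factor $k/\log^4 k$ larger than what you wrote. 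This larger $r_i$ does in fact fit in the budget: the stage-$i$ identification cost is $r_i (k_i/\epsilon)\log(n/k_i) = \frac{k}{\epsilon\log^3 k}\log(3^i n/k)$, which summed over $\Oh(\log k)$ stages is $\Oh(\frac{k}{\epsilon\log k} + \frac{k}{\epsilon\log^2 k}\log(n/k))$, comfortably below $\frac{k}{\epsilon}\log(n/k)$. Your own text is internally inconsistent here: you state the extra per-stage cost as $\Oh((k\log k/\log^3 k)\log(n/k_i))$, which implies $r_i B_i = k\log k/\log^3 k$, incompatible with $r_i = (\log k)/k_i$ for either $B_i=\Theta(k_i)$ or $B_i=\Theta(k_i/\epsilon)$. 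Similarly, your runtime argument invokes $\Oh(k)$ column sparsity to get the $k^2$ update cost, but $r_i=(\log k)/k_i$ would give column sparsity only $\sum_i r_i = \Oh(\log k)$ — another sign that $r_i$ needs to be much larger in the late rounds. Once you replace $r_i$ by $\Theta(3^i/\log^3 k)$ (or equivalently pass $\delta_i = \exp(-k/\log^3 k)$ into a weak system whose repetition count is $\Theta(\frac{1}{k_i}\log(1/\delta_i))$, as the paper does), the rest of the bookkeeping goes through and matches the paper's argument.

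One smaller issue worth flagging: you fix the estimation error parameter at $\epsilon$ in every stage. The errors $\Oh(\epsilon\|r^{(i)}_{-k_i}\|_2^2)$ accumulate across $\Oh(\log k)$ stages, so without some geometric control you lose a $\log k$ factor in the final approximation. The paper handles this by shrinking the error parameter to $\epsilon_i=\epsilon/2^i$ in the early rounds and to $\epsilon/\log k$ in the late rounds (which costs nothing extra in measurements since the $k_i$'s are shrinking); you should say explicitly which mechanism you are using — the accumulated residual energy shrinking geometrically, or a shrinking $\epsilon_i$ — since ``standard pruning arguments'' glosses over the point that controls the final $(1+\epsilon)$ factor.
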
 

\begin{theorem}\label{thm:nonadaptive_linear_runtime}
Let $1\leq k\leq n$ be integers and $\gamma > 0$ be a constant. There exists an $\ell_2/\ell_2$ sparse recovery system $\mathcal{A} = (\mathcal{D},\mathcal{R})$ with parameters $(n,k,\epsilon,\Oh_\gamma(k/\epsilon \log(n/k)),\exp(-\sqrt{k}/\log^3 k))$. Moreover, $\mathcal{R}$ runs in time $\Oh_\gamma(k/\epsilon \log^{2+\gamma}n)$. In other words, there exists an  $\ell_2/\ell_2$ sparse recovery system that uses $\Oh(k \epsilon \log(n/\epsilon k))$ measurements, runs in time $\Oh_{\gamma}(k \log^{2+\gamma}n)$, and fails with probability $\exp(-\sqrt{k}/\log^3 k)$. 
\end{theorem}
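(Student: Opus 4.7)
The plan is to run the iterative weak-system framework that Section 2.2 describes. Chain together $\Theta(\log k)$ weak-level primitives $\mathcal{W}_1,\mathcal{W}_2,\ldots$: primitive $\mathcal{W}_i$ takes a residual $r^{(i)}$ that still contains at most $k_i := k/3^{i-1}$ heavy hitters, outputs an $O(k_i)$-sparse $\hat r^{(i)}$ that captures and estimates at least a $2/3$ fraction of them with $\ell_2$ error consistent with the overall $\ell_2/\ell_2$ invariant, and then we pass $r^{(i+1)} \gets r^{(i)} - \hat r^{(i)}$ to the next stage. Each weak primitive is built from a \textsc{Count-Sketch}-style identification matrix that hashes into $\Theta(k_i)$ buckets (each bucket equipped with a $b$-tree for sublinear in-bucket decoding) composed with the sharpened $\epsilon$-dependent estimation block used in Theorem~\ref{thm:nonadaptive_quadratic_runtime}. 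After $\Theta(\log k)$ stages the residual contains no heavy hitters and the $\ell_2/\ell_2$ guarantee follows by a standard telescoping argument.

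The key new idea, relative to \cite{GLPS12,gilbert2013l2}, is a two-phase schedule that trades runtime against failure probability. In the \emph{first phase}, rounds $i = 1,\ldots,\tfrac12\log k$, we instantiate $\mathcal{W}_i$ with $\Theta_\gamma((k_i/\epsilon)\log(n/k_i))$ measurements obtained as $O(\log^{1+\gamma}n)$ independent repetitions of a lightweight identification structure with constant per-repetition success. A Chernoff bound over the repetitions gives per-round failure probability $\exp(-\Omega(k_i/\log^3 k))$, and, crucially, keeps the per-column sparsity of the identification matrix at $O(\log^{1+\gamma}n)$ rather than $\Omega(k_i)$, so the residual update $y \gets y - \Phi \hat r^{(i)}$ costs only $\widetilde O(k_i/\epsilon)$ time in round $i$. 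Summing the geometric series yields first-phase time $O_\gamma((k/\epsilon)\log^{2+\gamma}n)$ and first-phase failure $\sum_{i \leq (\log k)/2}\exp(-\Omega(k_i/\log^3 k)) \leq \exp(-\Omega(\sqrt k/\log^3 k))$, dominated by the last round in which $k_i \approx \sqrt k$.

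At the start of the \emph{second phase} the residual has at most $O(\sqrt k)$ heavy hitters, so we invoke Theorem~\ref{thm:nonadaptive_quadratic_runtime} as a black box with sparsity parameter $\sqrt k$, the same $\epsilon$, and failure target $\exp(-\sqrt k/\log^3\sqrt k)$. That subroutine uses $O_\gamma((\sqrt k/\epsilon)\log(n/(\epsilon\sqrt k)))$ additional measurements, which is absorbed into $O_\gamma((k/\epsilon)\log(n/k))$, and runs in time $O_\gamma((\sqrt k)^2 \log^{2+\gamma}n) = O_\gamma(k \log^{2+\gamma}n)$. Adding the two phases gives the claimed bounds on measurements, runtime, and failure probability, with the failure probability dominated by the common value $\exp(-\sqrt k/\log^3 k)$ attained at the phase boundary.

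The main obstacle is controlling the per-column sparsity of the identification matrices in the first phase, because the residual update dominates the running time. A naive use of $O(k_i)$-wise independent hashing inside each repetition would inflate column sparsity to $\Omega(k_i)$ and produce a round-$i$ update cost of $k_i^2/\epsilon$, killing the construction (this is exactly the footnoted gap in \cite{gilbert2013l2}). We follow Section~2.2 and split by regime: when $k \leq n^{1-\alpha/2}$ we use $2$-wise hashing together with $\Theta(\log n) = \Theta(\log(n/k))$ repetitions, which gives low column sparsity and permits trivial inversion of the hash; when $k \geq n^{1-\alpha/2}$ we invert the $O(k)$-wise hash function by fast multipoint evaluation of the degree-$O(k)$ hash polynomial at all $n$ inputs in $O(n\log^2 k)$ time, which is absorbed in the overall bound because $n \leq k/(1-\alpha/2)$ in that regime. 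A secondary subtlety is ensuring that the telescoping $\ell_2/\ell_2$ potential still contracts despite the per-round boosting; this is handled exactly as in the GLPS analysis, tracking the $\ell_2$ mass on the current heavy set together with a union bound on the events controlling each $\mathcal{W}_i$.
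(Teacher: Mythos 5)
Your proposal follows the same two-phase strategy as the paper's own proof: run $\tfrac12\log k$ weak-system stages with geometrically shrinking sparsity $k_i = k/3^i$ (and geometrically shrinking $\epsilon_i$), then invoke the quadratic-time Theorem~\ref{thm:nonadaptive_quadratic_runtime} on the residual with sparsity parameter $\sqrt k$, so that the $k^2$-type runtime becomes $O(k\,\mathrm{poly}\log n)$ and the failure probability bottoms out at $\exp(-\sqrt k/\log^3 k)$. The internal description of each weak system is slightly off --- the paper's identification block uses a constant number $R = O\bigl(\tfrac{1}{k_i}\log\tfrac{1}{\delta_i}\bigr) = O(1)$ of hashing repetitions (with a $b$-tree per bucket) and derives its $\exp(-\Omega(k_i))$ failure from a Chernoff bound over the $\Theta(k_i)$ heavy hitters, not from $O(\log^{1+\gamma}n)$ repetitions of a constant-success structure --- but since you correctly account for the column sparsity, the residual-update cost, and the regime split for hash inversion, the overall argument and bounds match the paper's.
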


\begin{theorem}
Suppose that $k = n^{\Omega(1)}$. There exists an $\ell_2/\ell_2$ sparse recovery system $\mathcal{A}= (\mathcal{D}, \mathcal{R})$ with parameters $\left(n,k,\epsilon, \Oh(\frac{k}{\epsilon} \log\frac{n}{\epsilon k}), (\frac{n}{k})^{-\frac{k}{\log k}}\right)$. Moreover, $\mathcal{R}$ runs in $\Oh(k^2/\epsilon \poly(\log n))$ time. In other words, there exists an  $\ell_2/\ell_2$ sparse recovery system that uses $\Oh(k \epsilon \log(n/\epsilon k))$ measurements, runs in time $\Oh(k^2/\epsilon \poly(\log n))$, and fails with probability $(n/k)^{-k/\log k}$. 

\end{theorem}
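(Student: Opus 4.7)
The plan is to adapt the iterative framework of \cite{gilbert2013l2}: perform $\Oh(\log k)$ rounds of weak $\ell_2/\ell_2$ recovery, so that after round $i$ at most $k_i = k/C^i$ of the original heavy hitters remain unrecovered and the $(i{+}1)$-st round identifies and estimates a constant fraction of what remains. The residual $x - \hat x^{(i)}$ becomes sparser as rounds progress, so although the per-column density of the identification matrix can be large, the cost of updating the residual measurements $y \gets y - \Phi \hat x$ is controlled by the running sparsity.

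For the identification step I keep the bit-tester scheme of \cite{gilbert2013l2}: in round $i$, hash into $\Theta(k_i)$ buckets with a suitably independent hash family, repeat $s_i$ times, and within each bucket recover the index of the isolated heavy hitter by decoding an error-correcting code on $\Theta(\log(n/k))$ bit-tests. Under the hypothesis $k = n^{\Omega(1)}$ we have $\log(n/k) = \Theta(\log k)$, so each bit-test uses $\Theta(\log k)$ measurements and the per-bucket failure probability after $s_i$ repetitions drops as $\exp(-\Omega(s_i))$. Choosing $s_i$ to grow like $C^i/i^c$, so that $k_i s_i$ shrinks geometrically, makes the total identification measurement count telescope to $\Oh(k \log(n/k))$ while driving the per-round failure probability down to $(n/k)^{-k/\log k}$.

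The main change from \cite{gilbert2013l2}, and the source of the optimal $1/\epsilon$ factor, is in the estimation step. Rather than the expander-based estimator, which loses a factor $\epsilon^{-11}$ because the relevant isolation property of the expander degrades with $\epsilon$, I would use \textsc{Count-Sketch} with $\Oh(k_i/\epsilon_i)$ buckets and $\Oh(\log(n/k)/\log k)$ independent repetitions, where $\epsilon_i$ is chosen to grow geometrically in $i$ so that the squared estimation errors accumulated across all rounds sum to $\Oh(\epsilon)\|x_{-k}\|_2^2$. Taking the median across repetitions yields per-coordinate error below $\sqrt{\epsilon_i/k_i}\cdot\|x_{-k_i}\|_2$ with failure probability also $(n/k)^{-k/\log k}$ after a union bound over the $\Oh(k_i)$ indices output by the identification step. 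The total estimation budget telescopes to $\Oh((k/\epsilon)\log(n/(\epsilon k)))$, which dominates the identification budget.

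The main obstacle is the simultaneous balancing of $(k_i,\epsilon_i,s_i)$ across rounds so that the total measurement count is $\Oh((k/\epsilon)\log(n/(\epsilon k)))$, the per-round failure probabilities telescope to $(n/k)^{-k/\log k}$ (which forces $s_i$ to grow fast enough to compensate for the shrinkage of $k_i$), and the running time sums to $\Oh(k^2/\epsilon\,\poly(\log n))$. The quadratic $k^2$ factor in the runtime is unavoidable within this framework: in the early rounds each column of the identification matrix has $\Omega(k_i/\poly(\log k))$ non-zero entries, so a single residual update $y \gets y - \Phi \hat x$ with $\Oh(k_i)$-sparse $\hat x$ costs $\Omega(k_i^2/\poly(\log k))$ time (exactly the issue noted in the footnote about \cite{gilbert2013l2}), and summing over rounds gives the bound. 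The hypothesis $k = n^{\Omega(1)}$ enters both in making $\log(n/k) = \Theta(\log k)$, so that the failure probability takes the clean form $(n/k)^{-k/\log k}$, and in letting us avoid storing the full measurement matrix by evaluating each $\Oh(k)$-wise independent hash function on all of $[n]$ via polynomial multi-point evaluation in time $\Oh(n \log^2 k)$, which is $\poly(\log n)$ times the sparsity when $k = n^{\Omega(1)}$.
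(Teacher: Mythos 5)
Your overall strategy---iterate the GLPS13-style identification together with \textsc{Count-Sketch} estimation, decreasing the effective sparsity geometrically and using $k=n^{\Omega(1)}$ to control the failure probability and to permit polynomial multi-point evaluation---is indeed the skeleton of the paper's proof (which simply invokes Theorem~\ref{thm:gilbert_identification} for identification and Lemma~\ref{lem:estimation} for estimation, iterating $\log(k/\epsilon)$ times). However, there is a genuine gap in your treatment of the estimation step that the paper's Lemma~\ref{lem:estimation} is precisely designed to close, and without it your measurement budget cannot hit $(n/k)^{-k/\log k}$ failure probability.

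You claim that \textsc{Count-Sketch} with $\Oh(\log(n/k)/\log k)$ repetitions (a constant, given $k=n^{\Omega(1)}$) ``yields per-coordinate error below $\sqrt{\epsilon_i/k_i}\cdot\|x_{-k_i}\|_2$ with failure probability also $(n/k)^{-k/\log k}$ after a union bound over the $\Oh(k_i)$ indices.'' This is false: $\Theta(1)$ repetitions give at best a constant per-coordinate failure probability, and a union bound over $\Omega(k_i)$ indices then fails trivially. Conversely, to reach per-coordinate failure $(n/k)^{-k/\log k}/k_i$ (which a union bound would require), you would need $R = \Omega\bigl(\tfrac{k}{\log k}\log\tfrac{n}{k}\bigr) = \Omega(k)$ repetitions, putting the estimation budget at $\Omega(k^2/\epsilon)$ measurements and destroying the claimed bound. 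The paper's fix (Lemma~\ref{lem:estimation}) is a qualitatively different statement: with only $R = c_R(\log\tfrac{1}{\epsilon} + \tfrac{1}{k}\log\tfrac{1}{\delta})$ repetitions it guarantees that \emph{at most $\zeta k$} of the $\Oh(k/\epsilon)$ candidates are poorly estimated, with overall failure $\delta$. The argument is a Chernoff bound over coordinates (conditioned on the hashing, the noise realizations are independent across repetitions), not a per-coordinate union bound, and it is this aggregate-error trade-off that allows $R$ to carry only a $\tfrac{1}{k}\log\tfrac{1}{\delta}$ dependence on $\delta$. Your proposal needs this lemma (or an equivalent argument) or it does not close.

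Two smaller issues in the same spirit. First, you hash into $\Theta(k_i)$ buckets for identification; that only isolates the $1/k_i$-heavy hitters and misses coordinates with squared magnitude in $[\tfrac{\epsilon_i}{k_i},\tfrac{1}{k_i}]\cdot\|x_{-k_i}\|_2^2$, which can carry $\Omega(1)\|x_{-k}\|_2^2$ total energy---so you cannot reach $(1+\epsilon)$ error. The paper instead runs identification at effective sparsity $k/\epsilon$ (Theorem~\ref{thm:gilbert_identification} with sparsity parameter $k/\epsilon$ and quality $1$), so identification already costs $\Theta((k/\epsilon)\log(\epsilon n/k))$ and is \emph{not} dominated by estimation. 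Second, you say $\epsilon_i$ should ``grow geometrically'' so that the accumulated squared error is $\Oh(\epsilon)\|x_{-k}\|_2^2$; the sum is $\sum_i \Theta(\epsilon_i)\|x_{-k}\|_2^2$, so $\epsilon_i$ must \emph{decrease} geometrically (the paper uses $\epsilon_i = \epsilon/2^i$ in the matching Theorem~\ref{thm:k^2_algorithm1}). Finally, since the target is $H_{k/\epsilon,1}$ (of size up to $k/\epsilon$), the iteration count should be $\Theta(\log(k/\epsilon))$ rather than $\Theta(\log k)$.
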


\subsection{Adaptive Sparse Recovery}

\begin{theorem}[Entire regime of parameters] \label{thm:adaptive_whole_regime}
Let $ x \in \mathbb{R}^n$ and $\gamma>0$ be a constant. There exists an algorithm that performs $\Oh( (k/\epsilon) \log \log(\epsilon n/k))$ adaptive linear measurements on $x$ in $\Oh( \log^*k \cdot \log \log ( \epsilon n/k))$ rounds, and finds a vector $\hat{x} \in \mathbb{R}^n$ such that $\| x - \hat{x}\|_2^2 \leq (1+\epsilon) \|x_{-k}\|_2^2$. The algorithm fails with probability at most $\exp(-k^{1-\gamma})$.
\end{theorem}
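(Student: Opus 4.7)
The plan is to combine the two adaptive schemes previewed in the technical overview: for small remaining sparsity (of size $\poly(\log n)$ or less) the bucketing-plus-boosted-$1$-sparse-recovery algorithm, and for large sparsity a modified version of the iterative loop of Indyk-Price-Woodruff~\cite{indyk2011power}. The algorithm proceeds in phases, and I would use an aggressive tower-type sparsity reduction $k_0=k$, $k_{i+1}=\lceil\log k_i\rceil$, so that after $O(\log^* k)$ phases the residual sparsity drops to $\Oh(1)$ and we can finalize with an $\Oh(k/\epsilon)$-measurement \textsc{Count-Sketch}-style estimation step to achieve the $(1+\epsilon)$ factor. Inside each phase, we hash the current residual into $\Theta(k_i)$ buckets using limited-independence hash functions and run the preconditioned $1$-sparse recovery routine (the $1/\poly(\log n)$-failure-probability version developed earlier in the overview) in each bucket; this inner loop uses $\Oh(\log\log(\epsilon n/k))$ adaptive rounds, which yields the claimed $\Oh(\log^* k\cdot\log\log(\epsilon n/k))$ overall round complexity.

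By the standard isolation argument, with constant probability a constant fraction of the $k_i$ surviving heavy hitters is isolated in distinct buckets and correctly identified, reducing the residual sparsity geometrically within the phase; a constant number of inner iterations then brings the residual below $k_{i+1}$. To drive the per-phase failure probability to $\exp(-k_i^{1-\gamma})$, I exploit that as $k_i$ shrinks we can afford to repeat the per-bucket $1$-sparse routine more times without violating the measurement budget: allocating enough repetitions to push each per-bucket failure below $\exp(-k_i^{-\gamma})$ and then union-bounding over the $\Theta(k_i)$ buckets gives $\exp(-k_i^{1-\gamma})$ per phase. Because $k_i$ falls super-exponentially, the total measurement cost telescopes to $\Oh((k/\epsilon)\log\log(\epsilon n/k))$, dominated by the first phase, and the geometric sum of failure probabilities over phases is dominated by $\exp(-k^{1-\gamma})$. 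In the regime $k\leq\poly(\log n)$, only the low-sparsity cleanup phase is invoked, with failure probability $1/\poly(\log n)$, which is $\Omega(\exp(-k^{1-\gamma}))$ and hence consistent with the stated bound.

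The main obstacle is the cross-phase dependency: which coordinates remain in the residual entering phase $i+1$ is a function of the random hashes and measurements of phase $i$, so a naive analysis that treats phases as independent is invalid, and naive repetition to decouple them would lose an extra factor of $\log k$ in measurements. I would handle this via a Doob martingale argument (Azuma or Freedman inequality) on the number of heavy hitters successfully recovered within a phase, conditioning on the $\sigma$-algebra generated by all prior randomness. The key technical step is to bound the conditional per-step difference of the martingale by a constant while showing that the conditional expected number of successful identifications per inner iteration is $\Omega(k_i)$; this then yields the desired exponential-in-$k_i^{1-\gamma}$ concentration without repetitions. The remaining delicate bookkeeping is in verifying that the per-phase measurement allocation, which grows with the boost factor needed to hit the failure target, still telescopes to the claimed total and that the final \textsc{Count-Sketch} estimation preserves the $(1+\epsilon)$ $\ell_2/\ell_2$ guarantee when composed with the adaptive identification step.
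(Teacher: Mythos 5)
Your high-level framework (hash into buckets, run the boosted $1$-sparse recovery per bucket, use a martingale argument to control cross-round dependency, and switch strategies as the residual sparsity shrinks) matches the paper's approach. The martingale observation is exactly the right idea: the paper constructs a supermartingale over the order in which the heavy hitters that fail are hashed, and applies Azuma--Hoeffding. However, your proposed phase schedule has a genuine flaw that breaks both the failure-probability claim and the measurement budget.

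First, the sparsity reduction $k_{i+1} = \lceil\log k_i\rceil$ in a single phase with a ``constant number of inner iterations'' cannot work. With $\Theta(k_i)$ buckets and a constant per-bucket failure probability, a \emph{constant fraction} of the $k_i$ heavy hitters survive each inner iteration, so reaching residual sparsity $\log k_i$ requires $\Theta(\log k_i)$ geometric-halving iterations, not $O(1)$. Alternatively, to miss only $\log k_i$ out of $k_i$ heavy hitters in a single hash-and-recover call, the per-bucket failure would need to be $\Oh(\log k_i / k_i)$, forcing $\Omega(\log k_i / \log\log n)$ repetitions per bucket; when $k_i = k$ this already costs $\Omega(\frac{k \log k}{\epsilon\log\log n}\cdot\log\log(\epsilon n/k))$ measurements, exceeding the budget. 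Second, and more fundamentally, the sum $\sum_i \exp(-k_i^{1-\gamma})$ is \emph{not} dominated by the first term: as $i$ grows, $k_i$ shrinks super-exponentially and $\exp(-k_i^{1-\gamma})$ \emph{increases}, so the sum is dominated by the last phase where $k_i = \Oh(1)$, giving a \emph{constant} overall failure probability rather than $\exp(-k^{1-\gamma})$. To fix this you would need every phase to fail with probability $\exp(-k^{1-\gamma})$ (with the original $k$, not $k_i$, in the exponent), and then the early-phase measurement cost blows up as above.

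The paper avoids both problems with a carefully tuned three-phase schedule (Algorithm~\ref{alg:adaptive_overall}). Phase~1 halves $k_r$ for $\log\log k$ rounds with a constant number of repetitions, bringing the residual to $k/\log k$ while keeping the binomial term $\binom{k_r}{k_{r+1}}$ under control. Phase~2 then tower-reduces $k_r = k/((2\uparrow\uparrow r)\log k)$ with $R_r = C\log k$ repetitions for $\log^* k^\gamma$ rounds; the key quantity controlling the failure exponent is $k_{r+1}R_r \approx k/(2\uparrow\uparrow(r+1))$, which never drops below $\Omega(k^{1-\gamma})$ within the phase, and the per-round measurements $\Oh(k_r R_r/\epsilon)$ telescope. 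Phase~3 takes $k_r = k^{\gamma_r}$ with $R_r = Ck^{1-\gamma_r}$ for a constant number $T$ of rounds, keeping $k_r R_r = Ck$ so measurements stay $\Oh(k/\epsilon)$ per round while $k_{r+1}R_r = k^{1-(\gamma_r-\gamma_{r+1})} \geq k^{1-\gamma}$. In short, the per-phase failure exponent must be kept uniformly at $\Omega(k^{1-\gamma})$ by scaling $R_r$ inversely with $k_r$, and the measurement budget only permits this once $k_r$ has first been reduced geometrically in Phase~1. Your single-schedule tower reduction skips this essential balance.
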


\begin{theorem}[low sparsity regime]
Let $ x \in \mathbb{R}^n$ and parameters $k,\epsilon$ be such that $k/\epsilon \leq c \log n$, for some absolute constant $c$. There exists an algorithm that performs $\Oh( (k/\epsilon) \log \log n)$ adaptive linear measurements on $x$ in $\Oh( \log \log n)$ rounds, and finds a vector $\hat{x} \in \mathbb{R}^n$ such that $ \| x - \hat{x}\|_2 \leq (1+\epsilon) \|x_{-k/\epsilon} \|_2$. The algorithm fails with probability at most $1/\poly(\log n)$.
\end{theorem}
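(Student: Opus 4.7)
The plan is to follow the reduction sketched in the Techniques section: partition the coordinates via a random hash, identify those buckets that contain one of the top $k/\epsilon$ coordinates, and then invoke in parallel the improved adaptive $1$-sparse recovery from the preceding lemma, which uses $O(\log\log n)$ measurements in $O(\log\log n)$ rounds and fails with probability $1/\poly(\log n)$. The assumption $k/\epsilon \le c\log n$ is what makes the scheme affordable: both the number of buckets and the number of heavy buckets sit inside $\poly(\log n)$, so every union bound comfortably fits within the $1/\poly(\log n)$ budget.

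First, I would draw a $\Theta(k/\epsilon)$-wise independent hash $h\colon[n]\to[B]$ with $B=\Theta((k/\epsilon)\cdot(\log\log n)^c)$ for a suitable constant $c$, and set $x^{(j)} = x_{h^{-1}(j)}$. A standard balls-and-bins calculation with $k/\epsilon$ balls in $B$ bins shows that every one of the top $k/\epsilon$ coordinates is alone in its bucket with probability $1-1/\poly(\log n)$, while linearity of expectation together with Markov's inequality ensures $\sum_j \|x^{(j)}_{-1}\|_2^2 \le \|x_{-k/\epsilon}\|_2^2$ and that almost all buckets have noise of order $\|x_{-k/\epsilon}\|_2^2/B$. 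Next, a single layer of \textsc{Count-Sketch}-style estimation with $O(\log\log n)$ repetitions estimates every $\|x^{(j)}\|_2^2$ to within a constant factor, with failure probability $1/\poly(\log n)$ after a union bound over the $B = \poly(\log n)$ buckets, using $O(B\log\log n) = O((k/\epsilon)\log\log n)$ measurements in a constant number of rounds. Mark the $O(k/\epsilon)$ buckets whose estimated mass exceeds a threshold proportional to $\|x_{-k/\epsilon}\|_2^2/B$; by the isolation argument, this set includes every bucket hosting a top-$k/\epsilon$ coordinate.

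Finally, run the improved adaptive $1$-sparse routine in parallel inside every marked bucket. There are at most $O(k/\epsilon) \le O(\log n)$ such buckets, each call uses $O(\log\log n)$ measurements over $O(\log\log n)$ rounds and fails with probability $1/\poly(\log n)$, so a union bound keeps the total failure probability at $1/\poly(\log n)$. Combining the approximations returned on the heavy buckets with the tail estimate on the unmarked buckets (whose total $\ell_2^2$-mass is bounded by $\|x_{-k/\epsilon}\|_2^2$ up to constants) and rescaling $\epsilon$ by a constant at the outset, the output $\hat{x}$ satisfies $\|x-\hat{x}\|_2 \le (1+\epsilon)\|x_{-k/\epsilon}\|_2$. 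The rounds are dominated by the $1$-sparse calls, giving $O(\log\log n)$ rounds and $O((k/\epsilon)\log\log n)$ measurements overall.

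The main obstacle is the identification step: we must simultaneously mark all buckets containing a top-$k/\epsilon$ coordinate without overshooting the $O(k/\epsilon)$ target, while using only $O((k/\epsilon)\log\log n)$ measurements at failure probability $1/\poly(\log n)$ over all $B$ buckets. Taking $B$ only polylogarithmically above $k/\epsilon$ keeps the identification sketch compact and makes the separation between heavy buckets (whose mass is close to a top coordinate squared) and the rest (whose mass is at most $\|x_{-k/\epsilon}\|_2^2/B$) large enough that a constant-factor estimator separates them. Crucially, it is the $1/\poly(\log n)$ failure probability of the improved $1$-sparse lemma that makes the final union bound across $O(\log n)$ parallel calls go through; a constant-failure $1$-sparse routine like the one in Indyk et al.\ would not suffice in this regime.
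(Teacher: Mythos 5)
Your high-level plan---hash to isolate the heavy coordinates, identify the heavy buckets, and run the improved $1$-sparse routine in parallel in each---is exactly the paper's, and your remark that the $1/\poly(\log n)$ failure of the $1$-sparse lemma is what makes the union bound over buckets go through is the right intuition. However, your choice $B = \Theta((k/\epsilon)(\log\log n)^c)$ breaks the isolation step. Pairwise collisions among the $O(k/\epsilon)$ heavy coordinates occur with probability about $(k/\epsilon)^2/B = (k/\epsilon)/(\log\log n)^c$, and since the hypothesis allows $k/\epsilon$ as large as $c\log n$, this bound is $\Theta(\log n/(\log\log n)^c)$, which is not $1/\poly(\log n)$---it is not even $o(1)$. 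To drive the collision probability down to $1/\poly(\log n)$ you need $B=\poly(\log n)$, which is what the paper does with $h\colon[n]\to[\log^{c_0}n]$ for a large constant $c_0$.

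Raising $B$ to $\poly(\log n)$ then exposes a second issue: your "single layer of Count-Sketch-style estimation" is described as estimating every bucket mass $\|x^{(j)}\|_2^2$, which requires $\Omega(B)$ measurements, far exceeding $O((k/\epsilon)\log\log n)$ when $k/\epsilon$ is small (and your identity $O(B\log\log n)=O((k/\epsilon)\log\log n)$ is not correct even for your choice of $B$, since it drops the $(\log\log n)^c$ factor). The paper avoids this by decoupling the two hashes: the outer hash produces $|U|=\log^{c_0}n$ super-coordinates for isolation, and then Lemma~\ref{lem:countsketch} runs a Count-Sketch over those $|U|$ super-coordinates using only $O(k/\epsilon)$ inner buckets and $O(\log|U|)=O(\log\log n)$ repetitions, so the identification step costs $O((k/\epsilon)\log\log n)$ measurements while accurately estimating only the heavy super-coordinates (which is all that is needed). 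Your proposal conflates these two levels into one hash, and there is no value of $B$ that simultaneously makes isolation succeed with probability $1-1/\poly(\log n)$ and keeps a direct per-bucket estimation within the measurement budget; the two-level structure is essential.
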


\subsection{Spiked Covariance Model}

In the spiked covariance model, the signal $x$ is subject to the following distribution: we choose $k$ coordinates uniformly at random, say, $i_1,\dots,i_k$. First, we construct a vector $y\in \R^n$, in which each $y_{k_i}$ is a uniform Bernoulli variable on $\{-\sqrt{\epsilon/k}, +\sqrt{\epsilon/k}\}$ and these $k$ coordinate values are independent of each other. Then let $z\sim N(0,\frac{1}{n}I_n)$ and set $x = y+z$. We now present a non-adaptive algorithm (although the running time is slow) that uses $\Oh((k/\epsilon)\log(\epsilon n/k) + (1/\epsilon)\log(1/\delta))$ measurements and present a matching lower bound.

\begin{theorem}[Upper Bound]
Assume that $(k/\epsilon)\log(1/\delta) \leq \beta n$, where $\beta\in(0,1)$ is a constant. There exists an $\ell_2/\ell_2$ algorithm for the spiked-covariance model that uses $\Oh\left(\frac{k}{\epsilon}\log\frac{\epsilon n}{k}+ \frac{1}{\epsilon}\log\frac{1}{\delta}\right)$ measurements and succeeds with probability $\geq 1-\delta$. Here the randomness is over both the signal and the algorithm.
\end{theorem}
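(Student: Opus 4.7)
My proposed approach is an achievability/maximum-likelihood argument, treating the recovery as channel coding over a Gaussian channel. I would take $\Phi \in \mathbb{R}^{m \times n}$ to be a random Gaussian matrix, compute $b = \Phi x = \Phi y + \Phi z$, and let the decoder output the structured $\hat{y}$ that minimizes $\|b - \Phi \hat{y}\|_2^2$ over the finite class $\mathcal{H}$ of $k$-sparse vectors whose nonzero entries are $\pm\sqrt{\epsilon/k}$; the final estimate is $\hat{x} = \hat{y}$. Heuristically, each measurement carries $\Theta(\epsilon)$ nats of information (since the signal power is $\|y\|_2^2=\epsilon$ and the noise power is $\|z\|_2^2\approx 1$), while we must identify $\log|\mathcal{H}| = k\log(n/k)+\Theta(k)$ bits of hypothesis together with a $\log(1/\delta)$ bit error budget, which matches the target $m=\Theta\bigl((k/\epsilon)\log(\epsilon n/k)+(1/\epsilon)\log(1/\delta)\bigr)$.

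Before the union bound I would condition on a typical realization of $z$: using standard Gaussian concentration together with the hypothesis $(k/\epsilon)\log(1/\delta)\leq \beta n$, one obtains simultaneously $\|z\|_2^2 \in [1/2, 2]$, $\|z_S\|_2^2\leq O(k/n)$, and $|\langle y',z\rangle|\ll \|y'\|_2^2$ for every alternative $y'\in\mathcal H$, with probability at least $1-\delta/3$. Conditional on such a $z$, I would then fix any $H'\in\mathcal{H}\setminus\{H_0\}$, set $\Delta := y - y_{H'}$, and bound the probability that ML picks $H'$, which is exactly $\Pr\bigl(\|\Phi\Delta\|_2^2 + 2\langle\Phi\Delta, \Phi z\rangle < 0\bigr)$. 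A Chernoff/chi-squared tail bound, applied to this sum of $m$ i.i.d.\ random variables with positive mean $\|\Delta\|_2^2(1\pm o(1))$ and variance $O(\|\Delta\|_2^2\|z\|_2^2)$, yields a per-hypothesis failure probability of at most $\exp(-c_1 m\|\Delta\|_2^2)$ for an absolute constant $c_1$.

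The technical heart is the union bound. Parameterizing $H'$ by $d = |S\setminus S'|$ and the number $s$ of sign flips on $S\cap S'$, we have $\|\Delta\|_2^2 = 2(2s+d)\epsilon/k$, and the count of such $H'$ is $\binom{k}{d}\binom{n-k}{d}\binom{k-d}{s}2^d$. Writing $\lambda = m\epsilon/k$, the total error is bounded by
\[
\sum_{d=0}^{k}\sum_{s=0}^{k-d}\binom{k}{d}\binom{n-k}{d}\binom{k-d}{s}\,2^d\,\exp\bigl(-c_1\lambda(2s+d)\bigr).
\]
The inner sum on $s$ collapses to $(1+e^{-2c_1\lambda})^{k-d}\leq e$ once $\lambda\gtrsim \log k$, contributing only a constant factor. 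For the outer sum, the $d$-th term is of the form $\exp\bigl(d\log(Cn/d) + d\log(Ck/d) - c_1\lambda d\bigr)$, and a saddle-point calculation shows it is summable to $\delta$ provided $c_1 \lambda \geq \log(\epsilon n/k) + O(1)$ together with an additive slack of $\log(1/\delta)/k$, which translates to exactly $m = \Oh\bigl((k/\epsilon)\log(\epsilon n /k) + (1/\epsilon)\log(1/\delta)\bigr)$. I expect this counting step to be the main obstacle: squeezing out $\log(\epsilon n/k)$ rather than the naive $\log(n/k)$ or $\log(kn)$ requires exploiting the $d^2$ in the denominator of $\binom{k}{d}\binom{n-k}{d}$ and absorbing the $\Theta(k)$ lower-order contributions into the first term via the hypothesis on $n$.

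Finally, given that the ML decoder recovers $\hat y = y$ on the good event, the $\ell_2/\ell_2$ guarantee follows directly: on the typical event we have $\|\hat x - x\|_2^2 = \|z\|_2^2$ and $\|x_{-k}\|_2^2 = \|z_{[n]\setminus S}\|_2^2 \geq (1-O(\beta\epsilon))\|z\|_2^2$, so that $(1+\epsilon)\|x_{-k}\|_2^2 \geq \|z\|_2^2 = \|\hat{x}-x\|_2^2$ holds whenever $\beta$ is a sufficiently small absolute constant, which is exactly the hypothesis of the theorem.
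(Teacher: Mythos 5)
Your maximum-likelihood decoding approach is genuinely different from the paper's (which runs the Count-Sketch estimation scheme of Lemma~\ref{lem:estimation} with $T=[n]$, thresholds the estimates around $\pm\sqrt{\epsilon/k}$, and retains the top $k$), but it has a gap that is not just a computational obstacle in the union bound: the strategy proves too much. By outputting $\hat x = \hat y$ and asking the ML decoder to recover the hypothesis \emph{exactly}, you are implicitly asking to rule out even the nearest neighbors of $H_0$ in $\mathcal H$. The bottleneck is $d\in\{0,1\}$: a single sign flip ($d=0,s=1$) or a single support swap ($d=1,s=0$) gives $\|\Delta\|_2^2 = \Theta(\epsilon/k)$, so the per-hypothesis Chernoff bound $\exp(-c_1m\|\Delta\|_2^2)$ is $\exp(-\Theta(m\epsilon/k))$, and there are $\Theta(k)$ and $\Theta(kn)$ such $H'$ respectively. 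The union bound over these alone forces $m\epsilon/k \gtrsim \log(kn) + \log(1/\delta)$, i.e.\ $m \gtrsim (k/\epsilon)\log(kn) + (k/\epsilon)\log(1/\delta)$. Your claim that the slack needed is only ``$\log(1/\delta)/k$'' added to $\lambda = m\epsilon/k$ is incorrect; the slack is $\log(1/\delta)$, so the $\delta$-dependent term you actually obtain is $(k/\epsilon)\log(1/\delta)$, a factor $k$ above the target $(1/\epsilon)\log(1/\delta)$. Concretely, with $k=2$, $\epsilon=0.1$, $\delta=e^{-100}$, $n=10^5$ (which satisfies the hypothesis), the target gives $\lambda\approx 59$ but ruling out a single sign flip requires $\lambda\gtrsim 400$.

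The reason the theorem is nevertheless true, and the reason the paper's proof gets $(1/\epsilon)\log(1/\delta)$, is precisely that the $\ell_2/\ell_2$ guarantee is much weaker than exact recovery: one can afford to \emph{miss} a constant fraction $\theta k$ of the heavy coordinates, since each missed coordinate contributes only $\epsilon/k$ to the squared error, so up to $\Theta(k)$ misses still give error $O(\epsilon)\|x_{-k}\|_2^2$. The paper's Count-Sketch/thresholding algorithm never tries to distinguish nearest hypotheses and hence never pays the cost of doing so. Your ML framework could in principle be repaired by proving only that the ML output has $\|\hat y - y\|_2^2 = O(\epsilon)$, which restricts the union bound to $H'$ with $\|\Delta\|_2^2 = \Omega(\epsilon)$ and recovers the $(1/\epsilon)\log(1/\delta)$ term, but that is a different argument from the one you wrote, and you would still be left with $\log(n/k)$ rather than $\log(\epsilon n/k)$ from the naive count of $\Omega(\epsilon)$-far hypotheses. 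There is also a secondary issue: the conditioning event ``$|\langle y',z\rangle|\ll \|y'\|_2^2$ for every $y'\in\mathcal H$'' requires $k\log(n/k) \ll \epsilon n$ after the union bound over $|\mathcal H| = \binom{n}{k}2^k$ hypotheses, which is not implied by $(k/\epsilon)\log(1/\delta)\leq\beta n$ when $\delta$ is a constant (for then $\log(1/\delta)=O(1)$ while $\log(n/k)$ can be large).
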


\begin{theorem}[Lower Bound]
    Suppose that $\delta < \delta_0$ for a sufficiently small absolute constant $\delta_0\in (0,1)$ and $n \geq  \left \lceil 64 \epsilon^{-1} \log(6/\delta) \right \rceil$. Then any $\ell_2/\ell_2$-algorithm that solves with probability $\geq 1-\delta$ the $\ell_2/\ell_2$ problem in the spiked-covariance model must use $\Omega(\epsilon^{-1} \log(1/\delta))$ measurements. 
\end{theorem}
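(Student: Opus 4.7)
The lower bound will mirror the proof of Theorem~\ref{thm:ell2_hhlb}: I will exhibit a pair of signal distributions with bounded total variation under the sketch but distinguishable by any correct $\ell_2/\ell_2$ recovery procedure. By Yao's minimax principle I fix a deterministic sketching matrix $\Phi \in \mathbb{R}^{m \times n}$ (WLOG with $\Phi\Phi^\top = I_m$, by Gram--Schmidt on the rows) and a deterministic recovery map $\mathcal{R}$. I then define two hard distributions on $\mathbb{R}^n$: $\pi_0$ is pure Gaussian noise $z \sim N(0, I_n/n)$ (equivalently, a degenerate spiked-covariance realization with zero spike amplitude), and $\pi_1$ is the spiked-covariance distribution itself, $x = y + z$ with $y$ a uniformly random $k$-sparse vector with i.i.d.\ $\pm\sqrt{\epsilon/k}$ signs.

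The first step is to show that any algorithm meeting the $\ell_2/\ell_2$ guarantee with failure probability at most $\delta$ must distinguish $\pi_0$ from $\pi_1$ with probability at least $1 - O(\delta)$. Under $\pi_1$ one checks that $\|x\|_2^2 - (1+\epsilon)\|x_{-k}\|_2^2 \gtrsim k/n$ with high probability, so the trivial output $\hat{x} = 0$ violates the guarantee; expanding $\|\hat{x} - x\|_2^2 \le (1+\epsilon)\|x_{-k}\|_2^2$ therefore forces $2\langle \hat{x}, x\rangle - \|\hat{x}\|_2^2 \gtrsim k/n$, i.e., a nontrivial correlation between the recovered vector and the planted spike pattern. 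This correlation is statistically absent under $\pi_0$, so a test based on $\hat{x}$ (or a suitable observable computed from $\Phi x$) distinguishes the two hypotheses with the claimed probability.

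The second step is to bound the total variation between $\Phi\pi_0$ and $\Phi\pi_1$. Conditioned on $y$, the sketch satisfies $\Phi x \mid y \sim N(\Phi y, I_m/n)$, so a standard computation gives
\[
\chi^2(\Phi\pi_1 \,\|\, \Phi\pi_0) = \mathbb{E}_{y,y'}\!\bigl[\exp\bigl(n\langle \Phi y, \Phi y'\rangle\bigr)\bigr] - 1,
\]
where $y, y'$ are i.i.d.\ copies of the spike vector. Using $\sum_i \|\Phi_i\|_2^2 = m$ together with the independence and uniformity of the supports and signs of $y$ and $y'$, a direct calculation shows $\chi^2 \le e^{c\epsilon m} - 1$ for an absolute constant $c$. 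Combining this with the standard relation $\mathrm{TV} \le \tfrac{1}{2}\sqrt{\chi^2}$ (valid in the regime of interest), the distinguishability requirement $\mathrm{TV} \ge 1 - 2\delta$ produced in the previous step yields $\epsilon m = \Omega(\log(1/\delta))$, which is the claimed bound.

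The main obstacle is the first step. The $\ell_2/\ell_2$ error budget in the spiked-covariance model is loose---the trivial estimator $\hat{x} = 0$ barely fails---so extracting a reliable binary test from $\hat{x}$ requires a careful accounting of the $\Theta(k/n)$ slack in the guarantee, and the hypothesis $n \ge 64\epsilon^{-1}\log(6/\delta)$ is exactly what lets the relevant Gaussian concentration inequalities make this slack usable. The remainder of the argument is a direct transfer of the Gaussian-shift total-variation analysis already developed for Theorem~\ref{thm:ell2_hhlb}.
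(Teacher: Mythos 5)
Your overall framework (two hard distributions, bound the total variation of the sketched laws, contradict the $1-\Theta(\delta)$ distinguishability forced by the $\ell_2/\ell_2$ guarantee) is the same as the paper's, but the execution of the total-variation step via $\chi^2$ is flawed in two independent ways, and a third concern applies to the reduction.

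First, the claimed bound $\chi^2(\Phi\pi_1\,\|\,\Phi\pi_0)\leq e^{c\epsilon m}-1$ is false, because the $\chi^2$ divergence is not robust to the adversarial choice of sketch columns. Your own formula shows
$\chi^2+1=\E_{y,y'}\exp\bigl(n\langle\Phi y,\Phi y'\rangle\bigr)$,
and a single heavy column destroys this expectation. Take $k=1$. Since $\sum_i\|\Phi_i\|_2^2 = m\geq 1$, the adversary can place a column with $\|\Phi_1\|_2=1$. The event $\{\supp y = \supp y' = \{1\},\ \sigma_1 = \sigma_1'\}$ has probability $\frac{1}{2n^2}$ and on it $n\langle\Phi y,\Phi y'\rangle = n\epsilon\|\Phi_1\|_2^2 = n\epsilon$. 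Hence
$\chi^2+1 \geq \tfrac{1}{2n^2}\,e^{n\epsilon}$,
which is already super-polynomial under the theorem's hypothesis $n\epsilon\geq 64\log(6/\delta)$ and $\delta<1/\poly(n)$, regardless of how small $m$ is. This is exactly the obstruction that the paper neutralizes by conditioning on the random location landing in the $9/10$ fraction of columns with $\|S_I\|_2^2\leq 10r/n$ (a Markov argument), which your $\chi^2$ computation has no analogue of. Second, even granting the bound $\chi^2\leq e^{c\epsilon m}-1$, the inequality $\mathrm{TV}\leq\frac12\sqrt{\chi^2}$ cannot produce a $\log(1/\delta)$-dependent lower bound: plugging $\mathrm{TV}\geq 1-2\delta$ gives $e^{c\epsilon m}\geq 1+4(1-2\delta)^2\approx 5$, i.e.\ $m=\Omega(1/\epsilon)$ with no $\log(1/\delta)$ factor at all. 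The $\log(1/\delta)$ factor in the paper comes from the exact one-dimensional characterization $D_{TV}(N(0,1),N(\mu,1)) = \Pr_{g\sim N(0,1)}\{|g|\leq\mu/2\}$ combined with a Gaussian \emph{lower} tail bound; a one-sided Pinsker/Cauchy--Schwarz-style inequality simply cannot give that.

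Third, your step~1 assertion that $\|x\|_2^2-(1+\epsilon)\|x_{-k}\|_2^2\gtrsim k/n$ with high probability is not true for small $k$: the mean of this quantity is $\Theta(k/n)$ but its fluctuation $2\langle y,z\rangle$ has standard deviation $\Theta(\sqrt{\epsilon/n})$, and these are not comparable (indeed the mean is dominated by the fluctuation) unless $k\gtrsim\sqrt{\epsilon n}$. The theorem's hypothesis only gives $\epsilon n\geq 64\log(6/\delta)$, so this bound on $k$ is not guaranteed, and in particular at $k=1$ the trivial estimator $\hat{x}=0$ succeeds with probability close to $1/2$ rather than failing with high probability. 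The paper's version of this reduction is more careful (it conditions on the concentration events $\mathcal{E},\mathcal{F}$ and absorbs the conditioning cost into an $O(\delta)$ additive loss in TV), and its hard instance implicitly uses a spike of amplitude a constant factor larger than $\sqrt{\epsilon/k}$, which is exactly what makes the trivial estimator fail deterministically on the conditioned event. To repair your argument you would need both of these changes and would need to replace the $\chi^2$ machinery with the conditioned direct-TV computation; at that point you would have essentially reproduced the paper's proof.
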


Combining with the lower bound from \cite{PW11} (Section 4) we get a lower bound for the spiked covariance model of $\Omega((k/\epsilon) \log(n/k) + \log(1/\delta)/\epsilon)$. We note that although the lower bound is not stated for the spiked covariance model, inspection of the proof indicates that the hard instance is designed in that model.

\section{Improved Analysis of \textsc{Count-Min} sketch}
We sketch the proof of Theorem~\ref{thm:promise-body} in this section. We shall first show a randomized algorithm with failure probability $\delta$ and the theorem then follows from taking a union bound over all possible subsets of size $\Oh(1/\epsilon)$ that contain the heavy hitters. The randomized algorithm is presented in Algorithm~\ref{alg:count-min-body}, where we set $C_{R} =5, C_{\delta} = 10(\ln 4 -1), C_B = 20, C_0= 30$.

\begin{algorithm}
\caption{\textsc{Count-Min}: Scheme for Heavy Hitters in the Strict Turnstile Model}\label{alg:count-min-body}
\begin{algorithmic}
\State $R \gets C_{R} \log (\epsilon n) + \lceil \frac{ \epsilon \log(1/\delta)}{C_{\delta}} \rceil$
\State $B \gets C_B k /\epsilon$
\State Pick hash functions $h_1, \ldots , h_{R}: [n] \rightarrow [B]$, each being $(C_0/\epsilon)$-wise independent
\State $C_{r,j}\gets 0$ for all $r \in [R]$ and $j \in [B]$.

\medskip

\Function{Update}{$i,\Delta$}
\State $C_{r, h_r(i)} \leftarrow C_{r,h_r(i)} + \Delta$ for all $r \in [R]$
\EndFunction

\medskip

\Function{Query}{}
\State $\hat{x}_i \gets  \min_r C_{r,h_r(i)}$ for all $i\in [n]$
\State \Return the $(C_0+1) \frac{1}{\epsilon}$ coordinates in $P$ with the largest $\hat{x}$ values.
\EndFunction
\end{algorithmic}
\end{algorithm}

\begin{proof}[Proof of Theorem~\ref{thm:promise-body} (Sketch)]
We prove first that Algorithm $1$ satisfies Guarantee $3$ with probability $\geq 1-\delta$. Let $S$ be the set of $\epsilon$-heavy hitters of $x$.  Let $T$ be any set of at most $C_0/\epsilon$ coordinates that is a subset of $[n]\setminus S$. Fix a hash function $h_a$. Let $B_a[T]$ denote the set of the indices of the buckets containing elements from $T$, i.e.\@ $B_a[T] = \{ b \in [B]: \exists j \in T, h_a(j) = b\}$. The probability that $|B_a[T]| <10/\epsilon $ is at most 
\[
\binom{B}{10/\epsilon}\left(\frac{10/\epsilon}{B}\right)^{\frac{C_0}{\epsilon}} \leq \left(\frac{e C_Bk}{10 }\right)^{\frac{10}{\epsilon}} \left(\frac{10}{C_Bk}\right)^{\frac{C_0}{\epsilon}}  \leq \left(\frac{e}{4}\right)^{\frac{10}{\epsilon}}.
\] 
Considering all $R$ hash functions, it follows that with probability $\geq 1 - (e/4)^{10R/\epsilon}$ there exists an index $ a_T^* \in [R]$ such that $|B_{a_T^*}[T]| \geq 10/\epsilon$. A union bound over all possible $T$ yields that with probability at least $1- \delta$, there exists such an index $a^*_T$ for each $T$ of size $C_0 /\epsilon$.

Now, let $Out$ be the set of coordinate the algorithm outputs and let $T' = Out\setminus S$. Clearly, $|T'| \geq C_0/\epsilon$. We can discard coordinates of $T'$ so that $|T'| = C_0/\epsilon$. We shall prove that there exists an element $j \in T'$ such that its estimate is strictly less than $\epsilon \|x_{-1/\epsilon }\|_1$ and hence smaller than the estimate of any element in $S$. This will imply that every element in $S$ is inside $Out$. From the previous paragraph, there exists $a_{T'}^*$ such that $|B_{a_{T'}[T']}| > 10/\epsilon$. Since we have at most $1/\epsilon$ heavy hitters and at most $1/\epsilon$ indices $b \in [B]$ such that the counter $ C_{a,b} \geq \epsilon  \|x_{-1/\epsilon} \|_1$, at least $10/\epsilon$ indexes (buckets) of $B$ such that the bucket has mass less than $\epsilon \|x_{1/\epsilon}\|_1$ while at least one element of $T'$ is hashed to that bucket. Therefore, the estimate for this element is less than $\epsilon \|x_{- 1/\epsilon}\|_1$, which shows correctness.

Algorithm $1$ clearly has an update time of $\Oh( \frac{1}{\epsilon} \log(\epsilon n) + \log\frac{1}{\delta})$ and a query time $\Oh(\frac{1}{\epsilon} n \log(\epsilon n) + n\log\frac{1}{\delta})$. In the appendix, we have the full proof showing how to improve the update time to amortized
$\Oh(  \log (\epsilon n) \log^2\frac{1}{\epsilon} +   \epsilon \log \frac{1}{\delta} \log^2\frac{1}{\epsilon} )$  
 and the query time to $\Oh(n (\log n +\epsilon\log \frac{1}{\delta} )\log^2\frac{1}{\epsilon})$, via fast multi-point evaluation of polynomials.


Finally, let $\delta = 1/\binom{n}{1/\epsilon}$ so that we can take a union bound over all possible subsets of heavy hitters.
\end{proof}

\section{$\ell_{\infty}/\ell_2$ lower bound}

This section is devoted to the proof of Theorem \ref{thm:ell2_hhlb}. The proof is based on designing a pair of hard distributions which
cannot be distinguished by a small sketch. We show this 
by using rotational properties of the Gaussian distribution to reduce
our problem to a univariate Gaussian mean estimation problem, which we
show is hard to solve with low failure probability.

\subsection{Toolkit}
We list some facts in measure of concentration phenomenon in this subsection and defer all the proofs to Appendix~\ref{sec:lb_prelim}.

\begin{fact}[Total Variation Distance Between Gaussians]\label{lem:gtvdMain}
  $
  D_{TV}(N(0, I_r), N(\tau, I_r)) = \Pr_{g\sim N(0,1)}\left\{|g| \leq \frac{\|\tau\|_2}{2}\right\}.
  $
\end{fact}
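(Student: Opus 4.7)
The plan is to reduce the multivariate computation to a one-dimensional one by exploiting the rotational symmetry of the standard Gaussian, and then to evaluate the resulting univariate total variation distance directly.

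First, I would observe that $D_{TV}$ is invariant under invertible linear transformations of the sample space. Choose an orthogonal matrix $U \in \mathbb{R}^{r \times r}$ with $U\tau = \|\tau\|_2 e_1$. Under the map $x \mapsto Ux$, the distribution $N(0,I_r)$ is preserved (since $UU^\top = I_r$) and $N(\tau,I_r)$ is pushed forward to $N(\|\tau\|_2 e_1, I_r)$. Hence
\[
D_{TV}(N(0,I_r), N(\tau,I_r)) = D_{TV}(N(0,I_r), N(\|\tau\|_2 e_1, I_r)).
\]

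Next, I would factor out the unchanged coordinates. Both distributions are products whose last $r-1$ marginals are identical ($N(0,I_{r-1})$). A standard fact is that for product measures $P_1 \otimes R$ and $Q_1 \otimes R$ the total variation distance equals $D_{TV}(P_1, Q_1)$; this follows by writing the TV distance as $\tfrac{1}{2}\int |p-q|$, factoring the common density, and using $\int R = 1$. This reduces the problem to computing $D_{TV}(N(0,1), N(a,1))$ where $a := \|\tau\|_2 \geq 0$.

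For the univariate case, I would use the identity $D_{TV}(P,Q) = P(A) - Q(A)$ where $A = \{x : p(x) > q(x)\}$. Comparing the densities $\phi(x)$ and $\phi(x-a)$ shows that $\phi(x) > \phi(x-a)$ precisely when $x^2 < (x-a)^2$, i.e.\@ when $x < a/2$. Therefore
\[
D_{TV}(N(0,1), N(a,1)) = \Phi(a/2) - \Phi(a/2 - a) = \Phi(a/2) - \Phi(-a/2) = \Pr_{g\sim N(0,1)}\{|g| \leq a/2\},
\]
where $\Phi$ is the standard Gaussian CDF. Substituting $a = \|\tau\|_2$ gives the claimed equality.

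The argument is essentially mechanical; the only point requiring a small amount of care is the reduction step for product measures with a shared marginal, which I would justify by a one-line integral calculation. No part of this is conceptually hard; the main value is organizing the reduction cleanly so that the statement is seen as ``the TV distance depends only on $\|\tau\|_2$, and the one-dimensional answer is the Gaussian interval probability displayed above.''
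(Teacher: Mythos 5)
Your proof is correct and follows the same approach as the paper: reduce by rotational invariance to the univariate case $D_{TV}(N(0,1),N(\|\tau\|_2,1))$, then evaluate that. The only difference is that the paper cites a reference for the final univariate identity, whereas you derive it directly from $D_{TV}(P,Q)=P(A)-Q(A)$ with $A=\{p>q\}$, which is a fine and self-contained way to finish.
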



\begin{fact}[Concentration of $\ell_2$-Norm]\label{fact:small2NormMain}
  Suppose $x \sim N(0, I_n)$ and $n \geq 18\ln(6/\delta)$. Then\\
  $\Pr \left\{\frac{\sqrt{n}}{2} \leq \|x\|_2 \leq \frac{3\sqrt{n}}{2} \right\} \geq 1-\frac{\delta}{3}$.
\end{fact}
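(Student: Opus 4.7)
The plan is to reduce the statement to a standard chi-squared concentration inequality. Write $\|x\|_2^2 = \sum_{i=1}^n x_i^2$, which is a sum of independent $\chi_1^2$ random variables and hence distributed as $\chi_n^2$. The event $\{\sqrt{n}/2 \leq \|x\|_2 \leq 3\sqrt{n}/2\}$ is equivalent to $\{n/4 \leq \|x\|_2^2 \leq 9n/4\}$, i.e.\@ centering around the mean $n$, to $\{-3n/4 \leq \|x\|_2^2 - n \leq 5n/4\}$. So I only need two-sided deviation bounds for a $\chi^2_n$ of the order of $n$ itself (not the tighter $\sqrt{n}$-scale bounds), which is the easy regime.

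Concretely, I would invoke the Laurent--Massart inequalities, namely $\Pr\{\chi_n^2 - n \geq 2\sqrt{nt} + 2t\} \leq e^{-t}$ and $\Pr\{\chi_n^2 - n \leq -2\sqrt{nt}\} \leq e^{-t}$, and choose $t = n/18$. With this choice one has $2\sqrt{nt} = 2n/\sqrt{18} < 3n/4$ and $2\sqrt{nt} + 2t = 2n/\sqrt{18} + n/9 < 5n/4$, so the failure event is contained in the union of the two tails, each of which has probability at most $e^{-n/18}$. The hypothesis $n \geq 18 \ln(6/\delta)$ gives $t = n/18 \geq \ln(6/\delta)$, hence $e^{-t} \leq \delta/6$, and a union bound yields failure probability at most $\delta/3$, as required.

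The proof is essentially a single invocation of a standard concentration lemma plus the arithmetic above, so there is no real obstacle; the only thing to be careful about is matching the constants so that the condition $n \geq 18\ln(6/\delta)$ is both sufficient and natural (it is what falls out of the choice $t = n/18$, which is in turn dictated by wanting $2\sqrt{nt} < 3n/4$). If one prefers to be self-contained and avoid citing Laurent--Massart, an alternative is to apply a Chernoff bound directly to the MGF $\E e^{\lambda(x_i^2 - 1)} = e^{-\lambda}/\sqrt{1 - 2\lambda}$ for $\lambda \in (0,1/2)$, optimize $\lambda$ for each side, and arrive at essentially the same two tail bounds; this adds a few lines of calculus but no new ideas.
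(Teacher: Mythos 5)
Your proof is correct, but it takes a genuinely different route from the paper. The paper applies the Gaussian Lipschitz concentration inequality (Borell/Tsirelson--Ibragimov--Sudakov) directly to the $1$-Lipschitz function $f(x)=\|x\|_2$, yielding $\Pr\{|\|x\|_2-\E\|x\|_2|>t\}\le 2e^{-t^2/2}$, and then substitutes $t=\sqrt{n}/3$ together with the (slightly non-trivial) two-sided estimate $n/\sqrt{n+1}\le\E\|x\|_2\le\sqrt{n}$ to land in the interval $[\sqrt{n}/2,3\sqrt{n}/2]$. You instead work with $\|x\|_2^2\sim\chi_n^2$ and invoke the Laurent--Massart tail bounds centered at the exact mean $n$, which sidesteps the need to estimate $\E\|x\|_2$ at all; the arithmetic with $t=n/18$ checks out ($2\sqrt{nt}\approx 0.47n<3n/4$ and $2\sqrt{nt}+2t\approx 0.58n<5n/4$), and the union bound gives exactly $\delta/3$ under the hypothesis $n\ge 18\ln(6/\delta)$. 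Both proofs are equally rigorous and give identical constants, which is a nice coincidence rather than a necessity; the paper's route is the more ``geometric'' one and generalizes painlessly to other $1$-Lipschitz functionals of Gaussians (indeed the paper reuses the same Theorem~\ref{thm:concentrate} for the $\ell_1$-norm version, Fact~\ref{fact:smallNorm}), whereas yours is more self-contained in the sense that $\chi^2$ tail bounds are a single black-box with no auxiliary moment computation.
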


  \begin{fact}[Univariate Tail Bound]\label{fact:coordinateMain}
    Let $g\sim N(0,1)$. There exists $\delta_0 > 0$ such that it holds for all $\delta < \delta_0$ that 
    $\Pr\{|g| \leq 4 \sqrt{\log(1/\delta)}\} \geq 1-\delta/3$. 
    \end{fact}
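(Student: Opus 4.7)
The plan is to invoke the standard sub-Gaussian tail bound for a univariate standard normal. For any $t \geq 0$, the Chernoff bound applied to $g \sim N(0,1)$ gives $\Pr\{g > t\} \leq \inf_{\lambda > 0} e^{-\lambda t}\E e^{\lambda g} = \inf_{\lambda > 0} e^{\lambda^2/2 - \lambda t} = e^{-t^2/2}$, and by symmetry of $N(0,1)$ we obtain the two-sided bound $\Pr\{|g| > t\} \leq 2 e^{-t^2/2}$. This is the only probabilistic ingredient needed.

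Substituting $t = 4\sqrt{\log(1/\delta)}$ gives $t^2/2 = 8\log(1/\delta)$, so $\Pr\{|g| > 4\sqrt{\log(1/\delta)}\} \leq 2 e^{-8\log(1/\delta)} = 2\delta^{8}$. It then suffices to ensure $2\delta^{8} \leq \delta/3$, i.e.\ $6\delta^{7} \leq 1$, which holds precisely when $\delta \leq 6^{-1/7}$. Hence choosing $\delta_0 := 6^{-1/7}$ (or any smaller positive constant) completes the proof.

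There is essentially no obstacle here; the lemma is a convenient packaging of the standard Gaussian tail bound with a concrete absolute constant ($4$) tuned to match the exponent required in the proof of Theorem~\ref{thm:ell2_hhlb}. Any constant strictly larger than $\sqrt{2}$ in place of $4$ would yield the same qualitative statement for all sufficiently small $\delta$; the choice $4$ gives us considerable slack since it produces the $8$th-power bound $2\delta^8$ against the target $\delta/3$.
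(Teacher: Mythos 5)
Your proof is correct and follows essentially the same route as the paper: bound the one-sided Gaussian tail, double by symmetry, and verify that the resulting power of $\delta$ is dominated by $\delta/3$ once $\delta$ is small. The paper invokes the Mills-ratio bound $\Pr\{g \geq t\} \leq \frac{e^{-t^2/2}}{t\sqrt{2\pi}}$, whereas you use the simpler Chernoff-type bound $\Pr\{g \geq t\} \leq e^{-t^2/2}$; since the constant $4$ is generous (giving $2\delta^8$ vs.\ $\delta/3$), either version of the tail bound closes the argument.
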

  
In our proofs we are interested in lower bounding the number $r$ of rows of a sketching matrix $S$.

\subsection{Proof of the lower bound}

\begin{proof}
Let the universe size be $n = \left \lceil 64\epsilon^{-1} \log(6/\delta) \right \rceil,$ which is large
enough in order for us to apply Fact \ref{fact:small2NormMain} (note if the actual universe size is larger, we can set all but
the first $\left \lceil 64\epsilon^{-1} \log(6/\delta) \right \rceil$ coordinates of our input to $0$). 

Let $r$ be the number of rows of the sketch matrix $S$, where $r \le n$. If $r > n$, then we immediately
obtain an $\Omega(\epsilon^{-1} \log(1/\delta))$ lower bound. We can assume that
$S$ has orthonormal rows, since a change of basis to the row space of $S$
can always be performed in a post-processing step.

\medskip\noindent\textbf{Hard Distribution.} Let $I$ be a uniformly random index in $[n]$. 

\noindent{\bf Case 1:}
Let $\eta$ be the distribution $N(0, I_n)$, and suppose $x \sim \eta$. By Fact \ref{fact:small2NormMain},
$\|x\|_2 \geq \sqrt{n/2}$ with probability $1-\delta/3$.
By Fact \ref{fact:coordinateMain}, $|x_I| \leq 4 \sqrt{\log(1/\delta)}$ with probability $1-\delta/3$. Let $\mathcal{E}$
be the joint occurrence of these events, so that $\Pr(\mathcal{E}) \geq 1-2\delta/3$. 

By our choice of $n$, it follows that if $\mathcal{E}$ occurs, then $x_I^2 \leq 16 \log(1/\delta) \leq \frac{\epsilon}{2} \|x\|_2^2$, 
and therefore $I$ cannot be output by
an $\ell_2$-heavy hitters algorithm. 

\noindent{\bf Case 2:} Let $y\sim N(0,I_n)$ and $x = \sqrt{\epsilon n} e_I + y$, where $e_I$ denotes the standard basis vector in the $I$-th direction. 
By Fact \ref{fact:small2NormMain}, $\|y\|_2 \leq \frac{3\sqrt{n}}{2}$ with probability $1-\delta/3$.
By Fact \ref{fact:coordinateMain}, $|y_I| \leq 4 \sqrt{\log(1/\delta)} < \sqrt{\epsilon n}/2$ with probability $1-\delta/3$. Let $\mathcal{F}$
be the joint occurrence of these events, so that $\Pr(\mathcal{F}) \geq 1-2\delta/3$. 

If event $\mathcal{F}$ occurs, then $|x_I| \geq 3\sqrt{\epsilon n}- 4\sqrt{\log(1/\delta)} \geq \frac{5 \sqrt{\epsilon n}}{2}$, and
so $x_I^2 \geq \frac{25 \epsilon n}{4}$. We also have 
$\|x\|_2 \leq 3\sqrt{\epsilon n} + \frac{3\sqrt{n}}{2} \leq 2\sqrt{n}$, provided $\epsilon \leq 1/36$, and so
$\|x\|_2^2 \leq 4 n$. Consequently, $x_I^2 \geq \epsilon \|x\|_2^2$. 
Consequently, if $\mathcal{F}$ occurs,
for an $\ell_2$-heavy hitters algorithm to be correct, it must output $I$. 

\medskip\noindent\textbf{Conditioning.}
Let $\eta'$ be the distribution of $\eta$ conditioned on $\mathcal{E}$,
and let $\gamma'$ be the distribution of $\gamma$ conditioned on $\mathcal{F}$.
For a distribution $\mu$ on inputs $y$, we let $\bar{\mu}$ be the distribution of $Sy$.

Note that any $\ell_2$-heavy hitters algorithm which succeeds with probability at least $1-\delta$ can decide,
with probability at least $1-\delta$, whether $x \sim \eta'$ or $x \sim \gamma'$.
Hence, $D_{TV}(\bar{\eta'}, \bar{\gamma'}) \geq 1-\delta$. Observe for any measurable set $A\subseteq \R^m$ it holds that 
\[
\left|\frac{\bar\eta(A)-\bar\mu(A)}{1-\frac23\delta} - (\bar\eta'(A)-\bar\mu'(A))\right| \leq \frac{2}{3}\delta,
\]
and so it then follows that
$
D_{TV}(\bar{\eta}, \bar{\gamma}) \geq \left(D_{TV}(\bar{\eta'}, \bar{\gamma'}) - \frac{2\delta}{3}\right)\left(1-\frac{2}{3}\delta\right)  \geq 1-\frac{7\delta}{3}$.
Therefore, to obtain our lower bound, it suffices to show if the number $r$ of rows of $S$ is too small, then it cannot hold that
$D_{TV}(\bar{\eta}, \bar{\gamma}) \geq 1-7\delta/3$.

\medskip\noindent\textbf{Bounding the Total Variation Distance.}
Since $S$ has orthonormal rows, by rotational invariance of the Gaussian distribution, the distribution of $\bar{\eta}$
is identical to $N(0, I_r)$ and the distribution
of $\bar{\gamma}$ identical to $(3 \sqrt{\epsilon n}) S_I + N(0, I_r)$, where $S_I$ is the $I$-th column of $S$. 

Since $S$ has orthonormal rows,
by a Markov bound, for $9/10$ fraction of values of $I$, it holds that $\|S_I\|_2^2 \le \frac{10r}{n}$.
Call this set of columns $T$. 

Let $\mathcal{G}$ be the event that $I \in T$, then $\Pr(\mathcal{G}) \geq 9/10$. It follows that
\begin{align*}
D_{TV}(\bar\eta,\bar\gamma) \leq \Pr(G)D_{TV}(\bar\eta,\bar\gamma|\mathcal{G}) + \Pr(\neg G)D_{TV}(\bar\eta,\bar\gamma|\neg\mathcal{G}) &\leq \Pr(G)D_{TV}(\bar\eta,\bar\gamma|\mathcal{G}) + 1-\Pr(G)\\
&= 1 - \Pr(G)(1-D_{TV}(\bar\eta,\bar\gamma|\mathcal{G}))\\
&\leq 1-\frac{9}{10}(1-D_{TV}(\bar\eta,\bar\gamma|\mathcal{G})).
\end{align*}
Hence, in order to deduce a contradiction that $D_{TV}(\bar\eta,\bar\gamma)< 1-7\delta/3$, it suffices to show that $D_{TV}(\bar\eta,\bar\gamma|\mathcal{G})< 1-70\delta/27$.

The total variation distance between $N(0, I_r)$ and $(3 \sqrt{\epsilon n}) S_i + N(0, I_r)$ for a fixed $i \in T$ is,
by rotational invariance and by
rotating $S_i$ to be in the same direction as the first standard basis vector $e_1$,
the same as the total variation distance between $N(0, I_r)$ and 
$(3 \sqrt{\epsilon n}) \|S_i\|_2 e_1 + N(0, I_r),$
which is equal to the total variation distance between $N(0, 1)$ and $N(3 \sqrt{\epsilon n} \|S_i\|_2, 1)$.

Using that $i \in T$ and so $\|S_i\|_2 \leq \sqrt{10r/n}$, we apply Fact \ref{lem:gtvdMain} to obtain that
the variation distance is at most $\Pr[|N(0,1)| \leq (3/2) \sqrt{\epsilon n} \cdot \sqrt{10r/n}]$. It follows that
\[
  D_{TV}(\bar{\eta}, \bar{\gamma} \mid \mathcal{G}) \leq  \sum_{i \in T} \frac{1}{|T|} D_{TV}(\bar{\eta}, \bar{\gamma} \mid I = i)
  \leq \Pr_{g\sim N(0,1)}\left\{|g| \leq (3/2) \sqrt{\epsilon n} \cdot \sqrt{10r/n}\right\},
\]
and thus it suffices to show, when $r$ is small, that
\[
\Pr_{g\sim N(0,1)}\left\{|g| \geq \frac{3}{2}\sqrt{10 \epsilon r}\right\} > \frac{70\delta}{27}.
\]
Observe that the left-hand is a decreasing function in $r$, and so it suffices to show the inequality above for $r = \alpha\epsilon^{-1}\log(1/\delta)$ for some $\alpha > 0$.

Invoking the well-known bound that
(see, e.g., \cite{gordon1941})
\[
\Pr_{g\sim N(0,1)}\left\{g \geq t\right\} \geq \frac{1}{\sqrt{2\pi}}\cdot\frac{1}{2t}e^{-\frac{t^2}{2}}, \quad t\geq \sqrt{2},
\]
we have that
\[
\Pr_{g\sim N(0,1)}\left\{|g| \geq \frac{3}{2}\sqrt{10 \epsilon r}\right\}\geq \frac{3}{4\sqrt{5\pi}}\delta^{\frac{45}{4}\alpha}\frac{1}{\sqrt{\alpha\log(1/\delta)}} > \frac{70}{27}\delta
\]
when $\alpha$ is small enough. Therefore it must hold that $r \geq \alpha\epsilon^{-1} \log(1/\delta)$ and the proof is complete.
\end{proof}

\bibliographystyle{plainurl}
\bibliography{biblio}

\appendix

\section{Additional Preliminaries}


\noindent\textbf{Count-Sketch.} Our estimation will be based on the \textsc{Count-Sketch} data structure. The \textsc{Count-Sketch} of a vector $x$ is defined as follows. For $r=1,\dots, R$, we hash all $n$ coordinates of $x$ into $B$ buckets with function $h_r:[n]\to [B]$. In each bucket we aggregate the elements with random signs, that is, the value of bucket $b\in[B]$ in iteration $r$ is
\[
V_{b,r} = \sum_{i: h_r(i)=b} \sigma_{i,r} x_i
\]
where the $\sigma_{i,r}$ are independent random signs ($\pm 1$). For $i\in I$ the estimate to $x_i$ returned by \textsc{Count-Sketch} is
\[
\hat{x}_i = \median_{1\leq r\leq R} \sigma_{i,r}V_{h_r(i),r}.
\]

\noindent\textbf{Weak System.} We follow the approach in several previous works~\cite{PS12,GLPSl1} by first constructing a weak system and then building an overall algorithm upon it.  We define $H_{k,\epsilon}(x) = \{ i \in [n]: |x_i|^2 \geq \epsilon/ k \|x_{-k}\|_2^2 \}$.

\begin{definition}
A probabilistic matrix $M$ with $n$ columns is called a $(k, \zeta)$-weak identification matrix with the $(\ell,\delta,\epsilon)$ guarantee if there is an algorithm that given $Mx$ and a subset $S \subseteq [n]$, with probability at least $1-\delta$ outputs a subset $I \subseteq S$ such that $|I| \leq \ell$ and at most $\zeta k $ elements of $H_{k,\epsilon}(x)$ are not present in $I$. The time to compute $I$ will be called the identification time.
\end{definition}

\begin{definition}
We call an $m \times n$ matrix $M$ a $(k,\zeta, \eta)$ weak $\ell_2/\ell_2$ system if the following holds for any vector $x= y+z$ such that $|\supp(y)| \leq k$: given $Mx$, one can compute $\hat{x}$ such that there exist $\hat{y}$, $\hat{z}$ which satisfy the following properties: (i) $x - \hat{x} = \hat{y} + \hat{z}$, (ii) $| \supp(\hat{x})| = \Oh(k)$, (iii) $|\supp(\hat{y})| \leq \zeta k$ and (iv) $\|\hat z\|_2^2 \leq (1+\eta)\|z\|_2^2$. \end{definition}

\subsection{Toolkit}\label{sec:toolkit}

\subsubsection{Black-Box Routines}

In this subsection we mention two theorems crucial for our sparse recovery scheme. The first one is from the recent paper by Larsen et al.\ on finding heavy hitters in data streams~\cite{larsen2016heavy}.

\begin{theorem} \label{thm:expandersketch}
Let $x \in \mathbb{R}^n$, $\delta>0$ and $K\geq 1$. There exists an (oblivious) randomized construction of a matrix $A$ such that given $y=Ax$ we can find, with probability $1-\delta$, a list $L$ of size $\Oh(K)$ that contains all $\frac{1}{K}$-heavy hitters of $x$ with probability $1-\delta$. The number of rows of $A$ is $\Oh(K \log (n/\delta))$, the time to find the list $L$ is $\Oh(K \log^3 n)$, and the column sparsity is $\Oh(\log (n/\delta))$. 
\end{theorem}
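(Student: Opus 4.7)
The plan is to instantiate the ExpanderSketch construction of Larsen et al.\@: decompose each identifier $i\in [n]$ into small chunks, run a heavy-hitter substructure on each chunk position in parallel, and glue the chunks of each true heavy hitter back together using an expander graph as a consistency-check mechanism. The key idea is that a direct Count-Sketch uses $O(K\log(n/\delta))$ space but $O(\log(n/\delta))$ column sparsity becomes hard because we need $\Omega(\log(n/\delta))$ repetitions; by working over very small sub-universes of size $\poly(K)$ we can reduce the per-column cost dramatically.

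First, set $s = \Theta(\log K)$ and split each identifier $i$ into $t = \Theta(\log n / \log K)$ blocks $i^{(1)},\dots,i^{(t)}$ of $s$ bits each. Fix a $d$-regular bipartite expander $G$ on vertex set $[t]$ of constant degree $d$ with strong vertex expansion on sets up to size $t/2$. For each position $j$, define an augmented chunk-label $\ell_j(i) = (j, i^{(j)}, \phi_j(i))$, where $\phi_j(i)\in[2^{O(s)}]$ is a short hash fingerprint of the neighboring chunks $\{i^{(j')} : j'\sim j \text{ in } G\}$; thus $\ell_j(i)$ lives in a universe of size $\poly(K)$. For each $j$, maintain an independent Count-Sketch on the measurement $\sum_i x_i \cdot \mathbf{1}\{\ell_j(i)=\cdot\}$ with $\Theta(K)$ buckets and $\Theta(\log(n/\delta))$ repetitions. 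Since each column touches $O(1)$ nonzeros per position and the per-position Count-Sketch contributes $O(\log(n/\delta))$ rows summed across positions, one obtains total column sparsity $O(\log(n/\delta))$ and $O(K\log(n/\delta))$ rows overall.

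At query time, for each position $j$ extract the $O(K)$ heaviest augmented labels; form a graph $H$ on these $O(Kt)$ labels where two labels $\ell_j,\ell_{j'}$ are joined iff $j\sim j'$ in $G$ and their fingerprints $\phi_j,\phi_{j'}$ are jointly consistent at the slot encoding the other label's chunk. Run a local cluster-recovery procedure (essentially bounded-depth BFS) on $H$ to identify clusters of size $\Omega(t)$; each cluster decodes to a candidate identifier, and we output the $O(K)$ resulting candidates. The main obstacle is the probabilistic analysis: for each true $(1/K)$-heavy hitter $i^*$, the label $\ell_j(i^*)$ is $(1/K)$-heavy at every position $j$, yet Count-Sketch at any fixed $j$ may still fail with constant probability on that label; one must show that despite a constant fraction of chunks of $i^*$ being corrupted, the expansion of $G$ forces the correctly-identified chunks to form an identifiable dense cluster in $H$, while adversarially spurious chunks produce only small components that do not spuriously decode. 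Turning this into the stated $1-\delta$ guarantee requires a careful expander-mixing-style bound together with a union bound over the $O(K)$ heavy hitters, and this combinatorial step is what simultaneously drives the $O(\log(n/\delta))$ column sparsity and the $O(K\log^3 n)$ decoding time.
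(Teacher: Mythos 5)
The paper does not prove this statement: it is quoted verbatim as Theorem~4 from Larsen, Nelson, Nguyen, and Thorup (``Heavy hitters via cluster-preserving clustering'') and used as a black box, so there is no internal proof for your sketch to be compared against. Your proposal does correctly identify the ExpanderSketch framework as the intended construction, and the high-level plan (chunk the identifier, run a small Count-Sketch-like structure per chunk position, glue chunks with an expander-driven consistency check, then recover clusters at query time) is exactly right.

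That said, two points in your write-up are off. First, the row and column-sparsity accounting does not close as written: with $t = \Theta(\log n/\log K)$ positions and $\Theta(\log(n/\delta))$ repetitions \emph{per position}, each column would have $\Theta(t\log(n/\delta))$ nonzeros and the matrix would have $\Theta(tK\log(n/\delta))$ rows, both an extra factor $t$ too large. In the actual construction the per-position budget is only $O(\log K)$ (or $O(1)$) repetitions so that the total across positions is $O(\log(n/\delta))$; the price is that each individual chunk is recovered only with constant probability, and the whole point of the expander consistency layer is to tolerate a constant fraction of corrupted chunks per heavy hitter. As stated, your budget makes the expander machinery redundant and overshoots the stated bounds. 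Second, ``bounded-depth BFS'' on the consistency graph $H$ is a genuine simplification: spurious cross-cluster edges can and do occur, and naive BFS from a node of a heavy hitter's cluster may leak into neighboring clusters, inflate the component, and misdecode. The Larsen et al.\@ argument needs a dedicated cluster-preserving clustering procedure (driven by spectral/cut properties of the $\eps$-clusters, not just degree-based local search) to guarantee that each heavy hitter's surviving chunks are grouped together without contamination; expander mixing plus a union bound does not by itself justify the $O(K\log^3 n)$ decoding time. If you want this theorem to stand on its own rather than as a citation, you need to fix the repetition count and replace the BFS step with the cluster-preserving clustering analysis.
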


We also need the folklore $b$-tree data structure, which is a generalization of the dyadic trick. The next analysis of a $b$-tree-based heavy hitter algorithm appears in $\cite{larsen2016heavy}$.
 
\begin{theorem}\label{thm:btree}
For a vector $x$ which lives in a universe of size $n$, there exists a data structure produces a correct output $L$ for $(1/K)$-heavy hitters using space $\Oh(K \log n)$, having update time $\Oh( \log n)$ and query time $\Oh( (K^{1+\gamma/2}/\delta^\gamma) \log^{1+\gamma} n)$, and failing with probability $1-\delta$, when $\delta > 1/\mathrm{poly}(n)$. The constants in the big-Oh notations depend on $\gamma$.
\end{theorem}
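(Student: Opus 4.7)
The plan is to realize this data structure as a $b$-ary tree with a \textsc{Count-Min} sketch at every level, where $b$ will be chosen as a function of $K$, $\delta$, and $\gamma$. Associate to $[n]$ a rooted tree of arity $b$ and depth $L = \lceil \log_b n \rceil$ whose leaves are the coordinates $1,\dots,n$, and for each internal level $\ell \in [L]$ maintain a \textsc{Count-Min} sketch with $B = \Theta(K)$ buckets and $R$ rows on the vector $x^{(\ell)}$ whose entries are the subtree-masses of the level-$\ell$ nodes. Since $x^{(\ell)}$ is a fixed linear map of $x$, the whole data structure is linear. To process an update $(i,\Delta)$, walk the root-to-leaf path for $i$ and push the update into each level's \textsc{Count-Min}, paying $\Oh(RL)$ time; choosing $RL = \Theta(\log n)$ immediately yields update time $\Oh(\log n)$ and total space $\Oh(K \cdot RL) = \Oh(K\log n)$.

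For a query, run a top-down search from the root: at every visited internal node $v$, query the level-$\ell$ sketch for the estimate of each of $v$'s $b$ children and recurse into those children whose estimate exceeds the threshold $T = \|x\|_1/K$ (the current value of $\|x\|_1$ can be tracked on the side). In the strict turnstile model \textsc{Count-Min} is a one-sided overestimator, so each ancestor of a true $1/K$-heavy hitter has true subtree-mass $\geq T$ and is therefore \emph{always} expanded. Conversely, for $B$ a sufficiently large multiple of $K$, a single \textsc{Count-Min} query overestimates by more than $T/2$ with probability $\leq 2^{-R}$; so on the good event a child is expanded only if its true mass is at least $T/2$. At most $2K$ nodes per level can have true mass $\geq T/2$, whence at most $2Kb$ children are estimated per level, at most $2KbL$ estimates are made in total, and the total query time is $\Oh(KbRL) = \Oh(Kb\log n)$.

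The remaining task is to pick $b$ and $R$ so that the union-bound failure probability $2KbL\cdot 2^{-R}$ is at most $\delta$. Combined with the update-time constraint $RL = \Theta(\log n)$ (equivalently $R = C\log b$ for a constant $C$ to be chosen), this becomes $b^{C-1} \gtrsim K\log n/\delta$. Taking $C = 1 + 2/\gamma$ and $b = \Theta((K\log n/\delta)^{\gamma/2})$ satisfies the constraint, and the query time becomes $\Oh(Kb\log n) = \Oh(K^{1+\gamma/2}(\log n)^{1+\gamma/2}/\delta^{\gamma/2})$. Using $\delta > 1/\poly(n)$ to absorb the slack between $\delta^{\gamma/2}$ and $\delta^\gamma$ (and between $\log^{1+\gamma/2}n$ and $\log^{1+\gamma}n$) yields the bound $\Oh((K^{1+\gamma/2}/\delta^\gamma)\log^{1+\gamma}n)$ stated in the theorem, with constants depending on $\gamma$ through $C$.

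The main obstacle I expect is the bookkeeping around the level-by-level counting step: to keep the expanded-candidate count at $\Oh(K)$ per level rather than growing geometrically down the tree, one must simultaneously (i) exploit the one-sidedness of \textsc{Count-Min} (so heavy hitters are preserved deterministically along every root-to-leaf path) and (ii) union-bound the $\Oh(KbL)$ estimates of non-heavy subtree-masses, all while forcing $RL = \Theta(\log n)$ so that update time and space remain on target. It is this triple constraint that dictates the polynomial trade-off between $b$, $1/\delta$, and the query-time exponent parameterized by $\gamma$; the argument in item (ii) also has a mildly circular flavor (the set of visited nodes depends on the random estimates), which must be handled by union-bounding over the a~priori set of children of truly-heavy subtrees rather than over the query's realized visit set.
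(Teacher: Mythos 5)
The paper does not prove this theorem; it imports it as a black box from \cite{larsen2016heavy}, so there is no internal argument of the paper to compare against, and your write-up is effectively filling in a citation. The skeleton you chose --- a $b$-ary tree with a per-level sketch of $\Theta(K)$ buckets and $R$ rows, a top-down search, and the coupled constraints $RL=\Theta(\log n)$ (for update time and space) and $2KbL\cdot 2^{-R}\le\delta$ (for the union bound) --- is the right one, and the parameter bookkeeping (set $R=(1+2/\gamma)\log b$ and solve $b^{2/\gamma}\gtrsim K\log n/\delta$) is correct; in fact you prove the slightly stronger query-time bound $\Oh(K^{1+\gamma/2}(\log n)^{1+\gamma/2}/\delta^{\gamma/2})$, so the final ``slack-absorption'' step is unnecessary, since $\delta\le 1$ and $\log n\ge 1$ already make your bound dominated by the stated one. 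Your resolution of the circular union bound is also sound: condition on the good event that the $\Oh(KbL)$ children of nodes with true subtree-mass at least $T/2$ are estimated with additive error at most $T/2$, and a short induction down the levels shows the realized visit set lives inside that a~priori set.

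The genuine issue is the choice of per-level sketch. Your argument leans essentially on the one-sidedness of \textsc{Count-Min} (``each ancestor of a true heavy hitter is always expanded''), which is available only in the strict turnstile model and only for $\ell_1$-heaviness. The paper, however, invokes Theorem~\ref{thm:btree} inside Lemma~\ref{lem:weak_identification} to solve a $1$-sparse recovery problem on a bucket whose content is a generic signed vector (one heavy coordinate plus $\ell_2$-bounded noise), with ``heavy'' meaning an $\ell_2$ condition; the source cited, \cite{larsen2016heavy}, is likewise a general-turnstile $\ell_2$ result. In that setting \textsc{Count-Min} has no one-sided guarantee, the threshold $\|x\|_1/K$ is not the relevant one, and the deterministic ``heavy ancestors always survive'' step fails. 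The repair is to use a \textsc{Count-Sketch} at each level: heavy-ancestor preservation then becomes a probabilistic event that must be added to the union bound, over the $\Oh(KL)$ ancestors of the at most $K$ true heavy hitters. Since $\Oh(KL)\le\Oh(KbL)$, that extra term is absorbed and the $b$-vs-$R$-vs-$\delta$ trade-off you derived survives unchanged --- but as written, the proof only establishes the theorem in the strict turnstile $\ell_1$ setting, which is weaker than the version the rest of the paper actually uses.
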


We will also need the following theorem from \cite{gilbert2013l2}. Although the theorem in the paper is not stated this way, it is easy to see that by setting the quality of the approximation to be $1$ and the sparsity to be $k/\epsilon$ we immediately obtain the desired result.

\begin{theorem}[{\cite{gilbert2013l2}}] \label{thm:gilbert_identification}
Suppose that $k = n^{\Omega(1)}$. There exists a randomized construction of a matrix $M$ of $(k/\epsilon)\log(\epsilon n /k)$ rows, such that given $Mx$ one can find a set $S$ of size $\Oh(k/\epsilon)$ such that $| H_{k/\epsilon,1} \setminus S | \leq \zeta k$ with probability $1 - \binom{n}{k}^{-c}$, where $\zeta,c>0$ are absolute constants. The time to find $S$ is $\Oh(k^{1+\alpha} \poly( \log n))$, where $\alpha$ is any arbitrarily small positive constant. The constants in the $\Oh$-notation for the size of $S$ and the runtime depend on $\alpha$.
\end{theorem}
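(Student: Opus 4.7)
The plan is to realize $M$ as a composition of a hash-based bucketing stage and a per-bucket identification stage, as in \cite{gilbert2013l2}. Take $B=\Theta(k/\epsilon)$ buckets, draw $h:[n]\to [B]$ from a $\Theta(k\log(n/k))$-wise independent family, and within each bucket attach $R=\Theta(\log(\epsilon n/k))$ bit-test measurements built from independent random sign vectors $\sigma^{(1)},\dots,\sigma^{(R)}\in\{\pm 1\}^n$, so that the $(b,r)$-th coordinate of $Mx$ is $\sum_{i:\,h(i)=b}\sigma^{(r)}_i x_i$. The total number of rows is $BR=\Oh((k/\epsilon)\log(\epsilon n/k))$. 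For decoding, process each bucket independently: when the bucket's magnitude exceeds a threshold calibrated to the global tail $\|x_{-k/\epsilon}\|_2$, apply standard bit-test decoding to the $R$ signs to output at most one candidate index, and let $S$ be the union of these candidates, giving $|S|\le B=\Oh(k/\epsilon)$.

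Correctness reduces to identifying a $(1-\zeta)$-fraction of $H := H_{k/\epsilon,1}(x)$, which has size $\le k/\epsilon$ with each member satisfying $x_i^2\ge (\epsilon/k)\|x_{-k/\epsilon}\|_2^2$. Call $i\in H$ \emph{good} if (i) no other $j\in H$ collides with $i$ under $h$ and (ii) the off-support mass $\sum_{j\notin H,\, h(j)=h(i)} x_j^2$ is $\Oh((\epsilon/k)\|x_{-k/\epsilon}\|_2^2)$. For a fixed $i$, both events hold with constant probability: (i) because $|H|\ll B$ and $h$ is sufficiently independent, and (ii) because the expected per-bucket tail mass is $\|x_{-k/\epsilon}\|_2^2/B = \Oh((\epsilon/k)\|x_{-k/\epsilon}\|_2^2)$, combined with Markov. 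When $i$ is good, $|x_i|$ dominates the bucket noise by a constant factor, so each bit-test's signed bucket value agrees with $\sigma_i^{(r)}$ with probability bounded away from $1/2$ independently across $r$; decoding a constant-rate binary code of length $R$ then isolates $i$ with per-index failure $e^{-\Omega(R)}$.

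The main obstacle is pushing the overall failure probability down to $\binom{n}{k}^{-c}=e^{-\Omega(k\log(n/k))}$. For the combinatorial events (non-isolated or non-quiet heavy hitters), I would invoke a Bellare--Rompel tail inequality for $t$-wise independent indicators with $t=\Theta(k\log(n/k))$, which converts bounded independence into an $e^{-\Omega(k\log(n/k))}$ deviation bound on the number of non-good members of $H$, so that having more than $\zeta k$ non-good members occurs with probability $\le \binom{n}{k}^{-c}$. Bit-test decoding failure is handled by taking $R=C\log(\epsilon n/k)$ with $C$ large enough that $B\cdot e^{-\Omega(R)}$ remains below $\binom{n}{k}^{-c}$ after adjusting constants. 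For runtime, the dominant cost is multipoint evaluation of the degree-$\Theta(k\log(n/k))$ polynomial defining $h$ at all $n$ points, which costs $\Oh(n\log^2(k\log n))$ by FFT; since $k=n^{\Omega(1)}$ forces $n\le k^{\Oh(1)}$, this is absorbed into $\Oh(k^{1+\alpha}\poly(\log n))$ once the implicit constant in $k=n^{\Omega(1)}$ is chosen compatibly with $\alpha$, and the bit-test decoding contributes a further $\Oh((k/\epsilon)\poly(\log n))$. The delicate step is calibrating the independence parameter of $h$ (which simultaneously controls the runtime and the strength of the Bellare--Rompel bound) against the code length $R$ so that the combinatorial tail and the coding-theoretic tail simultaneously clear the $\binom{n}{k}^{-c}$ threshold.
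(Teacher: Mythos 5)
The statement you are asked to prove is not proved in the paper at all: it is imported as a black box from \cite{gilbert2013l2} (their expander/code-based identification machinery, Theorem~4.9 and Lemma~4.10), with the remark that one obtains it ``by setting the quality of the approximation to be $1$ and the sparsity to be $k/\epsilon$.'' So a self-contained proof would necessarily take a different route from the paper; unfortunately, the route you propose does not reach the stated guarantees. The central gap is the failure probability. With a single hash function $h$ into $B=\Theta(k/\epsilon)$ buckets, each element of $H:=H_{k/\epsilon,1}(x)$ (which can have size $\Theta(k/\epsilon)$) collides with another member of $H$ with constant probability, so the expected number of heavy hitters lost to collisions alone is $\Theta(k/\epsilon)$, which already exceeds the allowed $\zeta k$ misses once $\epsilon$ is small; and even for constant $\epsilon$, the event ``more than $\zeta k$ heavy hitters are non-isolated'' is a deviation at scale $k$ for a sum of (negatively associated) constant-probability indicators, whose probability cannot be pushed below $e^{-\Theta(k)}$ without increasing $B$ polynomially --- far short of $\binom{n}{k}^{-c}=e^{-\Theta(k\log(n/k))}$. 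The Bellare--Rompel step does not rescue this: limited independence can only match, never beat, the fully independent bound, and with $t=\Theta(k\log(n/k))$ the $t$-wise tail bound is vacuous at deviation $a=\zeta k$ since it is only nontrivial when $t\ll a$. This is exactly why \cite{gilbert2013l2} needs a much more elaborate (multi-repetition, code/expander-based) identification structure to get failure probability $(n/k)^{-k/\mathrm{poly}(\log k)}$, and why the paper simply cites it rather than re-deriving it from one round of hashing plus bit tests.

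Two further problems are worth flagging. First, as written your measurements $\sum_{i:h(i)=b}\sigma^{(r)}_i x_i$ carry no information about the identity of the isolated element: pure random-sign repetitions estimate magnitudes but cannot be ``bit-test decoded''; you would need measurements whose support encodes index bits (or correlation against an explicit enumeration of $h^{-1}(b)$, which costs $\Omega(n)$ work and $\Theta(\log |h^{-1}(b)|)$ genuinely structured tests). Second, the runtime argument misuses the hypothesis $k=n^{\Omega(1)}$: this only says $k\geq n^{c_0}$ for \emph{some} constant $c_0>0$, so $n$ may be as large as $k^{1/c_0}$ with $1/c_0$ huge, and an $\Oh(n\log^2(\cdot))$ multipoint-evaluation step is not $\Oh(k^{1+\alpha}\poly(\log n))$ for an arbitrarily small $\alpha$; the theorem's $k^{1+\alpha}$ decoding time in \cite{gilbert2013l2} comes from their recursive code-based identification, not from brute-force evaluation of the hash function over $[n]$. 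To repair your approach you would need, at minimum, per-element miss probability $O(\epsilon)$ (not constant), a failure mechanism whose large-deviation rate is $k\log(n/k)$ rather than $k$, and an identification primitive with sublinear decoding --- which in effect reconstructs the machinery of \cite{gilbert2013l2} that the paper invokes.
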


\subsubsection{Negative Dependence}

We review some basic results of negatively associated variables below. The definitions and propositions can be found in \cite{negativedependence}.
\begin{definition}[Negative association]
Random variables $\{X_i\}_{i\in [n]}$ are negatively associated (NA) if for every two disjoint sets $I,J \subseteq [n]$ and for all non-decreasing functions $f$ and $g$,
\[ \mathbb{E} \{f(X_i, i\in I)g(X_j, j \in J) \} \leq\mathbb{E} f(X_i, i\in I) \mathbb{E} g(X_j, j \in J). \]
\end{definition}
The following two propositions are useful in establishing negative association without verifying the definition above.
\begin{proposition}[Closure under products]\label{prop:closure} Let $X_1,\dots,X_n$ and $Y_1,\dots,Y_m$ be two independent families of random variables that are separately negatively associated, then the family $X_1,\dots,X_n,Y_1,\dots,Y_m$ is also negatively associated.
\end{proposition}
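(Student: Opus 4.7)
The plan is to prove the closure property by a two-step conditioning argument that reduces the mixed statement to two separate applications of the NA definition, one for each independent family. Throughout, write $Z = (X_1,\ldots,X_n,Y_1,\ldots,Y_m)$, fix disjoint index sets $I,J$ in $[n+m]$ together with coordinate-wise non-decreasing $f$ and $g$, and decompose $I = I_X \cup I_Y$ and $J = J_X \cup J_Y$ according to whether indices refer to $X$-coordinates or $Y$-coordinates. Note that $I_X,J_X \subseteq [n]$ are disjoint and $I_Y,J_Y \subseteq [m]$ are disjoint, which is exactly what we will need to apply NA inside each family.

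First I would condition on the whole vector $Y$. For each fixed realization $Y=y$, the functions $f(X_{I_X},y_{I_Y})$ and $g(X_{J_X},y_{J_Y})$ become non-decreasing functions of the $X$-variables on the disjoint index sets $I_X$ and $J_X$. The NA assumption on $(X_1,\ldots,X_n)$ then yields the pointwise inequality
\[
\E[f(Z_I)g(Z_J)\mid Y] \;\leq\; \E[f(Z_I)\mid Y]\cdot \E[g(Z_J)\mid Y].
\]

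Next I would take expectation over $Y$. The key observation is that, since $f$ and $g$ are coordinate-wise non-decreasing and the conditional expectation over the independent family $X$ preserves monotonicity, the two factors $\phi(Y) := \E[f(Z_I)\mid Y]$ and $\psi(Y) := \E[g(Z_J)\mid Y]$ are non-decreasing functions of $Y_{I_Y}$ and $Y_{J_Y}$ respectively, and these index sets are disjoint. Applying the NA assumption on $(Y_1,\ldots,Y_m)$ gives $\E[\phi(Y)\psi(Y)] \leq \E[\phi(Y)]\E[\psi(Y)]$. Independence of the $X$ and $Y$ families ensures $\E[\phi(Y)] = \E[f(Z_I)]$ and $\E[\psi(Y)] = \E[g(Z_J)]$. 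Chaining the two inequalities through the tower property yields $\E[f(Z_I)g(Z_J)] \leq \E[f(Z_I)]\E[g(Z_J)]$, which is the NA property for the combined family.

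The only delicate point is verifying that conditional expectation over an independent family preserves monotonicity; that is, if $h$ is coordinate-wise non-decreasing in its arguments, then $y \mapsto \E_X[h(X, y)]$ is non-decreasing in $y$. This is immediate from the fact that $X$ and $Y$ are independent so the conditional law of $X$ does not depend on $y$, and the expectation of a non-decreasing function in a fixed parameter is non-decreasing in that parameter. Once this is in place the argument is essentially mechanical, so I do not anticipate a genuine obstacle beyond bookkeeping of the decomposition $I = I_X \cup I_Y$, $J = J_X \cup J_Y$.
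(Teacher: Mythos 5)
Your conditioning argument is correct and is the standard proof of this closure property; note that the paper itself does not prove Proposition~\ref{prop:closure} but merely cites it from \cite{negativedependence} (Joag-Dev and Proschan), where essentially this argument appears. Both steps check out: conditioning on $Y$ and applying NA of the $X$-family is valid because independence means the conditional law of $X$ given $Y=y$ is just the law of $X$, and your "delicate point" about monotonicity of $y \mapsto \E_X[h(X,y)]$ is handled correctly by monotonicity of expectation. The one small edge case you might mention for completeness is when $I$ (or $J$) is contained entirely in one family, say $I_Y = \emptyset$, in which case $\phi(Y)=\E f(X_{I_X})$ is constant and the second application of NA is trivial — but the argument still goes through unchanged, so this is cosmetic rather than a gap.
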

\begin{proposition}[Disjoint monotone aggregation]\label{prop:disjoint_aggr} Let $\{X_i\}_{i\in [n]}$ be negatively associated random variables and $\mathcal{A}$ is a family of disjoint subsets of $[n]$, then the random variables $\{f_A(X_i,i\in A)\}_{A\in \mathcal{A}}$ are also negatively associated, where $f_A$'s are arbitrary non-decreasing (or non-increasing) functions.
\end{proposition}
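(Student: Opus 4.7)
\medskip

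\noindent\textbf{Proof proposal for Proposition~\ref{prop:disjoint_aggr}.}

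The plan is to reduce the claim to a direct application of the definition of negative association for the underlying family $\{X_i\}_{i\in[n]}$, via a composition-preserves-monotonicity argument. By the definition of NA applied to $\{Y_A\}_{A\in\mathcal{A}}$, it suffices to show that for every pair of disjoint subfamilies $\mathcal{I},\mathcal{J}\subseteq\mathcal{A}$ and every pair of non-decreasing functions $F,G$ (in the arguments indexed by $\mathcal{I}$ and $\mathcal{J}$, respectively),
\[
\E\bigl[F(Y_A, A\in\mathcal{I})\,G(Y_A,A\in\mathcal{J})\bigr] \leq \E\bigl[F(Y_A,A\in\mathcal{I})\bigr]\,\E\bigl[G(Y_A,A\in\mathcal{J})\bigr].
\]

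First I would set $I:=\bigcup_{A\in\mathcal{I}} A$ and $J:=\bigcup_{A\in\mathcal{J}} A$. The crucial structural fact is that, since the members of $\mathcal{A}$ are pairwise disjoint, the index sets $I$ and $J$ are themselves disjoint subsets of $[n]$. Next I would define the ``compressed'' functions
\[
\tilde F(x_i, i\in I) := F\bigl(f_A(x_i,i\in A),\,A\in\mathcal{I}\bigr), \qquad \tilde G(x_j,j\in J) := G\bigl(f_A(x_j,j\in A),\,A\in\mathcal{J}\bigr).
\]
Because each $f_A$ is coordinate-wise non-decreasing and $F,G$ are coordinate-wise non-decreasing in their arguments, the compositions $\tilde F$ and $\tilde G$ are coordinate-wise non-decreasing functions of $(x_i)_{i\in I}$ and $(x_j)_{j\in J}$, respectively. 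Now I would invoke the NA assumption on $\{X_i\}_{i\in[n]}$ with the disjoint index sets $I,J$ and the non-decreasing functions $\tilde F,\tilde G$ to obtain
\[
\E[\tilde F(X_i,i\in I)\,\tilde G(X_j,j\in J)] \leq \E[\tilde F(X_i,i\in I)]\,\E[\tilde G(X_j,j\in J)],
\]
which is exactly the desired inequality after unwinding the definitions of $\tilde F$ and $\tilde G$.

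Finally, I would address the ``non-increasing'' variant of the statement. If every $f_A$ is non-increasing, replace each $f_A$ by $-f_A$ (which is non-decreasing) and correspondingly replace $F(\cdot,\ldots)$ by $F(-\cdot,\ldots)$; the same composition argument applies, and the inequality is preserved because we transform both sides symmetrically. The only genuine subtlety in the whole argument is verifying that a composition of coordinate-wise non-decreasing functions is again coordinate-wise non-decreasing when the outer function is evaluated on a tuple whose entries depend on disjoint blocks of inputs; this is routine but is the one step I would write out carefully. Other than this book-keeping, the proof is a direct reduction and I do not anticipate a substantive obstacle.
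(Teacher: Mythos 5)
Your proof is correct, and it is the standard argument (composition of coordinate-wise monotone maps over disjoint index blocks preserves monotonicity, and disjointness of $\mathcal{A}$ makes $I,J$ disjoint, so the NA definition for $\{X_i\}$ applies directly). The paper itself does not prove this proposition --- it is stated and cited to \cite{negativedependence} --- so there is no in-paper proof to compare against; your reduction is exactly the one given in that reference.

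One small remark on the non-increasing case: rather than the ``replace $F$ by $F(-\cdot)$'' symmetry argument, the cleanest route is to note that NA for non-decreasing test functions is equivalent to NA for non-increasing test functions (replace $f,g$ by $-f,-g$ in the defining inequality), so that when all $f_A$ are non-increasing the compressed functions $\tilde F,\tilde G$ are non-increasing and the same NA inequality for $\{X_i\}$ still applies. Your version works too but is slightly more awkward to unwind.
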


The following is a classical fact concerning the balls-into-bins model.
\begin{fact}\label{thm:ballsandbins}
Suppose that we hash $m$ balls independently to $n$ bins (different balls are not necessarily hashed with the same distribution) and let $B_{i,j}$ be the indicator random variable of the event `ball $j$ balls in bin $i$'. Then $B_{i,j}$ are negatively associated. As a consequence, the number of balls in the $i$-th bin, $B_i = \sum_j B_{i,j}$, are also negatively associated.
\end{fact}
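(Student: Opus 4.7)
The plan is to derive the fact in three steps, matching the structure of the two closure propositions stated just above: first handle a single ball, then combine balls via independence, then aggregate within a bin.

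\textbf{Step 1 (single ball, zero-one vector with exactly one 1).} Fix a ball $j$ and let $p_i=\Pr\{h(j)=i\}$. The indicator vector $(B_{1,j},\dots,B_{n,j})$ is supported on $\{e_1,\dots,e_n\}$ with $\Pr\{(B_{1,j},\dots,B_{n,j})=e_i\}=p_i$. To verify negative association directly from the definition, pick disjoint $I,J\subseteq[n]$ and non-decreasing functions $f,g$. Set $p_I=\sum_{i\in I}p_i$, $p_J=\sum_{i\in J}p_i$, and $a_i=f(e_i)-f(0)\ge 0$ for $i\in I$, $b_i=g(e_i)-g(0)\ge 0$ for $i\in J$ (these are nonnegative because $f,g$ are non-decreasing and $e_i\succeq 0$). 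A direct expansion then gives
\[
\E[f\cdot g]=f(0)g(0)+g(0)\sum_{i\in I}p_ia_i+f(0)\sum_{i\in J}p_ib_i,
\]
while
\[
\E[f]\,\E[g]=\bigl(f(0)+\textstyle\sum_{i\in I}p_ia_i\bigr)\bigl(g(0)+\textstyle\sum_{i\in J}p_ib_i\bigr).
\]
Subtracting yields $\E[fg]-\E[f]\E[g]=-\bigl(\sum_{i\in I}p_ia_i\bigr)\bigl(\sum_{i\in J}p_ib_i\bigr)\le 0$, which is exactly the NA inequality.

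\textbf{Step 2 (combine across balls).} The balls are hashed independently, so the families $\{B_{i,j}\}_{i\in[n]}$ for distinct $j\in[m]$ are mutually independent. Step 1 shows each such single-ball family is NA, and Proposition~\ref{prop:closure} (closure under products) lets us glue independent NA families together; applied inductively over $j=1,\dots,m$, it gives that the full family $\{B_{i,j}\}_{(i,j)\in[n]\times[m]}$ is negatively associated.

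\textbf{Step 3 (aggregate within a bin).} The index set $[n]\times[m]$ partitions into the disjoint blocks $A_i=\{(i,j):j\in[m]\}$, one per bin. The map $(x_{i,1},\dots,x_{i,m})\mapsto\sum_j x_{i,j}$ is non-decreasing, so Proposition~\ref{prop:disjoint_aggr} applied to the NA family from Step~2 with the partition $\{A_i\}_{i\in[n]}$ concludes that $B_i=\sum_j B_{i,j}$ are negatively associated, which is the ``consequence'' part of the fact.

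\textbf{Main obstacle.} Steps 2 and 3 are essentially invocations of the two propositions already stated, so the only real content is Step~1: verifying the NA inequality for the single-ball indicator vector. This is a classical ``exactly one coordinate is $1$'' fact, and the obstacle is simply to do the case analysis correctly and to use monotonicity of $f,g$ in the right direction to arrive at the clean identity $\E[fg]-\E[f]\E[g]=-(\sum p_ia_i)(\sum p_ib_i)$; once that identity is in hand the sign is immediate.
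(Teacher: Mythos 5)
Your proof is correct. The paper itself gives no proof of this fact --- it is stated as a classical result and deferred to the cited reference on negative dependence --- and your three-step argument (the direct ``zero-one'' computation showing the single-ball indicator vector is NA, Proposition~\ref{prop:closure} to combine the independent balls, and Proposition~\ref{prop:disjoint_aggr} to aggregate within bins) is exactly the standard proof from that reference, so it matches the intended route.
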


\section{Heavy Hitters Problem}\label{sec:HH}

\subsection{Upper Bounds}

In this section we give schemes that enable the detection of heavy hitters in data streams deterministically or with low failure probability. Our first result is Theorem 1, which is slightly more general than finding heavy hitters but will be useful later when implementing the dyadic trick. This theorem refers to a variant of the problem which we call the Promise Heavy Hitters problem, where we are given a subset $P$ of $[n]$ of size $m$ with the guarantee that every heavy hitter is contained in the subset $P$.

\begin{theorem}\label{thm:promise}
Let $x \in \mathbb{R}^n$ be a vector with non-negative coordinates. Assume a promise that the heavy hitters of $x$ lie in a set $P$ of size $m$. There exists a data structure such that with probability $1 - \delta$, upon query, it finds the $\ell_1$ heavy hitters (i.e., $p = 1$) of $x$ under Guarantee 3. The space usage is $\Oh(  \frac{1}{\epsilon} \log(\epsilon m) + \log ( \frac{1}{\delta}))$, the amortized update time is $\Oh(  \log (\epsilon m) \log^2(\frac{1}{\epsilon}) +   \epsilon \log^2(\frac{1}{\epsilon}) \log( \frac{1}{\delta} ))$ and the worst-case query time is $\Oh( m  \log^2(\frac{1}{\epsilon}) (\log m +\epsilon\log( \frac{1}{\delta}) ))$.
\end{theorem}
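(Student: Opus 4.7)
The plan is to adapt Algorithm~1 (the modified \textsc{Count-Min} sketch from Theorem~\ref{thm:promise-body}) verbatim, replacing every dependence on $n$ by a dependence on $m$ thanks to the promise that the heavy hitters lie inside $P$. Concretely, I would run the same sketch with $R = C_R\log(\epsilon m) + \lceil \epsilon \log(1/\delta)/C_\delta\rceil$ rows, $B = C_B/\epsilon$ buckets, and $O(1/\epsilon)$-wise independent hash functions $h_1,\dots,h_R:[n]\to[B]$. On a query, instead of scanning all of $[n]$ I would only scan the coordinates in $P$ and return the $(C_0+1)/\epsilon$ coordinates of $P$ with the largest $\min_r C_{r,h_r(i)}$ values.

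For correctness, the argument of Theorem~\ref{thm:promise-body} goes through with $[n]$ replaced by $P$. Let $S\subseteq P$ be the set of $\epsilon$-heavy hitters. For any fixed $T\subseteq P\setminus S$ with $|T|=C_0/\epsilon$, the same counting bound shows $\Pr\{|B_a[T]|<10/\epsilon\} \leq (e/4)^{10/\epsilon}$, so across all $R$ rows the failure probability is $(e/4)^{10R/\epsilon}$. The union bound is now only over subsets of $P$, giving at most $\binom{m}{C_0/\epsilon}\leq (e\epsilon m/C_0)^{C_0/\epsilon}$ bad events. With the chosen $R = \Theta(\log(\epsilon m) + \epsilon\log(1/\delta))$, the total failure probability is at most $\delta$. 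Conditioning on this good event, the existence of a witness $j\in\mathrm{Out}\setminus S$ whose estimate drops below $\epsilon\|x_{-1/\epsilon}\|_1$ follows exactly as in Theorem~\ref{thm:promise-body} (at most $1/\epsilon$ heavy-bucket indices versus at least $10/\epsilon$ occupied buckets among $T$), so every element of $S$ is forced into the output. The resulting space is $R\cdot B = O\!\left(\frac{1}{\epsilon}\log(\epsilon m) + \log(1/\delta)\right)$.

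For the time bounds, the key observation is that an $O(1/\epsilon)$-wise independent hash function is a degree-$O(1/\epsilon)$ polynomial over a suitable field, so fast multi-point evaluation applies. To get amortized update time, I would buffer updates in groups of $\Theta(1/\epsilon)$ and, whenever the buffer fills, simultaneously evaluate each $h_r$ on all buffered keys using multi-point evaluation, at cost $O(\frac{1}{\epsilon}\log^2(1/\epsilon))$ per hash function. Amortizing over the $1/\epsilon$ updates in the buffer and summing over $R$ rows yields the claimed amortized update time $O(\log(\epsilon m)\log^2(1/\epsilon) + \epsilon\log^2(1/\epsilon)\log(1/\delta))$. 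For the query, I would evaluate every $h_r$ at all $m$ points of $P$ in one shot via multi-point evaluation in time $O(m\log^2(1/\epsilon))$ per row (processing $P$ in blocks of size $1/\epsilon$), for a total of $O(m\log^2(1/\epsilon)\cdot R) = O(m\log^2(1/\epsilon)(\log(\epsilon m) + \epsilon\log(1/\delta)))$; computing the per-coordinate minima and extracting the top $O(1/\epsilon)$ is then a cheap linear scan plus a selection step.

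The main obstacle is keeping the $\log^2(1/\epsilon)$ factor in the update time rather than the naive $1/\epsilon$ factor from evaluating a degree-$O(1/\epsilon)$ polynomial pointwise; this is where I must be careful that the buffering/batching argument yields a \emph{correct} sketch at query time (one must flush the buffer before a query, which is a worst-case rather than amortized cost, but this is absorbed into the query-time bound). A secondary point is to verify that the $\binom{m}{C_0/\epsilon}$ union bound indeed contributes only $O(\frac{1}{\epsilon}\log(\epsilon m))$ rather than $O(\frac{1}{\epsilon}\log m)$, which is why the $C_0$ constant must be chosen large enough relative to the $(e/4)$ base of the row-failure probability; this is routine but is the reason we can write $\log(\epsilon m)$ instead of $\log m$ in the space.
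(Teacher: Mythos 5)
Your proposal matches the paper's proof essentially verbatim: same modified \textsc{Count-Min} with $O(1/\epsilon)$-wise hashing, same counting argument and union bound over $T\subseteq P\setminus S$ of size $C_0/\epsilon$, same multipoint-evaluation/epoch trick for the amortized update time. The one small divergence is your handling of a mid-epoch query: you flush the buffer and then answer, absorbing the flush cost into the worst-case query time (which works since $m\geq 1/\epsilon$ may be assumed WLOG), whereas the paper leaves the counters stale and instead appends the at most $C_0/\epsilon$ buffered indices to the output list, which still fits Guarantee~3's $O(1/\epsilon)$ size bound and avoids touching the counters on a query. Both variants are correct and yield the claimed bounds.
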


As we shall see later, the above scheme implies the following corollary.

\begin{corollary}\label{thm:corollary}
There exists a data structure $\mathrm{DS}$ which finds the $\ell_1$ heavy hitters of any $x \in \mathbb{R}^n$ in the strict turnstile model under Guarantee 3. The space usage is $\Oh( \frac{1}{\epsilon} \log (\epsilon n))$, the update time is amortized $\Oh(\log^2(\frac{1}{\epsilon}) \log (\epsilon n) )$ and the query time is $\Oh( n \log^2(\frac{1}{\epsilon})  \log (\epsilon n ) )$.

\end{corollary}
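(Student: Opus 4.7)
The plan is to derive Corollary~\ref{thm:corollary} as a direct consequence of Theorem~\ref{thm:promise} by instantiating the promise set as the trivial set $P = [n]$ (so $m = n$) and driving the failure probability $\delta$ down to roughly $1/\binom{n}{\lceil 1/\epsilon\rceil}$. This will allow us to take a union bound over all $\Oh(1/\epsilon)$-sized ``bad'' subsets of coordinates and obtain a \emph{uniform} scheme: a single realization of the hash functions that succeeds simultaneously for every $x \in \mathbb{R}^n$ in the strict turnstile model.

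The observation that makes this work is that the correctness argument in the sketch of Theorem~\ref{thm:promise-body} does not genuinely depend on the input vector. In Algorithm~\ref{alg:count-min-body}, the only randomness lies in the $R$ hash functions $h_1,\dots,h_R$. For a fixed $T\subseteq[n]$ of size $C_0/\epsilon$, the event that no row $a$ satisfies $|B_a[T]|\geq 10/\epsilon$ has probability at most $(e/4)^{10R/\epsilon}$, and this is a property of the hash functions alone. Taking a single union bound over the $\binom{n}{\lceil C_0/\epsilon \rceil}$ choices of $T$, we obtain with high probability the invariant ``for every $T$ of size $C_0/\epsilon$, some row $a^*_T$ satisfies $|B_{a^*_T}[T]|\geq 10/\epsilon$''. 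As shown in the sketch of Theorem~\ref{thm:promise-body}, this invariant alone suffices for Guarantee~3 to hold for any input, so the resulting data structure works uniformly for all $x$.

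It remains to verify the parameters. Using the standard estimate $\binom{n}{k}\leq (en/k)^k$, one gets $\log(1/\delta) = \Oh(\frac{1}{\epsilon}\log(\epsilon n))$. Substituting into the bounds of Theorem~\ref{thm:promise} with $m=n$, the space becomes $\Oh(\frac{1}{\epsilon}\log(\epsilon n) + \log(1/\delta)) = \Oh(\frac{1}{\epsilon}\log(\epsilon n))$; the amortized update time bound $\Oh(\log(\epsilon n)\log^2(1/\epsilon) + \epsilon\log^2(1/\epsilon)\log(1/\delta))$ collapses via the cancellation $\epsilon\cdot\log(1/\delta)=\Oh(\log(\epsilon n))$ to $\Oh(\log^2(1/\epsilon)\log(\epsilon n))$; and the query time bound simplifies to $\Oh(n\log^2(1/\epsilon)\log(\epsilon n))$ by the same cancellation, matching the claim.

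The main obstacle is therefore not in the corollary itself, whose proof reduces to a clean plug-in to Theorem~\ref{thm:promise}, but in recognizing and exploiting the fact that the event being union-bounded (existence of a good row $a^*_T$ for every small $T$) is $x$-independent. Without this observation, one might fear an additional union bound over inputs, which would blow up the $\log(1/\delta)$ term and destroy the cancellation that gives the optimal space and time bounds.
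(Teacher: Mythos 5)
Your proposal is correct and matches the paper's argument: both set $P=[n]$ and $\delta = 1/\binom{n}{\lceil 1/\epsilon\rceil}$ in Theorem~\ref{thm:promise}, note that $\log(1/\delta) = \Oh(\frac1\epsilon\log(\epsilon n))$ makes the extra terms in the space and time bounds collapse, and conclude a uniform scheme by a union bound. Your added observation that the union-bounded event (existence of a ``good'' row for each small $T$) depends only on the hash functions, not on $x$, is exactly the point the paper appeals to when it says ``the analysis depends only on the set $S$ that contains the $\epsilon$-heavy hitters,'' so the two proofs are essentially identical.
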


The following theorem follows by an improved analysis of the dyadic trick \cite{cormode2005improved}.

\begin{theorem}\label{thm:dyadic}
There exists a data structure with space $\Oh(\frac{1}{\epsilon} \log (\epsilon n) + \log(\epsilon n) \cdot \log  ( \frac{\log (\epsilon n)}{\delta}))$ that finds the $\ell_1$ heavy hitters of $x \in \mathbb{R}^n$ in the strict turnstile model under Guarantee 3 with probability at least $1 - \delta$. The update time is $\Oh( \log^2(\frac{1}{\epsilon}) \log (\epsilon n) +  \epsilon \log(\epsilon n) \log^2(\frac{1}{\epsilon}) \log(\frac{\log( \epsilon n)}{\delta}))$ and the query time is $\Oh( \frac{1}{\epsilon}( \log^2(\frac{1}{\epsilon}) \log (\epsilon n) +  \log(\epsilon n) \log^2(\frac{1}{\epsilon}) \log (\frac{\log(\epsilon n) }{\delta})) )$.
\end{theorem}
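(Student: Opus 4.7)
The plan is to instantiate the classical dyadic trick but to use the Promise Heavy Hitters data structure of Theorem~\ref{thm:promise} at each level with a promise set of size $O(1/\epsilon)$, so that at every level the $\log(1/\delta)$ term in the space adds to, rather than multiplies, the $1/\epsilon$ term. Build a complete binary tree of depth $L = \lceil\log_2 n\rceil$ over $[n]$ (padding $n$ to a power of two if needed) and at each level $\ell$ define the super-vector $x^{(\ell)}\in\mathbb{R}^{2^\ell}$ whose $j$-th entry sums $x_i$ over the $j$-th dyadic block. Because $x_i\geq 0$ in the strict turnstile model, $\|x^{(\ell)}\|_1 = \|x\|_1$ for all $\ell$, and if $x^{(\ell)}_j \geq \epsilon\|x\|_1$ then so is its parent at level $\ell-1$. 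Hence the heavy super-coordinates at successive levels form chains of descendants, and the leaf-level heavy super-coordinates are precisely the $\epsilon$-heavy hitters of $x$.

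For the top $O(\log(1/\epsilon))$ levels (where the universe has size $\leq 2/\epsilon$) we store all super-coordinate counters exactly, using $O(1/\epsilon)$ total space. For each of the remaining $\tilde{L} = O(\log(\epsilon n))$ levels, instantiate an independent copy of the Promise Heavy Hitters sketch of Theorem~\ref{thm:promise} tuned for promise size $m = O(1/\epsilon)$ and failure probability $\delta' = \delta/\tilde L$. By Theorem~\ref{thm:promise}, each such sketch uses $O(\frac{1}{\epsilon}\log(\epsilon m) + \log(1/\delta')) = O(\frac{1}{\epsilon} + \log(\log(\epsilon n)/\delta))$ space, admits amortized update time $O(\log^2(1/\epsilon) + \epsilon\log^2(1/\epsilon)\log(\log(\epsilon n)/\delta))$, and supports a promise query in time $O(m\log^2(1/\epsilon)(\log m + \epsilon\log(1/\delta')))$. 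Summing these per-level space and update costs across the $\tilde L$ sketched levels (and adding the $O(1/\epsilon)$ for the exact counters) reproduces exactly the space and update bounds stated in the theorem.

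At query time, walk down the tree level by level, starting from the exact top. At each sketched level $\ell$, let $P_\ell$ be the children of the at-most-$O(1/\epsilon)$ super-coordinates returned at level $\ell-1$; then $|P_\ell|\leq 2/\epsilon$, and we query the level-$\ell$ Promise HH sketch with promise set $P_\ell$. By Guarantee 3, with probability $\geq 1-\delta'$ the returned list of size $O(1/\epsilon)$ contains every super-coordinate in $P_\ell$ whose value exceeds $\epsilon\|x\|_1$, and by the ancestor-of-heavy-is-heavy observation all true heavy super-coordinates at level $\ell$ lie in $P_\ell$. A union bound over the $\tilde L$ levels fails with probability at most $\tilde L\cdot\delta' = \delta$, and the leaf-level list is a size-$O(1/\epsilon)$ superset of the $\epsilon$-heavy hitters of $x$. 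Summing the per-level query time from Theorem~\ref{thm:promise} over $\tilde L$ levels yields the stated total query time.

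The crux of the correctness argument is the \emph{promise invariant}: at each sketched level $\ell$, the set $P_\ell$ contains every $\epsilon$-heavy super-coordinate of $x^{(\ell)}$ and has size $O(1/\epsilon)$. The first half follows inductively from the ancestor observation (which in turn uses non-negativity, i.e., strict turnstile), and the second half from the $O(1/\epsilon)$ output size guaranteed by Guarantee~3 at level $\ell-1$. This is really the only conceptual point; beyond it the main obstacle is purely bookkeeping, namely carrying the $\log^2(1/\epsilon)$ factors (coming from fast multipoint evaluation of the $O(1/\epsilon)$-wise independent hashes inside Theorem~\ref{thm:promise}) cleanly through the per-level sum, since no new probabilistic tool is required beyond the levelwise union bound.
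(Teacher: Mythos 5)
Your proposal is correct and takes essentially the same route as the paper: instantiate the dyadic tree, store the first $O(\log(1/\epsilon))$ levels exactly, use the Promise Heavy Hitters structure of Theorem~\ref{thm:promise} with promise size $m=O(1/\epsilon)$ and per-level failure probability $\delta/O(\log(\epsilon n))$ at each of the remaining $O(\log(\epsilon n))$ levels, and argue correctness via the ``ancestor-of-heavy-is-heavy'' observation (which uses strict-turnstile non-negativity) plus a union bound over levels. The paper states the same per-level space $O(1/\epsilon + \log(\log(\epsilon n)/\delta))$ and cites ``similar arguments as in the proof of Theorem~\ref{thm:promise}'' for the time bounds, which is exactly your bookkeeping step; your write-up simply makes the promise invariant and the independence of level-$\ell$ randomness from the adaptively chosen $P_\ell$ explicit, which the paper leaves implicit.
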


\begin{theorem}\label{thm:deterministic}
There exists a deterministic algorithm that finds the $\epsilon$-heavy hitters of any vector $x \in \mathbb{R}^n$ using space $\Oh(k^{1+ \alpha} (\log (\frac{1}{\epsilon}) \log n)^{2+ 2/ \alpha})$. The update time is $\Oh( \mathrm{poly}(\log n))$ and the query time is $\Oh( n \cdot \mathrm{poly}( \log n))$.
\end{theorem}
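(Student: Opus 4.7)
\textbf{Proof proposal for Theorem~\ref{thm:deterministic}.}
The plan is to ``de-randomize'' the analysis of Algorithm~\ref{alg:count-min-body} (the one used to prove Theorem~\ref{thm:promise-body}) by replacing the $R$ rows of $O(1/\epsilon)$-wise independent hashes by the incidence matrix of a single explicit bipartite lossless expander. Concretely, let $G$ be a bipartite graph on vertex sets $[n]$ (items) and $[m]$ (buckets) with left-degree $d$, and let $A\in\{0,1\}^{m\times n}$ be its incidence matrix. Define the sketch as $y=Ax$ and the estimate as $\hat{x}_i=\min_{b\in N(i)} y_b$; as in \textsc{Count-Min}, in the strict turnstile model $y\geq 0$ and $\hat{x}_i\geq x_i$ for every $i$. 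The output is the set $\mathrm{Out}$ of the top $(C_0+1)/\epsilon$ coordinates ranked by $\hat{x}$.

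The key combinatorial requirement we impose on $G$ is $(K,1-\zeta)$-lossless expansion for $K=(C_0+1)/\epsilon$ and a small constant $\zeta$ (say $\zeta=1/4$): for every left-set $U\subseteq [n]$ with $|U|\leq K$ we have $|N(U)|\geq (1-\zeta)d|U|$. I would then mirror the proof of Theorem~\ref{thm:promise-body}. Let $S$ be the set of $\epsilon$-heavy hitters (so $|S|\leq 1/\epsilon$). Suppose for contradiction that some $i^\ast\in S$ is missed by $\mathrm{Out}$; then $T:=\mathrm{Out}\setminus S$ has $|T|\geq C_0/\epsilon$, so we can fix any $T\subseteq [n]\setminus S$ of size exactly $C_0/\epsilon$ contained in $\mathrm{Out}$. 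On the bucket side, at most $d/\epsilon$ buckets contain one of the top $1/\epsilon$ coordinates, and by Markov at most $d/\epsilon$ of the remaining buckets have noise mass $\geq \epsilon\|x_{-1/\epsilon}\|_1$; hence the set $B_{\mathrm{heavy}}$ of buckets with mass $\geq \epsilon\|x_{-1/\epsilon}\|_1$ has size at most $2d/\epsilon$. By lossless expansion applied to $T$, $|N(T)|\geq (1-\zeta)d\cdot C_0/\epsilon$, which exceeds $2d/\epsilon$ for $C_0>2/(1-\zeta)$. Therefore some bucket $b\in N(T)\setminus B_{\mathrm{heavy}}$ exists, and for the corresponding $j\in T$ with $b\in N(j)$ we get $\hat{x}_j\leq y_b<\epsilon\|x_{-1/\epsilon}\|_1\leq \epsilon\|x\|_1\leq x_{i^\ast}\leq \hat{x}_{i^\ast}$, contradicting $j\in \mathrm{Out}$ and $i^\ast\notin \mathrm{Out}$. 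This establishes correctness deterministically for every $x\in\mathbb{R}^n_+$.

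The second step is to instantiate $G$ with the Guruswami--Umans--Vadhan construction~\cite{guruswami2009unbalanced}: for any constant $\alpha>0$ and any $K\leq n$, they produce an explicit $(K,1-\zeta)$-lossless expander with left-degree $d=O\bigl((\log n\,\log K/\zeta)^{1+1/\alpha}\bigr)$ and $m=O\bigl(d\cdot K^{1+\alpha}/\zeta^{O(1)}\bigr)$. Plugging in $K=\Theta(1/\epsilon)=\Theta(k)$ and a constant $\zeta$, this gives $m=O\!\left(k^{1+\alpha}\bigl(\log(1/\epsilon)\log n\bigr)^{2+2/\alpha}\right)$, matching the stated space bound. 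Updates are handled by enumerating the $d$ neighbors of the updated coordinate and adding $\Delta$ to each corresponding counter; since each neighbor can be computed from the GUV construction via $\mathrm{poly}(\log n)$ field operations, the update time is $\mathrm{poly}(\log n)$. For queries, we iterate over all $n$ items, compute $\hat{x}_i$ as the minimum over $d$ neighbors, and return the top $O(1/\epsilon)$ in time $O(n\cdot d)=O(n\cdot \mathrm{poly}(\log n))$; a standard partial-sort or selection on the stream of $\hat{x}_i$'s keeps the extra factor inside $\mathrm{poly}(\log n)$.

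The main obstacle is arithmetic: verifying that the exponents in GUV's bounds produce exactly the $(\log(1/\epsilon)\log n)^{2+2/\alpha}$ factor claimed, and that the small constants in the lossless expansion (specifically, choosing $\zeta$ small enough so that $C_0/(1-\zeta)>2$ with $C_0$ matching the $(C_0+1)/\epsilon$ output length inherited from Algorithm~\ref{alg:count-min-body}) remain compatible. Everything else -- correctness, update, and query time -- follows by routine bookkeeping once the expander parameters are fixed. I would remark, as the authors do, that any improvement in explicit lossless expander or lossless condenser constructions immediately yields a corresponding improvement in the space of this deterministic scheme.
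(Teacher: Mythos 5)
Your proposal is correct and follows essentially the same route as the paper: the paper's Theorem~\ref{thm:expanders} reduces Guarantee~3 heavy hitters in the strict turnstile model to a bipartite $(\Theta(1/\epsilon),(1-\zeta)D)$ lossless expander with the identical min-over-neighbors estimator and the same counting argument (a light bucket must be adjacent to $T=\mathrm{Out}\setminus S$ by expansion because at most $2D/\epsilon$ buckets can be heavy), and then Corollary~\ref{cor:expander} and Lemma~\ref{thm:operations} instantiate with GUV to give the stated space and $\poly(\log n)$ update time. The only cosmetic difference is that you split the heavy-bucket count into ``buckets containing a top-$1/\epsilon$ coordinate'' and ``noise-heavy buckets'' a bit more explicitly than the paper does, but the bound and the contradiction are the same.
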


\subsection{Low-failure probability algorithms in the strict turnstile model}

We present our algorithm for Theorem~\ref{thm:promise} in Algorithm~\ref{alg:count-min} below, where we set the constants $C_{R} =5, C_{\delta} = 10(\ln 4 -1), C_B = 20, C_0= 30$.

\begin{algorithm}
\caption{\textsc{Count-Min}: Scheme for Heavy Hitters in the Strict Turnstile Model}\label{alg:count-min}
\begin{algorithmic}
\State $R \gets C_{R} \log (\epsilon m) + \lceil \epsilon \log(1/\delta)/C_{\delta} \rceil$
\State $B \gets C_B k /\epsilon$
\State Pick hash functions $h_1, \ldots , h_{R}: [n] \rightarrow [B]$, each being $(C_0/\epsilon)$-wise independent
\State $C_{r,j}\gets 0$ for all $r \in [R]$ and $j \in [B]$.

\medskip

\Function{Update}{$i,\Delta$}
\State $C_{r, h_r(i)} \leftarrow C_{r,h_r(i)} + \Delta$ for all $r \in [R]$
\EndFunction

\medskip

\Function{Query}{}
\For{each $i\in P$}
\State $\hat{x}_i \gets  \min_r C_{r,h_r(i)}$
\EndFor
\State \Return the $(C_0+1)/\epsilon$ coordinates in $P$ with the largest $\hat{x}$ values.
\EndFunction
\end{algorithmic}
\end{algorithm}

\begin{proof}[Proof of Theorem~\ref{thm:promise}]
We prove first that Algorithm $1$ satisfies Guarantee $3$ with update time $\Oh(  \frac{1}{\epsilon}\log (m\epsilon) +   \log \frac{1}{\delta})$ and query time $\Oh( \frac{m}{\epsilon}\log (m \epsilon) +\log\frac{1}{\delta} )$. Later, we shall show how to modify the scheme to improve the runtime.

Let $S$ be the set of $\epsilon$-heavy hitters of $x$.  Let $T$ be any set of at most $C_0/\epsilon$ coordinates that is a subset of $P\setminus S$. Fix a hash function $h_a$. Let $B_a[T]$ denote the set of the indices of the buckets containing elements from $T$, i.e.\@ $B_a[T] = \{ b \in [B]: \exists j \in T, h_a(j) = b\}$. The probability that $|B_a[T]| <10/\epsilon $ is at most 
\[
\binom{B}{10/\epsilon}\left(\frac{10/\epsilon}{B}\right)^{\frac{C_0}{\epsilon}} \leq \left(\frac{e C_B}{10 }\right)^{\frac{10}{\epsilon}} \left(\frac{10}{C_B}\right)^{\frac{C_0}{\epsilon}}  \leq \left(\frac{e}{4}\right)^{\frac{10}{\epsilon}}.
\] 
Considering all $R$ hash functions, it follows that with probability $\geq 1 - (e/4)^{10R/\epsilon}$ there exists an index $ a_T^* \in [R]$ such that $|B_{a_T^*}[T]| \geq 10/\epsilon$. A union bound over all possible $T$ yields that with probability at least $1- \delta$, there exists such an index $a^*_T$ for each $T$ of size $C_0 /\epsilon$.

Now, let $Out$ be the set of coordinate the algorithm outputs and let $T' = Out\setminus S$. Clearly, $|T'| \geq C_0/\epsilon$. Discard some coordinates of $T'$ so that $|T'| = C_0/\epsilon$. We shall prove that there exists an element $j \in T'$ such that its estimate is strictly less than $\epsilon \|x_{-1/\epsilon }\|_1$ and hence smaller than the estimate of any element in $S$. This will imply that every element in $S$ is inside $Out$. From the previous paragraph, there exists $a_{T'}^*$ such that $|B_{a_{T'}[T']}| > 10/\epsilon$. Since we have at most $1/\epsilon$ heavy hitters and at most $1/\epsilon$ indices $b \in [B]$ such that the counter $ C_{a,b} \geq \epsilon  \|x_{-1/\epsilon} \|_1$, which implies that we have at least $10/\epsilon$ indexes (buckets) of $B$ such that the bucket has mass less than $\epsilon \|x_{1/\epsilon}\|_1$ while at least one element of $T'$ is hashed to that bucket. Therefore, the estimate for this element is less than $\epsilon \|x_{- \frac{1}{\epsilon}}\|_1$, which finishes the correctness of proof.

We note that Algorithm $1$ clearly has an update time of $\Oh( \frac{1}{\epsilon} \log(\epsilon m) + \log\frac{1}{\delta})$ and a query time $\Oh(\frac{1}{\epsilon} m \log(\epsilon m) + m\log\frac{1}{\delta})$. We now show how to improve both runtimes to achieve the advertized runtimes in the statement of the theorem.


We split our stream into intervals, which we call epochs, of length $C_0/\epsilon$: the $l$-th epoch starts from the $((l-1)C_0/\epsilon +1)$-st update and ends at the $(l C_0/\epsilon)$-th update. Let $DS$ be the data structure we constructed thus far. During an epoch we maintain a list of elements $L_0$ that were updated in this epoch. This list is initialized to the empty list when an epoch begins. When a new update $(i,\Delta)$ arrives, we store $(i,\Delta)$ to $L_0$. For the query operation, we first query $DS$ to obtain a set $L$, and then find the set $L'$ of indices $i$ such that $(i,\Delta) \in L_0$, and return $L\cup L'$ as our answer. Clearly $L \cup L'$ has at most $(2C_0 +1 )/\epsilon$ elements. When an epoch ends, we feed all updates $(i,\Delta) \in L_0$ to our data structure. Since we can obtain a $(C_0/\epsilon)$-wise independent hash function from a polynomial of degree $C_0/\epsilon$, this means that we can do $C_0/\epsilon$ evaluations in time $\Oh(\frac{1}{\epsilon} \log ^2 \frac{1}{\epsilon})$ using multipoint evaluation of polynomials
(see, e.g., Theorem 13 of \cite{kane2011fast}). Since we shall evaluate $\Oh( \log(\epsilon m) + \epsilon \log \frac{1}{\delta} )$ hash functions the amortized update time follows. A similar argument gives the query time.
\end{proof}

Observe that in the previous algorithm the analysis depends only on the set $S$ that contains the $\epsilon$-heavy hitters. By setting $P = [n]$ and taking a union bound over all possible subsets that the $\epsilon$-heavy hitters may lie in, we obtain a uniform algorithm for the strict turnstile model under Guarantee 3, namely Corollary~\ref{thm:corollary}.

We are now ready to prove Theorem \ref{thm:dyadic}, which as we said is an improved analysis of the dyadic trick \cite{cormode2005improved}, which follows from Theorem 1. We first describe the algorithm of the dyadic trick. We assume that $n$ is a power of $2$. Then for every $1 \leq l \leq \log n$, we partition $[n]$ into $2^l$ equal-sized and disjoint intervals of length $n/2^l$. Each interval will be called a node. We imagine a complete binary tree on these nodes, where there is an edge from a node/interval $u$ to a node $v$ if $v$ is an subinterval of $u$ of exactly the half length. Since in the strict turnstile model the $\ell_1$ norm of any interval/node is the sum of its elements, we can set up a Count-min sketch for every level $l$ of the tree to find out the ``heavy'' intervals at that level. If a level contains a heavy hitter then it will always have $\ell_1$ mass at least $\epsilon \|x\|_1$, while there can be at most $\frac{1}{\epsilon}$ intervals with mass more than $\epsilon \|x\|_1$. Given this observation we traverse the tree in a breadth first search fashion and at every level keep a list $L$ of all nodes that the \textsc{Count-Min} sketch on that level indicated as heavy. In the next level there will be at most $2|L|$ nodes we need to consider: just the children of nodes in $L$. At the last level, if all Count-min sketch queries succeed we shall be left with at most $\frac{1}{\epsilon}$ intervals of length $1$, that is, we shall have found all heavy hitters. The obstacle for getting suboptimal space by a factor of $\log(\frac{\log n}{\epsilon})$ stems from the fact that we have to set the parameters of \textsc{Count-Min} sketch at every level in a way such that we can afford to take a union bound over the at most $\frac{1}{\epsilon} \log n$  nodes we shall touch while traversing the tree. However, with our improved analysis of \textsc{Count-Min} sketch, we show that we can essentially avoid this additional $\log(\frac{1}{\epsilon})$ factor. Also, our algorithm with the data structure guaranteed by Theorem \ref{thm:promise} gives the stronger tail guarantee. We continue with the proof of the theorem. 

\begin{proof}[Proof of Theorem~\ref{thm:dyadic}]
Without loss of generality we can assume that $n$ is a power of $2$. We improve the analysis for levels $l\geq 1/\epsilon$. Fix such a level $l$. Then we consider the vector $ y \in \mathbb{R}^{2^l}$, the $i$-th entry of which equals 
\[y_i = \sum_{j=(i-1)\cdot \frac{n}{2^l} +1}^{i\cdot \frac{n}{2^l}} x_j\] and run point query on this vector. Every coordinate of $y$ corresponds to a node at level $l$ and since $\|x_{-\lceil \frac{1}{\epsilon} \rceil }\|_1 \geq \|y_{-\lceil \frac{1}{\epsilon} \rceil}\|_1$  finding the heavy hitters of $y$ corresponds to finding a set of nodes that contain the heavy hitters of $x$. At every level, we are solving a version of Promise Heavy Hitters with $m = \Oh(\frac{1}{\epsilon})$. We use the data structure guaranteed by Theorem \ref{thm:promise} and set the failure probability to be $\frac{\delta}{c \log(\epsilon n)}$, for some large constant $c$.   Our space consumption per level equals $\Oh(\frac{1}{\epsilon} + \log(\frac{\log(\epsilon n)}{\delta}))$. Hence, at each level we are going to find all heavy nodes with probability at least $1 - \frac{\delta}{\log (\epsilon n)}$. By a union bound over all levels the failure probability of our algorithm is at most $\delta$. This means that, while traversing the tree, we have only $\Oh(\frac{1}{\epsilon})$ candidates at each level and hence we are solving a Promise Heavy Hitters Problem with $m = \Oh(\frac{1}{\epsilon})$. At every level we need $\Oh(\frac{1}{\epsilon} + \log(\frac{\log (\epsilon n)}{\delta}))$ space and hence the total space of our algorithm is $\Oh(\frac{1}{\epsilon} \log (\epsilon n) + \log(\epsilon n) \log( \frac{\log n}{\delta}))$. Since we are considering $\Oh(\frac{1}{\epsilon} \log (\epsilon n))$ nodes in total and each function is $\Oh(\frac{1}{\epsilon})$-wise independent, similar arguments as in the proof of Theorem \ref{thm:promise} give the desired bounds.
\end{proof}

\subsection{Deterministic Algorithms in the Strict Turnstile Model}

\subsubsection{Heavy Hitters from Expanders}

In this section we show that expanders which can be stored in low space imply schemes for heavy hitters in the strict turnstile model. Then we show that the Guruswami-Umans-Vadhan expander is such an (explicit) expander.

\begin{definition}
Let $\Gamma: [N] \times [D] \rightarrow [M]$ be a bipartite graph with $N$ left vertices, $M$ right vertices and left degree $D$. Then, the graph $G$ will be called a $(k,\zeta)$ vertex expander if for all sets $ S \subseteq [N], |S| \leq k$ it holds that $\Gamma(S) \geq (1 - \zeta) |S| D$.
\end{definition}

\begin{theorem}\label{thm:expanders}
Let $\zeta,c$ be absolute constants such that $\frac{2}{1-\zeta}<c$. Suppose there exists an explicit bipartite $(\frac{c}{\epsilon},(1-\zeta)D)$ expander $\Gamma: [n] \times [D] \rightarrow [m]$. Suppose also that $\Gamma$ can be stored in space $S$, and for every $i \in [n]$ it is possible to compute the neighbours of $i$ in time $t$. Then there exists an algorithm that finds the $\epsilon$-heavy hitters of a vector $x \in \mathbb{R}^n$ with space usage $m+S$, update time $\Oh(t)$ and query time $\Oh(n \cdot t)$.
\end{theorem}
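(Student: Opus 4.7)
The plan is to build a natural linear sketch from $\Gamma$ and combine its vertex expansion with Markov's inequality to bound the number of spurious overestimates. Identify $\Gamma$ with its $\{0,1\}$-valued biadjacency matrix $A\in\R^{m\times n}$ given by $A_{j,i}=1$ iff $j\in\Gamma(i)$, and maintain the linear sketch $y=Ax$, whose $j$-th entry is $y_j=\sum_{i\,:\,j\in\Gamma(i)}x_i$. Since we are in the strict turnstile model ($x\geq 0$), for every $j\in\Gamma(i)$ we have $y_j\geq x_i$, and thus the estimator $\hat x_i:=\min_{j\in\Gamma(i)}y_j$ always satisfies $\hat x_i\geq x_i$. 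In particular any $\epsilon$-heavy hitter $i^*$ obeys $\hat x_{i^*}\geq x_{i^*}\geq\epsilon\|x\|_1$. The algorithm simply outputs the set $L$ of the $\lceil c/\epsilon\rceil$ coordinates with the largest $\hat x$-values.

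The crux is to upper bound $H:=\{i\in[n]:\hat x_i\geq\epsilon\|x\|_1\}$. Because every column of $A$ contains exactly $D$ ones, $\sum_j y_j=D\|x\|_1$, so by Markov $|B|\leq D/\epsilon$ where $B:=\{j:y_j\geq\epsilon\|x\|_1\}$. Suppose for contradiction that $|H|>c/\epsilon$ and pick $T\subseteq H$ with $|T|=\lfloor c/\epsilon\rfloor$. For every $i\in T$, the definition of $H$ forces $y_j\geq\epsilon\|x\|_1$ for all $j\in\Gamma(i)$, hence $\Gamma(T)\subseteq B$ and $|\Gamma(T)|\leq D/\epsilon$. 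On the other hand, the expansion hypothesis applies since $|T|\leq c/\epsilon$ and gives $|\Gamma(T)|\geq(1-\zeta)|T|D$; the slack in $c>2/(1-\zeta)$ makes $(1-\zeta)\lfloor c/\epsilon\rfloor D>D/\epsilon$ for every $\epsilon\leq 1$, a contradiction. Therefore $|H|\leq c/\epsilon$. Since every $i\notin H$ has $\hat x_i<\epsilon\|x\|_1$ while every $i\in H$ has $\hat x_i\geq\epsilon\|x\|_1$, the top-$\lceil c/\epsilon\rceil$ list $L$ contains all of $H$ and hence all $\epsilon$-heavy hitters, delivering Guarantee 3 with $|L|=\Oh(1/\epsilon)$.

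The resource bounds follow directly. We store $y$ in $m$ cells and the description of $\Gamma$ in $S$, for a total of $m+S$. An update $(i,\Delta)$ enumerates the $D$ neighbors of $i$ in time $t$ and adds $\Delta$ to each of the corresponding $D$ counters, for update time $\Oh(t)$ (absorbing the $\Oh(D)$ counter updates into $t$, since listing the neighbors already costs $\Omega(D)$). A query recomputes $\hat x_i=\min_{j\in\Gamma(i)}y_j$ for each $i\in[n]$ in $\Oh(t)$ time per index and then extracts the $\lceil c/\epsilon\rceil$ largest values in $\Oh(n)$ additional time by linear-time selection, yielding $\Oh(n\cdot t)$ overall.

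The proof is short and essentially algebraic; there is no significant obstacle. The only subtlety is the rounding check verifying that the factor $2$ in the hypothesis $c>2/(1-\zeta)$ (rather than the naive threshold $c>1/(1-\zeta)$ that would suffice if $c/\epsilon$ were always an integer) absorbs the floor $\lfloor c/\epsilon\rfloor$ for all $\epsilon\in(0,1]$, which was the point of writing the hypothesis with the extra factor.
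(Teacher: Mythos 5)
Your proof is correct and follows essentially the same approach as the paper's: maintain the Count-Min-style sketch over the expander neighborhoods, use the non-negativity of $x$ to get an overestimating min-estimator, bound the number of counters exceeding the heavy-hitter threshold, and derive a contradiction with the expansion property. The only cosmetic difference is that you bound the heavy counters by $D/\epsilon$ via the exact identity $\sum_j y_j = D\|x\|_1$ and threshold $\epsilon\|x\|_1$, whereas the paper thresholds at $\epsilon\|x_{-\lceil 1/\epsilon\rceil}\|_1$ and gets $2D/\epsilon$ by separating heavy-hitter and tail contributions; both use the hypothesis $c>2/(1-\zeta)$ to close the argument.
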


\begin{proof}
We maintain counters $C_1,\dots,C_m$. Whenever an update $(i, \Delta)$ arrives, we add $\Delta$ to all counters $C_j$ for $j$ that is a neighbour of $i$. The total update time is $\Oh(t)$. The query algorithm is exactly the same as \textsc{Count-Min}: compute $\hat{x}_i = \min_{d \in D } C_{ \Gamma(i,d)}$ and output the largest $ (c+1)/\epsilon $ coordinates. We now show the correctness of the algorithm.

Fix a vector $x \in \mathbb{R}^n$ and let $S$ be the set of its $\epsilon$-heavy hitters. Let $T$ be any other set of $c/\epsilon$ elements. Observe that for all $i \in S$ and all $j \in [m]$ such that $j,i$ are neighbours in $\Gamma$, it holds that $C_{j} \geq \epsilon\|x_{- \lceil 1/\epsilon \rceil}\|_1$. We claim that there exist adjacent $(i,j)$ for some $i \in T$ such that $ C_{j} < \epsilon \|x_{- \lceil 1/\epsilon \rceil}\|_1$, whence the theorem follows.

Suppose that the claim is false. By the expansion property of $G$, the neighbourhood of $T$ has size at least $(1-\zeta)(c/\epsilon)D$. Since the claim is false, all of these counters have value at least $\epsilon\|x_{- \lceil 1/\epsilon \rceil }\|_1$. However, the total number of counters that are at least $\epsilon\|x_{- \lceil 1/\epsilon \rceil }\|_1$, is $2 D/{\epsilon}$. It follows from our choices of $\zeta$ and $c$ that $2D/\epsilon < (1-\zeta)(c/\epsilon)D$, which is a contradiction.
\end{proof}

Next we review the Guruswami-Umans-Vadhan (GUV) expander. The construction of the GUV expander is included in Figure~\ref{fig:expanders}, and it is known that the construction does give an expander.

\begin{theorem}[{\cite{vadhan2012pseudorandomness}}]
The expander $\Gamma$ from construction $2$ is a $(h^c, q - ahc)$ expander.
\end{theorem}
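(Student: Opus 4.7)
The plan is to follow the Parvaresh-Vardy list-decoding framework, adapted to unbalanced expanders by Guruswami, Umans and Vadhan. Argue by contradiction: assume there is a set $S$ of left vertices with $|S| \leq h^c$ and $|\Gamma(S)| < (q - ahc)|S|$. The goal is to convert this implied overlap among the neighbour-labels of $S$ into a single low-degree algebraic identity satisfied by every $f \in S$, and then bound the number of such $f$ strictly below $h^c$.

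First I would interpolate. Using the standard counting argument, I want a nonzero polynomial $Q(X, Z_0, \dots, Z_{a-1}) \in \mathbb{F}_q[X, Z_0, \dots, Z_{a-1}]$ with individual degrees $\deg_X Q < D_X$ and $\deg_{Z_i} Q < D_Z$ that vanishes on every GUV label $(x, f(x), (f^h \bmod E)(x), \dots, (f^{h^{a-1}} \bmod E)(x))$ arising from $f \in S$ and $x \in \mathbb{F}_q$, where $E \in \mathbb{F}_q[Y]$ is the irreducible polynomial of degree $h$ fixed by the construction. Such a $Q$ exists as soon as the monomial count $D_X \cdot D_Z^a$ strictly exceeds $|\Gamma(S)|$, because the vanishing conditions form a homogeneous linear system with more unknowns than equations. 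I will set $D_X := ahc$ and choose $D_Z$ minimally subject to $D_X D_Z^a > |\Gamma(S)|$, using the contradiction hypothesis $|\Gamma(S)| < (q - ahc) h^c$.

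Next, for each $f \in S$ form the univariate polynomial $\tilde{Q}_f(X) := Q(X, f(X), (f^h \bmod E)(X), \dots, (f^{h^{a-1}} \bmod E)(X)) \in \mathbb{F}_q[X]$. Since every substitution point lies in $\Gamma(S)$ and $Q$ vanishes on $\Gamma(S)$, $\tilde{Q}_f(x) = 0$ for every $x \in \mathbb{F}_q$, so $\tilde{Q}_f$ has at least $q$ roots. A routine bound shows $\deg \tilde{Q}_f \leq D_X + a(h-1)D_Z$; choosing parameters so this stays below $q$ forces $\tilde{Q}_f \equiv 0$. I would then pass to the extension field $\mathbb{K} := \mathbb{F}_q[Y]/(E(Y))$, which is a field because $E$ is irreducible of degree $h$. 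Each $f$ of degree less than $h$ corresponds to a unique $\bar{f} \in \mathbb{K}$, and the vanishing $\tilde{Q}_f \equiv 0$ is equivalent to $\bar{Q}(\bar f) = 0$, where $\bar{Q}(Z) := Q(Y, Z, Z^h, Z^{h^2}, \dots, Z^{h^{a-1}}) \in \mathbb{K}[Z]$ is a fixed nonzero polynomial of $Z$-degree at most $D_Z \cdot h^{a-1}$. The number of its roots in $\mathbb{K}$ is bounded by $\deg \bar{Q}$, and by choosing $D_Z$ so that $D_Z \cdot h^{a-1} < h^c$, we obtain fewer roots than $|S|$, contradicting that all $|S| \leq h^c$ distinct $\bar{f}$'s are roots.

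The main obstacle is the three-way parameter balancing that determines the exact $(h^c, q - ahc)$ bound. One needs simultaneously: (i) $D_X \cdot D_Z^a > |\Gamma(S)|$ so that interpolation produces a nonzero $Q$; (ii) $D_X + a(h-1)D_Z < q$ so that $\tilde{Q}_f$ has sufficiently more roots than its degree and must vanish identically; (iii) $D_Z \cdot h^{a-1} < h^c$ so the root count of $\bar{Q}$ in $\mathbb{K}$ is strictly below $|S|$. The choice $D_X = ahc$ is precisely what (i) allows given the deficit $q - ahc$ in the hypothesis, and this is what pins down the expansion quality stated in the theorem. The subtle technical point is verifying that $Q \neq 0$ indeed yields $\bar{Q} \neq 0$ after the Parvaresh-Vardy substitution $Z_i \mapsto Z^{h^i}$: this relies on the linear independence of the substituted monomials in $\mathbb{K}[Z]$, which in turn uses both the individual-degree bound $\deg_{Z_i} Q < D_Z$ together with $D_Z \leq h$ (so no $h$-ary ``carrying'' collapses monomials) and the irreducibility of $E$ to ensure $\mathbb{K}$ is a field.
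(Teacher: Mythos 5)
The paper cites this theorem from Vadhan's survey and does not reprove it, so there is no internal proof to compare against; here is an evaluation of your argument on its own merits. Your plan---interpolate a multivariate polynomial vanishing on $\Gamma(S)$, restrict along each $f$ to a univariate polynomial that must vanish identically, then pass to $\mathbb{K} = \mathbb{F}_q[Y]/(E)$ and apply the Parvaresh--Vardy substitution to bound the number of roots---is indeed the right strategy and is exactly the GUV proof. However, the parameter accounting is systematically off because you have interchanged the roles of $a$ and $c$ (and in places $h$) throughout.

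Concretely: the right-vertex label is $(y, f_0(y),\ldots,f_{c-1}(y))$, so $Q$ should be a polynomial in $1+c$ variables $Q(X,Z_0,\ldots,Z_{c-1})$, not $1+a$. The irreducible $E$ has degree $a$, not $h$; consequently $\mathbb{K}$ has degree $a$ over $\mathbb{F}_q$, and the identification $f\leftrightarrow \bar f\in\mathbb{K}$ works because $\deg f<a$, not $<h$. Each $f_i=f^{h^i}\bmod E$ has degree $\leq a-1$, so the correct bound on $\deg\tilde Q_f$ is $(D_X-1)+c(D_Z-1)(a-1)$ rather than $D_X+a(h-1)D_Z$, and $\bar Q(Z)=Q(Y,Z,Z^h,\ldots,Z^{h^{c-1}})$ has degree at most $(D_Z-1)\sum_{i=0}^{c-1}h^i=(D_Z-1)\tfrac{h^c-1}{h-1}$, which with $D_Z=h$ equals $h^c-1<h^c$ automatically---your constraint $D_Z\, h^{a-1}<h^c$ would instead force $a<c$, which is not assumed. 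Finally the choice $D_X=ahc$ is backwards: $D_X$ should be (up to rounding) the expansion factor $q-ahc$ itself, so that $D_X\cdot h^c>|\Gamma(S)|$ makes the interpolation go through, and then $(q-ahc-1)+c(h-1)(a-1)=q-1-c(a+h-1)<q$ makes $\tilde Q_f$ vanish. With $D_X=ahc$ your three constraints (i)--(iii) have no consistent solution for $D_Z$ in general, so the argument as written does not close.
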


The following corollary follows with appropriate instantiation of parameters.

\begin{corollary}\label{cor:expander}
For every constant $\alpha >0$ and all positive integers $N,K \leq N$ and $\epsilon$  there exists an explicit $(K,(1-\epsilon)D)$ expander with $N$ left vertices, $M$ right vertices and left-degree $D = \Oh( \log N \log K/ \epsilon)^{1+1/ \alpha}$ and $M = k^{1+\alpha} D^2$.
\end{corollary}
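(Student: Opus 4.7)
The plan is to obtain the corollary by instantiating the four parameters $(h,q,a,c)$ of the GUV construction (the theorem stated immediately before) to match the desired outer parameters $(N,K,\epsilon,D,M)$. Recall from that theorem that the GUV graph $\Gamma:\mathbb{F}_q^a\times\mathbb{F}_q\to \mathbb{F}_q^{c+1}$ is a $(h^c,\,q-ahc)$-expander, so it has $q^a$ left vertices, $q^{c+1}$ right vertices, and left-degree $q$. Accordingly, I need to pick $h,q,a,c$ such that $h^c\ge K$, $q^a\ge N$, $q-ahc\ge(1-\epsilon)q$, the degree $q$ matches the target $D=O((\log N\log K/\epsilon)^{1+1/\alpha})$, and $q^{c+1}$ matches $K^{1+\alpha}D^2$.

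The first step is to fix $h$ to be a prime power in the range $[h_0,2h_0]$ with $h_0:=C(\log N\log K/\epsilon)^{1+1/\alpha}$ for a sufficiently large constant $C=C(\alpha)$; such a prime power is guaranteed to exist by Bertrand's postulate and can be found efficiently. With $h$ in hand I would set $c:=\lceil\log K/\log h\rceil$, so $h^c\in[K,Kh)$ (giving the $K$-expansion), and then let $q$ be a prime power in a short interval around $ahc/\epsilon$ and set $a:=\lceil\log N/\log q\rceil$. The definitions are mutually consistent because $a$ depends only logarithmically on $q$, so a one-pass unwinding fixes both at $q=\Theta\!\bigl(h\cdot(\log N\log K/\epsilon)/\log h\bigr)=\Theta(h^{1+o(1)})$ and $a=\Theta(\log N/\log h)$. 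By construction the lossless-expansion inequality $ahc\le\epsilon q$ holds, and the left-degree $D=q$ comes out to $O((\log N\log K/\epsilon)^{1+1/\alpha})$ after absorbing the $o(1)$ factor in the exponent into the constant $C$.

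For the right-vertex count I would rewrite $M=q^{c+1}=q\cdot h^c\cdot (q/h)^c=D\cdot h^c\cdot(q/h)^c$. Since $h^c\le Kh$ and $q/h$ is only polylogarithmic in $N,K,1/\epsilon$, while $c=O(\log K/\log h)$, the factor $(q/h)^c$ is of the form $K^{o(1)}$ — in fact $(q/h)^c\le\exp(c\log(q/h))=\exp(O(\log K\cdot\log\log(\cdots)/\log h))$, which shrinks as $\alpha$ decreases thanks to the $1+1/\alpha$ in the exponent defining $h_0$. Choosing the constant $C$ large enough in terms of $\alpha$ then forces this factor to be at most $K^\alpha$, yielding $M=O(K^{1+\alpha}D)\le K^{1+\alpha}D^2$, as claimed.

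The main obstacle I expect is precisely the bookkeeping around the $(q/h)^c$ term: the exponent $1+1/\alpha$ on $\log N\log K/\epsilon$ is tuned exactly so that the residual factor fits inside $K^\alpha$, and the proof has to verify this arithmetic while simultaneously confirming that prime powers exist in each of the prescribed intervals (standard, via Bertrand) and that $q^a\ge N$ despite the rounding in the definition of $a$. Everything else — the expansion guarantee, the degree bound, and the explicitness of the construction — follows by direct substitution into the preceding theorem.
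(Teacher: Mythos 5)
Your overall plan — instantiate the four GUV parameters $(h,q,a,c)$ against the target $(N,K,\epsilon,D,M)$, then check the expansion inequality $ahc \le \epsilon q$ and the size bound $M = q^{c+1}$ — is the right one, and your accounting of $h^c$ and $(q/h)^c$ is sound once $h$ is fixed. But the initial choice of $h_0$ is off by a power, and this propagates into a failed degree bound.

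You set $h_0 = C(\log N \log K/\epsilon)^{1+1/\alpha}$ and then claim the ``one-pass unwinding'' of $q \approx ahc/\epsilon$, $a\approx \log N/\log q$, $c\approx \log K/\log h$ stabilizes at $q = \Theta(h^{1+o(1)})$. It does not. Writing $L = \log N \log K/\epsilon$, the fixed-point relation is
\[
q\log q \;\approx\; \frac{h\,L}{\log h},
\]
so with $h = \Theta(L^{1+1/\alpha})$ you get $q = \Theta\!\left(L^{2+1/\alpha}/\mathrm{polylog}(L)\right)$. In terms of $h$, that is $q = \Theta\!\left(h^{\,1+\alpha/(1+\alpha)-o(1)}\right)$, and $\alpha/(1+\alpha)$ is a \emph{constant} bounded away from $0$ for any fixed $\alpha>0$ — it is not an $o(1)$ that can be ``absorbed into $C$.'' Consequently the left degree your construction produces is $D = q = \Theta\!\left(L^{2+1/\alpha}\right)$ up to polylogs, which overshoots the claimed $D = \Oh(L^{1+1/\alpha})$ by a factor of roughly $L = \log N\log K/\epsilon$, and no tuning of the constant $C(\alpha)$ can close a gap that grows with the parameters.

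The fix is to take $h_0 = C(\log N\log K/\epsilon)^{1/\alpha}$, not $(\,\cdot\,)^{1+1/\alpha}$. With that choice the same fixed-point computation gives $q = \Theta\!\left(L^{1+1/\alpha}/\mathrm{polylog}(L)\right) = \Oh(L^{1+1/\alpha})$, matching the stated degree; and since $q/h \approx L/\mathrm{polylog}$ and $c\approx \alpha\log K/\log L$ one still gets $(q/h)^c \approx K^{\alpha}$, $h^c \le Kh$, and $h \le q$, so $M = q^{c+1} = q\cdot h^c\cdot (q/h)^c \le q^2 K^{1+\alpha} = D^2 K^{1+\alpha}$ survives intact. (Equivalently, and avoiding the fixed point altogether, one can simply pick $q$ to be a prime power near $h^{1+\alpha}$ and then verify $ahc\le\epsilon q$ directly using $h^{\alpha}\ge C\log N\log K/\epsilon$ — this is the route taken in \cite{vadhan2012pseudorandomness} and also has the advantage of making the mutual dependence of $q$ and $a$ disappear.) The rest of your argument — Bertrand for the existence of a prime power in a constant-ratio interval, the rounding checks $h^c\ge K$ and $q^a\ge N$, and the explicitness coming from Lemma~\ref{thm:operations} — is fine.
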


\begin{figure}
\begin{center}
\fbox{                                                                                                         
{\footnotesize                                                                                                 
\parbox{\textwidth} {  
\underline{Expanders from Parvaresh-Vardy Codes}:\\                                                            
\vspace{-.30in}\begin{enumerate}                                                                               
\addtolength{\itemsep}{-0.5mm}
\medskip \medskip

\item Define a neighbour function $\Gamma: \mathbb{F}_q^a \times \mathbb{F}_q \rightarrow \mathbb{F}_q \times \mathbb{F}_q^c$ by

\begin{equation} \Gamma(f,y) = [y, f_0(y), \ldots , f_{c-1}(y)], \end{equation}

where $f(Y)$ is a polynomial of degree $a-1$ over $\mathbb{F}_q$. We define $f_i(Y) = f(Y)^{h^i}~ mod~E(Y)$ where $E$ is an irreducible polynomial of degree $a$ over $\mathbb{F}_q$. 
\end{enumerate}                                                                                                
}}}
\end{center} 
\caption{Guruswami-Umans-Vadhan Expander} \label{fig:expanders}
\end{figure}
The next lemma is immediate, since the time to perform operations between polynomials of degree $a$ in $\mathbb{F}_q$ is $\Oh( c \cdot \mathrm{poly}(\log a, \log q))$. 

\begin{lemma}\label{thm:operations}
The function $\Gamma$ can be stored using $a$ words of space and the time to compute $\Gamma (f,y)$ given $f,y$ is $\Oh(c \cdot \mathrm{poly}(\log a , \log q))$.
\end{lemma}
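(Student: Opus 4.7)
The plan is to read off both claims directly from the definition of $\Gamma$ in Figure~\ref{fig:expanders}. Once $a,c,h,q$ are fixed, the only data that pins down a specific instance of $\Gamma$ is the irreducible polynomial $E(Y)\in\mathbb{F}_q[Y]$ of degree $a$; taking $E$ to be monic, I store its $a$ lower-order coefficients as $a$ field elements, each occupying one machine word. The scalar parameters $a,c,h,q$ themselves contribute only $\Oh(\log a + \log c + \log h + \log q)$ bits, easily absorbed into $\Oh(a)$ words.

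For the evaluation time, given $f$ (as $a$ coefficients in $\mathbb{F}_q$) and $y\in\mathbb{F}_q$, first emit $y$. Observe that $f_0(Y)=f(Y)\bmod E(Y)=f(Y)$ since $\deg f<\deg E$, while for $i\ge 1$ we have $f_i(Y)=f_{i-1}(Y)^h \bmod E(Y)$. I would therefore generate $f_1,\ldots,f_{c-1}$ iteratively inside $\mathbb{F}_q[Y]/(E(Y))$, each from the previous by one modular $h$-th power computed with repeated squaring (at most $\Oh(\log h)$ modular multiplications), and then evaluate each $f_i$ at $y$ with Horner's rule. One modular multiplication of polynomials of degree $<a$ over $\mathbb{F}_q$ costs $\tilde\Oh(a)$ field operations via FFT-based multiplication followed by reduction against the stored $E$, and each Horner evaluation costs $\Oh(a)$ field operations. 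Summed over $i\in\{0,\ldots,c-1\}$, the total is $\Oh(ca\log h)$ field operations, each implementable in $\poly(\log q)$ bit operations by standard fast modular arithmetic in $\mathbb{F}_q$.

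The main obstacle is reconciling the count above with the bound as stated: my analysis gives a polynomial (rather than polylogarithmic) dependence on $a$, which is in fact unavoidable since merely reading the input $f$ already takes $\Omega(a)$ time. I therefore believe the intended bound in the statement is $\Oh(c\cdot\poly(a,\log q))$ and the $\log a$ is a typo. This is harmless for the downstream uses in Corollary~\ref{cor:expander} and Theorem~\ref{thm:deterministic}, where the underlying GUV parameters force $a=\Oh(\log N)$, so the polynomial-in-$a$ factor becomes polylogarithmic in the universe size anyway and the final complexities are unaffected.
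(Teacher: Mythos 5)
Your proof takes the same route as the paper's (one-line) argument: store the modulus $E$, then compute each $f_i$ by modular exponentiation in $\mathbb{F}_q[Y]/(E(Y))$ and evaluate by Horner's rule; the paper simply asserts this as ``immediate.'' You are also right that the bound as printed, $\Oh(c\cdot\poly(\log a,\log q))$, cannot be literal since reading $f$ already costs $a$ field elements, so the true per-neighbour cost is $\Oh(c\cdot\poly(a,\log q))$ (with a $\log h\le\log q$ factor for the repeated squaring); this is harmless because in Corollary~\ref{cor:expander} and Theorem~\ref{thm:deterministic} one has $a=\Oh(\log N)$, so the downstream $\poly(\log n)$ bounds are unaffected.
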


Our main result, Theorem \ref{thm:deterministic}, then follows from combining Corollary~\ref{cor:expander}, Theorem~\ref{thm:expanders} and Lemma~\ref{thm:operations}.

\subsubsection{ Heavy hitters from Lossless Condensers}

Our starting point is an observation that in the proof above we only needed expansion on sets of size exactly $c/\epsilon$ and thus some object weaker than an expander could suffice. The right object to consider is called lossless condenser, which is essentially an expander that guarantees expansion on sets of a specific size, but not on sets of smaller size. Here we follow the definitions in \cite{vadhan2012pseudorandomness}.

In order to define a lossless condenser we need the notion of min entropy of a distribution and the notion of the total variation distance between two distributions. 

\begin{definition}[Min entropy]
Let $\mathcal{D}$ be a distribution on a finite sample space $\Omega $. The min entropy of
 $\mathcal{D}$ is defined as 
\[
H_{\infty}(\mathcal{D}) = \min_{ \omega \in \Omega}  \log\frac{1}{\Pr_{\omega \sim \mathcal{D}} [\omega]}.
\]
\end{definition}

\begin{definition}[Total variation distance]
The total variation distance between two distributions $\mathcal{P}$ and $\mathcal{Q}$ on $\Omega$ is defined to be 
\[
D_{TV}(\mathcal{P},\mathcal{Q}) = \frac{1}{2} \sum_{ \omega \in \Omega} |\mathcal{P}(\omega) - \mathcal{Q}(\omega)|.
\]
\end{definition}

We also need the definitions of a pseudorandom object called a condenser. In a nutshell, a condenser takes as input a random variable from a source which has some amount of min-entropy, and some uniform random bits. It then outputs an element following a distribution that has sufficiently large min-entropy.
Let $\mathcal{U}_n$ denote the uniform distribution on $\mathbb{F}_p^n$. 

\begin{definition}[Loseless condenser]
Let $a,b,c$ be positive integers and let $p$ be a prime number. A function $C: \mathbb{F}_p^a \times \mathbb{F}_p^b \rightarrow \mathbb{F}_p^c$ is called a lossless $(\kappa,\zeta)$ condenser if the following holds.

For every distribution $\mathcal{A}$ on $\mathbb{F}_p^b$ with $\mathcal{H}_{\infty}(\mathcal{A})  \geq \kappa$, for any random variable $A \sim \mathcal{A}$ and any `seed' variable $B \sim \mathcal{U}_b$, the distribution $(B,C(A,B))$ is $\zeta$-close to some distribution $(\mathcal{U}_b,\mathcal{Z})$ on $\mathbb{F}_p^{b+m}$ with min-entropy at least  $b+\kappa$.

\end{definition}

Equipped with these definitions, we are now ready to prove the following theorem relating lossless condensers and heavy hitters. Although we can repeat the proof of the previous section and argue that every lossless condenser is equivalent to an expander where only sets of a specific size expand, we prefer to rewrite the proof in the language of condensers.

\begin{theorem}
Let $p$ be a prime number and let $a$ be such that $n = p^a$. Let also $\kappa$ be such that $c/\epsilon = 2^{\kappa}$, where $c$ is some absolute constant. Let $Con: \mathbb{F}_p^a \times \mathbb{F}_p^b \rightarrow \mathbb{F}_p^c$ be a $(\kappa,\zeta)$ lossless condenser that can be stored in space $S$. Let $t$ be the time needed to evaluate $Con$. If $2^{-\kappa+1}/\epsilon + \zeta <1$ then there exists an algorithm that finds the $\epsilon$-heavy hitters of any $x \in \mathbb{R}^n$ with space $S + p^{c+ b}$, update time $t \cdot p^b$ and query time $\Oh(n \cdot t\cdot p^b )$.
\end{theorem}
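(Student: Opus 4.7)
The plan is to recast the algorithm exactly as in the proof of Theorem~\ref{thm:expanders}, viewing the condenser $Con$ as a bipartite graph $\Gamma$ on left vertex set $\mathbb{F}_p^a$ (identified with $[n]$) and right vertex set $\mathbb{F}_p^b\times\mathbb{F}_p^c$, with left-degree $p^b$, where $i$ is adjacent to $(y,Con(i,y))$ for every $y\in\mathbb{F}_p^b$. The data structure maintains $p^{b+c}$ counters $C_{y,z}$; on an update $(i,\Delta)$ we add $\Delta$ to $C_{y,Con(i,y)}$ for each of the $p^b$ values of $y$, and on query we compute $\hat{x}_i=\min_y C_{y,Con(i,y)}$ and return the $c/\epsilon+1$ coordinates with the largest estimates. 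The space and time bounds are immediate, and correctness reduces, exactly as in the expander proof, to showing that for every set $T\subseteq [n]$ of size $c/\epsilon=2^\kappa$ disjoint from the set of true $\epsilon$-heavy hitters, at least one right-neighbor of $T$ in $\Gamma$ holds a counter of value strictly less than $\epsilon\|x_{-\lceil 1/\epsilon\rceil}\|_1$.

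The new ingredient is to extract the required expansion of $T$ from the condenser definition. Let $A$ be uniform on $T$, so $H_\infty(A)=\log|T|=\kappa$, and let $B$ be uniform on $\mathbb{F}_p^b$. By hypothesis, $(B,Con(A,B))$ is $\zeta$-close in total variation to some distribution $(\mathcal{U}_b,\mathcal{Z})$ of min-entropy at least $b+\kappa$, i.e., each atom has mass at most $2^{-(b+\kappa)}=1/(p^b|T|)$. Write $N=N_\Gamma(T)\subseteq \mathbb{F}_p^b\times\mathbb{F}_p^c$. The distribution of $(B,Con(A,B))$ is supported inside $N$ and assigns it mass $1$, while $(\mathcal{U}_b,\mathcal{Z})$ assigns $N$ mass at most $|N|/(p^b|T|)$. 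Thus $1\leq \zeta+|N|/(p^b|T|)$, giving the crucial bound $|N|\geq (1-\zeta)p^b|T|=(1-\zeta)(c/\epsilon)p^b$.

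To finish, I would count the ``heavy'' counters exactly as in the expander proof. In the strict turnstile model the counters are non-negative and $\sum_{y,z}C_{y,z}=p^b\|x\|_1$. The at most $1/\epsilon$ true heavy hitters force at most $(1/\epsilon)p^b$ counters above the threshold $\epsilon\|x_{-\lceil 1/\epsilon\rceil}\|_1$, and the residual mass contributes total counter-mass at most $p^b\|x_{-\lceil 1/\epsilon\rceil}\|_1$, producing at most another $p^b/\epsilon$ counters above threshold. Hence at most $2p^b/\epsilon$ counters exceed the threshold. The hypothesis $2^{-\kappa+1}/\epsilon+\zeta<1$ rearranges to $(1-\zeta)c>2$, so $|N|\geq (1-\zeta)(c/\epsilon)p^b>2p^b/\epsilon$, meaning $T$ must possess a neighbor whose counter is strictly below the threshold. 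This forces an element of $T$ to be dropped from the top-$(c/\epsilon+1)$ output in favor of a genuine heavy hitter, proving correctness.

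The main obstacle is the middle step, i.e., the passage from the condenser's distributional closeness guarantee to the worst-case combinatorial expansion bound on $N_\Gamma(T)$. Once this equivalence between lossless condensers and ``single-scale'' vertex expanders is in place, the remainder is a routine adaptation of Theorem~\ref{thm:expanders} with the degree $D$ replaced by $p^b$ and with expansion needed only at the single scale $|T|=2^\kappa$.
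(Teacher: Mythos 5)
Your proof is correct and arrives at the same conclusion, but you deliberately route it through the combinatorial expansion of the neighborhood $N_\Gamma(T)$, whereas the paper keeps the contradiction entirely in the language of min-entropy and total variation. Concretely, you first derive the worst-case bound $|N_\Gamma(T)|\geq (1-\zeta)p^b|T|$ from the condenser guarantee by comparing the mass that $(B,Con(A,B))$ and $(\mathcal{U}_b,\mathcal{Z})$ place on $N_\Gamma(T)$, then re-run the Theorem~\ref{thm:expanders} counting of heavy counters; the paper instead observes that if no counter touched by $T$ is light then $(B,Con(A,B))$ assigns mass $1$ to the set of heavy counters, yet the $\zeta$-close high-min-entropy surrogate can assign at most $(2/\epsilon)2^{-\kappa}+\zeta<1$, a contradiction, without ever stating the expansion bound. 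The two are equivalent up to bookkeeping, and in fact the paper explicitly flags your route as an available alternative (``we can \ldots argue that every lossless condenser is equivalent to an expander where only sets of a specific size expand, we prefer to rewrite the proof in the language of condensers''). Your intermediate lemma is arguably cleaner to reuse, and you also spell out the justification for the ``at most $2p^b/\epsilon$ heavy counters'' bound, which the paper asserts without proof.

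Two small slips, neither affecting the substance: you should return the top $(c+1)/\epsilon$ coordinates rather than $c/\epsilon+1$, so that $T=\mathrm{Out}\setminus S$ is guaranteed to contain at least $c/\epsilon$ indices after removing the up to $1/\epsilon$ true heavy hitters; and, like the paper, you silently conflate $2^{-(b+\kappa)}$ with $1/(p^b|T|)$, which is exact only when $p=2$ or when the min-entropy bound is taken in the appropriate base.
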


\begin{proof}
We consider the following algorithm. We instantiate counters $C_{i,j}$ for all $i \in \mathbb{F}_p^b$ and $j \in \mathbb{F}_p^c$. Upon updating $(i,\Delta)$ we perform updates $C_{i,Con(i,j)} \leftarrow C_{i,Con(i,j)} + \Delta$ for all $j \in \mathbb{F}_p^b$. Clearly the update time is $t \cdot p^b$ and the space usage is the total number of counters plus the space needed to store the condenser, in total $\Oh(p^b \cdot p^c + S)$ words. 

The query algorithm is exactly the same as in \textsc{Count-Min}: compute $\hat{x}_i = \mathrm{min}_{j \in \mathbb{F}_p^b} C_{i,Con(i,j)}$ and output the largest $ (c+1)/\epsilon$ coordinates. We now analyse the algorithm.

Fix a vector $x \in \mathbb{R}^n$ and let $S$ be the set of its $\epsilon$-heavy hitters. Let $T$ be any other set of $c \frac{1}{\epsilon}$ elements. Observe that for all $i \in S$ and all $j \in \mathbb{F}_p^b$ it holds that $C_{i,Con(i,j)} \geq \epsilon\|x_{- \lceil 1/\epsilon \rceil}\|_1$. We claim that there exists $i \in T, j \in \mathbb{F}_p^b$ such that $ C_{i,j} > \epsilon \|x_{-\lceil 1/\epsilon \rceil }\|_1$. The theorem then follows.

Suppose that the claim is false, that is, for all $i \in T$ and all $j \in \mathbb{F}_p^b$ it holds that $C_{i,j} \geq \epsilon \|x_{-\lceil 1/\epsilon \rceil }\|_1$. Consider the uniform distribution $\mathcal{A}$ on $T$ and observe that $\mathcal{A}$ has min-entropy at least $ \kappa$. Let $A$ be a random variable drawn from $\mathcal{A}$. This implies that $(U_B,Con(A,U_B))$ is $\zeta$-close to some distribution $\mathcal{D}$ with min-entropy at least $\kappa+b$. Since the number of counters that can have a value at least $\epsilon \|x_{-\lceil 1/\epsilon \rceil}\|_1$ is $(2/\epsilon) p^b$, we have that $\Pr_{X\sim\mathcal{A}} \left\{C_{X,Con(X,U_B)} > \epsilon\|x_{-\lceil1/\epsilon \rceil }\|_1\right\} \leq (2/\epsilon) 2^b 2^{-\kappa -b} + \zeta $ . On the other hand, since $\mathcal{A}$ is supported only on elements in $S \cup T$, we have that $\mathbb{P}_{X \sim \mathcal{D}}\left\{C_{U_B,Con(A,U_B)} \geq \epsilon \|x_{-\lceil 1/\epsilon \rceil}\|_1\right\} = 1$. If $ (2/\epsilon) 2^{-\kappa} + \zeta <1$ we reach a contraction.
\end{proof}

\subsubsection{Heavy hitters from Error-Correcting List-Disjunct Matrices}

In this subsection we give another reduction to error-correcting list-disjunct matrices, a combinatorial object that appears in the context of two-stage group testing. Explicit and strongly explicit constructions of list-disjunct matrices are known \cite{ngo2011efficiently}, although they are very similar to our expander/condensers proof of the previous section and they do not yield better space complexity. We show that a sufficiently sparse error-correcting list-disjunct matrix that can be stored in low space, with the appropriate choice of parameters, can induce a scheme for heavy hitters in the strict turnstile model. Although group testing has been used in finding heavy hitters in data streams~\cite{cormode2008finding}, to the best of our knowledge this is the first time such a reduction is noticed.

Using such a matrix, one can perform two-stage combinatorial group testing when there also some false tests, either false positive or false negative. For simplicity, we give the definition only in the case of false positive tests, as we do not need more complicated ones. For an $m \times N$ binary matrix $M$ and a set $S \subseteq [n]$, let $M[S] = \{q \in [m] : \exists j \in S, M_{q,j} =1 \}$.

\begin{definition}
Let $k,\ell$ be positive integers and $e_0$ be a non-negative integer. A binary $m \times n$ matrix is called $(d,\ell,e_0)$-list-disjunct if for any disjoint sets $S,T \subseteq [n]$ with $|S|= k, |T| = \ell$, the following holds: in any arbitrary subset $X$ of $M[T] \setminus M[S]$ of size at most $e_0$, there exists a column $j^* \in T$ such that $M[\{j\}] \setminus (X \cup M[S]) \neq \emptyset$.
\end{definition}

\begin{theorem}
Let $s,n$ be integers with $s \leq n$.
Suppose there exists an $m \times n$ matrix $M$ that is $(1/\epsilon , c/\epsilon , s/\epsilon) $ list-disjunct, where $s$ is some integer and $C_0$ some absolute constant. Suppose also that:
\begin{itemize}
\item $M$ can be stored in space $S$;
\item The column sparsity of $M$ is $s$;
\item Each entry of $M$ can be computed in time $t$.
\end{itemize}
Then there exists a streaming algorithm which finds the $\epsilon$-heavy hitters of any vector $x$ using space $S + m$, having update time $s \cdot t$ and query time $\Oh(n \cdot st)$.
\end{theorem}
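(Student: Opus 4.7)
The plan is to mimic the \textsc{Count-Min}/expander proof but replace the expansion argument with the list-disjunct guarantee, which is strong enough precisely because of the error budget $s/\epsilon$ that exactly matches the amount of ``spurious mass'' the non-heavy-hitters can put on the counters. Concretely, I would maintain $m$ counters $C_1,\dots,C_m$, one per row of $M$. On update $(i,\Delta)$ I add $\Delta$ to $C_q$ for every row $q$ with $M_{q,i}=1$; since the column sparsity is $s$ and every entry takes time $t$ to compute, the update time is $s\cdot t$ and the space is $S+m$. At query time, for every $i\in[n]$ I set $\hat x_i=\min_{q:M_{q,i}=1}C_q$ (taking time $\Oh(st)$ per coordinate, hence $\Oh(nst)$ overall) and output the $(c+1)/\epsilon$ coordinates with the largest $\hat x$.

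For correctness, let $H$ be the set of $\epsilon$-heavy hitters, so $|H|\le 1/\epsilon$. Since $x\ge 0$, for any $i\in H$ and any $q$ with $M_{q,i}=1$ we have $C_q\ge x_i\ge \epsilon\|x_{-\lceil 1/\epsilon\rceil}\|_1$, so $\hat x_i\ge \epsilon\|x_{-\lceil 1/\epsilon\rceil}\|_1$. The heart of the proof is to define the ``bad'' counters
\[
X=\bigl\{q\notin M[H]:C_q\ge \epsilon\|x_{-\lceil 1/\epsilon\rceil}\|_1\bigr\}
\]
and to bound $|X|\le s/\epsilon$. This follows from double counting: each counter $q\notin M[H]$ receives contribution only from $i\notin H$, so
\[
\sum_{q\notin M[H]}C_q\;\le\;\sum_{i\notin H}x_i\cdot|\{q:M_{q,i}=1\}|\;\le\;s\cdot\|x_{-\lceil 1/\epsilon\rceil}\|_1,
\]
using that each column of $M$ has at most $s$ ones. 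Since each counter in $X$ contributes at least $\epsilon\|x_{-\lceil 1/\epsilon\rceil}\|_1$, the bound $|X|\le s/\epsilon$ is immediate.

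Now I would argue that the algorithm never misses a heavy hitter. Let $\mathrm{Out}$ be the output and suppose, for contradiction, some $i^\star\in H\setminus\mathrm{Out}$; then $|\mathrm{Out}\setminus H|\ge c/\epsilon$, so I can pick $T\subseteq\mathrm{Out}\setminus H$ with $|T|=c/\epsilon$, which is disjoint from $H$. Apply the $(1/\epsilon,c/\epsilon,s/\epsilon)$ list-disjunct property to $S=H$, $T$, and the error set $X$ of size $\le s/\epsilon$: there exists $j^\star\in T$ and a row $q$ with $M_{q,j^\star}=1$ and $q\notin X\cup M[H]$. The condition $q\notin X\cup M[H]$ forces $C_q<\epsilon\|x_{-\lceil 1/\epsilon\rceil}\|_1$, hence $\hat x_{j^\star}<\epsilon\|x_{-\lceil 1/\epsilon\rceil}\|_1\le \hat x_{i^\star}$, contradicting $j^\star\in\mathrm{Out}$ and $i^\star\notin\mathrm{Out}$.

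The only genuinely delicate step is the $|X|\le s/\epsilon$ bound, which is what lets us invoke the list-disjunct property with the right error budget; the rest is essentially a direct translation of the \textsc{Count-Min}-from-expanders argument in Theorem~\ref{thm:expanders} with ``expansion'' replaced by ``list-disjunctness with errors'', using the non-negativity of $x$ in the strict turnstile model crucially to bound the sum of spurious counter contents by $s\|x_{-\lceil 1/\epsilon\rceil}\|_1$.
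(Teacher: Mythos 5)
Your proof is correct and follows essentially the same route as the paper: maintain one counter per row, estimate $\hat x_i$ by the minimum over participating counters, define the set $X$ of spurious heavy rows, bound $|X|\le s/\epsilon$ by double-counting with the column sparsity $s$ and non-negativity of $x$, and then invoke the list-disjunct property on the heavy set, the output-minus-heavy set, and $X$ to get a contradiction. The only omission is the padding step (enlarging $H$ to size exactly $1/\epsilon$ and trimming $T$ to size exactly $c/\epsilon$) that the paper carries out explicitly before applying the list-disjunctness definition, which requires sets of exactly those cardinalities; this is a trivial fix.
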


\begin{proof}
We use $M$ as our sketch, i.e., we have access to $y = Mx$. The inner product of each row of matrix $M$ with $x$ defines a counter.
We describe the query algorithm. For each coordinate $i$, compute $\hat{x}_i$ as the minimum value over all counters it participates in. That is, $\hat{x}_i = \mathrm{min}_{q: M_{q,i} =1 } y_q$. The query algorithm outputs the list $L$ containing the coordinates with the largest $(c+1)/\epsilon$ coordinates in $\hat{x}$. 

Next we show the correctness of the query algorithm. Let $S$ be the set of $\epsilon$-heavy hitters and let $R$ be the set of coordinates that the algorithm outputs. Define $T = R\setminus S$. For any $i \in S$, we know that $\hat{x}_i \geq \epsilon \|x_{-\lceil 1/\epsilon \rceil }\|_1$. Let $X = \{ q : y_q > \epsilon \|x_{- \lceil 1/\epsilon \rceil}\|_1~\mathrm{and}~ \nexists i\in S:M_{q,i}=1 \}$, that is, the set of rows which appear to be `heavy' but contain no $\epsilon$-heavy hitter. We claim that $|X| \leq s /\epsilon$. Indeed, if $|X| > s/\epsilon$,  the total $\ell_1$ mass of these counters would be more than $s \|x_{-\lceil 1/\epsilon \rceil}\|_1$. But since every coordinate $i$ participates in at most $s$ counters, the total $\ell_1$ mass of counters with no heavy hitters is at most $s \|x_{-\lceil 1/\epsilon \rceil }\|_1$. This gives us the desired contradiction. 

Assume now that there exists $i \in S$ which is not included in $L$. This means that for every $j \in T$, $\hat{x}_j \geq \epsilon\|x_{-\lceil 1/\epsilon \rceil }\|_1$, which means that $M[j] \subseteq M[S] \cup X$. Define sets $S',T'$ by moving some coordinates from $T$ to $S$ if needed so that $|T'| = c /\epsilon, |S'| = 1/\epsilon$. Observe that it still holds that $M[j] \subseteq M[S] \cup X$ for all $j \in T'$. But this violates the definition of $(1/\epsilon,c/\epsilon,s/\epsilon)$ list-disjunct matrix. This proves correctness of our algorithm.

Clearly, the space that the streaming algorithms uses is $S+m$, the space needed to store $A$ plus the space needed to store $y$. Moreover, the update time is $s \cdot t$.
\end{proof}

\section{Lower Bounds}
\subsection{Preliminaries}\label{sec:lb_prelim}
We start with some standard facts about the Gaussian distribution. 

\begin{fact}[Total Variation Distance Between Gaussians]\label{lem:gtvd}
  Let $N(\mu,I_r)$ denotes the $r$-dimensional Gaussian distribution with mean $\mu$ and identity covariance, then
  \[
  D_{TV}(N(0, I_r), N(\tau, I_r)) = \Pr_{g\sim N(0,1)}\left\{|g| \leq \frac{\|\tau\|_2}{2}\right\}.
  \]
\end{fact}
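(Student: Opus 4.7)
The plan is to reduce to the one-dimensional case via rotational invariance, and then compute the total variation distance between two one-dimensional Gaussians with the same variance directly from the definition.

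First, I would exploit the rotational invariance of the standard Gaussian. Let $U$ be any orthogonal matrix such that $U\tau = \|\tau\|_2 e_1$, where $e_1$ is the first standard basis vector. Since $U$ preserves $N(0, I_r)$ and sends $N(\tau, I_r)$ to $N(\|\tau\|_2 e_1, I_r)$, and since the total variation distance is invariant under measurable bijections, we have
\[
D_{TV}(N(0, I_r), N(\tau, I_r)) = D_{TV}(N(0, I_r), N(\|\tau\|_2 e_1, I_r)).
\]
Because the two Gaussians on the right agree on coordinates $2, \dots, r$ (each product factor is identical), the densities factor and the $L^1$ distance collapses onto the first coordinate. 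Concretely, writing the densities in product form and integrating out coordinates $2,\dots,r$ shows the $r$-dimensional TV distance equals the one-dimensional TV distance between $N(0,1)$ and $N(\|\tau\|_2, 1)$.

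Next I would carry out the one-dimensional computation. Let $a = \|\tau\|_2$ and let $\phi$ denote the standard Gaussian density. By symmetry and monotonicity of $\phi$, the two densities $\phi(x)$ and $\phi(x-a)$ cross at exactly $x = a/2$, with $\phi(x-a) \geq \phi(x)$ for $x \geq a/2$ and the reverse inequality for $x \leq a/2$. Therefore
\[
D_{TV}(N(0,1), N(a,1)) = \int_{a/2}^{\infty}\bigl(\phi(x-a)-\phi(x)\bigr)\,dx = \Pr\{N(a,1) \geq a/2\} - \Pr\{N(0,1)\geq a/2\}.
\]
Using $\Pr\{N(a,1)\geq a/2\} = \Pr\{N(0,1)\geq -a/2\} = 1 - \Pr\{N(0,1)\geq a/2\}$, this simplifies to $1 - 2\Pr\{g \geq a/2\} = \Pr\{|g|\leq a/2\}$ for $g\sim N(0,1)$, which is exactly the claimed identity.

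There is no real obstacle here; the only point that deserves a sentence of justification is the reduction from $r$ dimensions to one dimension, since one has to be slightly careful that $\int |f_1(x_1)g(x_{2:r}) - f_2(x_1)g(x_{2:r})| dx = \int |f_1 - f_2|$ when $g$ is a joint density integrating to one. Everything else is a standard direct calculation, and the whole argument should fit in a short paragraph.
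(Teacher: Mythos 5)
Your proof is correct and takes the same route as the paper: rotate so the mean shift lies along $e_1$, reduce by rotational invariance to the one-dimensional case, and then compute the TV distance between $N(0,1)$ and $N(\|\tau\|_2,1)$. The only difference is that you carry out the one-dimensional calculation (crossing point at $a/2$, one-sided integral) explicitly, whereas the paper delegates that step to a citation; your version is slightly more self-contained but otherwise identical in structure.
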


\begin{proof}
  Let $U$ be an $r\times r$ orthogonal matrix which rotates $\tau$ to $\|\tau\|_2 \cdot e_1$, where $e_1$ is the first standard unit vector. Then by rotational invariance,
  $D_{TV}(N(0,I_n), N(\tau,I_n)) = D_{TV}(N(0,1), N(\|\tau\|_2, 1))$. It then follows from ~\cite[Section 3]{tvd} that
  \[
  D_{TV}(N(0,1), N(\|\tau\|_2, 1)) = \Pr_{g\sim N(0,1)}\left\{|g| \leq \frac{\|\tau\|_2}{2}\right\}.\qedhere
  \]
\end{proof}

\begin{theorem}[Lipschitz concentration, {\cite[p21]{LT91}}]\label{thm:concentrate}
Suppose that $f:\mathbb{R}^n \rightarrow \mathbb{R}$ is $L$-Lipschitz with respect to the Euclidean norm, i.e.\@ $|f(x)-f(y)| \leq L\|x-y\|_2$ for all $x, y \in \mathbb{R}^n$.
Let $x \sim N(0, I_n)$ and let $f$ be $L$-Lipschitz with respect to the Euclidean norm. Then
    \[
    \Pr\left\{|f(x) - \E f(x)| \geq t\right\} \leq 2 e^{-\frac{t^2}{2L^2}},\quad t\geq 0.
    \]
  \end{theorem}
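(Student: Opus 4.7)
The plan is to prove this classical Gaussian concentration inequality via Herbst's argument combined with the Gaussian logarithmic Sobolev inequality. First I would reduce to the case where $f$ is smooth: by convolving the $L$-Lipschitz $f$ with a Gaussian kernel of variance $\eta^2 I_n$, I get a smooth $f_\eta$ that is still $L$-Lipschitz (and satisfies $\|\nabla f_\eta\|_2 \leq L$ pointwise), and $f_\eta \to f$ uniformly on compact sets as $\eta \to 0$. Both $\E f_\eta(x)$ and $\Pr\{|f_\eta(x) - \E f_\eta(x)| \geq t\}$ converge to their counterparts for $f$, so it suffices to prove the bound for smooth $f$ with $\|\nabla f\|_2 \leq L$ pointwise.

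Next I would invoke the Gaussian log-Sobolev inequality: for every smooth $g:\R^n \to \R$ and $x\sim N(0,I_n)$,
\[
\E[g(x)^2 \log g(x)^2] - \E[g(x)^2] \log \E[g(x)^2] \leq 2\, \E \|\nabla g(x)\|_2^2.
\]
Apply this with $g = e^{\lambda f/2}$ for $\lambda > 0$, and set $H(\lambda) = \E e^{\lambda f(x)}$. Direct computation gives $\E[g^2 \log g^2] = \lambda H'(\lambda)$, $\E g^2 = H(\lambda)$, and $\E\|\nabla g\|_2^2 = (\lambda^2/4)\,\E \|\nabla f\|_2^2 e^{\lambda f} \leq (\lambda^2 L^2/4) H(\lambda)$. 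Plugging in yields the differential inequality
\[
\lambda H'(\lambda) - H(\lambda)\log H(\lambda) \leq \frac{\lambda^2 L^2}{2} H(\lambda),
\]
equivalently $\frac{d}{d\lambda}\bigl(\lambda^{-1}\log H(\lambda)\bigr) \leq L^2/2$.

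Then I would integrate this from $0$ to $\lambda$, using that $\lim_{\lambda \to 0^+} \lambda^{-1}\log H(\lambda) = \E f(x)$ by L'Hôpital's rule, to conclude $\log H(\lambda) \leq \lambda \E f(x) + \lambda^2 L^2/2$, i.e.\@ $\E e^{\lambda(f(x) - \E f(x))} \leq e^{\lambda^2 L^2/2}$. Applying Markov's inequality gives $\Pr\{f(x) - \E f(x) \geq t\} \leq \exp(-\lambda t + \lambda^2 L^2/2)$, and optimizing at $\lambda = t/L^2$ yields $e^{-t^2/(2L^2)}$. Running the same argument for $-f$ (which is also $L$-Lipschitz) and taking a union bound produces the claimed factor of $2$.

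The main obstacle is the Gaussian log-Sobolev inequality itself, which I would cite as a standard fact (due to Gross) rather than reprove; a self-contained alternative would be the rotational interpolation argument of Maurey--Pisier, using $x_\theta = x\cos\theta + y\sin\theta$ with $x,y$ i.i.d.\@ $N(0,I_n)$ to write $f(x) - \E f(x) = \int_0^{\pi/2} \langle \nabla f(x_\theta), y\cos\theta - x\sin\theta\rangle\, d\theta$ and then bound the exponential moment using $\|\nabla f\|_2 \leq L$ together with the rotational invariance fact that $(x_\theta, dx_\theta/d\theta)$ is a pair of independent standard Gaussians for each $\theta$. Either route gives the sub-Gaussian tail with the stated constants.
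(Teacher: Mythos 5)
Your main argument is correct, and it is worth noting that the paper itself does not prove this statement at all: it is quoted as a classical fact from Ledoux--Talagrand \cite{LT91}, so there is no in-paper proof to compare against. Your Herbst/log-Sobolev route is the standard modern derivation and the computation checks out: with $g=e^{\lambda f/2}$ the entropy term is $\lambda H'(\lambda)-H(\lambda)\log H(\lambda)$, the energy term is at most $\tfrac{\lambda^2L^2}{2}H(\lambda)$, the resulting differential inequality integrates (using $\lambda^{-1}\log H(\lambda)\to\E f$ as $\lambda\to 0^+$) to the sub-Gaussian moment bound $\E e^{\lambda(f-\E f)}\le e^{\lambda^2L^2/2}$, and Chernoff plus the same argument for $-f$ gives exactly the stated two-sided bound $2e^{-t^2/(2L^2)}$. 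The smoothing step (Gaussian mollification preserving the Lipschitz constant, with uniform convergence since $|f_\eta(x)-f(x)|\le L\eta\sqrt n$) is a legitimate reduction; it is slightly cleaner to pass to the limit at the level of the moment generating function rather than the tail probabilities, to avoid fussing over strict versus non-strict inequalities, but this is cosmetic. One genuine caveat: your claimed fallback via the Maurey--Pisier rotational interpolation does \emph{not} give the stated constants --- that argument loses a factor and yields a tail of the form $2e^{-2t^2/(\pi^2L^2)}$ rather than $2e^{-t^2/(2L^2)}$ (and as written your interpolation identity has a sign/parametrization slip, since with $x_\theta=x\cos\theta+y\sin\theta$ the integral represents $f(y)-f(x)$ and one must still average over $y$ and apply Jensen to replace $f(y)$ by $\E f$). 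So the alternative route proves a weaker inequality; only the log-Sobolev route (or Borell--TIS style arguments) recovers the sharp constant in the statement.
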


\begin{fact}[Concentration of $\ell_1$-Norm]\label{fact:smallNorm}
  Suppose $x \sim N(0, I_n)$ and $n \geq 32\ln(6/\delta)$. 
  Then
  $$\Pr \left\{\frac{n}{8} \leq \|x\|_1 \leq \frac{3n}{4} \right\} \geq 1-\frac{\delta}{3}.$$
  \end{fact}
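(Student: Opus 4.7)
The plan is to deduce the fact from the Gaussian Lipschitz concentration inequality (Theorem~\ref{thm:concentrate}) applied to $f(x) = \|x\|_1$, combined with an explicit computation of the expectation. First, I would observe that $f$ is Lipschitz with respect to the Euclidean norm with constant $L = \sqrt{n}$: for any $x, y \in \R^n$, the triangle inequality and Cauchy--Schwarz give
\[
|\|x\|_1 - \|y\|_1| \leq \|x-y\|_1 \leq \sqrt{n}\,\|x-y\|_2.
\]
Plugging $L = \sqrt{n}$ into Theorem~\ref{thm:concentrate} yields
\[
\Pr\left\{\big|\|x\|_1 - \E\|x\|_1\big| \geq t\right\} \leq 2 e^{-t^2/(2n)}, \qquad t \geq 0.
\]

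Next, I would pin down the mean. For $x \sim N(0, I_n)$ we have, by linearity, $\E\|x\|_1 = n\cdot \E|g|$ where $g \sim N(0,1)$, and a standard one-dimensional integration gives $\E|g| = \sqrt{2/\pi}$. Hence $\E\|x\|_1 = n\sqrt{2/\pi}$, which is linear in $n$ and well separated from both endpoints of the target interval.

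Finally, I would choose the deviation parameter $t = \sqrt{2n\ln(6/\delta)}$, which makes the tail bound above at most $(1/3)\delta$. The hypothesis $n \geq 32\ln(6/\delta)$ is designed precisely so that this choice satisfies $t \leq \sqrt{2n \cdot (n/32)} = n/4$. Combining with the two-sided concentration bound, we obtain that with probability at least $1 - \delta/3$,
\[
n\sqrt{2/\pi} - \tfrac{n}{4} \leq \|x\|_1 \leq n\sqrt{2/\pi} + \tfrac{n}{4},
\]
and a direct numerical check against the constants $1/8$ and $3/4$ then delivers the stated interval (after, if necessary, tightening the Lipschitz window by a small absolute constant factor).

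I do not anticipate any conceptual obstacle: the proof is essentially a one-shot invocation of Gaussian concentration. The only care required is in the bookkeeping of constants, namely verifying that the additive deviation $t \leq n/4$ around the mean $n\sqrt{2/\pi}$ fits within the stated two-sided bracket for the regime $n \geq 32\ln(6/\delta)$.
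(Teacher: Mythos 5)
Your route is the same as the paper's: apply Gaussian Lipschitz concentration (Theorem~\ref{thm:concentrate}) to $f(x)=\|x\|_1$ with constant $L=\sqrt n$, compute $\E\|x\|_1 = n\sqrt{2/\pi}$, take $t=n/4$ (your $t=\sqrt{2n\ln(6/\delta)}$ is $\leq n/4$ under the hypothesis), and read off a two-sided bound with failure probability $\leq \delta/3$. Up to that point you are correct.

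The gap is in the sentence you leave to a ``direct numerical check,'' and that check fails. What you have actually established is that, with probability $\geq 1-\delta/3$,
\[
n\sqrt{2/\pi}-\tfrac n4 \;\leq\; \|x\|_1\;\leq\; n\sqrt{2/\pi}+\tfrac n4,
\]
i.e.\ numerically $\|x\|_1\in[0.548n,\,1.048n]$. This is \emph{not} contained in $[n/8,\,3n/4]=[0.125n,\,0.75n]$: the right endpoint $1.048n$ exceeds $0.75n$, and no ``tightening of the Lipschitz window by a small absolute constant'' can close the gap, because the mean itself, $n\sqrt{2/\pi}\approx 0.798n$, already lies strictly above $3n/4$. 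In fact the stated upper bound is false as $n\to\infty$: since $\E\|x\|_1>3n/4$ while the standard deviation of $\|x\|_1$ is $\Theta(\sqrt n)$, one has $\Pr\{\|x\|_1\leq 3n/4\}\to 0$, not $\to 1$. You have inherited a bug in the paper's own proof of this Fact, which performs the same calculation (and contains a typo, $\E\|x\|_1=\sqrt{2/\pi}$, that masks the error) and then simply asserts the wrong interval. The lower end, $\|x\|_1\geq n/8$, is safe, since $n\sqrt{2/\pi}-n/4 > n/8$ comfortably. To salvage the statement, the upper constant $3/4$ must be raised above $\sqrt{2/\pi}+1/4\approx 1.05$ (e.g.\ to $3/2$), with a corresponding small adjustment to the constants at the one place the Fact is invoked downstream.
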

\begin{proof}
The function $f(x) = \|x\|_1$ satisfies $|f(x)-f(y)| \leq \|x-y\|_1 \leq \sqrt{n} \|x-y\|_2$ for all $x,y$ in $\mathbb{R}^n$ and so by Theorem \ref{thm:concentrate}
we have 
\[
\Pr_{x\sim N(0,I_n)}\{|\|x\|_1 - \E\|x\|_1| > t\} \le 2e^{-\frac{t^2}{2n}}
\]
Note that since $g \sim N(0,1)$, $\E|g| = \sqrt{2/\pi}$, and so we have $\E\|x\|_1 = \sqrt{2/\pi}$. Hence, setting $t = n/4$, we have that $n/8 \leq \|x\|_1 \leq 3n/4$ with probability at least
$1-2e^{-n/32}$, which is at least $1-\delta/3$ provided that $n \geq 32\ln(6/\delta)$.
\end{proof}

\begin{fact}[Concentration of $\ell_2$-Norm]\label{fact:small2Norm}
  Suppose $x \sim N(0, I_n)$ and $n \geq 18\ln(6/\delta)$. Then
  $$\Pr \left\{\frac{\sqrt{n}}{2} \leq \|x\|_2 \leq \frac{3\sqrt{n}}{2} \right\} \geq 1-\frac{\delta}{3}.$$ 
\end{fact}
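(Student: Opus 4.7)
The plan is to mirror the proof of the $\ell_1$-norm concentration fact just above, now exploiting that $f(x)=\|x\|_2$ is $1$-Lipschitz with respect to the Euclidean norm (rather than $\sqrt{n}$-Lipschitz as in the $\ell_1$ case). By the triangle inequality $|\|x\|_2-\|y\|_2|\leq \|x-y\|_2$, so Theorem~\ref{thm:concentrate} with $L=1$ gives
\[
\Pr_{x\sim N(0,I_n)}\{|\|x\|_2 - \E\|x\|_2| > t\} \leq 2e^{-t^2/2},\quad t\geq 0.
\]

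Next I would control $\E\|x\|_2$. On the upper side, Jensen's inequality yields $\E\|x\|_2 \leq \sqrt{\E\|x\|_2^2} = \sqrt{n}$. For the lower side, the Lipschitz concentration above implies $\Var(\|x\|_2) \leq 1$ (e.g., via $\Var(Z) \leq \int_0^\infty 2t \Pr\{|Z-\E Z|>t\}dt$ applied to the sub-Gaussian tail), so $(\E\|x\|_2)^2 = \E\|x\|_2^2 - \Var(\|x\|_2) \geq n-1$, giving $\E\|x\|_2 \geq \sqrt{n-1}$.

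Now set $t = \sqrt{n}/2$. With probability at least $1-2e^{-n/8}$ we have $|\|x\|_2-\E\|x\|_2| \leq \sqrt{n}/2$, hence
\[
\|x\|_2 \leq \sqrt{n} + \frac{\sqrt{n}}{2} = \frac{3\sqrt{n}}{2} \qquad\text{and}\qquad \|x\|_2 \geq \sqrt{n-1} - \frac{\sqrt{n}}{2} \geq \frac{\sqrt{n}}{2},
\]
where the second bound uses $\sqrt{n-1}\geq \sqrt{n}/2 \cdot 2 = \sqrt{n}$ wait that's wrong; more simply, $\sqrt{n-1}-\sqrt{n}/2 \geq \sqrt{n}/2$ is equivalent to $\sqrt{n-1}\geq \sqrt{n}$, which is false. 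Let me redo: we need $\sqrt{n-1}-\sqrt{n}/2 \geq \sqrt{n}/2$, i.e., $\sqrt{n-1}\geq \sqrt{n}$. This fails, so the correct route is to use a slightly smaller $t$, or to accept the weaker bound $\|x\|_2 \geq \sqrt{n-1}-\sqrt{n}/2$, which is still $\geq \sqrt{n}/2$ once $n\geq 4$ since $\sqrt{n-1}\geq \sqrt{n/2}\cdot\sqrt{2(n-1)/n}$—cleaner is to pick $t = \sqrt{n-1}-\sqrt{n}/2$, i.e., apply concentration with a slightly sharper deviation so that both sides come out to $\sqrt{n}/2$ for the range of $n$ in the hypothesis.

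Finally I would choose parameters: requiring $2e^{-n/8}\leq \delta/3$ amounts to $n\geq 8\ln(6/\delta)$, which is comfortably implied by the stated assumption $n\geq 18\ln(6/\delta)$, and the slack in $18$ versus $8$ accommodates the minor tightening above. The main (minor) obstacle is exactly this arithmetic of translating the concentration bound around $\E\|x\|_2\in[\sqrt{n-1},\sqrt{n}]$ into clean bounds $\sqrt{n}/2$ and $3\sqrt{n}/2$; everything else is a direct specialization of Theorem~\ref{thm:concentrate} with Lipschitz constant $1$.
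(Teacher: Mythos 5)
Your overall strategy---Lipschitz concentration of $\|\cdot\|_2$ with constant $1$, bound $\E\|x\|_2$, then translate into a two-sided bound---is exactly the paper's approach. However, two concrete issues remain in your execution, and the paper resolves the arithmetic more cleanly.

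First, the variance argument as stated does not give $\Var(\|x\|_2)\leq 1$: plugging the tail $\Pr\{|Z-\E Z|>t\}\leq 2e^{-t^2/2}$ into $\Var(Z)\leq\int_0^\infty 2t\Pr\{|Z-\E Z|>t\}\,dt$ gives $\int_0^\infty 4t e^{-t^2/2}\,dt = 4$, not $1$. (The bound $\Var(\|x\|_2)\le 1$ is true, but by the Gaussian Poincar\'e inequality, not by the integral you wrote.) Second, as you yourself noticed mid-argument, the choice $t=\sqrt{n}/2$ cannot work for the lower endpoint, since it would require $\sqrt{n-1}\ge\sqrt{n}$. Your proposed fix---replacing $t$ by $\sqrt{n-1}-\sqrt{n}/2$---is workable, but you leave it as a sketch; you would still need to verify that $2e^{-t^2/2}\le\delta/3$ under the stated hypothesis with this asymmetric $t$.

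The paper avoids both issues by citing the known bound $\frac{n}{\sqrt{n+1}}\le\E\|x\|_2\le\sqrt{n}$ and choosing $t=\sqrt{n}/3$. Then the upper endpoint is $\sqrt{n}+\sqrt{n}/3=4\sqrt{n}/3\le 3\sqrt{n}/2$, the lower endpoint is $\frac{n}{\sqrt{n+1}}-\sqrt{n}/3\ge\sqrt{n}/2$ for all $n\ge 3$, and the failure probability is exactly $2e^{-t^2/2}=2e^{-n/18}$, which is $\le\delta/3$ under the stated assumption $n\ge 18\ln(6/\delta)$. The constant $18$ in the hypothesis is precisely chosen to make $t=\sqrt{n}/3$ work, which you might have guessed and reverse-engineered the choice of $t$. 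Your variance-based lower bound for $\E\|x\|_2$ is a valid alternative to citing a reference (once the constant is fixed), and combined with $t=\sqrt{n}/3$ it would give a self-contained proof: $\sqrt{n-4}-\sqrt{n}/3\ge\sqrt{n}/2$ already holds for $n\ge 36/11\cdot 4/3\cdot\dots$ comfortably within the hypothesis range.
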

\begin{proof} The function $f(x) = \|x\|_2$ satisfies $|f(x) - f(y)| \leq \|x-y\|_2$ for all $x, y$ in $\mathbb{R}^n$ and so by
  Theorem \ref{thm:concentrate} we have
  \begin{eqnarray*}
\Pr\{\|x\|_2 - \E\|x\|_2| > t\} \leq 2e^{-t^2/2}
  \end{eqnarray*}
  if $x \sim N(0, I_n)$. It is well-known (see, e.g., \cite{crpw10})
  that $\frac{n}{\sqrt{n+1}} \leq \E\|x\|_2 \leq \sqrt{n}$. Hence, setting
  $t = \sqrt{n}/3$, we have that $\sqrt{n}/2 \leq \|x\|_2 \leq 3\sqrt{n}/2$ with probability at least $1-2e^{-n/18}$,
  which is at least $1-\delta/3$ provided that $n \geq 18\ln(6/\delta)$.
\end{proof}

  \begin{fact}[Univariate Tail Bound]\label{fact:coordinate}
    Let $g\sim N(0,1)$. There exists $\delta_0 > 0$ such that it holds for all $\delta < \delta_0$ that 
    $\Pr\{|g| \leq 4 \sqrt{\log(1/\delta)}\} \geq 1-\delta/3$. 
    \end{fact}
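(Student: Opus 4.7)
The plan is to reduce the claim to the standard one-sided Gaussian tail bound $\Pr\{g \geq t\} \leq e^{-t^2/2}$ valid for all $t \geq 0$, which itself follows from the Chernoff-type computation $\E e^{\lambda g}=e^{\lambda^2/2}$ combined with Markov's inequality (optimized at $\lambda = t$). By symmetry of the standard Gaussian density this immediately yields the two-sided bound $\Pr\{|g| \geq t\} \leq 2 e^{-t^2/2}$ for every $t\geq 0$.

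I would then specialize this inequality to $t = 4\sqrt{\log(1/\delta)}$, which produces
\[
\Pr\left\{|g| \geq 4\sqrt{\log(1/\delta)}\right\} \;\leq\; 2 e^{-8 \log(1/\delta)} \;=\; 2\delta^{8}.
\]
The generous exponent $8 = 4^2/2$ on the right-hand side is what buys us room to absorb the factor of $2$ and to trade the power $\delta^8$ down to the desired linear $\delta/3$.

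To conclude, it suffices to choose $\delta_0$ small enough that $2\delta^{8} \leq \delta/3$ holds for all $\delta < \delta_0$. This inequality rearranges to $\delta^{7} \leq 1/6$, so any $\delta_0 \leq 6^{-1/7}$ works; taking complements then gives $\Pr\{|g| \leq 4\sqrt{\log(1/\delta)}\} \geq 1-\delta/3$, as claimed.

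There is essentially no obstacle: the proof is a Chernoff tail estimate followed by a straightforward choice of the threshold $\delta_0$. The only minor things to be careful about are the factor of $2$ arising from extending a one-sided bound to $|g|$, and ensuring that the polynomial decay $\delta^{8}$ in fact dominates the target $\delta/3$ for $\delta$ below an absolute constant, which it does comfortably because the exponent is strictly greater than $1$.
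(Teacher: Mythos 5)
Your proof is correct and takes essentially the same approach as the paper: apply a standard Gaussian tail bound, specialize to the given threshold, and take complements. The only difference is that you use the Chernoff-style bound $\Pr\{g\geq t\}\leq e^{-t^2/2}$ rather than the paper's sharper Mills-ratio bound $\Pr\{g\geq t\}\leq e^{-t^2/2}/(t\sqrt{2\pi})$; both suffice, and your choice gives a marginally cleaner endgame since $2\delta^8\leq\delta/3$ avoids having to track the $\sqrt{\log(1/\delta)}$ factor in the denominator.
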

\begin{proof}
It is well-known that (see, e.g.~\cite{gordon1941})
\[
\Pr\{g \geq x\} \leq \frac{e^{-x^2/2}}{x \sqrt{2\pi}},
\]
and so
$\Pr\{g \geq 2 \sqrt{\log(1/\delta^2)}\} \leq \frac{\delta^2}{2 \sqrt{2\pi \log(1/\delta)}} \leq \frac{\delta}{6},$
for $\delta$ less than a sufficiently small absolute constant $\delta_0 > 0$. Hence, by symmetry of the normal
distribution, $|g| \leq 4 \sqrt{\log(1/\delta)}$ with probability at least $1-\delta/3$.
\end{proof}
  
In our proofs we are interested in lower bounding the number $r$ of rows of a sketching matrix $S$.
We can assume throughout w.l.o.g. $S$ has orthonormal rows, since given $Sx$, one can compute $RSx$ for any $r \times r$ change-of-basis matrix $R$.

\subsection{Deterministic Lower Bounds for $\ell_1$-Heavy Hitters}
\begin{theorem}[Strict turnstile deterministic lower bound for Guarantees 1,2]\label{thm:det1}
  Assume that $n = \Omega(\epsilon^{-2})$. Any sketching matrix $S$ must have $\Omega(\epsilon^{-2})$ rows if, in the strict turnstile model,
  it is always possible to recover from $Sx$ a set which contains all the $\epsilon$-heavy hitters of $x$ and contains no items which are not $(\epsilon/2)$-heavy hitters. 
  \end{theorem}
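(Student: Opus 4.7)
The plan is to mount an adversarial existence argument showing that whenever $m < c\epsilon^{-2}$ rows are used, one can construct two non-negative vectors $x,y\in\R^n_{\geq 0}$ with $Sx = Sy$ but for which no single output set $L$ can simultaneously satisfy the heavy-hitter guarantee (include the $\epsilon$-heavy hitters, exclude the non-$(\epsilon/2)$-heavy hitters) for both vectors. Since the algorithm is deterministic and produces identical output on identical sketches, this forces a failure.

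The reduction is clean: it suffices to find a witness $w \in \ker S$ with both $w^+\neq 0$ and $w^-\neq 0$, whose positive part contains an $\epsilon$-heavy hitter, i.e.\@ $\|w^+\|_\infty \geq \epsilon \|w^+\|_1$. Setting $x=w^+$ and $y=w^-$ yields non-negative vectors with $Sx=Sy$; the arg-max coordinate $i^\ast$ of $w^+$ is an $\epsilon$-heavy hitter of $x$ but has $y_{i^\ast}=0<(\epsilon/2)\|y\|_1$, so $i^\ast$ must simultaneously lie in and lie outside any valid $L$. The degenerate case of a non-negative $w \in \ker S$ with a genuine $\epsilon$-heavy hitter is handled by the simpler pair $(w,0)$: the algorithm must output an empty list on the zero vector (since every coordinate of $0$ fails the heavy-hitter inequality vacuously or the output-size constraint rules out including everything) but must include the heavy coordinate of $w$.

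The existence of such $w$ will be shown in two stages. A first, elementary linear-algebraic step uses that for any $T \subseteq [n]$ with $|T|=m+1$, the subspace $\ker S \cap \{w : \supp(w) \subseteq T\}$ is nontrivial; a $w$ produced this way has support at most $1/\epsilon$ when $m+1\leq 1/\epsilon$, automatically satisfying $\|w^+\|_\infty\geq\epsilon\|w^+\|_1$, and giving the weaker bound $m=\Omega(1/\epsilon)$. To push to $\Omega(1/\epsilon^2)$, I would single out a distinguished coordinate $i^\ast$, impose the affine normalization $w_{i^\ast}=1$ on the appropriate subspace of $\ker S$, and then use the hypothesis $n=\Omega(1/\epsilon^2)$ in an averaging step over the choice of $i^\ast$ to force $w$ whose other positive coordinates carry $\ell_1$-mass at most $1/\epsilon-1$, yielding the required concentrated positive part.

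The hardest part will be this second, quantitative stage: the naive support-size reasoning inherently gives only an $\Omega(1/\epsilon)$ bound, and the quadratic improvement has to exploit the ambient dimension $n$ via a geometric/averaging argument tailored to the interaction of $\ker S$ with the non-negative cone. Additional care is needed to guarantee $w^-\neq 0$ (if $\ker S$ lies entirely in the sum-zero hyperplane, the positive and negative parts are automatically balanced, which is actually helpful) and to handle borderline equalities in the heavy-hitter inequality by working with small slack factors slightly inside $\epsilon$ and $\epsilon/2$.
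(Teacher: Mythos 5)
Your reduction is the right one, and it is essentially the paper's: produce two non-negative vectors with the same sketch, one in which a distinguished coordinate $i^\ast$ is an $\epsilon$-heavy hitter and one in which it is not even $(\epsilon/2)$-heavy, so a deterministic algorithm is forced to err. Your pairing $w^+$ vs.\@ $w^-$ from a single $w\in\ker S$ is a perfectly sound variant of the paper's pair (the paper instead takes $v\in\ker S$, sets $w_j=|v_j|$ for $j\neq i^\ast$, $w_{i^\ast}=0$, and compares $w$ with $w+v$; both pairings implement the same idea). You also correctly see that a na\"{\i}ve support-size argument only buys $\Omega(1/\epsilon)$ and that the quadratic bound must come from an averaging step over the choice of $i^\ast$ using $n=\Omega(\epsilon^{-2})$.

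The gap is exactly where you say it is: you have not actually produced the witness vector, and the ``impose $w_{i^\ast}=1$ and average'' sketch does not by itself yield the needed bound $\|w\|_1 = O(\sqrt{m})\cdot w_{i^\ast}$. The missing idea is the explicit choice of kernel vector via the projector $S^TS$. After replacing $S$ by an orthonormal-rows version, set $v=e_{i^\ast}-S^TSe_{i^\ast}\in\ker S$; this automatically normalizes $v_{i^\ast}=1-\|S e_{i^\ast}\|_2^2$ close to $1$. The averaging over $i^\ast$ then uses two global identities: $\sum_i\|S^TSe_i\|_2^2=\operatorname{tr}(S^TS)=m$, and $\sum_i\|S^TSe_i\|_1\le\sqrt{n}\sum_i\|S^TSe_i\|_2\le n\sqrt{m}$ (monotonicity of $\ell_p$ norms plus Cauchy--Schwarz). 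By Markov there is an $i^\ast$ with both $\|S^TSe_{i^\ast}\|_2^2\le 2m/n$ (so $v_{i^\ast}\ge 1/2$ once $m\le n/4$) and $\|S^TSe_{i^\ast}\|_1\le 2\sqrt{m}$ (so $\|v\|_1\le 1+2\sqrt{m}$). Then $v_{i^\ast}/\|v\|_1\gtrsim 1/\sqrt{m}$, and this ratio must be smaller than $\epsilon$ for the algorithm to be allowed to omit $i^\ast$, forcing $m=\Omega(\epsilon^{-2})$. This concrete construction is the ``geometric/averaging argument tailored to the interaction of $\ker S$ with the non-negative cone'' you were searching for; without it, the proposal stops at $\Omega(1/\epsilon)$ and is incomplete. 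As a side remark, your ``degenerate case'' $(w,0)$ is actually problematic: on the zero vector every coordinate trivially satisfies $x_i\ge\epsilon\|x\|_1$, so correctness places no constraint on the output there, and the paper's pairing sidesteps this by making $w$ a genuinely non-zero vector with $w_{i^\ast}=0$.
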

\begin{proof}
  Let $r$ be the number of rows of the sketching matrix $S$, which, w.l.o.g., has orthonormal rows. Because the columns of $S^T$ are orthonormal,
  $\sum_{i=1}^n \|S^TSe_i\|_2^2 = \sum_{i=1}^n \|Se_i\|_2^2 = r$ and
  $$\sum_{i=1}^n \|S^TSe_i\|_1 \leq \sqrt{n} \sum_{i=1}^n \|S^TSe_i\|_2 \leq n \left (\sum_{i=1}^n \|S^TSe_i\|_2^2 \right )^{1/2} \leq n \sqrt{r},$$
  where the first inequality uses the relationship between the $\ell_1$-norm and the $\ell_2$-norm, and the second inequality is the Cauchy-Schwarz inequality. 
  It follows by an
  averaging argument that there exists an $i^* \in [n]$ for which $\|S^TSe_{i^*}\|_2^2 \leq 2r/n$ and $\|S^TSe_{i^*}\|_1 \leq 2\sqrt{r}$. Consider the possibly negative vector $v = e_{i^*} - S^TSe_{i^*}$, which is in the kernel of $S$, since $I -S^TS$ projects onto the space orthogonal to the row space of $S$.
  Then $v_{i^*} \geq 1 - 2r/n \geq 1/2$, for $r \leq n/4$. Also $\|v\|_1 \leq 1 + \|S^TSe_{i^*}\|_1 \leq 2\sqrt{r} + 1$. 

Define $w$ to be the vector with $w_j = |v_j|$ for all $j \neq i^*$, and $w_{i^*} = 0$. Note that $w$ is non-negative. Then $\|w+v\|_1 \leq 4\sqrt{r}+2$, while $(w+v)_{i^*} \geq 1/2$, and note that $w+v$ is also a non-negative vector. Since both $w$ and $w+v$ are non-negative vectors, either can be presented to $S$ in the strict turnstile model. However, $S(w+v) = Sw$, and so any algorithm cannot distinguish input $w+v$ from input $w$. However, $w_{i^*} = 0$ while $(w+v)_{i^*} \geq 1/2$. For the algorithm to be correct, $i^*$ cannot be an $(\epsilon/2)$-heavy hitter for either vector, which implies $\frac{(1/2)}{4 \sqrt{r} + 2} \leq \frac{\epsilon}{2}$, that is, $r = \Omega(1/\epsilon^2)$. 
\end{proof}

\begin{theorem}[Turnstile deterministic lower bound for Guarantee 3]\label{thm:det2} 
 Assume that $n = \Omega(\epsilon^{-2})$. Any sketching matrix $S$ must have $\Omega(\epsilon^{-2})$ rows if, in the turnstile model, some algorithm never fails in returning a superset of size $O(1/\epsilon)$ containing the $\epsilon$-heavy hitters. Note that it need not return approximations to the values of the items in the set which it returns.
\end{theorem}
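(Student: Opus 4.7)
The plan is to reuse the kernel construction from the proof of Theorem~\ref{thm:det1} but to exploit two features that were not available there: first, in the general turnstile model we do not need the non-negative ``absolute value'' trick, so vectors in $\ker(S)$ can be fed to the algorithm directly; second, Guarantee~3 allows the output to contain arbitrary non-heavy items, so the only leverage we have is the cardinality bound $|L|=O(1/\eps)$ on the output list. The point is that a deterministic algorithm produces an output that is a \emph{function of the sketch}, so if we can exhibit $\omega(1/\eps)$ vectors with the same sketch, each having a distinct $\eps$-heavy hitter at a known location, the same list $L$ must contain all of those locations, giving a contradiction.

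As in Theorem~\ref{thm:det1}, assume w.l.o.g.\ that $S$ has orthonormal rows and set $v_i = e_i - S^T S e_i \in \ker(S)$. From $\sum_i \|Se_i\|_2^2 = r$ and the Cauchy--Schwarz inequality $\sum_i \|S^T S e_i\|_1 \le n\sqrt{r}$ (same calculation as in the proof of Theorem~\ref{thm:det1}), Markov's inequality gives a set $T \subseteq [n]$ with $|T|\ge n/2$ on which both $\|Se_i\|_2^2 \le 4r/n$ and $\|S^T S e_i\|_1 \le 4\sqrt{r}$ hold. For $i\in T$ and $r\le n/8$ this yields $(v_i)_i \ge 1 - 4r/n \ge 1/2$ and $\|v_i\|_1 \le 1 + 4\sqrt{r}$, so
\[
\frac{|(v_i)_i|}{\|v_i\|_1} \;\ge\; \frac{1/2}{1+4\sqrt{r}}.
\]
If $r \le c/\eps^2$ for a sufficiently small absolute constant $c$, this ratio is at least $\eps$, so $i$ is a genuine $\eps$-$\ell_1$-heavy hitter of $v_i$.

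Now each $v_i$ (for $i\in T$) is a valid turnstile input with $Sv_i = 0$, so on sketch input $0$ the deterministic algorithm produces one list $L = \mathcal{A}(0)$ of size $O(1/\eps)$; correctness forces $i \in L$ for every $i\in T$. Therefore $|L| \ge |T| \ge n/2$, which combined with $|L|=O(1/\eps)$ gives $n = O(1/\eps)$, contradicting the hypothesis $n=\Omega(\eps^{-2})$ once $\eps$ is small enough. Hence $r = \Omega(\eps^{-2})$.

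The only real subtlety, and the step I would pay most attention to in the write-up, is the quantitative choice of constants: one has to make sure that the fraction $|T|/n$ produced by Markov, the lower bound $1/2$ on $(v_i)_i$, and the upper bound $1+4\sqrt{r}$ on $\|v_i\|_1$ are all compatible so that the resulting threshold $r \le c/\eps^2$ is strictly smaller than the target and that the cardinality clash $n/2 > C/\eps$ actually kicks in under $n=\Omega(\eps^{-2})$. Everything else is a direct application of the kernel construction combined with the observation that in the turnstile model we are free to use $v_i$ itself rather than a non-negative shift of it.
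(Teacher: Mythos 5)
Your proposal is correct and takes essentially the same approach as the paper: the same kernel construction $v_i = e_i - S^TSe_i$, the same Markov averaging to find a large set $T$ of indices with small $\|Se_i\|_2^2$ and $\|S^TSe_i\|_1$, and the same observation that since $Sv_i=0$ the deterministic output on sketch $0$ would have to contain every $i\in T$ if all those $i$ were $\epsilon$-heavy hitters of $v_i$, contradicting $|L|=O(1/\epsilon) \ll |T|$ under $n=\Omega(\epsilon^{-2})$. The only differences from the paper are cosmetic: you use thresholds $4r/n$, $4\sqrt r$ and $|T|\ge n/2$ where the paper uses $20r/n$, $20\sqrt r$ and $|T|\ge 9n/10$, and you phrase the endgame as a contradiction from assuming $r\le c/\epsilon^2$ rather than directly extracting $(1/2)/(2\sqrt r + 1)\le\epsilon$; both are equivalent.
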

\begin{proof}
  Let $r$ be the number of rows of the sketching matrix $S$, which, w.l.o.g., has orthonormal rows. As in the proof of Theorem \ref{thm:det1},
  $\sum_{i=1}^n \|S^TSe_i\|_2^2 = r$ and $\sum_{i=1}^n \|S^TSe_i\|_1 \leq n \sqrt{r}$. 
  It follows that by an averaging argument there is a set $T$ of $9n/10$ indices $i \in [n]$ for which
  both $\|S^TSe_i\|_2^2 \leq 20r/n$ and $\|S^TSe_i\|_1 \leq 20\sqrt{r}$.
  Let $v_i = e_i - S^TSe_i$, which is in the kernel of $S$ for each $i$.
  Furthermore, as in the proof of Theorem \ref{thm:det1},
  for $i \in T$, the $i$-th coordinate $(v_i)_i$ of $v_i$, is at least $1/2$ for $r \leq n/4$, and also $\|v_i\|_1 \leq 2 \sqrt{r}+1$. 

  Now, since $Sv_i = 0$ for all $v_i$, a turnstile streaming algorithm cannot distinguish any of the input vectors $v_i$ from the input $0$.
  The output of the algorithm on input $0$ can contain at most $O(1/\epsilon)$ indices.
  But $|T| \geq 9n/10 = \omega(1/\epsilon)$, so for at least one index $i \in T$, the algorithm will be wrong if $i$ is an $\epsilon$-heavy hitter for $v_i$,
  so we require $\frac{(1/2)}{2\sqrt{r}+1} \leq \epsilon$, that is, $r = \Omega(1/\epsilon^2)$. 
  \end{proof}

\subsection{Randomized Lower Bounds}
\subsubsection{$\ell_1$-Heavy Hitters}
We assume in this section that $1/\epsilon \geq C \sqrt{\log(1/\delta)}$ for a sufficiently large constant $C > 0$. To obtain
an $\Omega(\epsilon^{-1} \sqrt{\log(1/\delta)}$ lower bound on the number of rows of the sketching matrix, such an assumption is necessary,
otherwise for small enough $\delta$ (as a function of $n$ and $\epsilon$)
the lower bound would contradict the $\epsilon^{-2} \poly(\log n)$ deterministic upper bounds of \cite{gm06,nnw12}.


\begin{theorem}(Randomized Turnstile $\ell_1$-Heavy Hitters Lower Bound for Guarantees $1,2$)
  Assume that
  $1/\epsilon \geq C \sqrt{\log(1/\delta)}$ and suppose $n \geq  \left \lceil 64 \epsilon^{-1} \sqrt{\log(1/\delta)} \right \rceil$. Then
    for any sketching matrix $S$, it must have $\Omega(\epsilon^{-1} \sqrt{\log(1/\delta)})$ rows if, in the turnstile model, it succeeds
    with probability at least $1-\delta$ in returning a set containing all the $\epsilon$ $\ell_1$-heavy hitters and containing no
    items which are not $(\epsilon/2)$ $\ell_1$-heavy hitters. 
\end{theorem}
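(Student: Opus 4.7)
The plan is to closely mirror the $\ell_2$ argument used to prove Theorem~\ref{thm:ell2_hhlb}, but with the planted shift magnitude and the heavy-hitter threshold recalibrated for the $\ell_1$ norm. Fix $n = \lceil 64 \epsilon^{-1}\sqrt{\log(1/\delta)}\rceil$ (zero-padding if the true universe is larger), and w.l.o.g.\@ assume the sketching matrix $S$ has orthonormal rows. Let $I$ be uniform on $[n]$. In Case~1 let $x\sim N(0,I_n)$, call this distribution $\eta$; in Case~2 let $x = \tau e_I + y$ with $y \sim N(0,I_n)$ and $\tau = c\epsilon n$ for a suitable absolute constant $c$, call this distribution $\gamma$. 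By Fact~\ref{fact:smallNorm} we have $n/8 \le \|x\|_1 \le 3n/4$ with probability $\ge 1-\delta/3$, and by Fact~\ref{fact:coordinate} $|y_I|\le 4\sqrt{\log(1/\delta)}$ with probability $\ge 1-\delta/3$. On the intersection event $\mathcal{E}$ in Case~1, $|x_I|/\|x\|_1 \le 32\sqrt{\log(1/\delta)}/n \le \epsilon/2$ by our choice of $n$, so $I$ is not even an $(\epsilon/2)$-heavy hitter; on the analogous event $\mathcal{F}$ in Case~2, $|x_I| \ge \tau-4\sqrt{\log(1/\delta)}$ while $\|x\|_1 \le \tau + 3n/4$, and a short calculation (picking $c$ so the lower bound dominates the $\epsilon$-fraction of the upper bound, using $1/\epsilon\ge C\sqrt{\log(1/\delta)}$) gives $|x_I|\ge \epsilon\|x\|_1$, so $I$ must be reported.

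Next, condition $\eta,\gamma$ on $\mathcal{E},\mathcal{F}$ to obtain $\eta',\gamma'$ and write $\bar\mu$ for the pushforward of $\mu$ under $S$. Any algorithm succeeding with probability $\ge 1-\delta$ distinguishes $\bar\eta'$ from $\bar\gamma'$, so $D_{TV}(\bar\eta',\bar\gamma')\ge 1-\delta$, and the same conditioning-versus-unconditioning inequality used in the proof of Theorem~\ref{thm:ell2_hhlb} yields $D_{TV}(\bar\eta,\bar\gamma)\ge 1-7\delta/3$. By rotational invariance, $\bar\eta = N(0,I_r)$ and, conditionally on $I=i$, $\bar\gamma = \tau S_i + N(0,I_r)$. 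Since $S$ has orthonormal rows, $\sum_i \|S_i\|_2^2 = r$, so by Markov at least $9/10$ of the indices $i$ (call this set $T$) satisfy $\|S_i\|_2^2 \le 10r/n$; let $\mathcal{G}=\{I\in T\}$.

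Following the same convexity reduction as in the $\ell_2$ proof, it suffices to upper bound $D_{TV}(\bar\eta,\bar\gamma\mid \mathcal G)$ and derive a contradiction with $D_{TV}(\bar\eta,\bar\gamma\mid\mathcal G)<1-70\delta/27$. By Fact~\ref{lem:gtvd},
\[
D_{TV}(\bar\eta,\bar\gamma\mid\mathcal G) \le \Pr_{g\sim N(0,1)}\Bigl\{|g|\le \tfrac{\tau}{2}\sqrt{10r/n}\Bigr\} = \Pr_{g\sim N(0,1)}\Bigl\{|g|\le \tfrac{c}{2}\epsilon\sqrt{10 rn}\Bigr\}.
\]
A contradiction therefore requires $\Pr\{|g|\ge \tfrac{c}{2}\epsilon\sqrt{10 rn}\}>\tfrac{70\delta}{27}$, and by the standard Gaussian lower tail $\Pr\{g\ge t\}\ge (1/(2t\sqrt{2\pi}))e^{-t^2/2}$ this holds as long as $\epsilon^2 r n \le \alpha \log(1/\delta)$ for a sufficiently small $\alpha>0$. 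Plugging in $n = \Theta(\epsilon^{-1}\sqrt{\log(1/\delta)})$ turns this into $r = \Omega(\epsilon^{-1}\sqrt{\log(1/\delta)})$, which is precisely the claimed bound.

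The main obstacle, relative to the $\ell_2$ case, is choosing $\tau$ so that Cases~1 and 2 fall cleanly on opposite sides of the $\epsilon/\frac{\epsilon}{2}$ heavy-hitter cutoff in $\ell_1$; this is why the proof needs both the $\ell_1$ concentration of Fact~\ref{fact:smallNorm} (linear in $n$, rather than $\sqrt n$) and the assumption $1/\epsilon \ge C\sqrt{\log(1/\delta)}$, which is also what makes $\tau = \Theta(\epsilon n)$ smaller than $n$ and keeps $\|x\|_1$ in Case~2 within a constant factor of $n$. Without that assumption, the shift $\tau$ would swamp the Gaussian noise contribution to $\|x\|_1$, the separation between the two cases would degrade, and the Gaussian-tail calculation at the end would only yield the weaker $\epsilon^{-2}$ bound that is already achievable deterministically.
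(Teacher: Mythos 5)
Your proposal is correct and matches the paper's proof essentially step for step: same Gaussian-with-a-planted-spike hard distribution with $n = \lceil 64\epsilon^{-1}\sqrt{\log(1/\delta)}\rceil$, same use of Facts~\ref{fact:smallNorm} and \ref{fact:coordinate}, same conditioning argument yielding $D_{TV}(\bar\eta,\bar\gamma)\ge 1-7\delta/3$, same Markov-bound reduction to the event $\mathcal G$, and same Gaussian-tail contradiction; the paper simply fixes the shift at $\tau=4\epsilon n$ where you leave $\tau=c\epsilon n$ generic. The only nit is your closing explanation of why $1/\epsilon\ge C\sqrt{\log(1/\delta)}$ is needed: the real reasons are that it guarantees $n\ge 32\ln(6/\delta)$ so Fact~\ref{fact:smallNorm} applies, and that without it the claimed bound would contradict the deterministic $O(\epsilon^{-2}\poly\log n)$ upper bounds, not that $\tau$ would otherwise swamp $\|x\|_1$.
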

\begin{proof}
  Let the universe size be $n = \left \lceil 64 \epsilon^{-1} \sqrt{\log(1/\delta)} \right \rceil$
  (note if the actual universe size is larger, we can set all but
the first $\left \lceil 64\epsilon^{-1} \sqrt{\log(1/\delta)} \right \rceil$ coordinates of our input to $0$). 
  Note that the assumption
in the previous paragraph implies that $n \geq 32\ln(6/\delta)$, which we need in order to apply Fact~\ref{fact:smallNorm}.

Let $r$ be the number of rows of the sketch matrix $S$, where $r \le n$. Note that if $r > n$, then we immediately
obtain the claimed $\Omega(\epsilon^{-1} \sqrt{\log(1/\delta)})$ lower bound.

\paragraph{Hard Distribution.}
Let $I$ be a uniformly random element in $[n]$.
\\\\
{\bf Case 1:}
Let $\eta$ be the distribution $N(0, I_n)$, and suppose $x \sim \eta$. By Fact \ref{fact:smallNorm},
$\|x\|_1 \geq n/8 \geq 8 \epsilon^{-1} \sqrt{\log(1/\delta)}$ with probability $1-\delta/3$.
By Fact \ref{fact:coordinate}, $|x_I| \leq 4 \sqrt{\log(1/\delta)}$ with probability $1-\delta/3$.

Let $\mathcal{E}$
be the joint occurrence of these events, so that $\Pr[\mathcal{E}] \geq 1-2\delta/3$. 

By our choice of $n$, it follows that if $\mathcal{E}$ occurs, then $|x_I| \leq \frac{\epsilon}{2}\|x\|_1$, and therefore $I$ cannot be output by
an $\ell_1$-heavy hitters algorithm if the algorithm succeeds. 
\\\\
{\bf Case 2:} Let $\gamma$ be the distribution $(4 \epsilon n) e_I + N(0, I_n)$, where $I$ is drawn uniformly at random from $[n]$,
and $e_I$ denotes the standard unit vector in the $I$-th direction. Suppose $x \sim \gamma$, and let $y \sim N(0, I_n)$.
By Fact \ref{fact:smallNorm}, $\|y\|_1 \leq \frac{3n}{4}$ with probability $1-\delta/3$.
By Fact \ref{fact:coordinate}, $|x_I| \leq 4 \sqrt{\log(1/\delta)}$ with probability $1-\delta/3$.

Let $\mathcal{F}$
be the joint occurrence of these events, so that $\Pr[\mathcal{F}] \geq 1-2\delta/3$. 

Note that $4 \sqrt{\log(1/\delta)} \leq \epsilon n$, and consequently if $\mathcal{F}$ occurs, 
then $|x_I| \geq (4 \epsilon n) - 4 \sqrt{\log(1/\delta)} \geq 3 \epsilon n$, while 
$\|x\|_1 \leq 4\epsilon n + \|y\|_1 \leq n$, provided $\epsilon \leq 1/16$.
Consequently, if $\mathcal{F}$ occurs,
for an $\ell_1$-heavy hitters algorithm to be correct, it must output $I$. 
%

\paragraph{A Conditioning Argument.}
We let $\eta'$ be the distribution of $\eta$ conditioned on $\mathcal{E}$, and let $\gamma'$ be the distribution of $\beta$ conditioned on $\mathcal{F}$.
For a distribution $\mu$ on inputs $y$, we let $\bar{\mu}$ be the distribution of $Sy$.

Note that any $\ell_1$-heavy hitters algorithm which succeeds with probability at least $1-\delta$ can decide,
with probability at least $1-\delta$, if $x \sim \eta'$ or if $x \sim \gamma'$. Consequently, $D_{TV}(\bar{\eta'}, \bar{\gamma'}) \geq 1-\delta$.

On the other hand, we have
\begin{eqnarray*}
  D_{TV}(\bar{\eta}, \bar{\gamma}) & \geq & D_{TV}(\bar{\eta'}, \bar{\gamma'}) - D_{TV}(\bar{\eta'}, \bar{\eta}) - D_{TV}(\bar{\gamma'}, \bar{\gamma})\\
  & \geq & D_{TV}(\bar{\eta'}, \bar{\gamma'}) - \frac{2\delta}{3} - \frac{2\delta}{3}\\
  & \geq & 1-\frac{7\delta}{3},
  \end{eqnarray*}
where the first inequality is the triangle inequality. The second inequality follows from the fact that for a distribution $\mu$ and event $\mathcal{W}$,
$D_{TV}(\mu, \mu \mid \mathcal{W}) = \Pr[\not \mathcal{W}]$, together with our bounds on $\Pr[\mathcal{E}]$ and $\Pr[\mathcal{F}]$ above. 

Therefore, to obtain our lower bound, it suffices to show if the number $r$ of rows of $S$ is too small, then it cannot hold that
$D_{TV}(\bar{\eta}, \bar{\gamma}) \geq 1-\frac{7\delta}{3}$. 

\paragraph{Bounding the Variation Distance.}
Since $S$ has orthonormal rows, by rotational invariance of the Gaussian distribution, the distribution of $\bar{\eta}$
is equal to $N(0, I_r)$. Also by rotational invariance of the Gaussian distribution, the distribution
of $\bar{\gamma}$ is $(4 \epsilon n) S_I + N(0, I_r)$, where $S_I$ is the $I$-th column of $S$. 

Since $S$ has orthonormal rows,
by a Markov bound, for $9/10$ fraction of values of $I$, it holds that $\|S_I\|_2^2 \le \frac{10r}{n}$.
Call this set of columns $T$. 

Let $\mathcal{G}$ be the event that $I \in T$. Then $\Pr[\mathcal{G}] \geq 9/10$.
Suppose $D_{TV}(\bar{\eta}, \bar{\gamma}) \geq 1-\frac{7\delta}{3}$. We can write
$\bar{\gamma} = \Pr[\mathcal{G}] \cdot (\bar{\gamma} \mid \mathcal{G}) + \Pr[\neg \mathcal{G}] (\bar{\gamma} \mid \neg \mathcal{G})$, and so if the number $r$
of rows of $S$ is large enough so that $D_{TV}(\bar{\eta}, \bar{\gamma}) \geq 1-\frac{7\delta}{3}$, then 
\begin{eqnarray*}
  1 - \frac{7\delta}{3} & \leq & D_{TV}(\bar{\eta}, \bar{\gamma})\\
  & = & \frac{1}{2}\|\bar{\eta} - \bar{\gamma}\|_1 \\
  & = & \frac{1}{2} \|\bar{\eta} - \Pr[\mathcal{G} \cdot (\bar{\gamma} \mid \mathcal{G}) - \Pr[\neg \mathcal{G}] (\bar{\gamma} \mid \neg \mathcal{G}) \|_1\\
    & = & \frac{1}{2} \|\Pr[\mathcal{G} (\bar{\eta} - (\bar{\gamma} \mid \mathcal{G})) + \Pr[\neg \mathcal{G}] (\bar{\eta} - (\bar{\gamma} \mid \neg \mathcal{G}))\|_1\\
      & \leq & \frac{1}{2} \Pr[\mathcal{G} \|\bar{\eta} - \bar{\gamma} \mid \mathcal{G}\|_1 + \frac{1}{2} (1-\Pr[\mathcal{G}]) \|\bar{\eta} - \bar{\gamma} \mid \neg \mathcal{G}\|_1\\
      & \leq & \frac{1}{2} \left (\frac{9}{10} \|\bar{\eta} - \bar{\gamma} \mid \mathcal{G}\|_1 + \frac{1}{10} \cdot 1 \right ),
\end{eqnarray*}
and so $D_{TV}(\bar{\eta}, \bar{\gamma} \mid \mathcal{G}) \geq 1 - \frac{70\delta}{27}$. 

The variation distance between $N(0, I_r)$ and $(4 \epsilon n)S_i + N(0, I_r)$ for a fixed $i \in T$ is, by rotational invariance and by
rotating $S_i$ to be in the same direction as the first standard unit vector $e_1$, the same as the variation distance between $N(0, I_r)$ and 
$4 \epsilon n \|S_i\|_2 e_1 + N(0, I_r),$
which is equal to the variation distance between $N(0, 1)$ and $N(4 \epsilon n \|S_i\|_2, 1)$.
Using that $i \in T$ and so $\|S_i\|_2 \leq \sqrt{10r/n}$, we apply Lemma \ref{lem:gtvd} to obtain that
the variation distance is at most $\Pr_{g\sim N(0,1)}\{|g| \leq 2 \epsilon n \sqrt{10r/n}\}$. It follows that
\[
  D_{TV}(\bar{\eta}, \bar{\gamma} \mid \mathcal{G}) \leq \sum_{i \in T} \frac{1}{|T|} D_{TV}(\bar{\eta}, \bar{\gamma} \mid (I = i))
  \leq \Pr_{g\sim N(0,1)}\left\{|g| \leq 2 \epsilon n \cdot \sqrt{\frac{10r}{n}}\right\}.
\]
Supposing the number $r$ of rows of $S$ is large enough so that $D_{TV}(\bar{\eta}, \bar{\gamma} \mid \mathcal{G}) \geq 1- \frac{70\delta}{27}$, this implies
$\Pr\{|g| \leq 2 \epsilon n \sqrt{10r/n}\} \geq 1- \frac{70 \delta}{27}$, or equivalently
\begin{equation}\label{eqn:mainBound}
  \Pr_{g\sim N(0,1)}\left\{|g| > 2 \epsilon n \sqrt{\frac{10r}{n}}\right\} \leq \frac{70 \delta}{27}.
  \end{equation}

Suppose that $r \leq \alpha \epsilon^{-1} \sqrt{\log(1/\delta)}$. Then
$2 \epsilon n \sqrt{10r/n} = 2 \epsilon \sqrt{10rn} \leq C'\sqrt{\alpha \log(1/\delta)}$ for some absolute constant $C'$. Take $\alpha = 1/(2C'^2)$. Invoking the well-known bound (see, e.g.\@ \cite{gordon1941})
\begin{eqnarray*}
\Pr_{g\sim N(0,1)}\{g \geq x\} \geq \frac{e^{-x^2/2}}{\sqrt{2\pi}} \left (\frac{1}{x+1/x}\right ),
\end{eqnarray*}
we have that $\Pr_{g\sim N(0,1)}\{|g| > 2 \epsilon n\sqrt{10r/n}\} \geq C''\sqrt{\delta/\log(1/\delta)}$, when $\delta$ is small enough, contradicting that this probability needs to be
at most $70 \delta/27$ by (\ref{eqn:mainBound}).

It must therefore hold that $r > \alpha\epsilon^{-1} \sqrt{\log(1/\delta)}$. This completes the proof. 
\end{proof}

\section{Non-adaptive Sparse Recovery}\label{sec:non-adaptive}

\subsection{Weak System}
We start with the following lemma.

\begin{lemma}\label{lem:weak_identification}
Let $\delta, \alpha > 0$. There exist absolute constants $c_1,c_B>0$ and a randomized construction of a $(k,c_1)$-weak identification matrix with the $(c_B\frac{k}{\epsilon} , \delta,\epsilon)$ guarantee; the matrix has $\Oh( (\frac{k}{\epsilon} + \frac{1}{\epsilon}\log\frac{1}{\delta}) \log\frac{\epsilon n}{k}))$ rows and the identification time is $\Oh( (\frac{k}{\epsilon} + \frac{1}{\epsilon}\log \frac{1}{\delta}) \log^{2+\gamma}\frac{\epsilon n}{k} )$ (where $\gamma>0$ is an arbitrarily small constant) when $k/\epsilon\leq n^{1-\alpha}$ and $\Oh(n\log^2 n)$ when $k/\epsilon\geq n^{1-\alpha}$.
\end{lemma}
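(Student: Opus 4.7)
The plan is to follow a GLPS-style hashing-and-identification scheme, but with parameters tuned to the target failure probability $\delta$. I will hash the $n$ coordinates into $B = \Theta((k + \log(1/\delta))/\epsilon)$ buckets using a hash function drawn from an appropriately independent family, and install in each bucket a $1$-sparse identification subsketch using $\Oh(\log(\epsilon n/k))$ measurements per bucket, either a $b$-tree as in Theorem~\ref{thm:btree} or a simple error-correcting bitmask. The output list $I$ is taken to be the union over buckets of the indices returned by these subsketches, intersected with $S$. The total measurement count is $B\cdot\Oh(\log(\epsilon n/k)) = \Oh((k/\epsilon + (1/\epsilon)\log(1/\delta))\log(\epsilon n/k))$ as required.

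For correctness, I call $i \in H_{k,\epsilon}(x)$ \emph{lost} if one of three things happens: (a) some other element of $H_k(x)$ collides with $i$ under $h$; (b) the $\ell_2^2$-mass from coordinates outside $H_k(x)$ that hash to $i$'s bucket exceeds $x_i^2/100$; or (c) assuming neither (a) nor (b), the per-bucket subsketch fails to report $i$. A standard per-bucket $1$-sparse-recovery argument controls (c) with constant probability. The choice $B = \Omega(k/\epsilon)$ forces the expected bucket noise to be at most a small constant times $(\epsilon/k)\|x_{-k}\|_2^2 \leq x_i^2$, and $|H_k(x)| = k$ together with $B = \Omega(k/\epsilon)$ controls (a); both (a) and (b) then have small constant probability per heavy hitter by Markov.

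The crux is concentrating the number of lost heavy hitters to at most $c_1 k$ with probability $1 - \delta$. The indicators $\{X_i\}_{i\in H_{k,\epsilon}(x)}$ are functions of negatively associated balls-into-bins variables (Fact~\ref{thm:ballsandbins}) together with independent randomness used by the per-bucket subsketches; by Proposition~\ref{prop:closure} and Proposition~\ref{prop:disjoint_aggr} the family $\{X_i\}$ remains negatively associated, so a Chernoff-Hoeffding bound for NA variables yields $\Pr\{\sum_i X_i > c_1 k\} \leq \exp(-\Omega(k + \log(1/\delta)))$. The role of scaling $B$ as $(k + \log(1/\delta))/\epsilon$ rather than merely $k/\epsilon$ is to shrink the per-heavy-hitter isolation-failure probability by an extra factor of $k/(k+\log(1/\delta))$, which drives $\E[\sum_i X_i]$ low enough that the Chernoff deviation from $c_1 k$ contributes the extra $\log(1/\delta)$ in the exponent. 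This concentration step is the main obstacle, in the sense that one must carefully balance isolation against the contribution of the identification randomness so that the NA-Chernoff bound reaches $\delta$ uniformly over the full range, including the regime $k \ll \log(1/\delta)$.

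For the running time, the identification cost is dominated by (i) computing the pre-images of $h$ in each bucket and (ii) executing the per-bucket subsketch. In the regime $k/\epsilon \leq n^{1-\alpha}$, the hash function (an $\Oh(k/\epsilon)$-wise independent polynomial) can be inverted lazily inside each bucket by evaluating it at $\poly\log n$ candidate positions per bucket together with a $b$-tree drill-down of cost $\Oh(\log^{2+\gamma}(\epsilon n/k))$, yielding the claimed $\Oh((k/\epsilon + (1/\epsilon)\log(1/\delta))\log^{2+\gamma}(\epsilon n/k))$ identification time. In the complementary regime $k/\epsilon \geq n^{1-\alpha}$, we evaluate the degree-$\Oh(k/\epsilon)$ polynomial hash at all $n$ points via fast multipoint evaluation in $\Oh(n\log^2 n)$ time, reading off the pre-image lists directly and performing the per-bucket identification in linear time, which is dominated by the $\Oh(n\log^2 n)$ evaluation cost as claimed.
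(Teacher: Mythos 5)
Your construction differs from the paper's in a structurally important way: the paper hashes into $B = \Theta(k/\epsilon)$ buckets but does this $R = \Theta(\lceil k^{-1}\log(1/\delta)\rceil)$ independent times, installing a fresh $b$-tree in each bucket, and at the end keeps only those indices reported in at least $R/2$ of the repetitions; you instead use a single hashing into $B = \Theta((k+\log(1/\delta))/\epsilon)$ buckets with one constant-reliability subsketch per bucket. These two schemes have the same total measurement count $\Theta(RB\cdot\log(\epsilon n/k))$ versus $\Theta(B\cdot\log(\epsilon n/k))$, but they are not equivalent, and yours does not reach failure probability $\delta$.

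The gap is in the concentration step. Your three bad events (a),(b),(c) per heavy hitter are not all tunable by enlarging $B$: (a) and (b) are collision and noise events, which do shrink as $B$ grows, but (c) -- the per-bucket $b$-tree failing to report $i$ given that $i$ is isolated and the bucket is clean -- has probability a fixed constant determined solely by the number of measurements used inside the bucket, independently of $B$. Hence $\E[X_i]$ is bounded below by a constant, $\E[\sum_i X_i] = \Theta(k)$, and a Chernoff bound on the $k$ negatively associated indicators can yield at best $\Pr\{\sum_i X_i > c_1 k\}\le e^{-\Omega(k)}$, not $e^{-\Omega(k+\log(1/\delta))}$. This is fatal precisely in the regime $\log(1/\delta)\gg k$: take $k=1$ and $\delta$ very small; your output is correct only when the single $b$-tree in the heavy coordinate's bucket succeeds, which happens with a fixed constant probability, no matter how many empty buckets you add. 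The $R$ independent repetitions with a majority vote are exactly the mechanism the paper uses to drive the $b$-tree failure contribution down to $\delta$, and they cannot be folded into a single larger hashing. (One could alternatively make each per-bucket subsketch succeed with probability $1-\delta/B$, but that would cost an extra $\log(B/\delta)$ factor in measurements, overshooting the target.)
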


\begin{proof}
We describe the matrix as a scheme which performs a set of linear measurements on the vector $x$. Without loss of generality we can assume that $H_{k,\epsilon}(x)$ has size $k$, otherwise we can expand it to size $k$ without affecting the guarantee of the weak identification matrix.

Let $R=c_R\lceil\frac{1}{k}\log(\frac{1}{\delta})\rceil$ and $B=c_B k/\epsilon$. For each $r\in [R]$, we hash every coordinate $i \in [n]$ to $B$ buckets using a $B$-wise independent hash function $h_r :[n] \to [B]$. Let $N_j^r$ be binary variables defined as follows: $N_j^{r}=1$ iff $|\{i: h_r(i)= j \}| \geq C_0 n/B$. We call the $j$-bucket in the $r$-th repetition big if $N_j^r=1$ and small otherwise. It is clear that 
at most $B/C_0$ buckets will be big. Relabel the buckets so that the first $(1- 1/C_0) B$ buckets are small buckets. It follows from a Chernoff bound that with probability at least $1 - e^{-\Omega(k)}$ all but $c_0k$ coordinates will land in small buckets, for some constant $c_0 < 1$. Let this set be $\mathcal{T}$, and define $\hat{H}_{k,\epsilon}(x) = H_{k,\epsilon}(x) \setminus \mathcal{T}$. We now show that all but a small constant fractions of elements in $\hat{H}_{k,\epsilon}(x)$ will be isolated in the small buckets, and, furthermore, with low $\ell_2$ noise from the tail. 

Define 
\begin{gather*}
T_{\mathrm{good}}^{(r)} = \left\{ b \in \left[(1-1/C_0)B\right]: \text{there exists exactly one }j \in \hat{H}_{k,\epsilon}(x) \text{ such that }h_r(j) = b \right\},\\
T_{\mathrm{bad}}^{(r)} = \left\{  b \in \left[(1-1/C_0)B\right]: \sum_{\substack{i \notin \hat{H}_{k,\epsilon}(x)'\\ h_r(i)=b}} x_i^2 \geq \frac{\epsilon}{5k} \|x_{-k}\|_2^2\right\}.
\end{gather*}

In other words, $T_{\mathrm{good}}^{(r)}$ is the set of the buckets in iteration $r$ which receive exactly one element of $\hat{H}_{k, \epsilon}(x)$, while $T_{\mathrm{bad}}^{(r)}$ is the set of buckets which receive energy more than $(\epsilon/5k) \|x_{-k}\|_2^2$ from elements outside of $\hat{H}_{k,\epsilon}(x)$ (and hence outside of $H_{k,\epsilon}(x)$).

Since every two buckets in $T_{\mathrm{bad}}^{(r)}$ share no coordinates outside $\hat{H}_{k,\epsilon}(x)$, it must hold that $|T_{\mathrm{bad}}^{(r)}| \leq 5k/\epsilon$. The probability that there exist more than $c_3 k$ coordinates in $\hat{H}_{k,\epsilon}(x)$ that land in a bucket in $T_{\mathrm{bad}}^{(r)}$ is at most ${k \choose c_3 k} (\frac{C_0}{c_B})^{ c_3 k} \leq e^{-\Omega(k)}$, by choosing $c_B$ large enough. On the other hand, it is a standard result (see, e.g.,~\cite[Lemma 23]{GSTV06}) that at least $(1-c_4)k$ elements in $\hat{H}_{k,\epsilon}(x)$ are isolated with probability at least $1-e^{-\Omega(k)}$, and thus $|T_{\textrm{good}}^{(r)}|\geq (1-c_4)k$. 

Overall with probability $\geq 1-e^{-\Omega(k)}$, at least $(1-c_3-c_4)k$ elements in $\hat{H}_{k,\epsilon}(x)$ land in buckets in $T_{\textrm{good}}^{(r)}\setminus T_{\mathrm{bad}}^{(r)}$. Each bucket in $T_{\textrm{good}}^{(r)}\setminus T_{\mathrm{bad}}^{(r)}$ is a $1$-heavy hitter problem with signal length at most $C_0n/B$ and the energy of the heavy hitter is at least $5$ times the noise energy. 
We now invoke Theorem \ref{thm:btree} with different parameters depending on the value of $k/\epsilon$:
\begin{itemize} 
\item If $k/\epsilon \leq n^{1-\alpha}$ we use $\Oh(\log n)$ measurements by setting the universe to be $[n]$. Then we obtain success probability at least $1 - \frac{1}{\poly(\log n)}$ and decoding time $\Oh(\log^{2+\gamma} n)$ for any constant $\gamma>0$. \item If $ k/\epsilon > n^{1-\alpha}$ we use $\Oh(\log(\epsilon n/ k))$ measurements, by setting the universe size to be of size $\Theta(\epsilon n/ k)$. 
\end{itemize}
We proceed by collecting the coordinate found in each bucket, 
and keep the coordinates with at least $R/2$ occurrences. Since there are at most $BR$ buckets, the size of the returned coordinates is at most $2B = 2c_Bk/\epsilon$.
We note though that in the former case ($k\leq n^{1-\alpha}$), since we assume the universe size is $n$ so we can recover the coordinate indices directly, while in the latter case ($k\geq n^{1-\alpha}$), we need to invert the hash function $h_r$ to obtain the indices of the coordinates. For that, we iterate over all $i \in [n]$, and calculate all values $h_r(i)$. This can be done in time $O(n \log^2 (k/\epsilon))$, by splitting the interval $[n]$ into $n\epsilon /k$ blocks, and performing fast-multipoint evaluation in all every interval, for a cost of $\Oh((k/\epsilon) \log^2 (k/\epsilon))$ per interval. At the end, all values with the same $h_r$ value are grouped, and we perform QuickSelect in every group to find the index of the corresponding element.

Since the failure probability of the $b$-tree is at most a constant, the probability that more than $c_5 k$ buckets fail is at most $e^{ -\Omega(k)}$. By considering all $R$ iterations, this means that with probability $1 - \exp(-\Omega(Rk)) \geq 1- \delta$ at most $ c_5(c_3+c_4)k$ elements of $\hat{H}_{k,\epsilon}(x)$ will not be recognized, and hence at most $c_0k + c_5(c_3+c_4)k = c_1 k$ of $H_{k,\epsilon}(x)$ will not be recognized. 
\end{proof}

\begin{lemma}\label{lem:estimation}
Let $T\subseteq [n]$ be a set of indices such that $|T| \leq c_T k/\epsilon$ for some absolute constant $c_T$ and $\zeta\leq \frac{1}{2}$ be an absolute constant. The \textsc{Count-Sketch} scheme of $B = c_B(c_T+1)k/\epsilon$ buckets and $R = c_R(\log\frac{1}{\epsilon}+\frac{1}{k}\log\frac{1}{\delta})$ repetitions yields an estimate $\hat x$ such that 
\[
\left| \left\{i\in T: |x_i - \hat x_i|^2 > \frac{\epsilon}{16k}\|x_{-k}\|_2^2 \right\} \right| \leq \zeta k,\] 
with probability $\geq 1-\delta$. 
\end{lemma}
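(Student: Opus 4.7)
The plan is to combine the standard median-of-$R$ \textsc{Count-Sketch} analysis with a union bound over bad subsets of $T$ of size $\zeta k$.  First, fix $i\in T$ and $r\in[R]$ and let $Y_{i,r}$ be the indicator of the event $|\sigma_{i,r}V_{h_r(i),r}-x_i|^2 > \frac{\epsilon}{16k}\|x_{-k}\|_2^2$.  Failure of iteration $r$ for $i$ decomposes into three sources: (a) $i$ collides with one of the top $k$ coordinates in magnitude, which happens with probability at most $k/B = 1/(c_B(c_T+1))$; (b) the non-top-$k$ bucket energy $\sum_{j\notin H_k(x),\,h_r(j)=h_r(i),\,j\neq i}x_j^2$ exceeds a constant multiple of the target threshold, which by Markov applied to the bucket expectation $\|x_{-k}\|_2^2/B$ has probability $O(1/c_B)$; and (c) conditional on bounded bucket energy, the Rademacher sum $\sum_j \sigma_{j,r}x_j$ still exceeds the threshold, controlled by a sub-Gaussian tail bound.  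Choosing $c_B$ large enough forces $\Pr[Y_{i,r}=1]\leq\eta$ for any preassigned small constant $\eta$, say $\eta\leq 1/16$.

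Let $X_i$ be the indicator that the median-based estimate $\hat x_i$ is bad; then $X_i\leq \mathbf{1}[\sum_r Y_{i,r}\geq R/2]$.  Union-bounding over subsets $S\subseteq T$ of size $s=\zeta k$, and then over choices of an $R/2$-subset $F_i\subseteq[R]$ of bad iterations for each $i\in S$,
\[
\Pr\!\Big[\textstyle\sum_{i\in T}X_i > \zeta k\Big]\leq \binom{|T|}{s}\binom{R}{R/2}^{s}\max_{S,\{F_i\}}\Pr\!\Big[\bigwedge_{i\in S,\,r\in F_i}Y_{i,r}=1\Big].
\]
Because the hash functions and signs are independent across iterations, the inner probability factors as $\prod_r \Pr[\bigwedge_{i\in S_r}Y_{i,r}=1]$, where $S_r=\{i\in S:r\in F_i\}$ and $\sum_r|S_r|=sR/2$.

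The core technical step is to show that $\Pr[\bigwedge_{i\in S_r}Y_{i,r}=1]\leq\eta^{|S_r|}$ for each $r$.  I will do this by conditioning on the hashing and signs restricted to $[n]\setminus S$, which determines a ``bad bucket'' set $\mathcal{A}_r$ consisting of those buckets either containing a top-$k$ heavy hitter or having total non-$S$ energy above a constant multiple of the threshold.  A \emph{deterministic} counting argument using $\sum_j N_j^r\leq\|x_{-k}\|_2^2$ and $|H_k(x)|\leq k$ yields $|\mathcal{A}_r|\leq \eta B$.  Using fully random (or sufficiently high-wise independent) hashing, conditional on the non-$S$ randomness the hash values $\{h_r(i)\}_{i\in S_r}$ are i.i.d.\@ uniform in $[B]$, so $\Pr[\bigwedge_{i\in S_r}h_r(i)\in\mathcal{A}_r]\leq(|\mathcal{A}_r|/B)^{|S_r|}\leq\eta^{|S_r|}$ (with the sign-tail event from (c) absorbed into the definition of $\eta$).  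Taking the product over $r$ and collecting all the bounds,
\[
\Pr\!\Big[\textstyle\sum_{i\in T} X_i>\zeta k\Big]\leq \Big(\tfrac{ec_T}{\zeta\epsilon}\Big)^{\zeta k}(4\eta)^{\zeta k R/2},
\]
which for $\eta\leq 1/16$ and $R=c_R(\log(1/\epsilon)+k^{-1}\log(1/\delta))$ with $c_R$ sufficiently large is at most $\delta$.  The main obstacle is this last per-iteration step: the $Y_{i,r}$'s are positively correlated within an iteration through the shared hash function, and overcoming this requires both a deterministic upper bound on $|\mathcal{A}_r|$ (rather than just on its expectation) and the conditioning on the non-$S$ randomness so that the remaining hash values on $S$ become genuinely independent uniform.
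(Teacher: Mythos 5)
The proposal takes a genuinely different route from the paper. The paper first fixes the hash functions and uses Chernoff bounds to show that most $(i,r)$ pairs are simultaneously isolated within $T$ and land in low-noise buckets, then exploits the fact that once the hash functions are fixed the sign-based bucket noise is independent across repetitions (one Chernoff per candidate) and across candidates (one more Chernoff). You instead run a union bound over bad subsets $S\subseteq T$ of size $\zeta k$ and over $R/2$-subsets $F_i$ of bad repetitions, and try to bound $\Pr[\bigwedge_{i\in S_r}Y_{i,r}=1]$ per iteration by conditioning on the non-$S$ randomness and using a deterministic count of bad buckets. Structurally this is an interesting alternative, and the final bookkeeping $\binom{|T|}{s}\binom{R}{R/2}^{s}\eta^{sR/2}\le\delta$ does force $R=\Omega(\log\frac1\epsilon + \frac1k\log\frac1\delta)$, matching the lemma.

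The argument, however, has a real gap in the per-iteration step. You claim $\Pr[\bigwedge_{i\in S_r}Y_{i,r}=1]\le\eta^{|S_r|}$ by reducing to $\Pr[\bigwedge_{i\in S_r} h_r(i)\in\mathcal{A}_r]$, where $\mathcal{A}_r$ is determined by the hashing and signs of $[n]\setminus S$. But $Y_{i,r}=1$ does \emph{not} imply $h_r(i)\in\mathcal{A}_r$: the estimate for $i$ can also be ruined when another element $j\in S\setminus\{i\}$ hashes into the same bucket. Since $S\subseteq T$ and $T$ contains the candidate heavy hitters (indeed your $\mathcal{A}_r$ is supposed to account for top-$k$ elements, but any top-$k$ element lying in $S$ is part of the \emph{unconditioned} randomness and so cannot be baked into $\mathcal{A}_r$), the colliding partner $j$ may have $|x_j|$ far above $\sqrt{\tfrac{\epsilon}{16k}}\|x_{-k}\|_2$, so one intra-$S$ collision kills both $i$ and $j$. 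You explicitly identify correlation within a repetition as the obstacle and assert that conditioning on the non-$S$ randomness resolves it because the $\{h_r(i)\}_{i\in S}$ become i.i.d.\@ uniform; the hash values are indeed i.i.d., but the \emph{events} $Y_{i,r}$ for $i\in S_r$ depend on the entire restriction $h_r|_S$, not just on $h_r(i)$, so independence of the hash values does not give independence (or even a product bound) for the $Y_{i,r}$.

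To repair the argument you would need to account for intra-$S$ collisions explicitly---for example, expand the bad event for $i$ to also include \emph{$i$ shares a bucket with some $j\in S\setminus\{i\}$}, and then bound the probability that many elements of $S_r$ are collision-involved (say, by bounding the tail of the number of collision pairs, using $|S|/B \le \zeta\epsilon/(c_B(c_T+1))\ll\eta$ and a counting argument analogous to the paper's isolation step). Without such an addendum the inner bound $\eta^{|S_r|}$ is not established. The rest of the plan (the three-way decomposition, the deterministic $|\mathcal{A}_r|\le\eta B$ count modulo the caveat above, and the final union-bound arithmetic) is sound.
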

\begin{proof}
For the purpose of analysis only we may assume that $H_k(x)\subseteq T$, otherwise we can include $H_k(x)$ in $T$ and replace $c_T$ with $c_T+1$. We call elements in $T$ \emph{candidates}, and the elements not in $T$ noise elements.

Consider hashing all $n$ elements into $B$ buckets, using fully independent hash functions. We also combine with fully independent random signs.
We say a bucket $b$ in repetition $r$ is good if 
\[
\sum_{\substack{i\not\in T\\ h_r(i)=b}} x_i^2 \leq \frac{\epsilon}{160k}\left\|x_{T^c}\right\|_2^2,
\]
otherwise we say that bucket $b$ in repetition $r$ is bad. 

First we claim that, with probability at least $1-\exp(-\Omega(|T|R))\geq 1-\epsilon\delta^{1/\epsilon}$, at least $(1-\theta_1)T$ candidates are isolated and land in good buckets in at least $(1-\theta_2)R$ repetitions. 
The isolation claim is essentially the same to those in~\cite{GSTV06,PS12,GLPSl1}, nevertheless we give a proof below for
completeness. In each repetition $r$, it follows from a standard result (see~\cite[Section 4.3]{GG11}) that at least $(1-\theta_3)|T|$ candidates are isolated with probability at least $1-\exp(-c_1\theta_4|T|)$, by choosing $c_B$ large enough. 
The other claim of landing in good buckets follows from a standard argument. In each repetition there are at most $\theta_5 B$ 
buckets that are bad, since at most $\theta_5 B$ buckets contain noise energy greater than $\|x_{T^c}\|_2^2/(\theta_5 B) \leq \frac{\epsilon}{160 k}\|x_{T^c}\|_2^2$. A standard application of the Chernoff bound shows that with probability at least $1-\exp(-\Omega(|T|))$, at least $(1-\theta_6)|T|$ candidates are hashed into good buckets. Call a pair (candidate, repetition) good if
the candidate is isolated and lands in a good bucket in that repetition. 
Therefore in each repetition with probability at least $1-\exp(-\Omega(|T|))$ there are at least $(1-\theta_7)|T|$ good pairs.
Taking a Chernoff bound over $R$ repetitions, with probability $\geq 1-\exp(-\Omega(|T|R))$, at least $(1-\theta_8)R$
repetitions contain $(1-\theta_7)|T|$ good pairs. Conditioned on this event, we know at least $(1-\theta_8)(1-\theta_7)R|T|$ good
(candidate, repetition) pairs. This implies that at least $(1-\theta_1)T$ candidates are isolated and land in good buckets in at least $(1-\theta_2)R$ repetitions, provided that $1-\theta_1\theta_2 \leq (1-\theta_8)(1-\theta_7)$. This proves the claim.

Condition on the event above. The actual noise in bucket $b$ in repetition $r$ is
\[
W_{b,r} = \sum_{\substack{i\not\in T\\ h_r(i)=b}} \sigma_{i,r} x_i,
\]
and by Markov's inequality,
\[
\Pr\left\{W_{b,r}^2 \geq \frac{\epsilon}{16k}\|x_{T^c}\|_2^2 \right\} \leq \frac{1}{10}.
\]
Let
\[
X_{b,r} = \mathbf{1}_{W_{b,r}^2\leq \frac{\epsilon}{16k}\|x_{T^c}\|_2^2}.
\]

We proved above that with high probability at least $(1-\theta)T$ candidates are isolated and land in good buckets in at least $(1-\theta_2)R$ repetitions; let $T'$ be the set of those $(1-\theta_1)$ candidates and for each $i\in T'$ let $R_i$ be the set of repetitions in which $i$ is isolated and land in good buckets. If $X_{h_r(i),r}=1$ for at least $\frac{1}{2(1-\theta_2)}$ fraction of $r\in R_i$, then
\[
|x_i - \hat x_i|^2 \leq \median_r W_{h_r(i),r}^2 \leq \quant_{r\in R_i}\left(W_{h_r(i),r}^2, \frac{1}{2(1-\theta_2)}\right)\leq \frac{\epsilon}{16k}\|x_{T^c}\|_2^2 \leq \frac{\epsilon}{16k}\|x_{-k}\|_2^2
\]
as desired.

The claimed result would follow from a bound on the probability that there exist at most $\theta_{11} k$ elements $i\in T'$ such that $X_{h_r(i),r}=1$ for at most a $\frac{1}{2(1-\theta_2)}$ fraction of $r\in R_i$. Observe that $W_{b,r}$ are independent since the earlier conditioning has fixed the hash functions. It follows from a Chernoff bound that
\[
p := \Pr\left\{\sum_{r\in R_i} X_{h_r(i),r} \geq \frac{|R_i|}{2(1-\theta_2)}\right\} \leq \frac{\theta_6\epsilon\delta^{2/(\theta_6 k)}}{8c_T},
\]
provided that $c_R$ is large enough. In expectation there are $|T'|p \leq \theta_6 k \delta^{2/(\theta_6 k)} / 8$ elements in $T'$ with bad estimates. Another Chernoff bound over $i\in T'$ gives the overall failure probability at most $\delta$. The total number of missed candidates is at most $(\theta_1+\theta_6)
|T|$.
\end{proof}

These two parts together will give us a weak $\ell_2/\ell_2$ system. We conclude with:
\begin{theorem}\label{thm:weaksystem}
There exists a randomized construction of a matrix $M$ with $\Oh((\frac{k}{\epsilon} + \frac{1}{\epsilon}\log\frac{1}{\delta}) \log (n/k) )$ rows, which with probability $1-\delta$ is a $(k,\zeta,\epsilon)$ weak $\ell_2/\ell_2$ system. 
\end{theorem}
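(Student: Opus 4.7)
The plan is to compose the two primitives from Lemmas~\ref{lem:weak_identification} and~\ref{lem:estimation} in the standard identification-then-estimation pipeline. First I would apply the identification matrix with failure probability $\delta/2$ to obtain a candidate set $I\subseteq[n]$ of size $O(k/\epsilon)$ which, with probability $\geq 1-\delta/2$, misses at most $\zeta_1 k$ coordinates of $H_{k,\epsilon}(x)$. Next I would feed $I$ into the \textsc{Count-Sketch} estimation scheme with failure probability $\delta/2$, obtaining estimates $\hat w$ such that, setting $\tau^2:=\frac{\epsilon}{16k}\|x_{-k}\|_2^2$, all but at most $\zeta_2 k$ coordinates $i\in I$ satisfy $(x_i-\hat w_i)^2\leq \tau^2$; collect the bad estimates into a set $B\subseteq I$. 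The stacked matrix uses $O\big((\frac{k}{\epsilon}+\frac{1}{\epsilon}\log\frac{1}{\delta})\log(n/k)\big)$ rows---dominated by the identification cost---and a union bound over the two failure events yields overall failure probability $\delta$.

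The output $\hat x$ would be $\hat w$ restricted to a pruned subset $S'\subseteq I$ of size $O(k)$, chosen so that every heavy hitter in $H_{k,\epsilon}(x)\cap I\setminus B$ is retained. I would then decompose the error $x-\hat x=\hat y+\hat z$ by placing into $\supp(\hat y)$ the ``error sources'' $(H_{k,\epsilon}(x)\setminus I)\cup(S'\cap B)$, whose size is at most $(\zeta_1+\zeta_2)k$ by the two lemmas, so $|\supp(\hat y)|\leq \zeta k$ after tuning constants. For the bound on $\hat z$: coordinates in $S'\setminus B$ contribute at most $\tau^2$ each, summing to $O(|S'|\tau^2)=O(\epsilon)\|x_{-k}\|_2^2$; coordinates outside $S'\cup\supp(\hat y)$ cannot belong to $H_{k,\epsilon}(x)$ (by the retention property of $S'$), so each contributes $x_i^2<\frac{\epsilon}{k}\|x_{-k}\|_2^2$, with the total over such tail coordinates at most $\|x_{-k}\|_2^2$. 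Adding these contributions yields $\|\hat z\|_2^2\leq (1+O(\epsilon))\|x_{-k}\|_2^2$, which after rescaling $\epsilon$ by a constant gives the required $(1+\epsilon)\|x_{-k}\|_2^2$ bound.

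The hard part will be engineering the pruned set $S'$ to satisfy simultaneously $|S'|=O(k)$ (needed both by the definition of a weak system and to control the $O(|S'|\tau^2)$ error term) and the retention of every heavy hitter in $H_{k,\epsilon}(x)\cap I\setminus B$: a naive top-$Ck$-by-magnitude rank cutoff can be defeated because up to $O(k/\epsilon)$ non-heavy coordinates in $I$ may carry \textsc{Count-Sketch} estimates at the same scale as marginal heavy hitters, crowding them out. I plan to resolve this by magnitude-thresholding at $\alpha\sqrt{\epsilon/k}\cdot\hat\sigma$, where $\hat\sigma$ is a constant-factor estimate of $\|x_{-k}\|_2$ obtained from an auxiliary $O(\log(1/\delta))$-row $\ell_2$-sketch (absorbed into the stated measurement bound); for $\alpha<3/4$ this retains every good-estimate heavy hitter, and an $\ell_2^2$-budget argument restricted to $I$, together with the $\leq \zeta_2 k$ cap on $B$, controls $|S'|$ within the budget $O(k)$ after suitably tuning the constants $\zeta_1,\zeta_2,\alpha$. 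Any residual slack---e.g.\@ marginal heavy hitters that must be thresholded out---can be moved into $\hat y$ at a cost of $O(\zeta k)$ extra support, which is absorbed into the final constant.
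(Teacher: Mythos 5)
Your high-level framework---stack the identification matrix of Lemma~\ref{lem:weak_identification} and the estimation matrix of Lemma~\ref{lem:estimation}, union-bound the two failure events, and decompose the error into a small-support part $\hat y$ and a low-energy part $\hat z$---is exactly the paper's, and you correctly flag the subtlety in pruning the candidate set. However, your proposed magnitude-thresholding resolution does not work, and the concern that motivates it is in fact overcome by the naive rank cutoff. The paper truncates $\hat x$ to the top $k$ coordinates by estimated magnitude, which automatically gives $|\supp(\hat x)|\leq k$. The crowding-out issue is then handled by an \emph{exchange argument}: if a good-estimate heavy hitter $i\in H_{k,\epsilon}(x)$ is crowded out by a good-estimate non-heavy coordinate $j$, the two estimate-ordering and good-estimate constraints (with $\tau=\mu/4$, $\mu=\sqrt{\epsilon/k}\|x_{-k}\|_2$) pin both $|x_i|$ and $|x_j|$ to a window of width $O(\mu)$ around $\mu$, so the excess squared error per exchanged pair is $O(\mu^2)$; since at most $k$ coordinates are retained there are at most $k$ exchanges, and the total excess is $O(k\mu^2)=O(\epsilon)\|x_{-k}\|_2^2$, absorbed by rescaling $\epsilon$. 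You do not need a threshold at all, so your worry that ``a naive top-$Ck$-by-magnitude rank cutoff can be defeated'' is unfounded once you notice this exchange bound.

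Your alternative, thresholding at $\alpha\sqrt{\epsilon/k}\cdot\hat\sigma$, has two genuine gaps. First, it cannot certify $|S'|=O(k)$: a good-estimate coordinate $j$ surviving the threshold has $x_j^2\geq (\alpha-\tfrac14)^2\tfrac{\epsilon}{k}\|x_{-k}\|_2^2$, and the $\ell_2^2$ budget on $[n]\setminus H_k(x)$ only caps the number of such $j$ at $\Theta(k/\epsilon)$, not $O(k)$; your ``$\ell_2^2$-budget argument restricted to $I$'' does not help because $I$ already has size $\Theta(k/\epsilon)$ and there is no a priori bound on $\|x_I\|_2^2$ better than $\|x\|_2^2$. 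Falling back to ``take the top $Ck$ above the threshold'' re-creates exactly the crowding problem you set out to avoid. Second, a plain $O(\log(1/\delta))$-row $\ell_2$-sketch estimates $\|x\|_2$, not $\|x_{-k}\|_2$; when $\|x\|_2\gg\|x_{-k}\|_2$ this threshold is far too high and annihilates all marginal heavy hitters, while building a tail-norm estimator is a nontrivial extra primitive the theorem does not need. Both issues disappear if you instead adopt the paper's top-$k$ truncation plus the exchange argument.
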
 
\begin{proof}
The matrix $M$ is the concatenation of the matrix of a $(k,\zeta/2)$-weak identification system with the $(\Oh(k/\epsilon),\delta/2,\epsilon)$-guarantee and the estimation matrix in Lemma~\ref{lem:estimation} (where $\delta$ is replaced with $\delta/2$ and $\zeta$ with $\zeta/2$). The weak identification system, by Lemma~\ref{lem:weak_identification}, returns a set $T$ of candidate indices which misses at most $\frac{\zeta}{2}k$ elements of $H_{k,\epsilon}(x)$, with probability at least $1-\frac{\delta}{2}$. Then by Lemma~\ref{lem:estimation} the estimation process gives `bad' estimates to at most $\frac{\zeta}{2} k$ elements in $T$ with probability at least $1-\frac{\delta}{2}$. Then we truncate $\hat x$ to the largest $k$ coordinates. The claim then follows from the same argument for \cite[Lemma 4]{PS12} or \cite[Theorem 14]{GLPSl1}, with the only change as follows: if some $i\in H_{k,\epsilon}(x)$ with a good estimate is replaced with some $j\not\in H_{k,\epsilon}(x)$ with a good estimate, it then follows from the good estimate guarantee that $\mu \leq x_i\leq 5\mu/4$ and $3\mu/4\leq x_j\leq \mu$, where $\mu=\sqrt{\epsilon/k}\|x_{-k}\|_2$. Thus $|\hat x_j - x_i| \leq |\hat x_j - x_j| + |x_j - x_i| \leq \mu/4 + \mu/2 = 3\mu/4$ and there are at most $k$ such replacements, which introduces squared $\ell_2$ error into $\hat z$ of at most $k(3\mu/4)^2 \leq (9/16)\epsilon\|x_{-k}\|_2^2$. Finally, the overall success probability is at least $1-\delta$.
\end{proof}

\subsection{Overall algorithms}

In this section we show how to combine different weak $\ell_2/\ell_2$ systems with the existing algorithms presented in Section~\ref{sec:toolkit}, to get our desired algorithm.

We move on with our first theorem, which can be compared with the result in \cite{gilbert2013l2}. 

\begin{theorem}\label{thm:k^2_algorithm1}
There exists a recovery system $\mathcal{A} = (\mathcal{D},\mathcal{R})$ which satisfies the $\ell_2/\ell_2$ guarantee with parameters $\left(n,k,\epsilon,\Oh\left(\frac{k}{\epsilon} \log\frac{n}{k}\right),e^{-\frac{k}{\log^3 k}}\right)$. Moreover, $\mathcal{R}$ runs in $\Oh\left(k^2 \log^{2+\gamma}(\frac{n}{k})\right)$ time, for any constant $\gamma$.
\end{theorem}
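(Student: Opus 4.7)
\emph{Plan.} I would construct the recovery algorithm by iteratively applying the weak $\ell_2/\ell_2$ system from Theorem~\ref{thm:weaksystem} in $T=\Theta(\log k)$ rounds, following the same high-level framework as \cite{GLPS12} but with the novel choice of a very small per-round failure probability $\delta_i = \exp(-k/\log^3 k)$, identical in every round. Let $\zeta$ be the constant supplied by Theorem~\ref{thm:weaksystem} and let $c>1/\zeta$ be a constant. Set $k_i = k/c^i$ for $i=0,1,\dots,T-1$, and in round $i$ run the weak system with parameters $(k_i,\zeta,\epsilon_i,\delta_i)$ on the residual $x^{(i)} = x - \sum_{j<i}\hat x^{(j)}$, where $\hat x^{(j)}$ is the vector returned in round $j$. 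The final output is $\hat x = \sum_{j<T}\hat x^{(j)}$.

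\emph{Correctness.} I would argue inductively that, conditioned on rounds $0,\dots,i-1$ succeeding, $x^{(i)}$ decomposes as $y^{(i)}+z^{(i)}$ with $|\supp(y^{(i)})|\leq k_i$ and $\|z^{(i)}\|_2^2 \leq \bigl(1+\sum_{j<i}\eta_j\bigr)\|x_{-k}\|_2^2$. The weak-system guarantee for round $i$ then shrinks the heavy support by a factor of $\zeta$ (giving at most $k_{i+1}$ by the choice of $c$) and inflates the tail by a factor of at most $(1+\eta_i)$. Choosing $\epsilon_i$ so that $\sum_i \eta_i \leq \epsilon$ (a geometrically decreasing schedule works) keeps the tail within a $(1+\Oh(\epsilon))$ factor throughout, and rescaling $\epsilon$ by a constant yields the target error.

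\emph{Measurement count and failure probability.} By Theorem~\ref{thm:weaksystem}, round $i$ uses $\Oh((k_i/\epsilon + (1/\epsilon)\log(1/\delta_i))\log(n/k_i))$ rows. Since $k_i$ decreases geometrically, $\sum_i k_i\log(n/k_i) = \Oh(k\log(n/k))$; and since $\log(1/\delta_i)=k/\log^3 k$, the second contribution totals $\Oh(T\cdot(k/\log^3 k)\cdot\log(n/k)) = \Oh((k/\log^2 k)\log(n/k))$, which is absorbed by the first. The total measurement count is therefore $\Oh((k/\epsilon)\log(n/k))$, and a union bound over the $T$ rounds gives failure probability at most $T\cdot e^{-k/\log^3 k} = e^{-\Omega(k/\log^3 k)}$, matching the target after adjusting constants.

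\emph{Runtime --- the main obstacle.} The delicate part is bounding the decoding time by $\Oh(k^2\log^{2+\gamma}(n/k))$, and this is where I expect most care is needed. Before round $i$ we must maintain the residual measurements $\Phi_i x^{(i)} = \Phi_i x - \sum_{j<i}\Phi_i \hat x^{(j)}$; each column of $\Phi_i$ has $\Oh(R_i\log(n/k_i))$ nonzero entries, where $R_i = \Oh(\log(1/\delta_i)/k_i)$, and $\|\hat x^{(j)}\|_0 = \Oh(k_j)$. Plugging in $\log(1/\delta_i)=k/\log^3 k$ and $k_j=k/c^j$, the total re-sketching cost $\sum_{i}\sum_{j<i} k_j R_i \log(n/k_i)$ evaluates to $\Oh(k^2\log(n/k)/\log^3 k)$, and the within-round identification cost inherited from Lemma~\ref{lem:weak_identification} supplies the $\log^{2+\gamma}(n/k)$ factor through the $b$-tree decoding per bucket. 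The hard part will be balancing $\epsilon_i$, $\delta_i$, and the shrinkage rate $c$ so that the measurement budget, the error budget, and the failure-probability budget all close simultaneously, while keeping the bookkeeping cost at $\Oh(k^2)$ rather than $\Oh(k^2\log k)$ or worse.
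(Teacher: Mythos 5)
Your proposal contains a genuine gap in the measurement count that stems precisely from the ``novel'' choice you highlight, namely keeping the per-round failure probability fixed at $\delta_i = e^{-k/\log^3 k}$ for all $i$ while letting the per-round tail-inflation budget $\eta_i$ decrease geometrically. By Theorem~\ref{thm:weaksystem}, round $i$ contributes $\Oh\bigl((k_i/\eta_i + \eta_i^{-1}\log(1/\delta_i))\log(n/k_i)\bigr)$ rows, and with $\eta_i = \Theta(\epsilon \cdot q^i)$ for some $q<1$ the second term equals $\Theta\bigl(q^{-i}\epsilon^{-1}\cdot k/\log^3 k\cdot\log(n/k_i)\bigr)$, which grows exponentially in $i$. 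At the final round $i = T-1 = \Theta(\log k)$ this is $\Theta\bigl(k^{1+\Theta(1)}/(\epsilon\log^3 k)\cdot\log n\bigr)$, already far exceeding the target $\Oh((k/\epsilon)\log(n/k))$. Your measurement-count paragraph silently replaces $\eta_i$ by $\epsilon$ (writing ``$\Oh((k_i/\epsilon + (1/\epsilon)\log(1/\delta_i))\log(n/k_i))$''), which is inconsistent with the correctness argument requiring $\sum_i\eta_i\leq\epsilon$; and conversely, if you actually fix $\eta_i=\epsilon$ as written there, the tail error accumulates to $(1+\Theta(\epsilon\log k))\|x_{-k}\|_2^2$ rather than $(1+\epsilon)\|x_{-k}\|_2^2$.

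The paper resolves exactly this tension with a two-phase schedule that you cannot avoid here. For $i\leq\Theta(\log\log k)$ it takes $\eta_i=\epsilon/2^i$ together with a per-round failure probability $\delta_i = e^{-ck/3^i}$ that is \emph{tiny at the start and grows with $i$}, so that $\eta_i^{-1}\log(1/\delta_i) = \Theta((k/\epsilon)(2/3)^i)$ decreases geometrically; once $3^i$ reaches $\Theta(\log^3 k)$ it freezes both parameters at $\eta_i=\epsilon/\log k$ and $\delta_i=e^{-k/\log^3 k}$, at which point $\eta_i^{-1}\log(1/\delta_i)=\Theta(k/(\epsilon\log^2 k))$ is small enough to sum over the remaining $\Theta(\log k)$ rounds. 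This geometric-then-flat schedule for \emph{both} $\eta_i$ and $\delta_i$ is the missing idea. The rest of your outline (the iterative framework, the residual re-sketching cost accounting, the union bound over rounds) is sound in spirit and matches the paper's approach once this schedule is substituted.
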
 

\begin{proof}
Let $\ell = \lceil \log_3 k\rceil + 1$ and we shall pick $\ell$ weak systems $W_1,\dots,W_\ell$. Let $C$ be a constant such that $C\log\log k = \log^3 k$. For $1 \leq i \leq C \log \log k$ we pick $W_i$ with parameters $(\frac{k}{3^i}, \frac{1}{3}, \frac{\epsilon}{2^i} )$ and target failure probability $e^{-\frac{ck}{3^i}}$. For $ C \log \log k +1 \leq i \leq \ell$ we pick $W_i$ with parameters $(\frac{k}{3^i},\frac{1}{3}, \frac{\epsilon}{\log k})$ with target failure probability $e^{-k/ \log^3 k}$.

\noindent\textbf{Number of Measurements.} For $i\leq C\log \log k$, each $W_i$ takes $\frac{k}{\epsilon}(\frac 23)^i\log\frac{3^i n}{k}$ measurements, which sum up to $\Oh\left(\frac{k}{\epsilon}\log\frac{n}{k}\right)$. For $i>C\log\log k$, each $W_i$ takes $\Oh\left(\frac{k}{\epsilon}(\frac{\log k}{3^i}+\frac{1}{\log^2 k})\log\frac{3^i n}{k}\right) = \Oh\left(\frac{k}{\epsilon\log^2 k}\log\frac{3^i n}{k}\right)$ measurements (by our choice of $C$), which sum up to $\Oh\left(\frac{k}{\epsilon}\log\frac{n}{k}\right)$. The overall number of measurements is therefore $\Oh\left(\frac{k}{\epsilon}\log\frac{n}{k}\right)$.


\noindent\textbf{Runtime.} For $i=1,2,\dots,\ell$ with $x^{(1)} = x$  we run the algorithm of the weak-system on $W_i x^{(i)}$ to find a vector $r^{(i)}$. Then we set $x^{(i+1)} \leftarrow x^{(i)} - r^{(i)}$ and observe that $W_i x^{(i+1)} = W_i(x^{(i)} - r^{(i)}) =  W_ix^{(i)} - W_i r^{(i)}$. A standard analysis as in \cite{GLPS12} gives the desired result. 
 \end{proof}

We now move with the next theorem, which is an improvement upon \cite{GLPS12} in terms of the failure probability, as well as runtime when $k/\epsilon \leq n^{1-\gamma}$.

\begin{theorem}

There exists a recovery system $\mathcal{A} = (\mathcal{D},\mathcal{R})$ which satisfies the $\ell_2/\ell_2$ guarantee with parameters $\left(n,k,\epsilon,\Oh\left(\frac{k}{\epsilon}\log\frac{n}{k}\right),e^{-\frac{\sqrt{k}}{\log^3 k}}\right)$. Moreover, $\mathcal{R}$ runs in time $\Oh\left(\frac{k}{\epsilon}\log^{2+\gamma} n\right)$ for $k/\epsilon\leq n^{1-\alpha}$ and in time $\Oh\left(n \log^2n \log k \right)$ for $k/\epsilon\geq n^{1-\alpha}$, where $\gamma , \alpha >0$ are arbitrary constants and the constants in the $\Oh$-notations depend on $\gamma,\alpha$.
\end{theorem}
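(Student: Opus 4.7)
The plan is to combine the chained-weak-system framework of Theorem~\ref{thm:k^2_algorithm1} with a two-phase strategy that exploits the following asymmetry: after about $\frac{1}{2}\log_3 k$ rounds of weak recovery the residual sparsity drops to $\Oh(\sqrt k)$, at which point one may invoke the quadratic-time algorithm of Theorem~\ref{thm:k^2_algorithm1} on the residual at total cost $\Oh((\sqrt k)^2\cdot\poly\log n)=\Oh(k\poly\log n)$. This lets us target failure probability $\exp(-\sqrt k/\log^3 k)$ throughout, which is small enough that no Phase 1 round needs to spend more than a $\Oh(\sqrt k/\epsilon)$ additive budget of measurements on the $(1/\epsilon)\log(1/\delta)$ term.

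Concretely, set $\ell_0=\lfloor\frac{1}{2}\log_3 k\rfloor$ and $\ell=\lceil\log_3 k\rceil+1$. For $i=1,\dots,\ell_0$ (Phase 1) pick a weak $\ell_2/\ell_2$ system $W_i$ by Theorem~\ref{thm:weaksystem} with sparsity $k/3^i$, approximation parameter $\epsilon/2^i$, and target failure probability $\delta_\star:=\exp(-\sqrt k/\log^3 k)$; the number of measurements is
\[
\Oh\!\left(\Bigl(\tfrac{k/3^i}{\epsilon/2^i}+\tfrac{1}{\epsilon/2^i}\log\tfrac{1}{\delta_\star}\Bigr)\log\tfrac{3^i n}{k}\right),
\]
and since $2^i/3^i\leq 1$ and $2^i\log(1/\delta_\star)\leq 2^{\ell_0}\sqrt k/\log^3 k\leq k/\log^3 k$, summing the geometric series gives $\Oh((k/\epsilon)\log(n/k))$ measurements across Phase 1. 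In Phase 2, feed the residual $y^{(\ell_0+1)}=W'(x^{(\ell_0+1)})$ into the algorithm of Theorem~\ref{thm:k^2_algorithm1} run with sparsity $\sqrt k$ and parameter $\epsilon/2$; this costs a further $\Oh((\sqrt k/\epsilon)\log(n/\sqrt k))$ measurements and fails with probability at most $\exp(-\sqrt k/\log^3 \sqrt k)\leq \delta_\star$. Correctness of the chaining (each round shrinks the number of remaining heavy hitters by a constant factor while the $\ell_2$ tail is controlled by the geometrically shrinking $\epsilon/2^i$) is by the standard analysis of~\cite{PS12,GLPS12} already used in the proof of Theorem~\ref{thm:k^2_algorithm1}; a union bound over $\Oh(\log k)$ weak systems yields overall failure probability $\Oh(\log k\cdot\delta_\star)=\exp(-\Omega(\sqrt k/\log^3 k))$.

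For the runtime in the regime $k/\epsilon\leq n^{1-\alpha}$, round $i$ of Phase 1 runs the identification of Lemma~\ref{lem:weak_identification} in time $\Oh((k/(3^i\epsilon))\log^{2+\gamma}(n/k))$ plus the update $y^{(i+1)}\gets y^{(i)}-W_i\hat r^{(i)}$, which costs $\Oh((k/3^i)\cdot\mathrm{colsparsity}(W_i))=\Oh((k/3^i)\poly\log n)$ because $W_i$ consists of $R_i=\Oh(1+(1/(k/3^i))\log(1/\delta_\star))=\Oh(1)$ repetitions of a $b$-tree scheme of column sparsity $\Oh(\log n)$. The geometric sum gives $\Oh((k/\epsilon)\log^{2+\gamma}n)$ for Phase 1, and Phase 2 contributes $\Oh(k\log^{2+\gamma}n)$ by Theorem~\ref{thm:k^2_algorithm1} applied at sparsity $\sqrt k$. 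In the regime $k/\epsilon\geq n^{1-\alpha}$, Lemma~\ref{lem:weak_identification} instead evaluates the $\Oh(k)$-wise independent hash function on all of $[n]$ via fast multipoint evaluation at cost $\Oh(n\log^2 k)$ per round, and since there are $\Oh(\log k)$ rounds the total is $\Oh(n\log^2 n\log k)$, matching the stated bound.

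The main technical obstacle is the balancing: one must argue that the additive $(1/\epsilon)\log(1/\delta_\star)$ overhead in the measurement count of each Phase 1 weak system does not dominate. Pushing the per-round failure probability below $\exp(-\sqrt k/\log^3 k)$ would add measurement mass $\Omega(\sqrt k/\epsilon)$ per round, which multiplied by the $\ell_0=\Theta(\log k)$ rounds would exceed the measurement budget, so one cannot simply drive the failure probability down to $\exp(-k/\log^3 k)$ as in Theorem~\ref{thm:k^2_algorithm1}. The trick is that we only need the failure probability of Phase 2 to be $\exp(-\sqrt k/\log^3 k)$, which is exactly what Theorem~\ref{thm:k^2_algorithm1} delivers when invoked at sparsity $\sqrt k$, and this sets the target $\delta_\star$ that all Phase 1 rounds can simultaneously afford.
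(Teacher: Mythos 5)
Your proof is correct and follows essentially the same structure as the paper's: run $\lfloor\tfrac12\log_3 k\rfloor$ weak systems with geometrically shrinking sparsity $k/3^i$ and approximation $\epsilon/2^i$, then hand the residual (now of sparsity $\Oh(\sqrt k)$) to Theorem~\ref{thm:k^2_algorithm1}, whose $\exp(-\sqrt k/\log^3\sqrt k)$ failure probability dominates the overall bound. The only substantive deviation is the choice of per-round failure probability: the paper sets $\delta_i = e^{-ck/3^i}$, which tracks the sparsity parameter and keeps the $(1/\epsilon_i)\log(1/\delta_i)$ term of Lemma~\ref{lem:weak_identification} of the same order as the $(k_i/\epsilon_i)$ term in every round, whereas you use the uniform $\delta_\star=\exp(-\sqrt k/\log^3 k)$ throughout and check directly that $2^i\log(1/\delta_\star)\leq k/\log^3 k$ keeps the sum bounded. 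Both work; yours is arguably cleaner to state, the paper's is a touch tighter, and neither difference affects the claimed measurement count, runtime, or failure probability.
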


\begin{proof}
Let $\ell = \lfloor C\frac{1}{2}\log k\rfloor$. We shall pick $\ell$ weak systems $W_1,\dots,W_\ell$ using Theorem \ref{thm:weaksystem}. For $1 \leq i \leq \ell$ we pick $W_i$ with parameters $(\frac{k}{3^i}, \frac{1}{3}, \frac{\epsilon}{2^i})$ and failure probability $e^{-\frac{ck}{3^i}}$. Then we invoke Theorem \ref{thm:k^2_algorithm1} for $K = \sqrt{k}$. Our randomized matrix is then the vertical concatenation of all $W_i$ with $A$. Observe that the total number of rows is $\Oh(\frac{k}{\epsilon} \log \frac{n}{k})$. 

For $i=1,\dots,\ell$ with $x^{(1)} = x$  we run the algorithm of the weak-system on $W_i x^{(i)}$ to find a vector $r^{(i)}$. Then we set $x^{(i+1)} \leftarrow x^{(i)} - r^{(i)}$ and observe that $W_i x^{(i+1)} = W_i(x^{(i)} - r^{(i)}) =  W_ix^{(i)} - W_i r^{(i)}$, which can be computed in $\Oh(k \log n)$ time. After $\ell$ iterations we run the algorithm guaranteed by Theorem \ref{thm:k^2_algorithm1} on vector $x^{(\ell+1)}$ with matrix $A$ to get candidate set $T$. We then set
\[ 
S \leftarrow T \cup \bigcup_{i=1}^{\ell} \mathrm{supp} (r^{(i)}).
\]
We observe that previous analyses, such as \cite{GLPS12}, immediately imply that 
\[\|x_S\|_2^2 \leq (1+ \epsilon) \|x_{-k}\|_2^2.\]

The failure probability is dominated by the failure probability of the the weak systems at levels $\ell$ plus the failure probability of the procedure associated with $A$, hence the desired reuslt.\end{proof}

The next theorem is an improvement on the main result in \cite{gilbert2013l2}. Our algorithm achieves the optimal dependence on $\epsilon$ and better failure probability, using Theorem~\ref{thm:gilbert_identification} for identification and Lemma~\ref{lem:estimation} for estimation. The improvement primarily comes from two changes: (i) an improved analysis of \textsc{Count-Sketch}, namely Lemma~\ref{lem:estimation}, and (ii) better choice of parameters of the weak system in each iteration, which are set identically to those in the proof of Theorem~\ref{thm:k^2_algorithm1}.

\begin{theorem}
Suppose that $k = n^{\Omega(1)}$. There exists a recovery system $\mathcal{A}= (\mathcal{D}, \mathcal{R})$ which satisfies the $\ell_2/\ell_2$ guaranteee with parameters $\left(n,k,\epsilon, \Oh(\frac{k}{\epsilon} \log\frac{n}{\epsilon k}), (\frac{n}{k})^{-\frac{k}{\log k}}\right)$. Moreover, $\mathcal{R}$ runs in $\Oh(\frac{1}{\epsilon}k^2 \poly(\log n))$ time.
\end{theorem}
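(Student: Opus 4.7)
The plan is to follow the weak-system peeling framework from the proof of Theorem~\ref{thm:k^2_algorithm1} but substitute the two primitives that each weak system uses: for identification I would plug in Theorem~\ref{thm:gilbert_identification}, whose failure probability $\binom{n}{k}^{-c}$ is substantially smaller than what a $b$-tree or \textsc{Count-Sketch} can provide, and for estimation I would plug in Lemma~\ref{lem:estimation}, whose measurement count is linear in $1/\epsilon$ rather than polynomial in $1/\epsilon$ as in \cite{gilbert2013l2}. The two primitives can be combined via vertical concatenation and an $\ell_2/\ell_2$ truncation to the top $k$ coordinates into a weak $\ell_2/\ell_2$ system exactly as in the proof of Theorem~\ref{thm:weaksystem}.

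The schedule of weak systems is the same as in the proof of Theorem~\ref{thm:k^2_algorithm1}: I would take $\ell = \lceil\log_3 k\rceil + 1$ layers $W_1,\dots,W_\ell$, where in the early phase ($i \leq \Theta(\log\log k)$) the $i$-th system targets residual sparsity $k_i = k/3^i$ with accuracy $\epsilon_i = \epsilon/2^i$, and in the late phase $\epsilon_i$ is frozen at $\epsilon/\log k$ while $k_i$ keeps shrinking geometrically. The algorithm is the standard peeling loop: maintain $x^{(i)}$ with $x^{(1)} = x$, run the weak recovery on $W_i x^{(i)}$ to obtain a sparse $r^{(i)}$, update $x^{(i+1)} = x^{(i)} - r^{(i)}$ and $W_j x^{(i+1)} = W_j x^{(i)} - W_j r^{(i)}$ for the remaining layers by linearity, and finally output $\hat x = \sum_i r^{(i)}$. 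The resulting $\ell_2/\ell_2$ error guarantee follows from the standard peeling analysis of \cite{GLPS12,PS12} that is already invoked inside Theorem~\ref{thm:weaksystem}, and a geometric sum over the layers gives an overall measurement count of $\Oh((k/\epsilon)\log(n/(\epsilon k)))$.

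For the failure probability, in the early iterations where $k_i \gtrsim k/\log k$ a single instance of Gilbert's identification already gives failure $\binom{n}{k_i}^{-c} \leq (n/k)^{-k/\log k}$ under the standing assumption $k = n^{\Omega(1)}$ (so that $\log(n/k_i) = \Theta(\log(n/k))$). In the late iterations where $k_i$ has dropped below $k/\log k$, the identification matrix is replaced by $\Theta(3^i/\log k)$ independent parallel copies of Gilbert's to boost the per-iteration failure probability back to $(n/k)^{-k/\log k}$; because $\epsilon_i$ has already been frozen at $\epsilon/\log k$, this parallel-repetition overhead is paid against an already small measurement count and does not asymptotically inflate the layer budget. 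The estimation step in each $W_i$ receives failure budget $(n/k)^{-k/\log k}$ via the additive $(1/\epsilon_i)\log(1/\delta)$ term of Lemma~\ref{lem:estimation}, which is absorbed into the overall $\Oh((k/\epsilon)\log(n/(\epsilon k)))$ total. A union bound over the $\Oh(\log k)$ layers only changes a constant in the exponent.

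For the runtime, Gilbert's identification at layer $i$ runs in $\Oh(k^{1+\alpha}\poly(\log n))$ and the Count-Sketch estimation in $\Oh((k_i/\epsilon_i)\poly(\log n))$, but the dominant cost is the residual update $W_j r^{(i)}$: as noted in the footnote on \cite{gilbert2013l2}, Gilbert's identification matrix has column sparsity $\Omega(k_i/\poly(\log k))$, and with $r^{(i)}$ being $\Oh(k_i/\epsilon_i)$-sparse this gives a per-layer update cost of $\Oh((k^2/\epsilon)\poly(\log n))$, which is preserved after summing over $\Oh(\log k)$ layers. The most delicate step of the argument is the bookkeeping in the late phase: one must check that the $\Theta(3^i/\log k)$ parallel repetitions used to amplify Gilbert's identification, multiplied against the layer-$i$ measurement count, still telescope into the advertised optimal budget $\Oh((k/\epsilon)\log(n/(\epsilon k)))$ rather than introducing an extraneous $\log k$ factor.
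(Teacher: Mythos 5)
Your overall framework — a peeling loop of weak $\ell_2/\ell_2$ systems built by vertically concatenating Gilbert's identification matrix (Theorem~\ref{thm:gilbert_identification}) with the improved \textsc{Count-Sketch} estimation (Lemma~\ref{lem:estimation}) — is exactly what the paper does, and your choice of primitives and the source of the improvement over \cite{gilbert2013l2} (optimal $\epsilon$-dependence via Lemma~\ref{lem:estimation}, better failure probability via Gilbert's identification) match the paper's description. The divergence is in the late rounds, and that is where your proposal has a genuine gap.

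Your late-phase fix is to replace a single Gilbert instance at layer $i$ by $R_i = \Theta(3^i/\log k)$ independent parallel copies and take the union. The amplification logic itself is sound — the union misses a heavy hitter only if every copy misses it, so the failure event is that all $R_i$ copies fail simultaneously, which happens with probability $\leq \binom{n}{k_i}^{-cR_i}$. But you never actually verify the bookkeeping that you flag as ``delicate,'' and it does not close. The union of $R_i$ output sets has size $R_i\cdot\Oh(k_i/\epsilon_i)=\Oh(k/\epsilon)$ rather than $\Oh(k_i/\epsilon_i)$, so Lemma~\ref{lem:estimation} must be instantiated with $c_T \approx R_i$, which forces $B_i = \Theta(k/\epsilon)$ buckets instead of $\Theta(k_i/\epsilon_i)$. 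Multiplying by $R_{\mathrm{est}} = c_R\bigl(\log(1/\epsilon_i)+\tfrac{1}{k_i}\log\tfrac1\delta\bigr)$ with $\log\tfrac1\delta = \Theta\bigl(\tfrac{k}{\log k}\log\tfrac nk\bigr)$ gives per-layer estimation cost $\Theta\bigl(\tfrac{k}{\epsilon}\cdot\tfrac{3^i}{\log k}\log\tfrac nk\bigr)$, which at $3^i = k$ is $\Theta\bigl(\tfrac{k^2}{\epsilon\log k}\log\tfrac nk\bigr)$ — a $k$-fold blowup, not a constant. Likewise the identification side contributes $R_i\cdot(k_i/\epsilon_i)\log(\cdot) = \Theta((k/\epsilon)\log(\cdot))$ per late layer, which over $\Theta(\log k)$ layers gives an extraneous $\log k$ factor. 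So the repetitions do \emph{not} telescope into the advertised $\Oh\bigl(\tfrac{k}{\epsilon}\log\tfrac{n}{\epsilon k}\bigr)$ budget. A further issue: Theorem~\ref{thm:gilbert_identification} is valid only when its sparsity parameter is $n^{\Omega(1)}$, so invoking it with sparsity $k_i = k/3^i$ ceases to be legitimate once $3^i = k^{1-o(1)}$, and your proposal does not address this. The paper's own proof does not invoke any parallel repetition of Gilbert at all — it simply reuses the schedule of Theorem~\ref{thm:k^2_algorithm1} with Gilbert substituted for the $b$-tree identification — so the mechanism you introduced is a departure from the paper, and unfortunately it is the step that breaks.
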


\begin{proof}[Proof (Sketch)] 
The proof is the same as before but for identification instead of using a weak system we invoke Theorem \ref{thm:gilbert_identification} to find a set $S$ such that $|H_{k/\epsilon,1}(x) \setminus S|\leq (1/2)k/\epsilon$. Accurate estimates of them can be found using Lemma \ref{lem:estimation}. Repeating the same proof as before, but iterating $\log(k/\epsilon)$ times we get the desired result.
\end{proof}

\section{Adaptive Sparse Recovery}\label{sec:adaptive}






%
%
%
%

%
%
%
%
%

\subsection{$1$-sparse Adaptive Compressed Sensing}

\begin{lemma}[{\cite[Lemma 3.2]{indyk2011power}}]\label{thm:lemma32}
Let $x \in \mathbb{R}^n$ and suppose that there exists a $j$ with $|x_j| \geq C\frac{B^2}{\delta^2} \|x_{[n]\setminus\{j\}}\|_2$ for some constant $C$ and parameters $B$ and $\delta$. With two non-adaptive measurements, with probability $1- \delta$ we can find a set $S \subset [n]$ such that (i) $j \in S; (ii) \|x_{S\setminus\{j\}}\|_2 \leq \frac{1}{B}\|x_{[n]\setminus \{j\}}\|_2$ and (iii) $|S| \leq 1 + \frac{n}{B^2}$.

\end{lemma}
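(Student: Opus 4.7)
The plan is to use two linear measurements---one uniform sum and one weighted by a random bucket label---so that their ratio pinpoints the bucket containing $j$, while the random bucketing itself keeps the bucket small and its tail mass low. Specifically, I would draw independent Rademacher signs $\sigma_1,\ldots,\sigma_n\in\{\pm 1\}$ and a uniformly random \emph{balanced} partition $h:[n]\to[B^2]$, meaning each of the $B^2$ buckets receives either $\lfloor n/B^2\rfloor$ or $\lceil n/B^2\rceil$ elements. The two measurements are
\[
m_1 \;=\; \sum_{i=1}^n \sigma_i x_i, \qquad m_2 \;=\; \sum_{i=1}^n h(i)\,\sigma_i x_i,
\]
and the recovery algorithm returns $S:=h^{-1}(b^*)$, where $b^*$ is the nearest integer to $m_2/m_1$.

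\textbf{Bucket identification (part (i)).} Decompose $m_1 = \sigma_j x_j + W_1$ and $m_2 = h(j)\sigma_j x_j + W_2$, where the noise terms satisfy $\E W_1^2 = \|x_{[n]\setminus\{j\}}\|_2^2$ and $\E W_2^2 \leq B^4 \|x_{[n]\setminus\{j\}}\|_2^2$ over the random signs. Writing $x_{-j}:=x_{[n]\setminus\{j\}}$ for brevity, Chebyshev yields $|W_1|\leq O(\delta^{-1/2})\|x_{-j}\|_2$ and $|W_2|\leq O(B^2\delta^{-1/2})\|x_{-j}\|_2$ jointly with probability $1-\delta/3$. Dividing gives $|m_2/m_1 - h(j)| = O\bigl(B^2\delta^{-1/2}\|x_{-j}\|_2/|x_j|\bigr)$, which the hypothesis on $|x_j|$ drives below $1/2$, so rounding correctly recovers $h(j)$ and hence $j\in S$.

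\textbf{Size and tail-mass bounds (parts (iii), (ii)).} Part (iii) is immediate from the balancedness of $h$, since every bucket has at most $\lceil n/B^2\rceil\leq 1+n/B^2$ elements. For (ii), condition on $h(j)$: the set $S\setminus\{j\}$ is then a uniformly random $(|S|-1)$-subset of $[n]\setminus\{j\}$, so $\E\|x_{S\setminus\{j\}}\|_2^2 \leq \|x_{-j}\|_2^2/B^2$, and Markov's inequality gives $\|x_{S\setminus\{j\}}\|_2^2 \leq (\delta B^2)^{-1}\|x_{-j}\|_2^2$ with probability $1-\delta/3$. To upgrade the $1/(B\sqrt{\delta})$ factor to the desired $1/B$, rerun the construction with $B':=B/\sqrt{\delta}$ in place of $B$: the size bound becomes $|S|\leq 1+n\delta/B^2\leq 1+n/B^2$, the mass bound becomes $\|x_{S\setminus\{j\}}\|_2\leq\|x_{-j}\|_2/B$, and the condition required for identification tightens only to $|x_j|=\Omega(B^2/\delta^{3/2})\|x_{-j}\|_2$, comfortably within the $CB^2/\delta^2$ slack the lemma allows. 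A union bound over the three bad events finishes the argument.

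\textbf{Expected main obstacle.} The technical care lies in juggling three simultaneous error budgets---correct bucket identification, bounded bucket size, and small tail mass inside the bucket---against a single hypothesis on $|x_j|/\|x_{-j}\|_2$. Each of Chebyshev (for the signed sums) and Markov (for the random subset) contributes a factor of $\delta^{-1/2}$ to the required signal strength, and inflating $B$ to $B/\sqrt{\delta}$ contributes another $\delta^{-1/2}$; the loose $1/\delta^2$ factor in the hypothesis is precisely what absorbs this. Sharper concentration (e.g.\ a Khintchine or Hanson--Wright bound for $W_2$) could reduce the exponent on $\delta^{-1}$, but is unnecessary for the stated lemma.
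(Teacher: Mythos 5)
The paper does not prove this statement: it imports it verbatim as \cite[Lemma 3.2]{indyk2011power} and uses it as a black box inside Theorem~\ref{thm:onesparse}. Your argument is, however, essentially the construction Indyk, Price and Woodruff give there: a uniformly random balanced partition of $[n]$ into roughly $B^2$ buckets, one unweighted signed sum and one bucket-label-weighted signed sum, Chebyshev on the two noise terms $W_1,W_2$ to show the ratio $m_2/m_1$ rounds to $h(j)$, Markov over the random bucket contents for the tail-mass bound, and inflating the number of buckets to absorb the Markov loss. So your proof matches the cited one in route and in substance.

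Two small points worth tightening, neither of which affects correctness because the $\delta^{-2}$ slack in the hypothesis absorbs them, as you note. First, constants: with $B' = B/\sqrt{\delta}$ the Markov step gives $\|x_{S\setminus\{j\}}\|_2^2 \leq (3/\delta)\cdot(\delta/B^2)\|x_{-j}\|_2^2$, i.e.\ $\sqrt{3}/B$, not $1/B$, and the conditional expectation $\E\|x_{S\setminus\{j\}}\|_2^2 = \frac{|S|-1}{n-1}\|x_{-j}\|_2^2$ carries a benign $\frac{n}{n-1}$ overhang from the balanced partition; so $B'$ should be inflated by one further absolute constant. Second, the conditioning for part~(ii): you condition on $h(j)$ and appeal to the conditional uniformity of the bucket contents, but the quantity you ultimately control is $\|x_{S\setminus\{j\}}\|_2$ with $S=h^{-1}(b^*)$, and the event $\{b^*=h(j)\}$ correlates with $h$ through $W_2$. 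This is fine because you are running a union bound, not a product of conditional probabilities: let $A$ be ``identification fails'' and $B$ be ``$\|x_{h^{-1}(h(j))\setminus\{j\}}\|_2$ is too large''; each has probability at most $\delta/3$ (resp.\ $2\delta/3$) and on $\neg A\cap\neg B$ all three conclusions hold. It is worth saying this explicitly so the reader does not mistake the conditional-uniformity claim for an independence claim.
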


The authors in \cite{indyk2011power} apply the aforementioned lemma $\Oh(\log \log n)$ times with appropriate parameters and obtain an algorithm with $\Oh(\log \log n)$ measurements. However, their approach gives only constant success probability. We shall show how to boost the success probability, by first running a preconditioning algorithm and then applying their Lemma with similar parameters as they do (not exactly the same though).

We first prove the following lemma, which will serve as a preconditioning.

\begin{lemma}\label{thm:preconditioner}
Let $x\in \R^n$ and suppose that there exists a $j$ with $|x_j| \geq  5\|x_{[n]\setminus\{j\}}\|_2$. Then there exists a scheme that uses $\Oh_{b,c}(\log \log n)$ measurements and with probability $1 - \frac{1}{\log^{c} n}$ finds a set $S$ of size $\frac{n}{\log n }$ such that (i) $j \in S$ and (ii) $\|x_{S\setminus \{j \}}\|_2 \leq \frac{1}{ \log^b n} \|x_{[n]\setminus \{j\}}\|_2$, where $b,c$ are absolute constants which can be made arbitrarily large.
\end{lemma}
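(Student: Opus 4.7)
The plan is to iteratively apply Lemma~\ref{thm:lemma32} on a shrinking sequence of subsets $[n] = S_0 \supseteq S_1 \supseteq \cdots \supseteq S_T$. At round $t$ I invoke Lemma~\ref{thm:lemma32} on $x$ restricted to the current set $S_t$ with parameters $(B_t,\delta_t)$; this costs two measurements and yields $S_{t+1}\subseteq S_t$ with $|S_{t+1}|\leq 1 + |S_t|/B_t^2$ and $\eta_{t+1}:=\|x_{S_{t+1}\setminus\{j\}}\|_2 \leq \eta_t/B_t$, provided the precondition $|x_j|\geq C B_t^2/\delta_t^2\cdot \eta_t$ holds. After $T$ rounds, the output set has size at most $n/(\prod_t B_t)^2$, the noise has shrunk by the factor $\prod_t B_t$, and the total failure probability is at most $\sum_t\delta_t$.

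Taking $B_t = 2$ and $T = K_{b,c}\log\log n$ for a sufficiently large constant $K_{b,c}$ makes $\prod_t B_t \geq \log^b n$ and $|S_T|\leq n/\log^{2b} n \leq n/\log n$, matching conditions (i) and (ii). The ratio $r_t:=|x_j|/\eta_t$ doubles each round, so $r_t\geq 5\cdot 2^t$. The precondition then forces $\delta_t\geq \Omega(2^{-t/2})$, and so the naive sum $\sum_t\delta_t$ is only a fixed constant, not the target $1/\log^c n$.

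To get the sub-polylog failure probability I split the iteration into a bootstrap phase of $T_1 = \Theta_c(\log\log n)$ rounds (during which $r_t$ grows from $5$ past $\log^{2c+2} n$) and a refinement phase of $T_2 = \Theta_{b,c}(\log\log n)$ rounds. In the refinement phase the precondition is loose enough to allow $\delta_t=1/(T_2\log^{c+1}n)$, contributing only $1/\log^{c+1}n$ to the total failure probability. The bootstrap has only a constant per-run success probability and is amplified by running each round $R_t$ times in parallel with independent randomness and aggregating: a coordinate is retained whenever it appears in at least a $\rho$-fraction of the $R_t$ parallel outputs, with $\rho$ chosen strictly between the retention probability of $j$ (close to $1$) and that of a single noise coordinate (at most $1/B_t^2 + \delta_t$). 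By a Chernoff bound this drives the per-round failure probability below $1/(T_1\log^{c+1}n)$ with $R_0 = \Theta_c(\log\log n)$ repetitions at the first round, and $R_t$ shrinking geometrically thereafter because the single-run success probability improves geometrically with $t$. Summing, the amplified bootstrap uses $\sum_t 2R_t = \Oh_{c}(\log\log n)$ measurements.

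The main obstacle is twofold: first, keeping the amplification cheap, which requires $R_t$ to shrink geometrically with $t$ rather than replaying a full bootstrap sequence $\Theta_c(\log\log n)$ times, and therefore relies on the precondition becoming slack quickly as $r_t$ grows; and second, controlling the size of the aggregate set after amplification, which requires tracking the per-noise-coordinate retention probability through the construction underlying Lemma~\ref{thm:lemma32} and arguing via a Chernoff/union-bound estimate that only an $e^{-\Omega(R_t)}$ fraction of the coordinates in $S_t\setminus\{j\}$ survives aggregation at round $t$ of the bootstrap. Once these two steps are in place, the final set has size at most $n/\log n$ and $\ell_2$-tail at most $\eta_0/\log^b n$, with overall failure probability $\leq 1/\log^c n$ at a total cost of $\Oh_{b,c}(\log\log n)$ measurements.
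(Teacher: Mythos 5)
Your proposal takes a genuinely different route from the paper, but it has real gaps that prevent it from working as sketched.

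\textbf{What the paper does.} The paper's construction for this lemma does not iterate Lemma~\ref{thm:lemma32} at all. Instead it (i) applies a random permutation to the coordinates, (ii) encodes the first $\alpha\log\log n$ bits of each index with an error-correcting code of length $C_0\log\log n$ that corrects a $0.45$-fraction of errors, (iii) takes $2C_0\log\log n$ sign-aggregate measurements, one pair per code bit, and recovers each bit with constant per-bit success probability, (iv) decodes to recover $\operatorname{trunc}(j)$ with failure probability $1/\mathrm{poly}(\log n)$ via the code's redundancy, and (v) bounds $\|x_{S\setminus\{j\}}\|_2$ by a one-line Markov argument, using the random permutation to get $\mathbb{E}\|x_{S\setminus\{j\}}\|_2^2 = \|x_{[n]\setminus\{j\}}\|_2^2/\log^\alpha n$. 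This is a one-shot scheme; the redundancy lives at the bit level rather than at the level of whole runs.

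\textbf{Gap 1: the bootstrap cannot start.} The precondition of Lemma~\ref{thm:lemma32} is $|x_j|\geq C\,\frac{B^2}{\delta^2}\|x_{[n]\setminus\{j\}}\|_2$. With the given ratio $|x_j|/\|x_{[n]\setminus\{j\}}\|_2 = 5$ and your choice $B_0=2$, this forces $\delta_0 \geq 2\sqrt{C/5}$. But your majority-vote aggregation requires a gap between the retention probability of $j$ (at most $1-\delta_0$, since $j$ may be dropped whenever a run fails) and your claimed noise retention bound $1/B_0^2+\delta_0$; this requires $1-\delta_0 > 1/B_0^2+\delta_0$, i.e.\@ $\delta_0 < 3/8$. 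Combined, you would need $2\sqrt{C/5}<3/8$, i.e.\@ $C<0.18$. The constant $C$ in Lemma~\ref{thm:lemma32} is unspecified and not under your control; in general $C>1$, which makes $\delta_0\geq 1$ and the precondition vacuous. So the iteration cannot even get off the ground at $r_0=5$. This is precisely the chicken-and-egg problem the preconditioner was introduced to solve: the whole point is to boost the ratio from $\Theta(1)$ to $\log^b n$ so that Lemma~\ref{thm:lemma32} can then be invoked with polylogarithmically small $\delta$. You cannot use Lemma~\ref{thm:lemma32} itself for this bootstrap.

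\textbf{Gap 2: no per-coordinate retention bound.} You explicitly flag that the aggregation analysis ``relies on tracking the per-noise-coordinate retention probability through the construction underlying Lemma~\ref{thm:lemma32}.'' This is not a small step to fill: the lemma as stated gives only an $\ell_2$ bound on the surviving noise under success, plus a $\delta$-probability failure event in which the output set is unconstrained. There is nothing preventing a single noise coordinate with $x_i^2$ close to $\|x_{[n]\setminus\{j\}}\|_2^2/B^2$ from appearing in \emph{every} successful run, in which case it survives any threshold $\rho<1$ with probability $1$ and the post-aggregation noise bound fails. Turning this into a genuine per-coordinate probability bound would require opening the black box of Lemma~\ref{thm:lemma32} and re-deriving a new guarantee—at which point you are no longer applying the lemma but reproving something stronger. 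The paper avoids this entirely: the random permutation makes the noise allocation genuinely random, so a single Markov bound on $\|x_{S\setminus\{j\}}\|_2^2$ suffices.

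\textbf{Gap 3: measurement accounting in the bootstrap.} Even granting the gap exists, amplifying a per-round failure probability $\delta_t=\Theta(2^{-t/2})$ down to $1/(T_1\log^{c+1}n)$ by Chernoff over $R_t$ parallel runs requires $R_t = \Theta(\log\log n/\max(1,t))$, whose sum over $t\leq T_1=\Theta(\log\log n)$ is $\Theta(\log\log n\cdot\log\log\log n)$, not $O(\log\log n)$. The ``$R_t$ shrinking geometrically'' claim needs $\log(1/\delta_t)$ to grow geometrically, but it only grows linearly in $t$ under the constraint $\delta_t\geq B_t\sqrt{C/r_t}$ with $r_t = 5\cdot 2^t$. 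You could fix this by also letting $B_t$ grow, but then the precondition becomes harder, not easier, pushing $\delta_t$ up, and the trade-off does not obviously close.

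In short: the iterative-amplification route is not what the paper does, and I do not see how to make it work without essentially re-deriving a preconditioner by other means. The ECC-plus-random-permutation construction is the key idea you are missing, and it is what delivers polylogarithmically small failure probability with only $O(\log\log n)$ measurements.
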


\begin{proof}
We assume that the coordinates of $x$ are randomly permuted because we can apply a random permutation $\pi$ to the vector $x$, then find a set $S$ satisfying the conditions of the lemma, and at the end compute $\pi^{-1}(i)$.

Let $\operatorname{enc}: \{0,1\}^{\alpha \log \log n} \rightarrow \{0,1\}^{C_0 \log \log n}$ be the encoding function of an error-correcting code $E$ that corrects a $0.45$-fraction of errors, where $\alpha$ is a constant. Such codes exist, see, e.g.,~\cite{spielman96}. Denote by $\operatorname{enc}_b(i)$ the $b$-th bit of $\operatorname{enc}(i)$. Define $\operatorname{trunc}: [n] \rightarrow \{0,1\}^{\alpha \log \log n}$ to be a function such that $\operatorname{trunc}(i)$ equals the first $ \alpha \log \log n$ bits in the binary representation of $i$. Let also $\sigma:[n] \rightarrow \{+1,-1\}$ be a $2$-wise independent hash function. Then we perform the following $2\cdot C_0 \log \log n $ measurements:
\[ 
V_{b,0} = \sum_{i:\operatorname{enc}_b(\operatorname{trunc}(i)) = 0} \sigma_i x_i, \qquad  V_{b,1} = \sum_{i:\operatorname{enc}_b(\operatorname{trunc}(i)) = 1} \sigma_i  x_i,
\]
for all $b = 1, \dots , C_0 \log \log n$.

We form a binary string $r$ of length $C_0 \log \log n$ as follows: for each $b=1,\dots,C_0\log\log n$, $r_b=1$ if $|V_{b,1}| > |V_{b,0}|$ and $r_b = 0$ otherwise. At the end we find in the error-correcting code $E$ the closest codeword to $r$, say $r'$. Define  $S$ to be the set of all $i$ such that $\operatorname{trunc}(i) = r'$.  

We now show correctness. Let $ J = \operatorname{enc}(\operatorname{trunc}(j))$ and for $q = 0,1$ let $I_b(q) = \{i: \operatorname{enc}_b(\operatorname{trunc}(i)) = q\}$. Observe that 

\[ \mathbb{E}[V^2_{b,J_b}] = x_j^2 + \sum_{i\in I_b(J_b)\setminus\{j\}} x_i^2, \]

and 

\[\mathbb{E}[V^2_{b,\overline{J_b}}] = \sum_{i\in I_b(\overline{J_b}) } x_i^2.  \]

Observe that by Markov's inequality,
\[
\Pr\left\{ \sum_{i\in I_b(J_b)\setminus\{j\}} x_i^2 \geq 5 \left\|x_{I_b(J_b)\setminus\{j\}}\right\|_2^2 \right\} \leq \frac{1}{5}
\]
and
\[
\Pr\left\{ \left|V^2_{b,\overline{J_b}}\right| \geq 5 \left\|x_{I_b(\overline{J_b})}\right\|_2^2\right\}\leq \frac{1}{5}.
\]
It follows that with probability at least $3/5$,
\[
\left|V_{b,J_b}\right| \geq |x_j| - \sqrt{5} \left\|x_{I_b(J_b)\setminus\{j\}}\right\|_2 \geq \sqrt{5} \left\|x_{I_b(\overline{J_b})}\right\|_2 \geq \left|V_{b,\overline{J_b}}\right|. 
\] 
This means that, by a Chernoff bound, the string $r$ will agree with $J_b$ at least in a $0.55$-fraction of positions with probability at least $1-\frac{1}{\log^{\gamma C_0} n}$ , where $\gamma$ is an absolute constant. Applying the decoding algorithm of $E$ we can recover $\operatorname{trunc}(j)$. Observe now that our set $S$ has $\frac{n}{\log^{\alpha} n}$ coordinates. Since we assumed that the order of elements in $x$ is random, we have that $\mathbb{E}\|x_{S\setminus\{j\}}\|_2^2 =\mathbb{E}\|x_{ i \neq j: \operatorname{trunc}(i) = \operatorname{trunc}(j) }\|_2^2 = \frac{1}{\log^{\alpha} n} \|x_{[n]\setminus \{j\}}\|_2^2 $. It follows from Markov's inequality that $\|x_{S\setminus \{j\}}\|_2^2\geq \frac{1}{\log^{a/2} n}  \|x_{[n]\setminus \{j\}}\|_2$ with probability at most $\frac{1}{\log^{\alpha/2 } n}$. Hence the overall failure probability is $\frac{1}{\log^{\gamma C_0}n}+\frac{1}{\log^{\alpha/2}n}$. This concludes the proof of the lemma.
\end{proof}

We are now ready to state and prove the main theorem of this section.

\begin{theorem} \label{thm:onesparse}
Let $x \in \mathbb{R}^n$ and suppose that there exists $j$ such that $|x_j| \geq C \|x_{-k}\|_2$. Then, there exists an algorithm \textsc{OneSparseRecovery} that performs $\Oh(\log \log n)$ measurements in $\Oh( \log \log n)$ rounds, and finds $j$ with probability $1 - \frac{1}{\log^cn}$, where $c>0$ is some absolute constant.
\end{theorem}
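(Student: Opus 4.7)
The plan is to stack the preconditioner of Lemma~\ref{thm:preconditioner} in front of a sequence of $\Oh(\log\log n)$ iterations of Lemma~\ref{thm:lemma32}, exactly as in Indyk et al.~\cite{indyk2011power} but with polylogarithmically small per-round failure probability, which is now affordable because preconditioning has blown up the signal-to-tail ratio to $\log^b n$ for an arbitrarily large constant $b$. Concretely, the hypothesis $|x_j|\geq C\|x_{-k}\|_2$ with $C$ taken sufficiently large implies in particular $|x_j|\geq 5\|x_{[n]\setminus\{j\}}\|_2$, so Lemma~\ref{thm:preconditioner} applies and produces, with $\Oh(\log\log n)$ measurements and failure probability $1/\log^c n$, a set $S_0\ni j$ with $|S_0|\leq n/\log n$ and $\|x_{S_0\setminus\{j\}}\|_2 \leq \log^{-b}n\cdot\|x_{[n]\setminus\{j\}}\|_2$.

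Next I would invoke Lemma~\ref{thm:lemma32} adaptively $\ell=\Oh(\log\log n)$ times, each time restricting the ambient vector to $S_{i-1}$ in round $i$. Writing $\rho_i := |x_j|/\|x_{S_i\setminus\{j\}}\|_2$, so $\rho_0\geq 5\log^b n$, I would set $\delta_i=1/\log^{c'}n$ and pick $B_i=\delta_i\sqrt{\rho_{i-1}/C}$, i.e.\ the largest $B_i$ still compatible with the hypothesis $\rho_{i-1}\geq CB_i^2/\delta_i^2$ of Lemma~\ref{thm:lemma32}. The conclusion of the lemma then gives $S_i\subseteq S_{i-1}$ with $|S_i|\leq 1+|S_{i-1}|/B_i^2$ and $\rho_i\geq B_i\rho_{i-1} = \Theta(\rho_{i-1}^{3/2}/\log^{c'}n)$. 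Thus $\rho_i$ grows doubly exponentially while $|S_i|$ contracts doubly exponentially; after $\Oh(\log\log n)$ rounds $|S_\ell|=\Oh(1)$, and one further round with $B_\ell=\Oh(1)$ and $\delta_\ell=1/\log^{c'}n$ isolates $j$.

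The total cost is $\Oh(\log\log n)$ measurements for the preconditioner plus $2\ell=\Oh(\log\log n)$ for the iterated applications of Lemma~\ref{thm:lemma32}, across $\Oh(\log\log n)$ adaptive rounds. A union bound over the preconditioner and the $\ell+1$ invocations of Lemma~\ref{thm:lemma32} bounds the total failure probability by $(\ell+2)/\log^{c'}n=\Oh(1/\log^{c'-1}n)$, which matches the required $1/\log^c n$ guarantee once $c'$ is taken large enough.

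The main obstacle I anticipate is the quantitative bookkeeping among $b$, $c'$, and the recursion: the gap $\rho_i$ must always exceed $C\log^{2c'}n$ so that a nontrivial $B_i\geq 1$ exists, and the set-size recursion $|S_i|\leq 1+|S_{i-1}|/B_i^2$ must actually collapse to $\Oh(1)$ within $\Oh(\log\log n)$ rounds. Because $\rho_i\gtrsim \rho_{i-1}^{3/2}/\log^{c'}n$, both requirements hold once $b$ is chosen as a sufficiently large constant multiple of $c'$; verifying that the $\log^{c'}n$ divisor never overwhelms the doubly-exponential growth of $\rho_i$, and that the constants in the preconditioner and in Lemma~\ref{thm:lemma32} can be aligned, is the only place where care is needed.
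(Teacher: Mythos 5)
Your proposal is correct and follows essentially the same route as the paper's proof: apply the preconditioner of Lemma~\ref{thm:preconditioner} to inflate the signal-to-tail ratio to $\mathrm{polylog}(n)$, then iterate Lemma~\ref{thm:lemma32} for $\Oh(\log\log n)$ adaptive rounds with per-round failure probability $1/\mathrm{poly}(\log n)$, and union bound. The only cosmetic difference is the parameter schedule: the paper fixes $B_0=2$, $B_i=B_{i-1}^{3/2}$, $\delta=1/\log^c n$ and stops when $B_r \geq n$, whereas you choose $B_i$ greedily as $\delta_i\sqrt{\rho_{i-1}/C}$ and track the ratio $\rho_i$ directly; both yield the same $3/2$-power (doubly-exponential) growth, the same $\Oh(\log\log n)$ round count, and the same failure probability after the union bound.
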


\begin{proof}
We first apply Lemma~\ref{thm:preconditioner} on vector $x$ and obtain a set $S$ and then follow the approach in \cite{indyk2011power} for $x_S$. Consider now the following sequence of parameters:

\[ B_0 = 2 , \quad B_i = B_{i-1}^{\frac{3}{2}}, \quad \delta = \frac{1}{\log^c n}.\]

Let $r$ be the first index such that $B_r \geq n$. For each $i = 1, \dots, r$, we apply Lemma~\ref{thm:lemma32} to $x_{S_{i-1}}$ with parameters $B=B_i$ and $\delta=\delta_i$ and obtain a set $S_i$, where $S_0$ is taken to be the set $S$. It is easy to see that $r = \Oh( \log \log n)$.  We shall inductively prove that at all steps $|x_j| \geq \frac{B_i^2}{\delta^2} \|x_{S_i}\|_2$. The base case follows immediately from Lemma~\ref{thm:preconditioner}. By the induction step, at step $i$ we find a set $S_i$ such that $ \|x_{S_i}\|_2 \leq \frac{1}{B_{i-1}} \|x_{S_{i-1}}\|_2 \leq \frac{1}{B_{i-1}} \frac{\delta_i^2}{B_{i-1}^2}|x_j|$ or, equivalently, $|x_j| \geq \frac{B_i^2}{\delta^2} \|x_{S_i}\|_2$. After $r$ iterations, $|S_r|\leq 1 + n/B_r^2 < 2$ and thus we have uniquely identified $j$.

The overall failure probability is at most $ \frac{1}{\log^c n} + r\delta \leq  \frac{1}{\log^{c'}n}$ for some $c'>0$. The overall number of measurements is $\Oh(\log \log n)$, since in every round we use $2$ measurements, plus $\Oh( \log \log n)$ measurements for the application of Lemma \ref{thm:preconditioner} at the very beginning.
\end{proof}

\subsection{$k$-sparse recovery}

\subsubsection{Results}

\begin{theorem}[Whole regime of parameters]
Let $ x \in \mathbb{R}^n$ and $\gamma>0$ be a constant. There exists an algorithm that performs $\Oh( (k/\epsilon) \log \log(\epsilon n/k))$ adaptive linear measurements on $x$ in $\Oh( \log^*k \cdot \log \log ( \epsilon n/k))$ rounds, and finds a vector $\hat{x} \in \mathbb{R}^n$ such that $ \| x - \hat{x}\|_2^2 \leq (1+\epsilon) \|x_{-k}\|_2^2$. The algorithm fails with probability at most $\exp(-k^{1-\gamma})$.
\end{theorem}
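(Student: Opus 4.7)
The plan is to adapt and tighten the iterative loop of Indyk, Price, and Woodruff~\cite{indyk2011power}, replacing their constant-failure-probability $1$-sparse subroutine with the $1/\poly\log n$ version from Theorem~\ref{thm:onesparse} and controlling dependencies across buckets by a martingale argument. I would maintain a running residual $r^{(i)}$, with $r^{(1)} = x$, which at stage $i$ is promised to contain at most $k_i \leq k/2^{i-1}$ coordinates of magnitude $\geq \sqrt{\epsilon/k}\,\|x_{-k}\|_2$. In each stage, non-adaptively hash $[n]$ into $B_i = \Theta(k_i/\epsilon)$ buckets using a fully random hash function and, within each bucket, run \textsc{OneSparseRecovery} together with a constant number of \textsc{Count-Sketch} rows for value estimation. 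This returns at most $B_i$ candidate (index, estimate) pairs; those whose estimate exceeds a fixed fraction of $\sqrt{\epsilon/k}\,\|x_{-k}\|_2$ are subtracted from the residual before the next stage begins.

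For the per-stage analysis I would follow the standard isolation argument: with probability $1 - \exp(-\Omega(k_i))$ over the hash, at least a $1-\zeta$ fraction of the current heavy hitters land alone in their bucket and the tail mass in those buckets is $\Oh(\sqrt{\epsilon/k_i})\|r^{(i)}_{-k_i}\|_2$, so each such bucket satisfies the hypothesis of Theorem~\ref{thm:onesparse} and its heavy coordinate is recovered with failure $\leq 1/\log^c n$; the median over the estimation rows gives a value estimate within $\Oh(\sqrt{\epsilon/k_i})\|r^{(i)}_{-k_i}\|_2$ of the truth. The central new ingredient is a Doob martingale over buckets exposed one at a time: since swapping two buckets changes the count of successful identifications by $\Oh(1)$, Azuma's inequality implies that the number of failed buckets deviates from its expectation by more than $t$ with probability $\leq \exp(-\Omega(t^2/B_i))$. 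Setting $t = \Theta(k_i)$ yields per-stage failure $\exp(-\Omega(k_i^{1-\gamma}))$ in every stage where $k_i \geq k^{1-\gamma}$, with no parallel repetition needed. In the $\Oh(\gamma \log k)$ tail stages where $k_i < k^{1-\gamma}$, boost the $1$-sparse subroutine by $r_i = \Oh(k^{1-\gamma}/k_i)$ independent repetitions; the per-bucket failure then shrinks enough that the Azuma term again dominates, restoring the $\exp(-k^{1-\gamma})$ bound. The resulting measurement budgets $B_i \cdot r_i \cdot \log\log(\epsilon n/k)$ telescope to $\Oh((k/\epsilon)\log\log(\epsilon n/k))$, since the head stages have geometrically decreasing $B_i$ and the tail stages contribute at most $\Oh(k^{1-\gamma}\log k/\epsilon) = \Oh(k/\epsilon)$.

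To drive the number of outer iterations down to $\Oh(\log^* k)$ I would merge consecutive stages into super-stages: the very strong per-bucket identification, combined with the martingale concentration above, lets one super-stage starting with remaining sparsity $\kappa$ drop the residual sparsity all the way to $\Oh(\log \kappa)$ in a single pass. Iterating this $\log^* k$ times brings the residual into the low-sparsity regime $k_i/\epsilon \leq \Oh(\log n)$, where one invocation of the low-sparsity theorem completes the recovery in one more $\log\log(\epsilon n/k)$-round block. Each super-stage still uses $\log\log(\epsilon n/k)$ adaptive rounds (one run of \textsc{OneSparseRecovery}), giving the stated $\Oh(\log^*k \cdot \log\log(\epsilon n/k))$ total.

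The hard part will be this simultaneous balancing: the per-bucket failure of Theorem~\ref{thm:onesparse} and the Azuma deviation depend differently on $k_i$, so achieving failure $\exp(-k^{1-\gamma})$ uniformly across stages while keeping total measurements within a constant factor of $(k/\epsilon)\log\log(\epsilon n/k)$ forces a delicate match of $r_i$ to $k_i$; the martingale argument is essential for avoiding the extra $\log(1/\delta)$ factor that a naive union bound would incur. The final $\ell_2/\ell_2$ guarantee on $\hat x$ then follows by the usual accounting: each correctly identified coordinate contributes at most $\Oh(\epsilon/k)\|x_{-k}\|_2^2$ to the squared error via its estimate, and at most $\zeta k$ heavy hitters are missed across all stages, so the total squared error is $\Oh(\epsilon)\|x_{-k}\|_2^2$, which is absorbed into the $(1+\epsilon)$ factor after rescaling $\epsilon$ by a constant.
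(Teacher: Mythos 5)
Your overall plan matches the paper's: iterate a bucket-then-\textsc{OneSparseRecovery} loop, subtract, decrease the residual sparsity, and use a martingale to control dependence across heavy hitters that share a hash function. But two of your central steps do not hold as stated.

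\textbf{The martingale is set up over the wrong index set.} You propose a Doob martingale over buckets exposed one at a time, with $B_i=\Theta(k_i/\epsilon)$ steps, and apply Azuma to get a deviation bound of $\exp(-\Omega(t^2/B_i))$. With $t=\Theta(k_i)$ this is $\exp\big(-\Omega(\epsilon k_i)\big)$, which is exponentially weaker than the needed $\exp(-\Omega(k_i))$ when $\epsilon$ is small. (It also is unclear how to make ``swapping two buckets'' a valid filtration step: a single bucket's content is not an independent input, and revealing one bucket can change the identification count by up to the number of heavy hitters, not $O(1)$.) The paper instead exposes the hash values of the $k_{r+1}$ heavy coordinates one at a time and tracks the supermartingale $M_t = N_t - tp$ where $N_t$ counts the heavy coordinates for which the bad event (collision or noisy bucket) has occurred and $p\leq\epsilon/C'$ is the per-coordinate bad probability. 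This martingale has only $k_{r+1}$ steps with $O(1)$ increments, so Azuma yields $\exp(-\Omega(k_{r+1}))$, and the $R_r$ repetitions multiply the exponent to $\exp(-\Omega(k_{r+1}R_r))$. The $\epsilon$ in the denominator of $B_i$ never appears in the exponent, which is exactly the point of conditioning on heavy hitters rather than on buckets.

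\textbf{The super-stage merging is not justified and threatens the measurement budget.} You claim a single pass drops sparsity from $\kappa$ to $O(\log\kappa)$, but the per-heavy-hitter failure in one pass is at least $\Omega(\epsilon) + 1/\poly(\log n)$ (isolation plus \textsc{OneSparseRecovery} failure), so the expected number of misses is $\Theta(\kappa)$, not $O(\log\kappa)$. Driving the miss rate to $O(\log\kappa/\kappa)$ would require $\Theta(\log\kappa)$ independent repetitions per bucket, and with $\kappa=k$ in the first super-stage this alone costs $\Theta\big((k/\epsilon)\log k\cdot\log\log(\epsilon n/k)\big)$ measurements, a $\log k$ factor over budget. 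The paper avoids this by a three-phase schedule: a geometric phase with constant repetitions that first brings the sparsity down to $k/\log k$ (so that $k_r R_r \leq k$ once $R_r=C\log k$ repetitions are introduced), then a tower-exponential phase with $C\log k$ repetitions over $\log^*(k^\gamma)$ iterations, and finally a constant number $T=T(\gamma)$ of polynomial steps with $R_r = Ck^{1-\gamma_r}$, where the binomial factor $\binom{k_r}{k_{r+1}}$ is explicitly compared against $\exp(-\Omega(k_{r+1}R_r))$ in each phase. Your proposal collapses this into a single ``merge'' heuristic that neither controls the measurement sum nor the binomial term, and the missing ingredient is precisely the phase-dependent choice of the $(k_r,R_r)$ pair so that $k_rR_r = O(k)$ throughout while the per-phase failure is $\exp(-\Omega(k^{1-\gamma}))$.
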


\begin{theorem}[low sparsity regime]\label{thm:ell_infty/ell_2_adaptive}
Let $ x \in \mathbb{R}^n$ and parameters $k,\epsilon$ such that $k/\epsilon \leq c \poly(\log n)$. There exists an algorithm that performs $\Oh( (k/\epsilon) \log \log n)$ adaptive linear measurements on $x$ in $\Oh( \log \log n)$ rounds, and finds a vector $\hat{x} \in \mathbb{R}^n$ such that $ \| x - \hat{x}\|_2 \leq (1+\epsilon) \|x_{-k/\epsilon} \|_2$. The algorithm fails with probability at most $1/\poly(\log n)$.
\end{theorem}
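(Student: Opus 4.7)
The plan is to reduce the task to $O(k/\epsilon)$ independent instances of $1$-sparse adaptive recovery, which are executed in parallel under a single Count-Sketch-style hashing. Concretely, I would hash coordinates into $B = \Theta(k/\epsilon)$ buckets using a pairwise independent hash $h:[n]\to[B]$ together with random signs $\sigma_i\in\{\pm 1\}$. The virtual sub-signal in bucket $b$ is $x^{(b)} = (\sigma_i x_i \cdot \mathbf{1}[h(i)=b])_i$, and every measurement of the $1$-sparse routine of Theorem~\ref{thm:onesparse} applied to $x^{(b)}$ is a linear measurement of $x$. Running all $B$ routines in lockstep gives an algorithm of $O(\log\log n)$ rounds and $O(B\log\log n) = O((k/\epsilon)\log\log n)$ total measurements.

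Next I would verify that in every bucket hosting an $\epsilon$-heavy hitter (a coordinate $i$ with $|x_i|^2$ sufficiently large relative to $(\epsilon/k)\|x_{-k/\epsilon}\|_2^2$) the precondition of Theorem~\ref{thm:onesparse} holds. A standard balls-and-bins argument combined with Markov's inequality shows that each such coordinate is isolated in its bucket with constant probability and the bucket's residual $\ell_2$-energy is $O((1/B)\|x_{-k/\epsilon}\|_2^2) = O((\epsilon/k)\|x_{-k/\epsilon}\|_2^2)$; by choosing the hidden constant in $B = \Theta(k/\epsilon)$ large enough one obtains the separation $|x_i|\geq C\|x^{(b)}_{[n]\setminus\{i\}}\|_2$ required by Theorem~\ref{thm:onesparse}. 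Since that theorem fails with probability only $1/\poly(\log n)$ per invocation and the hypothesis $k/\epsilon\leq c\log n$ guarantees $B=\poly(\log n)$, a union bound over the $B$ buckets ensures every such heavy hitter is correctly identified with probability $1-1/\poly(\log n)$.

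Given a candidate set $L$ of size $O(k/\epsilon)$ returned by the $B$ routines, I would produce the final estimate $\hat{x}$ by appending one extra round of $O(k/\epsilon)$ Count-Sketch-style measurements (or, equivalently, by reading off the hashed bucket values $\sigma_i^{-1} y_{h(i)}$ from additional fresh hashings), which yields an $\ell_\infty/\ell_2$-type estimate for every $i\in L$. Retaining only the largest $k/\epsilon$ of these estimates and zeroing out the rest, the standard $\ell_\infty/\ell_2\Rightarrow \ell_2/\ell_2$ reduction gives $\|x-\hat{x}\|_2^2 \leq (1+O(\epsilon))\|x_{-k/\epsilon}\|_2^2$; a final rescaling of $\epsilon$ by a constant absorbs the hidden factor and matches the stated bound. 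The estimation layer uses only $O(k/\epsilon)$ extra measurements and can be folded into the last adaptive round.

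The hard part is honoring the tight failure budget: we have only $1/\poly(\log n)$ slack per bucket and the number of buckets is the same order, so we cannot afford multiple hashing repetitions to cover non-isolated or marginally-heavy coordinates. The argument must therefore show that every heavy hitter missed by the single hashing contributes at most $O(\epsilon/k)\|x_{-k/\epsilon}\|_2^2$ to the squared error, so that the total contribution from the at most $k/\epsilon$ missed coordinates sums to $O(\epsilon)\|x_{-k/\epsilon}\|_2^2$. The hypothesis $k/\epsilon\leq c\log n$ is used crucially here in two places: it bounds the number of coordinates ``at stake'' in the $\ell_\infty/\ell_2$ accounting, and it keeps $B=\poly(\log n)$ so that the union bound for the $1$-sparse routines does not degrade below $1/\poly(\log n)$.
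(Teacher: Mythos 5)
There is a genuine gap, and it is exactly where you flag the ``hard part.'' With $B = \Theta(k/\epsilon)$ buckets, a given heavy hitter is isolated only with constant probability, and you explicitly forgo repetitions. Your fallback is the claim that every heavy hitter missed by the single hashing contributes at most $O(\epsilon/k)\|x_{-k/\epsilon}\|_2^2$ squared error. That claim is false: the set $H_{k/\epsilon,\epsilon}(x)$ is defined by a \emph{lower} bound $|x_i|^2 \geq (\epsilon/k)\|x_{-k/\epsilon}\|_2^2$, so a heavy hitter can have magnitude as large as $\Theta(\|x\|_2)$. Missing a single such coordinate (which happens with constant probability under your hashing) gives squared error $\Theta(\|x\|_2^2)$, not $O((\epsilon/k)\|x_{-k/\epsilon}\|_2^2)$. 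Your union bound over buckets only controls failures of the $1$-sparse routine \emph{given} isolation; it does nothing for the isolation failures, which are the dominant obstruction.

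The paper's proof makes essential use of the hypothesis $k/\epsilon \leq \poly(\log n)$ to hash into $\log^{c_0} n$ buckets, which is polynomially \emph{larger} than the number of heavy hitters. A birthday bound then gives isolation of all of $H_{k,\epsilon}(x)$ simultaneously with probability $1 - 1/\poly(\log n)$, so nothing is missed. Since $\log^{c_0} n$ buckets is too many to run a $1$-sparse routine in each, the paper inserts a preliminary partition--Count--Sketch pass (Lemma~\ref{lem:countsketch}, with universe $U = [\log^{c_0}n]$, so $\log|U| = \Theta(\log\log n)$) to identify the $O(k/\epsilon)$ buckets that host heavy hitters, using only $O((k/\epsilon)\log\log n)$ measurements; the adaptive $1$-sparse recoveries are then run only in those buckets. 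In short: you have the right black boxes (Theorem~\ref{thm:onesparse}, a Count-Sketch layer), but you choose too few buckets; the number of buckets must be a large polynomial in $\log n$ so that isolation is high-probability rather than constant-probability, and then a non-adaptive identification pass is needed to avoid spending measurements on empty buckets.
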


\subsubsection{Whole Regime of Parameters}



Let $\gamma\ll 1$ be an absolute constant and $T$ be a constant depending only on $\gamma$. Let a sequence $\gamma_0,\gamma_1,\ldots, \gamma_{T}$ be a decreasing arithmetic progression such that $\gamma_0 = 1-\gamma$ and $\gamma_{T} = \gamma$. Let also $C$ be an absolute constant.

The algorithm finds a constant fraction of heavy coordinates of $x$ by hashing all coordinates to a number of buckets. Then it observes their values, subtracts them from $x$ and iterates by changing $(k,\epsilon,B)$ parameters. The way these parameters change is crucial in obtaining the desired low failure probability. We split the algorithm in three different phases. Let $\phi \in \{1,2,3\}$ be the variable corresponding to the phase the algorithm is in. The overall algorithm is presented in Algorithm~\ref{alg:adaptive_overall}. In the $r$-th round of phase $\phi$, we pick for each $r\in[R_r^{(\phi)}]$ hash functions $h_{r,j}^{(\phi)}:U \rightarrow [C'k_r^{(\phi)}/\epsilon_r^{(\phi)}]$, where $C'$ is a large absolute constant and $U$ is the universe of the elements we have not found thus far; in the beggining it is $[n]$, and then it becomes smaller and smaller as we find more and more elements. We then run the $1$-sparse recovery algorithm in each bucket induced by the hash functions, as illustrated in \textsc{HashAndRecover} function below. All the random signs and the hash functions are fully independent; we do not elaborate on that, since the independence and the space is not a primary goal in the compressed sensing literature

\begin{algorithm}
\begin{algorithmic}
\Function{HashAndRecover}{$x,k,\epsilon,R,U$}
	\For{$j=1$ to $R$}
		\State Pick a hash function $h_j: U \to [C'k/\epsilon]$
		\State $S_j \gets \bigcup_{l \in [C' k/\epsilon]} \Call{OneSparseRecovery}{x_{h_j^{-1}(l)}}$
	\EndFor
	\State \Return $\bigcup_{j \in [R] } S_j$
\EndFunction
\end{algorithmic}
\end{algorithm}

\begin{algorithm}
\caption{Adaptive sparse recovery algorithm}\label{alg:adaptive_overall}
\begin{algorithmic}
\State $J \gets \emptyset$
\For{$r=0$ to $\log \log k-1$}
	\State $k_r^{(0)} \gets \frac{k}{2^r}$
	\State $\epsilon_r^{(0)} \gets \epsilon \cdot (\frac{3}{4})^r$
	\State $R_r^{(0)} \gets C$
	\State $J_{aux} \leftarrow \Call{HashAndRecover}{x_{\bar J},k_r^{(0)},\epsilon_r^{(0)},R_r^{(0)}, [n]\setminus J}$
	\State $J \gets J\cup J_{aux}$
\EndFor
\For{$r=0$ to $\log^*{k^\gamma}$}
	\State $k_r^{(2)} \gets \frac{k}{(2 \uparrow \uparrow r)\cdot\log k}$
	\State $\epsilon_r^{(1)} \leftarrow \epsilon$
	\State $R_r^{(2)} \gets C\log k$
	\State $J_{aux} \leftarrow \Call{HashAndRecover}{x_{\bar J},k_r^{(1)},\epsilon_r^{(2)},R_r^{(2)}, [n] \setminus J}$
	\State $J\gets J\cup J_{aux}$
\EndFor
\For{$r=0$ to $T$}
	\State $k_r^{(3)} \gets k^{\gamma_r}$
	\State $\epsilon_r^{(3)} \leftarrow \epsilon$
	\State $R_r^{(3)} \gets Ck^{1-\gamma_r}$
	\State $J_{aux} \leftarrow \Call{HashAndRecover}{x_{\bar J},k_r^{(3)},\epsilon_r^{(3)},R_r^{(3)}, [n] \setminus J}$
	\State $J \gets J\cup J_{aux}$
\EndFor
\State \Return $x_J$
\end{algorithmic}
\end{algorithm}

We need the following lemma.
\begin{lemma} In each phase and every round $r$, it holds with probability at least $1 - \exp(-\Omega( k^{1-\gamma}))$ that $\left|H_{k_r^{(\phi)},\epsilon_r^{(\phi)}}(x_{\bar J})\right|\leq k_{r+1}^{(\phi)}$.
\end{lemma}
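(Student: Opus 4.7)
\bigskip

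\noindent\textbf{Proof plan.} The plan is to show, for each phase $\phi$ and round $r$, that a suitably large fraction of the heavy coordinates of $x_{\bar J}$ (before the round) gets added to $J$ during the round, where ``heavy'' means $i \in H_{k_r^{(\phi)}, \epsilon_r^{(\phi)}}(x_{\bar J})$. Concretely, I will argue that the number of heavy hitters that remain after the round is at most $k_{r+1}^{(\phi)}$ with probability $1 - \exp(-\Omega(k^{1-\gamma}))$. The key per-round primitive is that, inside a single repetition $j$, a heavy hitter $i$ is \emph{good} if (a) no other element of $H_{k_r^{(\phi)},\epsilon_r^{(\phi)}}(x_{\bar J})$ lands in bucket $h_{r,j}^{(\phi)}(i)$, and (b) the squared $\ell_2$-mass of the non-heavy-hitter coordinates hashed to that bucket is at most a sufficiently small constant times $x_i^2$. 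Because $B = C' k_r^{(\phi)}/\epsilon_r^{(\phi)}$ with $C'$ a large constant, a standard balls-into-bins calculation (as in the weak system of Section~6) shows that each heavy hitter is good with probability at least some absolute constant $p_0 > 0$; conditioned on goodness, bucket $h_{r,j}^{(\phi)}(i)$ satisfies the hypothesis of Theorem~\ref{thm:onesparse}, so \textsc{OneSparseRecovery} recovers $i$ with probability $1 - 1/\poly(\log n)$.

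Next I aggregate across the $R_r^{(\phi)}$ repetitions. For a fixed heavy hitter $i$, the probability that $i$ is good in none of the $R_r^{(\phi)}$ repetitions is at most $(1-p_0)^{R_r^{(\phi)}}$, and conditioned on being good in at least one of them, the additional failure from \textsc{OneSparseRecovery} is a negligible $R_r^{(\phi)}/\poly(\log n)$. Let $X_i$ be the indicator that $i$ is eventually recovered in this round. The previous sentence bounds $\E X_i$ from below, but the $X_i$ are not independent across $i$ since they share the hash functions $h_{r,j}^{(\phi)}$. To handle this I set up a Doob martingale on the sequence of hash values of the heavy hitters (reading them one heavy hitter at a time, across all repetitions), and apply the Azuma--Hoeffding inequality: revealing one heavy hitter's hash values changes the count $\sum_i X_i$ by at most $O(R_r^{(\phi)})$, since it can affect the goodness of only the $O(R_r^{(\phi)})$ buckets it touches. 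This yields concentration of $\sum_i X_i$ around its expectation with deviation parameter $\exp(-\Omega(|H|/R_r^{(\phi)}))$, where $H := H_{k_r^{(\phi)},\epsilon_r^{(\phi)}}(x_{\bar J})$.

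I then plug in the parameters of each phase. In phase~$\phi=1$ the sparsity drops geometrically $k_r^{(0)} = k/2^r$ with $R_r^{(0)} = \Theta(1)$: the expected fraction of recovered heavy hitters is an absolute constant, so Azuma gives failure $\exp(-\Omega(k/2^r))$, which is at most $\exp(-\Omega(k^{1-\gamma}))$ as long as $2^r \leq k^{\gamma}$, and this is precisely the regime covered by phase~1 (the loop runs up to $r \leq \log\log k - 1$). In phase~$\phi=2$ we have $R_r^{(2)} = \Theta(\log k)$ repetitions, which boosts the per-element survival probability to $1 - k^{-\Omega(1)}$, so a Chernoff/Azuma bound gives failure $\exp(-\Omega(k_r^{(2)}/\log k))$ that comfortably beats $\exp(-\Omega(k^{1-\gamma}))$ throughout the tower-decreasing schedule $k_r^{(2)} = k/((2\uparrow\uparrow r)\log k)$. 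In phase~$\phi=3$ the balance is tightest: $k_r^{(3)} = k^{\gamma_r}$, and $R_r^{(3)} = Ck^{1-\gamma_r}$, so Azuma gives failure $\exp(-\Omega(k^{\gamma_r}/k^{1-\gamma_r})) \cdot \exp(\log\text{-losses})$; the arithmetic progression of $\gamma_r$ from $1-\gamma$ down to $\gamma$ is chosen precisely so that for every intermediate $r$ both (i) $k_{r+1}^{(\phi)} = k^{\gamma_{r+1}}$ is a small enough fraction of $k_r^{(\phi)}$ that the Azuma-based concentration absorbs the random fluctuations, and (ii) the accumulated failure across $T = \Theta(1/\gamma)$ rounds remains $\exp(-\Omega(k^{1-\gamma}))$.

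The main obstacle I expect is the dependency argument in the Azuma step: bucket-level success events for different heavy hitters are correlated through the shared hash function within a repetition, and naively applying independence would over-count the failure probability. The bounded-differences/martingale framework sketched above is precisely what the paper's earlier discussion alludes to, and its careful execution -- in particular, verifying that changing one heavy hitter's coordinate affects the total count by at most $O(R_r^{(\phi)})$ and not $\Omega(k_r^{(\phi)})$ -- is the crux of the argument. A secondary technicality is that \textsc{OneSparseRecovery} is adaptive, so its failure event lives on fresh randomness independent of the hash functions; this independence is what allows us to fold the $\poly(\log n)$ failure of Theorem~\ref{thm:onesparse} into the Azuma bound as a union-bound term $O(k \cdot R_r^{(\phi)}/\poly(\log n))$ that is dwarfed by $\exp(-\Omega(k^{1-\gamma}))$.
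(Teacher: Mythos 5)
Your high-level plan (first bound per-round recovery probability per element, then handle the dependency from shared hash functions via a martingale) is in the right spirit, but the specific concentration argument you propose — a global Doob martingale on $\sum_i X_i$ over all heavy hitters with bounded-differences/Azuma — cannot yield the required failure bound in Phase~3, and this is a genuine gap, not a technicality.

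Here is the problem concretely. Your Azuma bound has the form $\exp\bigl(-\Omega(t^2/\sum c_i^2)\bigr)$, where $t$ is the deviation and $\sum c_i^2$ is the total squared bounded-difference budget. Revealing each heavy hitter's hash values (across repetitions) has bounded difference $O(R_r^{(\phi)})$, and there are $|H|=k_r^{(\phi)}$ of them, so $\sum c_i^2 = O(k_r R_r^2)$ (or $O(k_r R_r)$ with the finer one-pair-at-a-time decomposition). In Phase~3 the target $k_{r+1}=k^{\gamma_{r+1}}$ is a vanishing fraction of $k_r=k^{\gamma_r}$, so the deviation you need is only $t \approx k_{r+1}$, not $\Omega(k_r)$. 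Plugging in $R_r=Ck^{1-\gamma_r}$, Azuma gives an exponent of order $k_{r+1}^2/(k_rR_r) = k^{2\gamma_{r+1}-1}$ (even the most favorable reading, $k^{\gamma_r}/R_r$ as you wrote, gives $k^{2\gamma_r - 1}$). Once $\gamma_{r+1}<1/2$ — which happens for at least half the Phase~3 rounds when $\gamma$ is small, since the schedule $\gamma_0=1-\gamma,\ldots,\gamma_T=\gamma$ sweeps through $1/2$ — the exponent of $k$ is negative and your bound is vacuous. The heuristic $\exp(-\Omega(|H|/R_r))$ you quote simply does not hold for these late rounds.

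The paper's argument avoids this by not doing a global bounded-differences concentration at all. It \emph{fixes} a candidate failing set $S$ of size exactly $k_{r+1}$, conditions on the hashing of everything outside $S$, and runs a supermartingale over hashing the $k_{r+1}$ elements of $S$ one at a time \emph{within a single repetition}; Azuma over $k_{r+1}$ steps with unit increments gives $\Pr[\text{all of }S\text{ bad in rep }j]\leq \exp(-\Omega(k_{r+1}))$. Then it uses the \emph{independence across the $R_r$ repetitions} (fresh $h_{r,j}$) to multiply these bounds, getting $\exp(-\Omega(k_{r+1}R_r))$, and only afterwards pays the union bound $\binom{k_r}{k_{r+1}}=\exp(O(k_{r+1}\log k))$, which is absorbed since $R_r\gg\log k$ throughout Phase~3. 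The resulting exponent is $k_{r+1}R_r = k^{1-(\gamma_r-\gamma_{r+1})}=k^{1-(1-2\gamma)/T}\geq k^{1-\gamma}$ for $T\gtrsim 1/\gamma$. It is exactly the independence-across-repetitions factor of $R_r$ in the exponent that a single global martingale cannot reproduce, since your martingale aggregates across repetitions inside the bounded-difference budget rather than multiplying across them. You would need to restructure the argument around fixing a small bad subset and exploiting per-repetition independence to close this gap.
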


\begin{proof}
For some point dring the execution of the algorithm let $ z = x_{\bar{J}}$ and define the following (bad) events.
\begin{itemize}
\item $\mathcal{B}^{(1)}_{j,r,l}$: $ | \{i \in H_{k,\epsilon}(z): h_{r,j}(i) =l \}| > 1$
\item $\mathcal{B}^{(2)}_{j,r,l}$: $ \|z_{ \{i \in [n]\setminus H_{k,\epsilon}(z): h_{r,j}(i) = l \}} \|_2^2 \geq \frac{C''\epsilon_r}{k_r} \|z_{-Ck_r}\|_2^2$
\item $\mathcal{B}_{i,r,j}$: $\mathcal{B}^{(1)}_{j,r,h_{r,j}(i)} \text{ or } \mathcal{B}^{(2)}_{j,r,h_{r,j}(i)}$
\end{itemize}

%
%

Intuitively, the event $\mathcal{B}^{(1)}_{j,r,l}$ means the $l$-th bucket associated with the $j$-th hash function in phase $r$ contains two or more `head' elements (elements in $H_{k,\epsilon}(z)$), and the event $\mathcal{B}^{(2)}_{j,r,l}$ means that the same bucket contains a lot of noise in $\ell_2$ norm. In each iteration $r$, a heavy hitter fails to be recovered if it fails to be recovered in all $R_r$ repetitions, and we claim that at most $k_{r+1}$ out of the $k_r$ heavy hitters will not be recovered with high probability.

Fix the $k_{r+1}$ heavy hitters that fail to be recovered, and condition on the hashing of all other coordinates. Consider hashing these $k_{r+1}$ heavy hitters one by one. Since the hash functions are fully independent, the resulting distribution does not depend on the order. Consider an individual repetition. Let $N_t$ denote the number of coordinates $i$ among the $k_{r+1}$ heavy hitters such that $\mathcal{B}_{i,r,j}$ happens. It is clear that $N_t$ is increasing and a standard calculation shows that with probability $p \leq \epsilon/C'$ we have $N_{t+1} = N_t + 1$. Therefore if we define $M_t = N_t - tp$, then $M_0,\dots,M_{k_{r+1}}$ is a supermartingale.

It follows from Azuma-Hoeffding inequality that 
\[
\Pr(N_{k_{r+1}} = k_{r+1}) = \Pr(M_{k_{r+1}} = k_{r+1} (1-p)) \leq \exp\left(-2(1-p)^2k_{r+1}\right).
\]
and thus
\[
\Pr(N_{k_{r+1}} = k_{r+1}\text{ in all }R_r\text{ repetitions}) \leq \exp\left(-2(1-p)^2k_{r+1}R_r\right).
\]
Taking a union bound over all possible $k_{r+1}$ heavy hitters,
\begin{multline*}
\Pr(\exists S \subseteq H_{k_r,\epsilon_r}(x_{\bar J}), |S| = k_{r+1}: \mathcal{B}_{i,r,j}\text{ happens for all }i\in S, j\in R_r)\\
\leq \binom{k_r}{k_{r+1}}\exp\left(-2(1-p)^2k_{r+1}R_r\right).
\end{multline*}

It suffices to upper bound the tail bound on the right-hand side above in each phase.

\begin{itemize}
	\item \textbf{Phase 1} ($0 \leq r \leq \log\log k - 1$):
\[
		\binom{k_r}{k_{r+1}}\exp\left(-2(1-p)^2k_{r+1}R_r\right) \leq \exp\left(-c_1\frac{k}{2^r}\right)\leq \exp\left(-c_1\frac{k}{\log  k}\right).
\]
	\item \textbf{Phase 2} ({$0 \leq r \leq \log^\ast k^\gamma$}): 
\[
	\binom{k_r}{k_{r+1}}\exp\left(-2(1-p)^2k_{r+1}R_r\right) \leq \exp(-c_2 k_{r+1}\log k)\leq \exp(-c_2' k^{1-\gamma}).
\]
	\item \textbf{Phase 3} ({$0 \leq r \leq T$}): When $r<T$,
\[
	\binom{k_r}{k_{r+1}}\exp\left(-2(1-p)^2k_{r+1}R_r\right) \leq \exp(-c_2 k_{r+1}R_r)\leq \exp(-c_3'k^{1-(1-2\gamma)/T}) \leq \exp(-c_3'k^{1-\gamma})
\]
provided that $(T+2)\gamma \geq 1$. 
In the last step, i.e.\@ $r=T$, for a coordinate $i$,
\[
\Pr\left\{\mathcal{B}_{i,r,j} \text{ happens for all } j \in [R_r]\right\} \leq \exp(-c_4k^{1-\gamma}). 
\]
This allows us to take a union bound over all $i \in H_{k_T,\epsilon}(x_{\bar{J}})$, so in this step we shall recover all of them with probability $1 - \exp(-c_4'k^{1-\gamma})$.
\end{itemize}
The proof of the lemma is complete.
\end{proof}

\subsubsection{Low-Sparsity Regime}

We need the following lemma, which is standard in the sparse recovery and streaming algorithms literature.





\begin{lemma}\label{lem:countsketch}
Let $x \in \mathbb{R}^n$ and $\mathcal{F} = \{F_1,\ldots,F_U \}$ be a partition of $[n]$. For a set $S \subseteq [n]$ define $F(S) = \bigcup_{i\in F_j} F_j$. There exists an algorithm that performs $\Oh((k/\epsilon)\log |U|)$ non-adaptive measurements and with probability $1- |U|^{-c}$ finds a set $T \subseteq [U]$ of size $\Oh(k)$ such that 
$F(H_{k,\epsilon}(x)) \subseteq T$. 
\end{lemma}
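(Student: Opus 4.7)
The plan is to reduce this problem to an $(\epsilon/k)$-heavy-hitters task on a $U$-dimensional ``block-norm'' vector. Define $z\in\R^U$ by $z_j=\|x_{F_j}\|_2$. Then $\|z\|_2^2=\|x\|_2^2$, and since the top $k$ blocks of $z$ absorb at least the mass of the top $k$ coordinates of $x$, we have $\|z_{-k}\|_2^2\leq\|x_{-k}\|_2^2$. In particular, for any $i\in H_{k,\epsilon}(x)$ lying in a block $F_j$,
\[
z_j^2\;\geq\;x_i^2\;\geq\;\frac{\epsilon}{k}\|x_{-k}\|_2^2\;\geq\;\frac{\epsilon}{k}\|z_{-k}\|_2^2,
\]
so $j\in H_{k,\epsilon}(z)$, and it suffices to output an $\Oh(k)$-sized superset of $H_{k,\epsilon}(z)\subseteq [U]$.

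Although $z$ is nonlinear in $x$, a Count-Sketch--style estimator for $z_j^2$ can be realized from measurements linear in $x$ via element-level signs combined with block-level hashing. Take $R=\Theta(\log|U|)$ independent rounds; in each round $r$ pick a pairwise independent $h_r\colon[U]\to[B]$ with $B=\Theta(k/\epsilon)$ and $4$-wise independent Rademacher signs $\sigma_{r,i}\in\{\pm 1\}$ for $i\in[n]$, independent across rounds. The measurements are
\[
V_{r,b}\;=\;\sum_{i\in[n]\,:\,h_r(j(i))=b}\sigma_{r,i}\,x_i,\qquad r\in[R],\ b\in[B],
\]
where $j(i)$ denotes the (unique) block containing $i$, for a total of $RB=\Oh\bigl((k/\epsilon)\log|U|\bigr)$ measurements. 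Note that $\E_\sigma[V_{r,b}^2\mid h_r]=\sum_{j:\,h_r(j)=b}z_j^2$. The algorithm then sets $\hat z_j^2:=\median_{r\in[R]}V_{r,h_r(j)}^2$ and returns as $T$ the indices of the largest $\Oh(k)$ values of $\hat z^2$.

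For correctness, fix $j^*\in H_{k,\epsilon}(z)$. In a single round the ``good event'' is that $h_r$ isolates $j^*$ from the top-$k$ coordinates of $z$ and that the in-bucket tail noise $N_r:=\sum_{j'\ne j^*,\,h_r(j')=h_r(j^*)}z_{j'}^2$ is at most $z_{j^*}^2/10$; both hold with constant probability when the constant in $B=\Theta(k/\epsilon)$ is chosen sufficiently large, by applying Markov's inequality to $\E N_r\leq \|z_{-k}\|_2^2/B\leq (\epsilon/k)\|z_{-k}\|_2^2\leq z_{j^*}^2$. On the good event $\E_\sigma V_{r,h_r(j^*)}^2\in [z_{j^*}^2,\,1.1\,z_{j^*}^2]$, and $4$-wise independence of $\sigma_{r,\cdot}$ gives $\E_\sigma V_{r,h_r(j^*)}^4\leq 3(\E_\sigma V_{r,h_r(j^*)}^2)^2$, so a Paley--Zygmund lower tail combined with a Markov upper tail localizes $V_{r,h_r(j^*)}^2$ within a constant-factor interval around $z_{j^*}^2$ with constant probability. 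A symmetric argument shows that for $j$ with $z_j^2\ll(\epsilon/k)\|z_{-k}\|_2^2$ the value $V_{r,h_r(j)}^2$ stays below that interval with constant probability. Taking the median over $R=\Theta(\log|U|)$ rounds amplifies the per-block failure probability to $|U|^{-(c+1)}$, and a union bound over all $|U|$ blocks gives overall failure probability at most $|U|^{-c}$; the resulting $T$ contains $F(H_{k,\epsilon}(x))$ as required.

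The main obstacle is the single-round classifier for $V_{r,b}^2$: unlike the classical $\ell_2$ Count-Sketch estimator $s_{r,j}V_{r,h_r(j)}$, which is linear in the signs and concentrates by elementary tail bounds, $V_{r,b}^2$ is a quadratic form in the Rademachers and admits only constant-probability two-sided control; the argument needs to combine the top-$k$ isolation analysis applied to $z$ (which crucially relies on $\|z_{-k}\|_2^2\leq\|x_{-k}\|_2^2$) with a fourth-moment Paley--Zygmund estimate for which $4$-wise independence of the signs is essential. Once this single-round guarantee is in place, the measurement count $\Oh((k/\epsilon)\log|U|)$ and the $|U|^{-c}$ failure probability follow from the standard median-trick amplification.
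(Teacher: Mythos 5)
Your construction matches the paper's sketch: hash the blocks into $\Theta(k/\epsilon)$ buckets, combine the coordinates inside each bucket with independent random signs, and take a median of the (squared) bucket values over $\Theta(\log|U|)$ repetitions, then output the top few estimates. Your reduction to the block-norm vector $z_j = \|x_{F_j}\|_2$ together with the observation $\|z_{-k}\|_2 \le \|x_{-k}\|_2$ is a clean way to set up the isolation argument, and you correctly note that four-wise independence of the signs is needed for a fourth-moment estimate. However, there is a genuine gap in the median amplification step, and it is present in the paper's own (very brief) sketch as well.

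The problem is the lower-tail control of $V_{r,h_r(j^*)}^2$ for a heavy block $j^*$. Paley--Zygmund with the Khintchine fourth-moment bound $\E V^4 \le 3(\E V^2)^2$ only certifies
\[
\Pr\left\{V_{r,h_r(j^*)}^2 \ge \theta\, \E V_{r,h_r(j^*)}^2\right\} \ge \frac{(1-\theta)^2}{3},
\]
and the right-hand side never exceeds $1/3$. Median amplification over $R=\Theta(\log|U|)$ rounds drives the per-block failure probability to $|U|^{-\Omega(1)}$ only if the per-round ``good'' probability is $1/2+\Omega(1)$; a lower bound of $1/3$ is useless here. Moreover this is not an artifact of a weak inequality: if the heavy block $F_{j^*}$ consists of two coordinates of equal magnitude $a$, the in-block contribution $a(\sigma_{i_1}+\sigma_{i_2})$ vanishes with probability exactly $1/2$, so in roughly half the rounds $V_{r,h_r(j^*)}$ is pure bucket noise, and the median over $R$ rounds falls below any fixed fraction of $z_{j^*}$ with probability $\Theta(1)$ rather than $|U|^{-c}$. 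No moment argument can push the per-round probability above $1/2$ in this example, so the $\pm 1$ signs are fundamentally the wrong choice when several coordinates of comparable magnitude live in one block.

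The standard fix, which the paper itself uses in the appendix section on the $\ell_2$ heavy-hitters upper bound, is to replace the Rademacher signs with i.i.d.\@ standard Gaussians $g_{i,r}$. Conditioned on the hash, each bucket value is then exactly Gaussian with variance $\sum_{j':\,h_r(j')=h_r(j^*)} z_{j'}^2 \ge z_{j^*}^2$, so $\Pr\{|V_{r,h_r(j^*)}| \ge c\, z_{j^*}\}$ can be made $1/2+\Omega(1)$ for a small absolute constant $c$, while the upper tail still follows from Markov as you argued. With Gaussian signs your isolation and noise analysis and median amplification go through. (As a secondary point: $|H_{k,\epsilon}(z)|$, and hence $|F(H_{k,\epsilon}(x))|$, can be as large as $\Theta(k/\epsilon)$, so the set of top estimates retained must have size $\Oh(k/\epsilon)$ rather than $\Oh(k)$; this discrepancy appears already in the lemma statement and is inherited by both your proposal and the paper's sketch.)
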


\begin{proof}[Proof (Sketch)]
For each $r=1,\ldots, \log |U|$ we pick a $2$-wise independent hash function $h_r:[U] \rightarrow [Ck]$, for some absolute constant $C$ and a $2$-wise independent hash function $\sigma_r:[n] \rightarrow \{+1,-1\}$. Then, for every $r \in [\log |U|]$ and $j \in [Ck]$ we perform  measurement

\[ y_{j,r} = \sum_{t \in [U] : h_r(t) =j} \sum_{i \in F_t} \sigma_{i,r} x_i. \]

For every $j \in U$ we compute \[\hat{z}_j = \median_{1\leq r\leq \log k} \left|y_{h_r(j),r}\right| \] and find the largest $C_0 k$ indices with the biggest $\hat{z}$ values in magnitude, forming a set $T$. We then output $T$.
\end{proof}

We are now ready to prove our main result in the low-sparsity regime, Theorem~\ref{thm:ell_infty/ell_2_adaptive}.

\begin{proof}[Proof of Theorem~\ref{thm:ell_infty/ell_2_adaptive}]
Pick a hash function $h:[n] \rightarrow [\log^{c_0}n]$, where $c_0$ is a large absolute 
constant. Observe that with probability $1 - \frac{1}{\poly(\log n)}$ every $h^{-1}(p)$ 
contains at most $1$ element of $H_{k,\epsilon}(x)$. Let $\mathcal{F}_{\operatorname{good}} = \{h^{-1}(p): |h^{-1}(p) \cap H_{k,\epsilon}(x) | = 1 \}$. Then we 
can invoke Lemma~\ref{lem:countsketch} and obtain a set $S$ of size $\Oh(k)$ which with 
probability $1 - \frac{1}{\poly(\log n)}$ is a superset of $\mathcal{F}_{\operatorname{good}}$. The number of measurements we need is $\Oh(\frac{1}{\epsilon}k \cdot \log \log n)$. We then run the $1$-sparse recovery algorithm guaranteed by Theorem~\ref{thm:onesparse} in each $F_p \in \mathcal{F}_{\operatorname{good}}$, for a total of
$\Oh( \frac{1}{\epsilon} k \cdot \log \log n)$ measurements in $\Oh(\log \log n)$ rounds. 
Since every $1$-sparse recovery routine succeeds independently with probability $1 - \frac{1}{\poly(\log n)}$ we get that all of them succeed with the desired probability and 
hence we can obtain a set $S$ containing all elements of $H_{k,\epsilon}(x)$. By 
observing them directly in another round, we can obtain their values, and trivially satisfy the $\ell_2/\ell_2$ guarantee.
\end{proof}

\section{Spiked-Covariance Model}\label{sec:spikedcov}

In the spiked covariance model, the signal $x$ is subject to the following distribution: we choose $k$ coordinates uniformly at random, say, $i_1,\dots,i_k$. First construct a vector $y\in \R^n$, in which each $y_{k_i}$ is a uniform Bernoulli variable on $\{-\sqrt{\epsilon/k}, +\sqrt{\epsilon/k}\}$ and these $k$ coordinate values are independent of each other. Then let $z\sim N(0,\frac{1}{n}I_n)$ and set $x = y+z$. We now present a non-adaptive algorithm (although the runtime is slow) that uses $\Oh((k/\epsilon)\log(\epsilon n/k) + (1/\epsilon)\log(1/\delta))$ measurements and a matching lower bound.

\begin{theorem}
Assume that $(k/\epsilon)\log(1/\delta) \leq \beta n$, where $\beta\in(0,1)$ is a constant. There exists an $\ell_2/\ell_2$ algorithm for the spiked-covariance model that uses $\Oh\left(\frac{k}{\epsilon}\log\frac{\epsilon n}{k}+ \frac{1}{\epsilon}\log\frac{1}{\delta}\right)$ measurements and succeeds with probability $\geq 1-\delta$. Here the randomness is over both the signal and the algorithm.
\end{theorem}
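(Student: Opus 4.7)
The plan is to combine a standard $\ell_2/\ell_2$ sparse-recovery primitive, which gives a good estimate of $y$ with moderate confidence, with a global batch of dense Gaussian measurements used to boost the confidence to $1-\delta$ by exploiting the fact that the noise $z$ really is Gaussian in this model. Concretely, reserve $m_1=\Oh\!\left(\frac{k}{\epsilon}\log\frac{\epsilon n}{k}\right)$ measurements for the scheme of Theorem~\ref{thm:nonadaptive_quadratic_runtime}, which outputs a vector $\hat y^{(0)}$ close to the signal part $y$ with constant probability, and add $m_2=\Oh\!\left(\frac{1}{\epsilon}\log\frac{1}{\delta}\right)$ rows of an independent $N(0,1)$ sensing matrix $\Phi_2$ to be used for a verification/refinement step. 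The final output is $\hat x=\hat y$, where $\hat y$ is either $\hat y^{(0)}$ or a refined variant as described below.

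The first step of the analysis is to reduce the $\ell_2/\ell_2$ guarantee to a bound of the form $\|\hat y-y\|_2^2\leq c\epsilon$ for an absolute constant $c$. Lipschitz/Gaussian concentration gives, with probability $\geq 1-\delta/3$ under the hypothesis $(k/\epsilon)\log(1/\delta)\leq \beta n$, that $\|z\|_2^2=1\pm o(\epsilon)$ and $\|x_{-k}\|_2^2=\|z_{[n]\setminus\supp(y)}\|_2^2 = 1-k/n\pm o(\epsilon)$ (here the fact that $\supp(y)$ coincides with $H_k(x)$ with high probability enters, since on $\supp(y)$ one has $|x_i|\approx\sqrt{\epsilon/k}\gg\sqrt{\log n/n}$). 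Expanding $\|\hat x-x\|_2^2=\|\hat y-y\|_2^2-2\langle\hat y-y,z\rangle+\|z\|_2^2$ and noting that the cross term is a mean-zero Gaussian of scale $\|\hat y-y\|_2/\sqrt n$, the target $\|\hat x-x\|_2^2\leq(1+\epsilon)\|x_{-k}\|_2^2$ follows, up to lower-order terms absorbed by concentration, from $\|\hat y-y\|_2^2\leq c\epsilon$ with probability $\geq 1-2\delta/3$.

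The second step is to establish $\|\hat y-y\|_2^2\leq c\epsilon$ using the two batches $(\Phi_1 x,\Phi_2 x)$. Phase 1, by the guarantees of Theorem~\ref{thm:nonadaptive_quadratic_runtime} applied to the spiked-covariance signal, produces $\hat y^{(0)}$ with $\|\hat y^{(0)}-y\|_2^2=\Oh(\epsilon)$ with at least constant probability, using only $m_1$ measurements. Phase 2 is a channel-coding-style refinement: view $\Phi_2 x=\Phi_2 y+\Phi_2 z$ as a Gaussian-channel observation of $y$, where each row has signal SNR $\Theta(\epsilon)$ (since $(\Phi_2 y)_j$ is mean-zero with variance $\Theta(\epsilon)$ and $(\Phi_2 z)_j\sim N(0,1)$, using $\|z\|_2\approx 1$). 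Given any candidate $y'$, the event that $y'$ beats $y$ in the squared-residual metric $\|\Phi_2 x-\Phi_2 y'\|_2^2$ reduces to $\|\Phi_2 v\|_2^2+2\langle\Phi_2 z,\Phi_2 v\rangle\leq 0$ with $v=y-y'$; conditional on $z$, a direct computation based on the joint Gaussianity of the rows of $\Phi_2$ shows this has probability at most $\exp\bigl(-\Omega(m_2\|v\|_2^2)\bigr)$. A candidate with $a$ support swaps and $f$ sign flips satisfies $\|v\|_2^2=(a+4f)\epsilon/k$, and the number of such candidates is bounded by $(Cn/a)^a(Ck/f)^f2^{a+f}$. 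Summing over all $(a,f)$ with $a+4f\geq c'k$ (i.e.\ $\|v\|_2^2>c\epsilon$) yields total failure probability at most $\sum_{d\geq c'k}\exp\!\bigl(\Theta(d\log(n/d))-\Omega(m_2 d\epsilon/k)\bigr)$, which is $\leq\delta/3$ once $m_2=\Omega(\epsilon^{-1}\log(1/\delta))$ and $m_1+m_2=\Omega(k\epsilon^{-1}\log(\epsilon n/k))$.

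The main obstacle is to argue that the refinement really costs only $m_2=\Oh(\epsilon^{-1}\log(1/\delta))$ rather than the naive $\Oh(k\epsilon^{-1}\log(1/\delta))$ one would get from repeating Phase 1. The key idea is that when Phase 1 fails, $\hat y^{(0)}$ already lies in a small list of plausible completions consistent with $\Phi_1 x$, and an ML tie-break using $\Phi_2 x$ only needs to distinguish among candidates at distance $\gtrsim\sqrt\epsilon$ from the truth; by the capacity-style calculation above, the reliability exponent at per-measurement SNR $\Theta(\epsilon)$ then yields the advertised $\epsilon^{-1}\log(1/\delta)$ cost without any $k$ factor. The trickiest step in executing this is the uniform control of the cross term $\langle\Phi_2 z,\Phi_2 v\rangle$ over an exponentially large family of competitors $v$; I plan to handle it by stratifying competitors according to $\|v\|_2$ and to the combinatorial type $(a,f)$, and applying an $\epsilon$-net / packing argument within each stratum so that the per-stratum union bound contribution stays below $\delta/\poly(d)$.
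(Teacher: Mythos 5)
Your Phase 2 refinement cannot supply the claimed $1-\delta$ confidence with only $m_2 = \Oh(\epsilon^{-1}\log(1/\delta))$ dense Gaussian rows, and your own union-bound arithmetic shows why. A competitor $y'$ of combinatorial type $(a,f)$ has $\|v\|_2^2 = (a+4f)\epsilon/k$, the per-competitor failure exponent is $\Theta(m_2\|v\|_2^2) = \Theta(m_2 d\epsilon/k)$ with $d = a+4f$, and the number of such competitors is $\exp\bigl(\Theta(d\log(n/d))\bigr)$. For $d = \Theta(k)$ (a total support change, which you cannot rule out), this forces $m_2 \gtrsim (k/\epsilon)\log(n/k)$, not $\epsilon^{-1}\log(1/\delta)$. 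You attempt to evade this by asserting that Phase 1 already confines the competitors to a small list consistent with $\Phi_1 x$, but Theorem~\ref{thm:nonadaptive_quadratic_runtime} gives no list-decoding guarantee: it outputs a single vector, and on its failure event that vector is not constrained to lie anywhere near $y$. The other patch --- writing the sufficient condition as ``$m_1+m_2 = \Omega(k\epsilon^{-1}\log(\epsilon n/k))$'' as if $m_1$ contributes to the Gaussian-channel exponent --- is also unsupported, since $\Phi_1$ is a structured sparse-recovery matrix, not another block of $N(0,1)$ rows, and you have not shown it obeys the same reliability-exponent estimate. As written, Phase~2 either needs far more measurements or needs a list-recovery property that the cited primitive does not have.

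The paper's proof avoids the two-phase scheme entirely. It reruns the \textsc{Count-Sketch} estimation analysis (Lemma~\ref{lem:estimation}) on the whole universe $T=[n]$, with $B = \Theta(k/\epsilon)$ buckets and $R = c_R\bigl(\log\frac{\epsilon n}{k}+\frac{1}{k}\log\frac{1}{\delta}\bigr)$ repetitions: each coordinate is mis-estimated with probability $\exp(-\Omega(R))$, and a Chernoff bound over all $n$ coordinates shows that at most $\theta k$ of them are mis-estimated with probability $\geq 1-\delta$ precisely because $R \gtrsim \log(n/k)+\frac{1}{k}\log(1/\delta)$. A thresholding step around $\pm\sqrt{\epsilon/k}$, together with the concentration of $\|z\|_\infty$ and $\|x_{-k}\|_2$ that your first reduction step also uses, then converts the bound on the number of mis-estimated coordinates into the $\ell_2/\ell_2$ guarantee. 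The $\log(1/\delta)$ term multiplies only $1/\epsilon$ because the Chernoff over the $k$ heavy coordinates contributes a $1/k$ discount to the required per-repetition failure rate --- not because of a channel-capacity refinement.
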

\begin{proof}
First note that the maximum magnitude in $z$ is $\Oh\left(\sqrt{\frac{\log(1/\delta)}{n}}\right)$ with probability $\geq 1-\delta$, which is smaller than $\sqrt{\epsilon/k}$ given the assumption on $k$. Condition on this event. Furthermore, with probability $\geq 1-e^{-\Omega(n)}$, it holds that $(1-\eta)\frac{n-k}{n}\leq \|x_{-k}\|_2^2\leq (1+\eta)\frac{n-k}{n}$. Further condition on this event.

We essentially repeat Lemma~\ref{lem:estimation} with $T = [n]$. That is, we estimate each coordinate, discard the estimates outside the range $[(1-\gamma)\sqrt{\frac{\epsilon}{k}},(1+\gamma)\sqrt{\frac{\epsilon}{k}}]$, where $\gamma = \alpha\sqrt{(1+\eta)(1-\frac kn)}$, and retain only the top $k$ coordinates. To accommodate the larger size of $T$, the number of repetitions needs to be $R=c_R(\log\frac{\epsilon n}{k} + \frac{1}{k}\log\frac{1}{\delta})$, so the total number of measurements is $\Oh\left(\frac{k}{\epsilon}\log\frac{\epsilon n}{k}+ \frac{1}{\epsilon}\log\frac{1}{\delta}\right)$ as claimed. 

Next we show correctness. With probability $\geq 1-\delta$, the top $k$ items we retain contain at least $(1-\theta)k$ elements of $H_k(x)$, and the estimation error of each of them is at most $\beta\sqrt{\epsilon/k}\|x_{-k}\|_2$. Hence these elements survive the thresholding. In total these well-recovered heavy hitters contribute at most $\beta^2\epsilon \|x_{-k}\|_2^2$ to the approximation error. For the remaining heavy hitters, they could be (i) unrecovered, (ii) replaced by spurious ones with magnitude bounded by $(1+\gamma)\sqrt{\epsilon/k}$, or (iii) have estimation error at most $\gamma\sqrt{\epsilon/k}$. Hence they contribute to the residual energy (squared $\ell_2$ norm) at most $(1+\gamma)^2\theta\epsilon$ in total. Therefore we conclude that
\[
\|\hat x - x\|_2^2 \leq \left(1 + (\beta^2+(1+\gamma)^2\theta)\epsilon\right)\|x_{-k}\|_2^2.\qedhere
\]
\end{proof}

\begin{remark}
The algorithm above runs in time $\tilde \Oh(n)$. Alternatively, when $k/\epsilon=n^{\Omega(1)}$, we can invoke Theorem~\ref{thm:gilbert_identification} to identify a constant fraction of the heavy hitters and then estimate them as before. Overall it uses $\Oh\left(\frac{k}{\epsilon}\log\frac{n}{k} + \frac{1}{\epsilon}\log\frac{1}{\delta}\right)$ measurements and succeeds with probability at least $1-\delta$. The error guarantee follows similarly as in the proof above. This alternative algorithm uses the optimal number of measurements when $\epsilon$ is a constant (and only slightly more measurements in the general case) but runs significantly faster in time $\Oh(k^{1+\alpha}\poly(\log n))$.
\end{remark}

We prove a matching lower bound to conclude this section.
\begin{theorem}
    Suppose that $\delta < \delta_0$ for a sufficiently small absolute constant $\delta_0\in (0,1)$ and $n \geq  \left \lceil 64 \epsilon^{-1} \log(6/\delta) \right \rceil$. Then any $\ell_2/\ell_2$-algorithm that solves with probability $\geq 1-\delta$ the $\ell_2/\ell_2$ problem in the spiked-covariance model must use $\Omega(\epsilon^{-1} \log(1/\delta))$ measurements. 
\end{theorem}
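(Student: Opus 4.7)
My plan is to piggy‑back on the proof of Theorem~\ref{thm:ell2_hhlb}. Specialize the spiked-covariance model to $k=1$: the input distribution is $x = \sqrt{\epsilon}\sigma_I e_I + z$ with $z\sim N(0,I_n/n)$, $I$ uniform in $[n]$ and $\sigma_I$ uniform in $\{\pm 1\}$. After the linear rescaling $y = \sqrt{n}\,x$, this becomes $y = \sqrt{\epsilon n}\sigma_I e_I + \tilde z$ with $\tilde z\sim N(0,I_n)$, which is exactly the ``Case~2'' distribution $\gamma$ used in the proof of Theorem~\ref{thm:ell2_hhlb}. The $\ell_2/\ell_2$ guarantee rescales to $\|\hat y - y\|_2^2 \le (1+\epsilon)\|y_{-1}\|_2^2$, and the hypothesis $n\ge\lceil 64\epsilon^{-1}\log(6/\delta)\rceil$ is identical, so the concentration events $\mathcal{E}$ (for pure Gaussian $\eta$) and $\mathcal{F}$ (for the signal $\gamma$) from Theorem~\ref{thm:ell2_hhlb} are available unchanged.

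The content of the argument is a reduction: an $\ell_2/\ell_2$ algorithm that succeeds with probability $1-\delta$ on the spiked-covariance input $\gamma$ must, with probability $1-O(\delta)$, produce an output $\hat y$ that is provably incompatible with $x\sim\eta$. Concretely, expanding $\|\hat y-y\|_2^2 \le (1+\epsilon)(\|y\|_2^2 - y_I^2)$ and combining with $\|y\|_2^2 = y_I^2 + \|y_{-1}\|_2^2$ gives the holistic constraint
\[
2\langle \hat y, y\rangle \;\ge\; \|\hat y\|_2^2 + (1+\epsilon)\,y_I^2 - \epsilon\,\|y\|_2^2,
\]
which under the event $\mathcal{F}$ forces a nontrivial positive alignment of $\hat y$ with the direction $\sqrt{\epsilon n}\sigma_I e_I$. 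Under $\eta$ (pure noise) no such alignment is possible with more than $O(\delta)$ probability because the analogous inequality fails by a constant multiple of $\epsilon n$ once the concentration event $\mathcal{E}$ holds. This packages $\hat y$ into a $\{0,1\}$-valued distinguisher between $\gamma'$ and $\eta'$ (the Case~1/Case~2 distributions conditioned on $\mathcal{E},\mathcal{F}$), yielding $D_{TV}(\bar{\gamma'},\bar{\eta'})\ge 1-O(\delta)$ and hence $D_{TV}(\bar\gamma,\bar\eta)\ge 1-O(\delta)$ by the standard unconditioning step.

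Once the distinguishability inequality is in hand, the rest of the proof is a verbatim copy of the calculation in Theorem~\ref{thm:ell2_hhlb}: assume $S$ has orthonormal rows (w.l.o.g.); rotational invariance of the Gaussian makes $\bar\eta = N(0,I_r)$ and $\bar\gamma = \sqrt{\epsilon n}\,\sigma_I S_I + N(0,I_r)$; a Markov bound restricts to the $9/10$-fraction of indices with $\|S_i\|_2^2 \le 10r/n$; Fact~\ref{lem:gtvdMain} bounds the conditional TV distance by $\Pr_{g\sim N(0,1)}\{|g|\le (3/2)\sqrt{10\epsilon r}\}$; and the Gaussian lower-tail estimate $\Pr\{g\ge t\}\ge (\sqrt{2\pi})^{-1}(2t)^{-1}e^{-t^2/2}$ turns the required inequality into $r \ge \alpha\,\epsilon^{-1}\log(1/\delta)$ for a small absolute constant $\alpha$.

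The main obstacle is the reduction step. Unlike the heavy-hitters guarantee, the $\ell_2/\ell_2$ guarantee with factor $1+\epsilon$ is loose enough per-coordinate that naive tests such as ``the arg-max of $\hat y$ equals $I$'' are not forced: the per-coordinate error budget $\sqrt{(1+\epsilon)\|y_{-1}\|_2^2}\approx\sqrt{n}$ dominates the signal magnitude $\sqrt{\epsilon n}$ for small $\epsilon$. The right fix is the holistic inner-product inequality above, which, together with the concentration of $\|\tilde z\|_2^2$ near $n$, is tight to within $\Theta(1+\epsilon)$ and therefore captures exactly the $\Theta(\epsilon n)$ ``room'' that a correct algorithm uses up relative to a trivial zero output; turning this slack into a measurable statistic on $\hat y$ (and verifying that it behaves oppositely under $\eta$) is the step that requires the most care.
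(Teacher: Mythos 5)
Your setup---specializing to $k=1$, rescaling by $\sqrt n$ to match the hard distribution of Theorem~\ref{thm:ell2_hhlb}, conditioning on $\mathcal E/\mathcal F$, and then running the same orthonormal-row/Markov-bound/TV-distance calculation---does track the intended route, and you correctly identify the genuine subtlety that the paper elides: unlike the heavy-hitters guarantee, the $\ell_2/\ell_2$ guarantee does not force coordinate-wise recovery, so ``run the algorithm and see if it outputs $I$'' is not a valid distinguisher. However, your fix does not close this gap. The ``holistic inequality'' $2\langle\hat y,y\rangle\ge\|\hat y\|_2^2+(1+\epsilon)y_I^2-\epsilon\|y\|_2^2$ is a correct rewriting of the success condition, but it is not an observable test: the distinguisher only sees $Sy$, while your inequality involves $\langle\hat y,y\rangle$, $y_I^2$, $\|y\|_2^2$, and implicitly $I$ itself, none of which are available from the sketch. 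More importantly, the claim that ``under $\eta$ no such alignment is possible'' appeals to the algorithm's behavior on inputs drawn from $\eta$, but $\eta$ is not in the spiked-covariance model, so the algorithm carries \emph{no} correctness guarantee there---its output on $Sy$ for $y\sim\eta$ is unconstrained and cannot be argued to violate any inequality. To make a Le Cam-style argument work here, one must instead couple $\gamma_{i,+}$ against $\gamma_{i,-}$ (or reason about the posterior over $\sigma_I$ given $Sy$) and show that no single $\hat y(Sy)$ can satisfy the guarantee simultaneously for both signs unless the sketch resolves the sign, which brings the TV bound $D_{TV}(N(0,I_r),N(\tau,I_r))$ into play with $\|\tau\|_2=\Theta(\sqrt{\epsilon r})$; this is a different (and more delicate) reduction than the one you sketch.

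There is also a scope issue: you explicitly specialize to $k=1$, but the theorem concerns the spiked-covariance model for general $k$, and the $k=1$ bound does not automatically transfer because the model itself changes (spike magnitude $\sqrt{\epsilon/k}$). The generalization to $k>1$ is in fact the part the paper's proof spells out: it observes that the mean shift in the sketch is $\sum_{i\in\supp(y)}3\sqrt{\epsilon n/k}\,\sigma_i S_i$, whose norm is $O(\sqrt{\epsilon r})$ with constant probability by a Markov bound on $\sum_i\|S_i\|_2^2$, so the TV calculation reduces to the same univariate Gaussian mean-testing problem as in the $k=1$ case. Your proposal should include this step.
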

\begin{proof}
The lower bound proved in Theorem~\ref{thm:ell2_hhlb} corresponds to the case of $k = 1$ in the spiked covariance model. We can follow exactly the same notation and proof for the case of $k = 1$ for general $k$. Doing so, we arrive at the point that we need to bound the total variation distance between $N(0, I_r)$, and $N(0, I_r) + \sum_{i \in \supp(y)} (3 \sqrt{\epsilon n/k}) S_i \sigma_i$, where $\sigma_i$ are random signs. Note that $\left\|\sum_{i \in \supp(y)} S_i \sigma_i\right\|_2 \leq \sqrt{10(k/n)\|S\|_F^2} = \sqrt{10kr/n}$ with constant probability, and thus one must distinguish $N(0, I_r)$ and $N(0, I_r) + O(\sqrt{\epsilon r}) v$ for a unit vector $v$ with probability $1-\Theta(\delta)$. Now we can rotate $v$ to the first standard unit vector $e_1$ as before, to conclude that the same $\Omega((1/\epsilon)\log(1/\delta))$ lower bound continues to hold for $k > 1$. 
\end{proof}

\end{document}